\documentclass[12pt, onecolumn]{IEEEtran}

\usepackage[utf8]{inputenc}
\usepackage[T1]{fontenc}
\usepackage{amsmath,amssymb,amsthm}
\usepackage{graphicx}
\usepackage{xcolor}
\usepackage{cite}
\usepackage{hyperref}
\usepackage{url}
\usepackage[linesnumbered,ruled,vlined]{algorithm2e}
\usepackage{booktabs}
\usepackage{longtable}
\usepackage{tabularx}
\usepackage{array}
\usepackage{calc}
\usepackage[para,flushmargin]{footmisc}
\usepackage{listings}
\usepackage{setspace}
\usepackage{caption}
\renewcommand{\tabularxcolumn}[1]{>{\raggedright}m{#1}}

\makeatletter
\g@addto@macro{\UrlBreaks}{\UrlOrds}
\makeatother

\lstdefinestyle{lean}{
  basicstyle=\ttfamily\footnotesize,
  breaklines=true,
  breakatwhitespace=true,
  columns=flexible,
  keepspaces=true,
  xleftmargin=0.5em,
  frame=none,
}

\providecommand{\leanmetaheadmark}{}
\makeatletter
\newtheoremstyle{leanplain}{3pt}{3pt}{\itshape}{}{}{.}{ }{\thmname{#1}\thmnumber{ #2}\leanmetaheadmark\thmnote{ (#3)}}
\newtheoremstyle{leandef}{3pt}{3pt}{}{}{}{.}{ }{\thmname{#1}\thmnumber{ #2}\leanmetaheadmark\thmnote{ (#3)}}
\newtheoremstyle{leanremark}{3pt}{3pt}{}{}{}{.}{ }{\thmname{#1}\thmnumber{ #2}\leanmetaheadmark\thmnote{ (#3)}}
\makeatother

\theoremstyle{leanplain}
\newtheorem{theorem}{Theorem}[section]

\newtheorem{corollary}[theorem]{Corollary}
\newtheorem{proposition}[theorem]{Proposition}

\theoremstyle{leandef}
\newtheorem{definition}[theorem]{Definition}
\newtheorem{example}[theorem]{Example}
\newtheorem{problem}[theorem]{Problem}
\theoremstyle{leanremark}
\newtheorem{remark}[theorem]{Remark}

\renewcommand{\qedsymbol}{$\blacksquare$}

\newcommand{\Opt}{\ensuremath{\mathrm{Opt}}}

\newcommand{\coNP}{\mathrm{coNP}}

\newcommand{\PP}{\mathrm{PP}}
\newcommand{\PSPACE}{\mathrm{PSPACE}}

\newcommand{\Pclass}{\mathrm{P}}

\newif\ifleanmetaseen
\leanmetaseenfalse
\makeatletter
\providecommand{\leanmetapendinghandles}{}
\renewcommand{\leanmetapendinghandles}{}
\providecommand{\leanmetapending}[1]{}
\renewcommand{\leanmetapending}[1]{\gdef\leanmetapendinghandles{#1}}
\makeatother
\providecommand{\leanmetaheadmark}{}
\renewcommand{\leanmetaheadmark}{\ifx\leanmetapendinghandles\@empty\else\refstepcounter{footnote}\footnotemark[\value{footnote}]\footnotetext[\value{footnote}]{\ifleanmetaseen\else Lean\global\leanmetaseentrue\fi: \leanmetapendinghandles}\global\let\leanmetapendinghandles\@empty\fi}
\providecommand{\leanmeta}[1]{}
\renewcommand{\leanmeta}[1]{\ifhmode\unskip\footnote{\ifleanmetaseen\else Lean\global\leanmetaseentrue\fi: #1}\else\refstepcounter{footnote}\footnotemark[\value{footnote}]\footnotetext[\value{footnote}]{\ifleanmetaseen\else Lean\global\leanmetaseentrue\fi: #1}\fi}

\providecommand{\LH}[1]{}
\renewcommand{\LH}[1]{\leavevmode\nobreak\hyperlink{lh:#1}{\mbox{\ttfamily\nolinkurl{#1}}}}
\providecommand{\LHrng}[3]{}
\renewcommand{\LHrng}[3]{\leavevmode\nobreak\hyperlink{lh:#1#2}{\mbox{\ttfamily\nolinkurl{#1#2-#3}}}}
\newif\ifleanmetaseen
\leanmetaseenfalse
\makeatletter
\providecommand{\leanmetapendinghandles}{}
\renewcommand{\leanmetapendinghandles}{}
\providecommand{\leanmetapending}[1]{}
\renewcommand{\leanmetapending}[1]{\gdef\leanmetapendinghandles{#1}}
\providecommand{\leanmetaheadmark}{}
\renewcommand{\leanmetaheadmark}{\ifx\leanmetapendinghandles\@empty\else\refstepcounter{footnote}\footnotemark[\value{footnote}]\footnotetext[\value{footnote}]{\ifleanmetaseen\else Lean\global\leanmetaseentrue\fi: \leanmetapendinghandles}\global\let\leanmetapendinghandles\@empty\fi}
\makeatother
\providecommand{\leanmeta}[1]{}
\renewcommand{\leanmeta}[1]{\ifhmode\unskip\footnote{\ifleanmetaseen\else Lean\global\leanmetaseentrue\fi: #1}\else\refstepcounter{footnote}\footnotemark[\value{footnote}]\footnotetext[\value{footnote}]{\ifleanmetaseen\else Lean\global\leanmetaseentrue\fi: #1}\fi}

\IfFileExists{content/lean_stats.tex}{

\providecommand{\LeanReleaseLines}{22068}

\providecommand{\LeanReleaseSorry}{0}
\providecommand{\LeanReleaseFiles}{67}

}{}

\providecommand{\LeanReleaseLines}{22000}
\providecommand{\LeanReleaseFiles}{45}
\providecommand{\LeanReleaseSorry}{0}

\hypersetup{
  hypertexnames=false,
  colorlinks=true,
  linkcolor=blue,
  citecolor=blue,
  urlcolor=blue
}

\title{The Optimizer Quotient and the Certification Trilemma}

\author{Tristan~Simas%
\thanks{McGill University, Montreal, QC, Canada (e-mail: tristan.simas@mail.mcgill.ca).}}

\begin{document}

\maketitle

\begin{abstract}
The optimizer quotient is the canonical object for exact decision-relevant information: it is the coarsest exact decision-preserving abstraction (Theorem~2.15). This paper proves that exact certification of this object's coordinate structure is subject to an impossibility trilemma: under $\mathrm{P} \neq \mathrm{coNP}$, no certifier can be simultaneously sound, complete on all in-scope instances, and polynomial-budgeted (Theorem~7.1). The cost of this impossibility varies by regime: coNP (static), PP-hard (stochastic decisiveness), PSPACE-complete (sequential). Six structural restrictions collapse certification to polynomial time. The finite reduction and verification core is mechanized in Lean 4.

\end{abstract}

\section{Introduction}\label{sec:introduction}

\subsection{The Optimizer Quotient: A Canonical Object}

Exact decision-preserving information has a canonical form: the optimizer quotient. For a finite coordinate-structured decision problem $\mathcal{D}=(A,S,U)$ with $S = X_1 \times \cdots \times X_n$, the optimizer quotient identifies states exactly when they induce the same optimal-action set $\Opt(s)$. This quotient is the coarsest exact decision-preserving abstraction (Theorem~\ref{thm:quotient-universal}): every surjective abstraction that preserves optimal actions factors uniquely through it. Equivalently, in \textbf{Set}, it is the coimage/image factorization of the optimizer map.

This canonical object exists for every decision problem and unifies separate literatures. In rough-set theory, attribute reduction asks which coordinates preserve decision distinctions; in constraint satisfaction, backdoor sets identify variables whose fixation yields tractability; in feature selection, minimal sufficient subsets are sought; in MDP abstraction, state aggregation preserves value or policy structure. The optimizer quotient provides a common mathematical foundation: it is the canonical object for exact relevance in all these settings.

\subsection{The Certification Trilemma: An Impossibility Theorem}

From the optimizer quotient we derive a structural constraint on exact certifiers. Under $\mathrm{P} \neq \mathrm{coNP}$, no solver can be simultaneously integrity-preserving, non-abstaining on all in-scope instances, and polynomial-budgeted (Theorem~7.1). This impossibility yields a trilemma: any exact certifier in the hard regime must either abstain on some instances, weaken its guarantee, or risk overclaiming. The trilemma is not merely a restatement of coNP-hardness; it is a certified-solver impossibility result with operational consequences for systems claiming exact reliability.

\subsection{The Complexity Map as Evidence}

The impossibility manifests differently across regimes. The static regime yields coNP-completeness for base and minimum queries (via relevance containment), while static anchor sufficiency is $\Sigma_2^P$-complete. In the stochastic regime, the problem splits: preservation bridges to static sufficiency (polynomial-time under explicit encoding), while decisiveness is PP-hard under succinct encoding. In the sequential regime, temporal contingency lifts all queries to PSPACE-complete. These classifications form a regime-sensitive complexity matrix (Table~\ref{tab:regime-matrix}) that shows the impossibility's reach expanding with model expressiveness. Two narrow cells in the stochastic-preservation row (general distributions, without the full-support assumption) remain open; these are explicitly marked as scope boundaries in Table~\ref{tab:stochastic-summary} and do not affect any of the proved results, the optimizer quotient, the certification trilemma, or the completeness of the static, sequential, and stochastic-decisiveness regimes.

\subsection{Tractable Boundaries}

Six structural restrictions collapse exact certification to polynomial time: bounded actions, separable utility, low tensor rank, tree structure, bounded treewidth, and coordinate symmetry. Each removes a distinct source of hardness, providing tractable boundaries to the impossibility theorem.

\subsection{Mechanized Core}

A Lean 4 artifact mechanically checks the optimizer-quotient universal property, finite deciders, reduction-correctness lemmas, witness schemas, and bridge results \leanmeta{\LH{QT1}, \LH{QT7}, \LHrng{DC}{91}{99}, \LHrng{OU}{1}{2}, \LHrng{OU}{8}{12}, \LHrng{EH}{1}{10}.} The 22K-line artifact verifies the combinatorial core, while oracle-class placements are argued in the paper text.

\subsection{Paper Structure}

Section~\ref{sec:formal-setup} introduces the formal setup. Sections~\ref{sec:hardness}, \ref{sec:stochastic}, and \ref{sec:sequential} develop the static, stochastic, and sequential complexity classifications. Section~\ref{sec:regime-hierarchy} consolidates them into the regime matrix. Section~7 presents the impossibility theorem and trilemma. Section~8 collects structural consequences. Sections~\ref{sec:dichotomy} and~\ref{sec:tractable} cover encoding contrasts and tractable cases. Section~9 situates the work, and Appendix~\ref{app:applications} collects model translations.

\subsection{Artifact Availability}

The proof artifact is archived at \url{https://doi.org/10.5281/zenodo.19057595}. The cited-content snapshot contains \LeanReleaseLines\ lines across \LeanReleaseFiles\ files with \LeanReleaseSorry\ occurrences of \texttt{sorry}.

\section{Formal Setup}\label{sec:formal-setup}

This section fixes the abstract vocabulary used throughout the paper. We work with finite decision problems whose state spaces factor into coordinates, and we isolate the minimal terminology needed to state the later complexity results. The same setup will support the paper's three core regime classifications, the encoding-sensitive boundary, and the later consequence sections.

\begin{table}[h]
\centering
\small
\setlength{\tabcolsep}{4pt}
\renewcommand{\arraystretch}{1.12}
\renewcommand{\tabularxcolumn}[1]{m{#1}}
\begin{tabularx}{\linewidth}{@{}>{\raggedright\arraybackslash}m{0.18\linewidth}>{\raggedright\arraybackslash}m{0.30\linewidth}>{\raggedright\arraybackslash}m{0.21\linewidth}>{\raggedright\arraybackslash}X@{}}
\toprule
\textbf{Object} & \textbf{Quantifier shape} & \textbf{Certifies} & \textbf{Role} \\
\midrule
SUFFICIENCY-CHECK & universal over state pairs & exact coordinate sufficiency & baseline certification predicate \\
\midrule
MINIMUM-SUFFICIENT-SET & apparent $\exists I\,\forall (s,s')$ check & smallest sufficient support & collapses to relevance containment in static regime \\
\midrule
ANCHOR-SUFFICIENCY & $\exists$ anchor, then universal fiber check & exact preservation on one chosen fiber & retains the outer existential layer \\
\midrule
Optimizer quotient & universal factorization property & coarsest decision-preserving abstraction & canonical structural object \\
\bottomrule
\end{tabularx}
\caption{Setup map. The four central objects differ mainly in quantifier shape: minimum sufficiency collapses in the static regime, while anchor sufficiency keeps its existential fiber choice.}
\label{tab:formal-setup}
\end{table}

\subsection{Decision Problems with Coordinate Structure}

\begin{definition}[Decision Problem]\label{def:decision-problem}
A \emph{decision problem with coordinate structure} is a tuple $\mathcal{D} = (A, X_1, \ldots, X_n, U)$ where:
\begin{itemize}
\item $A$ is a finite set of \emph{actions} (alternatives)
\item $X_1, \ldots, X_n$ are finite \emph{coordinate spaces}
\item $S = X_1 \times \cdots \times X_n$ is the \emph{state space}
\item $U : A \times S \to \mathbb{Q}$ is the \emph{utility function}
\end{itemize}
\end{definition}

\begin{definition}[Projection]\label{def:projection}
For state $s = (s_1, \ldots, s_n) \in S$ and coordinate set $I \subseteq \{1, \ldots, n\}$:
\[
s_I := (s_i)_{i \in I}
\]
is the \emph{projection} of $s$ onto coordinates in $I$.
\end{definition}

\begin{definition}[Optimizer Map]\label{def:optimizer}
For state $s \in S$, the \emph{optimal action set} is:
\[
\Opt(s) := \arg\max_{a \in A} U(a, s) = \{a \in A : U(a,s) = \max_{a' \in A} U(a', s)\}
\]
\end{definition}

\subsection{Sufficiency and Relevance}

\begin{definition}[Sufficient Coordinate Set]\label{def:sufficient}
A coordinate set $I \subseteq \{1, \ldots, n\}$ is \emph{sufficient} for decision problem $\mathcal{D}$ if:
\[
\forall s, s' \in S: \quad s_I = s'_I \implies \Opt(s) = \Opt(s')
\]
Equivalently, the optimal action depends only on coordinates in $I$.
\end{definition}

\leanmetapending{\LH{DP6}}
\begin{proposition}[Empty-Set Sufficiency Equals Constant Decision Boundary]\label{prop:empty-sufficient-constant}
The empty set $\emptyset$ is sufficient if and only if $\Opt(s)$ is constant over the entire state space.
\end{proposition}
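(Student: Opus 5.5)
The plan is to unfold Definition~\ref{def:sufficient} at $I=\emptyset$ and observe that the hypothesis $s_\emptyset = s'_\emptyset$ is vacuously true for every pair of states. By Definition~\ref{def:projection}, $s_\emptyset$ is the empty tuple $()$ for every $s\in S$, so the projection onto the empty coordinate set is a constant map. Hence sufficiency of $\emptyset$ reduces to the unconditional statement $\forall s,s'\in S:\ \Opt(s)=\Opt(s')$, and this is exactly what ``$\Opt$ is constant on $S$'' means. I will prove the two implications separately.

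For ($\Rightarrow$), assume $\emptyset$ is sufficient and fix arbitrary $s,s'\in S$. Since $s_\emptyset = () = s'_\emptyset$, sufficiency gives $\Opt(s)=\Opt(s')$, so $\Opt$ is constant. For ($\Leftarrow$), assume $\Opt(s)=\Opt(s')$ for all $s,s'$. Then the implication $s_\emptyset=s'_\emptyset \implies \Opt(s)=\Opt(s')$ holds because its conclusion is always true. Hence $\emptyset$ is sufficient.

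There is no real obstacle. The only point needing care is the convention that the projection onto the empty index set is the unique element of the empty product, so that any two states agree on it. One could also check the edge case $S=\emptyset$, which arises only if some $X_i$ is empty. There both sides hold vacuously, so nothing changes. In the Lean formalization this corresponds to noting that functions out of an empty index type are all equal (subsingleton/\texttt{funext} on an empty domain). After that, both directions are one-line applications.
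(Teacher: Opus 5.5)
Your proof is correct and matches the paper's argument: both observe that every state has the same (empty) projection onto $\emptyset$, so sufficiency of $\emptyset$ is exactly the statement $\forall s,s'\in S,\ \Opt(s)=\Opt(s')$, and the converse is immediate. Your remarks on the empty-product convention and the degenerate case $S=\emptyset$ are harmless refinements that the paper leaves implicit.
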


\begin{proof}
If $\emptyset$ is sufficient, then every two states agree on their empty projection, so sufficiency forces $\Opt(s)=\Opt(s')$ for all $s,s' \in S$. The converse is immediate.
\end{proof}

\leanmetapending{\LH{DP7}}
\begin{proposition}[Insufficiency Equals Counterexample Witness]\label{prop:insufficiency-counterexample}
Coordinate set $I$ is not sufficient if and only if there exist states $s,s' \in S$ such that $s_I=s'_I$ but $\Opt(s)\neq\Opt(s')$.
\end{proposition}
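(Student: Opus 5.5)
The plan is to unfold Definition~\ref{def:sufficient} and negate it directly; the claim is pure first-order logic over the finite state space $S$. Sufficiency of $I$ is the universal statement $\forall s,s' \in S:\ (s_I = s'_I \implies \Opt(s)=\Opt(s'))$. Its negation, by the standard quantifier duality $\neg\forall \equiv \exists\neg$, is $\exists s,s' \in S:\ \neg(s_I = s'_I \implies \Opt(s)=\Opt(s'))$. Since $\neg(P \implies Q) \equiv P \wedge \neg Q$, this becomes $\exists s,s' \in S:\ s_I=s'_I \wedge \Opt(s)\neq\Opt(s')$, which is exactly the right-hand side of the proposition.

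I will present the two directions separately to make the witness structure explicit. For ($\Leftarrow$), suppose $s,s'$ satisfy $s_I=s'_I$ and $\Opt(s)\neq\Opt(s')$. If $I$ were sufficient, then instantiating the definition at this pair would give $\Opt(s)=\Opt(s')$, a contradiction. For ($\Rightarrow$), suppose $I$ is not sufficient. Then the universal statement fails, so some pair $(s,s')$ violates the implication. That pair must satisfy the antecedent $s_I=s'_I$ while failing the consequent.

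The ($\Rightarrow$) direction uses classical reasoning: extracting a witness from the failure of a universal, and $\neg(P\to Q)\to P\wedge\neg Q$. This is harmless here. Moreover, because $S$ and $A$ are finite and equality of projections and of subsets of $A$ is decidable, the witness could even be found constructively by exhaustive search. That observation is the only point that needs a remark, especially in view of the Lean mechanization. Nothing else in the argument is an obstacle, and I expect the proof to take two or three lines.
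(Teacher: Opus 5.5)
Your proof is correct and takes the same approach as the paper, which says only that the proposition is the direct negation of Definition~\ref{def:sufficient}; you spell this out via $\neg\forall \equiv \exists\neg$ and $\neg(P\Rightarrow Q)\equiv P\wedge\neg Q$. Your remark that finiteness makes the witness extraction constructive is correct but not needed.
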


\begin{proof}
This is the direct negation of Definition~\ref{def:sufficient}.
\end{proof}

\begin{definition}[Minimal Sufficient Set]\label{def:minimal-sufficient}
A sufficient set $I$ is \emph{minimal} if no proper subset $I' \subsetneq I$ is sufficient.
\end{definition}

\begin{definition}[Relevant Coordinate]\label{def:relevant}
Coordinate $i$ is \emph{relevant} if there exist states that differ only at coordinate $i$ and induce different optimal-action sets:
\[
i \text{ is relevant}
\iff
\exists s,s' \in S:\;
\Big(\forall j \neq i,\; s_j=s'_j\Big)
\ \wedge\
\Opt(s)\neq\Opt(s').
\]
\end{definition}

\leanmetapending{\LHrng{DP}{1}{2}}
\begin{proposition}[Minimal-Set/Relevance Equivalence]\label{prop:minimal-relevant-equiv}
For any minimal sufficient set $I$ and any coordinate $i$,
\[
i \in I \iff i \text{ is relevant}.
\]
Hence every minimal sufficient set is exactly the relevant-coordinate set.
\end{proposition}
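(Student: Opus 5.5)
The plan is to prove the two inclusions separately, using only Definition~\ref{def:sufficient}, Definition~\ref{def:relevant}, Proposition~\ref{prop:insufficiency-counterexample}, and the product structure $S = X_1 \times \cdots \times X_n$. The "hence" clause then follows immediately: by the equivalence, $I = \{i : i \text{ is relevant}\}$. In particular, all minimal sufficient sets coincide.

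\textbf{Relevant $\Rightarrow$ in $I$.} This direction uses only sufficiency of $I$, not minimality. Suppose $i$ is relevant, witnessed by $s,s'$ that agree off coordinate $i$ and satisfy $\Opt(s)\neq\Opt(s')$. If $i\notin I$, then every coordinate of $I$ lies outside $\{i\}$, so $s_I = s'_I$. Sufficiency of $I$ then forces $\Opt(s)=\Opt(s')$, a contradiction. Hence $i \in I$.

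\textbf{In $I$ $\Rightarrow$ relevant.} Let $i\in I$. By minimality, $I\setminus\{i\}$ is not sufficient, so Proposition~\ref{prop:insufficiency-counterexample} gives $s,s'$ with $s_{I\setminus\{i\}}=s'_{I\setminus\{i\}}$ and $\Opt(s)\neq\Opt(s')$. These states may differ at many coordinates outside $I$, so we cannot read off a single-coordinate witness directly.

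\textbf{The hybrid state.} This is the only nontrivial step. Using the product structure, we form $t\in S$ by taking $s$ and replacing its $i$-th coordinate with $s'_i$. Then:
\begin{itemize}
\item $t$ and $s$ differ only at coordinate $i$.
\item $t_I = s'_I$: on $I\setminus\{i\}$ we have $t = s = s'$, and at $i$ we have $t_i = s'_i$ by construction.
\end{itemize}
Sufficiency of $I$ gives $\Opt(t)=\Opt(s')\neq\Opt(s)$. So the pair $(s,t)$ witnesses that $i$ is relevant.

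The main point to check is that $t$ is a legitimate state, which holds because $S$ is a full Cartesian product. The argument would fail for a constrained state space.
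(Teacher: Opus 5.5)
Your proof is correct and follows the same route as the paper: the paper shows that every sufficient set must contain each relevant coordinate, and that an irrelevant $i\in I$ could be removed without losing sufficiency, which is the contrapositive of your second direction. Your hybrid state $t$ is exactly the step the paper's sketch leaves implicit when it asserts that removing an irrelevant coordinate preserves sufficiency, and you are right that this step needs $S$ to be a full Cartesian product.
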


\begin{proof}
If $i \in I$ were not relevant, then changing coordinate $i$ while keeping all others fixed would never alter the optimal-action set, so removing $i$ would preserve sufficiency, contradicting minimality. Conversely, every sufficient set must contain each relevant coordinate, because omitting a relevant coordinate leaves a witness pair with identical $I$-projection but different optimal-action sets.
\end{proof}

\begin{definition}[Exact Relevance Identifiability]\label{def:exact-identifiability}
For a decision problem $\mathcal{D}$ and candidate set $I$, we say that $I$ is \emph{exactly relevance-identifying} if
\[
\forall i \in \{1,\ldots,n\}:\quad i \in I \iff i \text{ is relevant for } \mathcal{D}.
\]
\end{definition}

\begin{definition}[Structural Rank]\label{def:srank}
The \emph{structural rank} of a decision problem is the number of relevant coordinates:
\[
\mathrm{srank}(\mathcal{D}) := \left|\{i \in \{1,\ldots,n\}: i \text{ is relevant}\}\right|.
\]
\end{definition}

\subsection{Optimizer Quotient}

\begin{definition}[Decision Equivalence]\label{def:decision-equiv}
States $s,s' \in S$ are \emph{decision-equivalent} if they induce the same optimal-action set:
\[
s \sim_{\Opt} s' \iff \Opt(s)=\Opt(s').
\]
\end{definition}

\begin{definition}[Decision Quotient]\label{def:decision-quotient}
The \emph{decision quotient} (equivalently, optimizer quotient) of $\mathcal{D}$ is the quotient object
\[
Q_{\mathcal{D}} := S / {\sim_{\Opt}}.
\]
Its equivalence classes are exactly the maximal subsets of states on which the optimal-action set is constant.
\end{definition}

The decision quotient is the canonical mathematical object associated with exact decision preservation. It quotients the state space by optimizer-equivalence and thereby isolates precisely the distinctions that any decision-preserving abstraction must respect.

\leanmetapending{\LHrng{DQ}{6}{7}}
\begin{proposition}[Optimizer Quotient as Coimage/Image Factorization]\label{prop:optimizer-coimage}
The decision quotient $Q_{\mathcal{D}}$ is canonically in bijection with $\operatorname{range}(\Opt)$. Equivalently, in \textbf{Set}, it is the coimage/image factorization of the optimizer map.
\end{proposition}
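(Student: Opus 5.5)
The plan is to apply the standard first isomorphism theorem for sets to the optimizer map $\Opt : S \to \mathcal{P}(A)$. I will define $\bar{\Opt} : Q_{\mathcal{D}} \to \operatorname{range}(\Opt)$ by $\bar{\Opt}([s]) := \Opt(s)$, where $[s]$ is the $\sim_{\Opt}$-class of $s$ (Definition~\ref{def:decision-quotient}). The proof then checks that this map is well defined, injective, and surjective, and finally records that it fits into the coimage/image factorization.

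\textbf{Well-definedness and bijectivity.} If $[s]=[s']$, then $s \sim_{\Opt} s'$, which by Definition~\ref{def:decision-equiv} means exactly $\Opt(s)=\Opt(s')$, so $\bar{\Opt}$ does not depend on the representative. For injectivity, $\bar{\Opt}([s])=\bar{\Opt}([s'])$ gives $\Opt(s)=\Opt(s')$, hence $s\sim_{\Opt}s'$ and $[s]=[s']$. Both directions are the single biconditional defining $\sim_{\Opt}$, read once each way. Surjectivity onto $\operatorname{range}(\Opt)$ holds by definition: every $T$ in the range has the form $\Opt(s)$ for some $s$, and then $T=\bar{\Opt}([s])$.

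\textbf{Factorization.} Let $\pi : S \to Q_{\mathcal{D}}$ be the quotient projection and $\iota : \operatorname{range}(\Opt) \hookrightarrow \mathcal{P}(A)$ the inclusion. By construction $\Opt = \iota \circ \bar{\Opt} \circ \pi$. In \textbf{Set}, the coimage of a map is the quotient of its domain by its kernel pair, and its image is its set-theoretic range. The kernel pair of $\Opt$ is precisely the relation $\sim_{\Opt}$, so $Q_{\mathcal{D}}$ is the coimage and $\bar{\Opt}$ is the canonical coimage-to-image comparison map. That comparison map is an isomorphism, as just verified, which gives the stated equivalence. Canonicity follows because $\bar{\Opt}$ is uniquely determined by the requirement $\bar{\Opt}\circ\pi=\Opt$ (corestricted to the range), and $\pi$ is surjective.

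\textbf{Main obstacle.} There is no real mathematical difficulty. The only point needing care is the bookkeeping in the categorical phrasing: the kernel pair of $\Opt$ must be identified with $\sim_{\Opt}$, and it must be noted that in \textbf{Set} the coimage and image agree up to this canonical bijection. In a mechanized setting this corresponds to using the library's quotient-by-kernel equivalence (e.g.\ the setoid kernel of a function and its equivalence with the range).
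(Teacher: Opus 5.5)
Your proof is correct and takes the same approach as the paper. Both rest on the fact that $\sim_{\Opt}$ identifies two states exactly when they have the same optimizer value, so quotient classes correspond bijectively to attained values of $\Opt$; you additionally make explicit the well-definedness, injectivity, surjectivity, and kernel-pair identification that the paper's two-line argument leaves implicit.
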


\begin{proof}
Two states are identified by $\sim_{\Opt}$ exactly when they have the same optimizer value. Hence quotient classes are in one-to-one correspondence with attained values of $\Opt$.
\end{proof}

\leanmetapending{\LH{QT1}, \LH{QT7}}
\begin{theorem}[Universal Property of the Optimizer Quotient]\label{thm:quotient-universal}
Let $Q_{\mathcal{D}}=S/{\sim_{\Opt}}$ be the decision quotient with quotient map $\pi:S\to Q_{\mathcal{D}}$. For any surjective abstraction $\phi:S\to T$ such that
\[
\phi(s)=\phi(s') \implies \Opt(s)=\Opt(s'),
\]
there exists a unique map $\psi:T\to Q_{\mathcal{D}}$ such that
\[
\pi = \psi \circ \phi.
\]
\end{theorem}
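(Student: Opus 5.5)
The plan is to use the standard factorization argument for quotients in \textbf{Set}: define $\psi$ on each element of $T$ by choosing a preimage under $\phi$, show the choice does not matter, and then get uniqueness from surjectivity of $\phi$.

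\emph{Existence.} Fix $t\in T$. Since $\phi$ is surjective, pick some $s\in S$ with $\phi(s)=t$, and set $\psi(t):=\pi(s)$. To see this is well defined, let $s'$ be another state with $\phi(s')=t$. Then $\phi(s)=\phi(s')$, so the hypothesis on $\phi$ gives $\Opt(s)=\Opt(s')$. By Definition~\ref{def:decision-equiv} this means $s\sim_{\Opt}s'$, hence $\pi(s)=\pi(s')$. So $\psi(t)$ is independent of the chosen preimage. The identity $\pi=\psi\circ\phi$ now holds by construction: for any $s$, the element $s$ itself is an admissible preimage of $\phi(s)$, so $\psi(\phi(s))=\pi(s)$.

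Equivalently, one can route existence through Proposition~\ref{prop:optimizer-coimage}. Define $\bar\psi(t):=\Opt(s)$ for any $s\in\phi^{-1}(t)$; this is well defined by the same hypothesis. Then compose with the bijection $\operatorname{range}(\Opt)\cong Q_{\mathcal{D}}$. This version makes the coimage/image reading of the theorem explicit.

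\emph{Uniqueness.} Suppose $\psi_1,\psi_2:T\to Q_{\mathcal{D}}$ both satisfy $\psi_i\circ\phi=\pi$. For any $t\in T$, surjectivity gives some $s$ with $t=\phi(s)$, so
\[
\psi_1(t)=\psi_1(\phi(s))=\pi(s)=\psi_2(\phi(s))=\psi_2(t).
\]
That is, a surjection is an epimorphism and can be cancelled on the right.

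The only step needing care is well-definedness in the existence part. It is precisely where the decision-preservation hypothesis is used: states merged by $\phi$ must be merged by $\sim_{\Opt}$. Surjectivity is used twice, once to ensure every $t$ has a preimage so $\psi$ is total, and once for uniqueness. No construction beyond a choice of preimages is required, and that choice is harmless since $T$ is the image of a finite set.
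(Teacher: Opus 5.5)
Your proposal is correct and follows essentially the same route as the paper's proof. Both define $\psi(t):=\pi(s)$ for any preimage $s$, use the decision-preservation hypothesis to show this is well defined, and get uniqueness by cancelling the surjection $\phi$ on the right. Your alternative via the bijection $\operatorname{range}(\Opt)\cong Q_{\mathcal{D}}$ is a harmless repackaging of the same argument.
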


\begin{proof}
Because $\phi$ preserves optimizer values on its fibers, any two states identified by $\phi$ are decision-equivalent. So each fiber of $\phi$ is contained in a single $\sim_{\Opt}$-class.

For each $t \in T$, choose any $s \in S$ with $\phi(s)=t$ and define $\psi(t):=\pi(s)$. This is well defined because any two such choices lie in the same $\sim_{\Opt}$-class. By construction, $(\psi\circ\phi)(s)=\pi(s)$ for every $s\in S$, so $\pi=\psi\circ\phi$. Uniqueness follows from surjectivity of $\phi$: if $\psi'$ also satisfies $\pi=\psi'\circ\phi$, then every $t\in T$ has the form $t=\phi(s)$ and therefore
\[
\psi(t)=\psi(\phi(s))=\pi(s)=\psi'(\phi(s))=\psi'(t).
\]
Hence $\psi=\psi'$.
\end{proof}

\leanmetapending{\LH{QT7}, \LH{AB2}.}
\begin{remark}[Canonical Status]\label{rem:quotient-canonical}
The optimizer quotient is therefore the coarsest abstraction of the state space that preserves optimal-action distinctions. Any surjective abstraction that is strictly coarser must erase some decision-relevant distinction.
\end{remark}

\leanmetapending{\LHrng{DN}{1}{2}, \LH{DN5}.}
\begin{remark}[Decision Noise]\label{rem:decision-noise}
The quotient viewpoint identifies decision noise exactly. At the state level, decision noise is the kernel of the optimizer map: two states differ only by decision noise if and only if they determine the same class in $Q_{\mathcal{D}}$. At the coordinate level, a coordinate is pure decision noise if and only if varying it alone never changes the decision, equivalently if and only if it is irrelevant. In the finite probabilistic reading used later, this is also equivalent to conditional independence of the decision from that coordinate given the remaining coordinates.
\end{remark}

\leanmetapending{\LH{DQ42}}
\begin{proposition}[Sufficiency Characterization]\label{prop:sufficiency-char}
Coordinate set $I$ is sufficient if and only if the optimizer map factors through the projection $\pi_I : S \to S_I$. Equivalently, there exists a function $\overline{\Opt}_I : S_I \to \mathcal{P}(A)$ such that
\[
\Opt = \overline{\Opt}_I \circ \pi_I.
\]
\end{proposition}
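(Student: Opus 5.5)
The plan is to prove both directions directly from Definition~\ref{def:sufficient}, treating the statement as a factorization-through-a-surjection fact analogous to the proof of Theorem~\ref{thm:quotient-universal}. Throughout, $S_I := \prod_{i\in I} X_i$ and $\pi_I(s) := s_I$. The one structural fact we use is that $\pi_I$ is surjective, since $S$ is a full product of the coordinate spaces. (If some $X_j$ is empty then $S$ is empty and the claim is vacuous on the nonempty side; we can note this degenerate case and ignore it.)

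($\Leftarrow$) Suppose $\Opt = \overline{\Opt}_I \circ \pi_I$ for some $\overline{\Opt}_I : S_I \to \mathcal{P}(A)$. If $s_I = s'_I$, then
\[
\Opt(s)=\overline{\Opt}_I(\pi_I(s))=\overline{\Opt}_I(\pi_I(s'))=\Opt(s'),
\]
so $I$ is sufficient. This direction is immediate.

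($\Rightarrow$) Suppose $I$ is sufficient. We define $\overline{\Opt}_I(x)$ for $x \in S_I$ as follows. Surjectivity of $\pi_I$ gives a state $s$ with $s_I = x$; concretely, $s$ agrees with $x$ on $I$ and takes arbitrary fixed values in the $X_j$ for $j\notin I$. Set $\overline{\Opt}_I(x) := \Opt(s)$. This is well defined because any other choice $s'$ with $s'_I = x$ satisfies $s_I = s'_I$, so sufficiency forces $\Opt(s)=\Opt(s')$. By construction, $\overline{\Opt}_I(\pi_I(s)) = \Opt(s)$ for every $s$.

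Uniqueness of $\overline{\Opt}_I$ also follows from surjectivity, as in Theorem~\ref{thm:quotient-universal}, although the statement only asks for existence. The "equivalently" clause is just the definition of factoring through $\pi_I$. The only step needing care is well-definedness together with surjectivity of the projection (the choice of extension), and this is exactly where the product structure of $S$ is used. I expect no real obstacle.
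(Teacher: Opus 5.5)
Your proof is correct and follows the paper's argument: you define $\overline{\Opt}_I(x):=\Opt(s)$ for any $s$ with $s_I=x$, which is well defined by sufficiency, and the converse is immediate from the factorization. Your discussion of surjectivity is harmless but not needed, since any $x\in S_I$ outside the image of $\pi_I$ can simply be assigned an arbitrary value.
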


\begin{proof}
If $I$ is sufficient, define $\overline{\Opt}_I(x)$ to be $\Opt(s)$ for any state $s$ with $s_I = x$; this is well defined exactly because sufficiency makes $\Opt$ constant on each projection fiber. Conversely, any such factorization forces states with equal $I$-projection to have equal optimal-action sets.
\end{proof}

\subsection{Computational Model and Input Encoding}\label{sec:encoding}

The same decision question yields different computational problems under different input models. We therefore fix the encoding regime before stating any complexity classification.

\paragraph{Succinct encoding (primary for hardness).}
An instance is encoded by:
\begin{itemize}
\item a finite action set $A$ given explicitly,
\item coordinate domains $X_1,\ldots,X_n$ given by their sizes in binary,
\item a Boolean or arithmetic circuit $C_U$ that on input $(a,s)$ outputs $U(a,s)$.
\end{itemize}
The input length is
\[
L = |A| + \sum_i \log |X_i| + |C_U|.
\]
Unless stated otherwise, all class-theoretic hardness results in this paper are measured in $L$.

\paragraph{Explicit-state encoding.}
The utility function is given as a full table over $A \times S$. The input length is
\[
L_{\mathrm{exp}} = \Theta(|A||S|)
\]
up to the bitlength of the utility entries. Polynomial-time claims stated in terms of $|S|$ use this explicit-state regime.

\paragraph{Query-access regime.}
The solver is given oracle access to $\Opt(s)$ or to the utility values needed to reconstruct it. Complexity is then measured by query count, optionally paired with per-query evaluation cost. This regime separates access obstruction from full-table availability and from succinct circuit input length.

\begin{remark}[Decision Questions vs Decision Problems]\label{rem:question-vs-problem}
The predicate ``is coordinate set $I$ sufficient for $\mathcal{D}$?'' is an encoding-independent mathematical question. A \emph{decision problem} arises only after fixing how $\mathcal{D}$ and $I$ are represented. Complexity classes are properties of these encoded problems, not of the abstract predicate in isolation.
\end{remark}

\begin{remark}[Later Stochastic Comparison]
In the stochastic row, the preservation predicate compares the optimizer seen under coarse conditioning on $I$ with the fully conditioned optimizer. Under full support, the later bridge package reconnects this stochastic notion to the same quotient picture as the static regime.
\end{remark}

\section{Static Regime Complexity}\label{sec:hardness}

This section studies the static regime, where the input is a finite coordinate-structured decision problem under the succinct encoding of Section~\ref{sec:encoding}. This is the baseline certification setting: no stochastic averaging, no temporal dynamics, only the exact question of whether a candidate coordinate set preserves the optimizer map. The main contribution here is structural: sufficiency is exactly relevance containment. Once that characterization is in place, the complexity classification for \textsc{Minimum-Sufficient-Set} follows immediately. The only genuinely higher-level static query is the anchor variant, where an existentially chosen fiber must satisfy a universal constancy condition.

\subsection{Problem Definitions}

\begin{problem}[SUFFICIENCY-CHECK]
\textbf{Input:} Decision problem $\mathcal{D}=(A,X_1,\ldots,X_n,U)$ and coordinate set $I \subseteq \{1,\ldots,n\}$. \\
\textbf{Question:} Is $I$ sufficient for $\mathcal{D}$?
\end{problem}

\begin{problem}[MINIMUM-SUFFICIENT-SET]
\textbf{Input:} Decision problem $\mathcal{D}=(A,X_1,\ldots,X_n,U)$ and integer $k$. \\
\textbf{Question:} Does there exist a sufficient set $I$ with $|I| \le k$?
\end{problem}

\begin{problem}[ANCHOR-SUFFICIENCY]
\textbf{Input:} Decision problem $\mathcal{D}=(A,X_1,\ldots,X_n,U)$ and fixed coordinate set $I$. \\
\textbf{Question:} Does there exist an assignment $\alpha$ to $I$ such that $\Opt(s)$ is constant on the fiber $\{s : s_I=\alpha\}$?
\end{problem}

\subsection{Apparent $\Sigma_2^P$ Structure}

At first glance, \textsc{Minimum-Sufficient-Set} seems to have the form
\[
\exists I\; \forall s,s'\in S:
\quad s_I=s'_I \implies \Opt(s)=\Opt(s').
\]
That syntax suggests a $\Sigma_2^P$ classification. The algorithm suggested by this structure is: guess a coordinate set $I$, then universally verify that every agreeing state pair induces the same optimal-action set. On its face, static exact minimization therefore looks like a genuine existential--universal search problem.

\subsection{Structural Characterization of Minimum Sufficiency}

Proposition~\ref{prop:minimal-relevant-equiv} removes the outer search layer: a set is sufficient exactly when it contains every relevant coordinate, so the minimum sufficient set is simply the relevant-coordinate set. This is the central structural theorem of the static regime. Static exact minimization is therefore a relevance-counting problem, not a genuinely alternating search problem.

That collapse is specific to minimum sufficiency. \textsc{Anchor-Sufficiency} still asks the algorithm to choose an assignment $\alpha$ and then certify a universal condition on the induced fiber, so its existential layer cannot be absorbed into a structural characterization of the instance.

\subsection{Complexity Consequence: coNP-Completeness}

The first two results show that exact certification is already harder than pointwise evaluation. A counterexample to sufficiency is short, but certifying that no such counterexample exists is a universal statement over state pairs. The collapse discussed above explains why the same coNP behavior governs both direct sufficiency checking and exact minimization.

\paragraph{Proof idea.}
Membership is immediate from short counterexamples: a ``no'' witness is just a pair of states that agree on $I$ but induce different optimal-action sets. Hardness comes from reducing TAUTOLOGY to an empty-information instance whose optimizer map is constant exactly in the tautology case.

\leanmetapending{\LHrng{DQ}{40}{41}}
\begin{theorem}[SUFFICIENCY-CHECK is coNP-complete]\label{thm:sufficiency-conp}
SUFFICIENCY-CHECK is coNP-complete.
\end{theorem}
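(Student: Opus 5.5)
The proof has two parts: membership in coNP by verifying short counterexamples, and coNP-hardness by a polynomial-time reduction from TAUTOLOGY under the succinct encoding of Section~\ref{sec:encoding}.

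\textbf{Membership.} By Proposition~\ref{prop:insufficiency-counterexample}, $I$ is insufficient exactly when some pair $s,s'$ satisfies $s_I=s'_I$ and $\Opt(s)\neq\Opt(s')$. Such a pair is a certificate of polynomial size, since each coordinate value costs $\log|X_i|$ bits. To verify it, we check $s_I=s'_I$ directly and evaluate $C_U(a,s)$ and $C_U(a,s')$ for every $a\in A$. Because $A$ is given explicitly, this is $2|A|$ circuit evaluations, each polynomial in $L$. From these values we compute both argmax sets and compare them. The complement of SUFFICIENCY-CHECK is therefore in NP, so the problem is in coNP. The one point needing care is that the circuit outputs have polynomial bitlength, so comparing utilities is polynomial; this holds for circuits of size $|C_U|$.

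\textbf{Hardness.} Given a Boolean formula $\varphi(x_1,\ldots,x_n)$, we build an instance with coordinates $X_i=\{0,1\}$, two actions $A=\{a,b\}$, and $I=\emptyset$. The utilities are $U(a,s)=1$ and $U(b,s)=[\varphi(s)]$. Then $\Opt(s)=\{a,b\}$ when $\varphi(s)$ holds and $\Opt(s)=\{a\}$ otherwise. The circuit $C_U$ is built from a circuit for $\varphi$ with a multiplexer on the action bit, so its size is linear in $|\varphi|$. By Proposition~\ref{prop:empty-sufficient-constant}, $\emptyset$ is sufficient iff $\Opt$ is constant on $S$, which here holds iff $\varphi$ is constant.

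\textbf{Main obstacle.} The reduction above is not yet exact, because $\varphi$ could be constantly false. A constant-false $\varphi$ makes $\Opt$ constant, and hence $\emptyset$ sufficient, even though $\varphi$ is not a tautology. We fix this by adding one fresh variable: let $\varphi'=\varphi\vee x_0$ on $n+1$ coordinates. The formula $\varphi'$ is satisfiable, since $x_0=1$ satisfies it, and $\varphi'$ is a tautology iff $\varphi$ is. For a satisfiable formula, $\Opt$ is constant iff the formula is a tautology. So $\emptyset$ is sufficient for the instance built from $\varphi'$ iff $\varphi$ is a tautology. The map is polynomial-time, which completes the coNP-hardness argument.
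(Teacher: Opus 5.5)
Your proof is correct and follows the paper's route (counterexample-pair membership via Proposition~\ref{prop:insufficiency-counterexample}, TAUTOLOGY reduced to the $I=\emptyset$ instance of a two-action gadget), and your $\varphi\vee x_0$ patch for the constant-false case plays the same role as the extra reference coordinate in the paper's $(n+1)$-coordinate construction (Proposition~\ref{prop:eth-constant}), which the theorem's own proof sketch leaves implicit. One caveat, shared with the paper: your polynomial-bitlength remark holds for Boolean circuits but not automatically for arithmetic circuits with multiplication gates, where repeated squaring can produce exponentially long outputs, so the membership argument tacitly assumes the Boolean (or bounded-bitlength) reading of the encoding.
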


\begin{proof}
For membership, Proposition~\ref{prop:insufficiency-counterexample} gives a polynomial witness $(s,s')$ with $s_I=s'_I$ and $\Opt(s)\ne\Opt(s')$ whenever $I$ is not sufficient, so the complement is in NP.

For hardness, reduce TAUTOLOGY to the empty-set instance. Given formula $\varphi(x_1,\ldots,x_n)$, build a two-action decision problem where one action tracks $\varphi$ and the other is a baseline action. Then $I=\emptyset$ is sufficient iff $\varphi$ is a tautology, because empty-set sufficiency is exactly constancy of $\Opt$ (Proposition~\ref{prop:empty-sufficient-constant}). The reduction is polynomial in formula size.
\end{proof}

\leanmetapending{\LHrng{DQ}{35}{36}}
\begin{theorem}[Static Collapse Theorem]\label{thm:minsuff-conp}
In the static regime, the apparent existential layer in \textsc{Minimum-Sufficient-Set} collapses to relevance containment. Consequently, \textsc{Minimum-Sufficient-Set} is coNP-complete.
\end{theorem}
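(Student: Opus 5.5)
The plan is to first prove the structural collapse and then derive both directions of coNP-completeness from it. Write $R\subseteq\{1,\ldots,n\}$ for the set of relevant coordinates (Definition~\ref{def:relevant}).

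\textbf{Step 1: collapse.} I would show that for every $I$, the set $I$ is sufficient if and only if $R\subseteq I$.

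For the forward direction, the argument of Proposition~\ref{prop:minimal-relevant-equiv} applies directly. If $i\in R\setminus I$, the witness pair differing only at $i$ agrees on $I$ but has different $\Opt$. By Proposition~\ref{prop:insufficiency-counterexample}, $I$ is then insufficient.

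For the converse, assume $R\subseteq I$ and take $s,s'$ with $s_I=s'_I$. Walk from $s$ to $s'$ by changing one coordinate outside $I$ at a time. Each step changes a single irrelevant coordinate, so $\Opt$ is preserved at every step. Chaining the steps gives $\Opt(s)=\Opt(s')$. The finite product structure of $S$ guarantees that every intermediate state lies in $S$, so the walk is legitimate.

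It follows that the unique minimum sufficient set is $R$. Hence the instance $(\mathcal{D},k)$ is a yes-instance if and only if $\mathrm{srank}(\mathcal{D})=|R|\le k$.

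\textbf{Step 2: membership in coNP.} I would argue that the complement lies in NP. The instance is a no-instance exactly when $|R|>k$. A witness for this consists of $k+1$ distinct coordinates $i_0,\ldots,i_k$, together with, for each $i_j$, a pair of states differing only at $i_j$ and having different $\Opt$. Each pair can be checked in polynomial time: evaluate $C_U$ on all actions in $A$, which is given explicitly, and compare the resulting argmax sets. The witness has size polynomial in $L$, because $k<n\le L$ and each state takes $\sum_i\log|X_i|$ bits. This witness is therefore a polynomial-size certificate for the complement.

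\textbf{Step 3: hardness.} I would reduce from SUFFICIENCY-CHECK with $I=\emptyset$, which Theorem~\ref{thm:sufficiency-conp} shows coNP-hard, using the TAUTOLOGY instance from that proof. Map that instance to $(\mathcal{D},k=0)$. A sufficient set of size at most $0$ exists if and only if $\emptyset$ is sufficient, so the reduction is correct and trivially polynomial.

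More generally, for an arbitrary SUFFICIENCY-CHECK instance $(\mathcal{D},I)$, the Step 1 equivalence lets one hardwire the $I$-coordinates and ask for $k=0$. The $k=0$ specialization is all that is needed, however.

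\textbf{Main obstacle.} I expect the main care to be needed in the converse direction of Step 1, namely that irrelevance of each coordinate individually implies sufficiency of the complement jointly. The path argument resolves this, but it relies on $S$ being a full Cartesian product. I would state that dependence explicitly. The certificate-size accounting in Step 2 under succinct encoding is routine by comparison.
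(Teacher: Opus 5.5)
Your proof is correct and follows the same route as the paper: \textsc{Minimum-Sufficient-Set} collapses to the question $|R|\le k$, the complement is certified by $k+1$ relevance witnesses, and coNP-hardness comes from the $k=0$ slice, which coincides with \textsc{Sufficiency-Check} at $I=\emptyset$; your coordinate-by-coordinate path argument just unrolls the ``removing an irrelevant coordinate preserves sufficiency'' step behind Proposition~\ref{prop:minimal-relevant-equiv}. Drop the aside about ``hardwiring the $I$-coordinates'' for a general instance $(\mathcal{D},I)$: it is not needed, and read literally (fixing those coordinates to constants) it would test only one fiber rather than all of them.
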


\paragraph{Proof idea.}
Once sufficiency is equivalent to containing all relevant coordinates, the question ``is there a sufficient set of size at most $k$?'' becomes ``are there at most $k$ relevant coordinates?'' The search component disappears, and the problem inherits the same coNP character as direct sufficiency checking.

\begin{proof}
Proposition~\ref{prop:minimal-relevant-equiv} identifies the minimum sufficient set with the relevant-coordinate set. The question ``$|I^*|\le k$?'' therefore becomes the question whether at most $k$ coordinates are relevant, which is a coNP check over relevance witnesses. CoNP-hardness follows from the $k=0$ slice, which is exactly SUFFICIENCY-CHECK with $I=\emptyset$.
\end{proof}

\subsection{Sigma-2-P Completeness}

Anchor sufficiency adds one more layer of choice that the previous collapse does not remove: the algorithm may select a fiber, but must then certify that the chosen fiber is uniformly decision-preserving. That existential-then-universal pattern is exactly what raises the complexity class.

\leanmetapending{\LH{DQ98}, \LH{CC4}}
\begin{theorem}[ANCHOR-SUFFICIENCY is Sigma-2-P-complete]\label{thm:anchor-sigma2p}
ANCHOR-SUFFICIENCY is $\Sigma_2^P$-complete.
\end{theorem}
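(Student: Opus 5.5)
Membership and hardness are handled separately, with hardness by reduction from $\exists\forall$-SAT, the canonical $\Sigma_2^P$-complete problem.

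\textbf{Membership.} The instance asks whether there is an assignment $\alpha$ to $I$ such that for all $s,s'$ with $s_I=s'_I=\alpha$ we have $\Opt(s)=\Opt(s')$. The anchor $\alpha$ has size polynomial in $L$, since each coordinate value is written in $\log|X_i|$ bits. Once $\alpha$ is fixed, the inner predicate is a coNP predicate, exactly as in Theorem~\ref{thm:sufficiency-conp}. A counterexample is a pair $(s,s')$ in the fiber with $\Opt(s)\neq\Opt(s')$. It is checked by evaluating $C_U$ on $|A|$ actions for each state, computing the two argmax sets, and comparing them. This gives an $\exists\forall$ polynomial-time predicate, so the problem lies in $\Sigma_2^P$.

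\textbf{Hardness.} Given $\Phi=\exists x\,\forall y\,\varphi(x,y)$ with $x\in\{0,1\}^m$ and $y\in\{0,1\}^k$, build a decision problem with $n=m+k$ Boolean coordinates and anchor set $I=\{1,\ldots,m\}$, the $x$-coordinates.
\begin{itemize}
\item Use two actions $a_1$ and $a_0$.
\item Set $U(a_0,s)=0$ and $U(a_1,s)=2\varphi(x,y)-1$, where $\varphi$ is realized as a circuit in polynomial time.
\item Then $\Opt(x,y)=\{a_1\}$ when $\varphi(x,y)=1$ and $\Opt(x,y)=\{a_0\}$ otherwise.
\end{itemize}
If $\Phi$ is true with witness $x^*$, take $\alpha=x^*$. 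The fiber consists of all $(x^*,y)$, on which $\Opt\equiv\{a_1\}$ is constant.

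The converse is the main obstacle. A constant fiber could instead be one where $\varphi(\alpha,\cdot)\equiv 0$, and that does not witness $\Phi$. This is the same subtlety behind the tautology reduction, where ``constant'' must be pinned to the ``true'' value.

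\textbf{Removing the spurious anchors.} I would add one extra free Boolean coordinate $z$, not in $I$, and redefine
\[
U(a_1,(x,y,z)) = \begin{cases} 1 & \text{if } z=1 \text{ and } \varphi(x,y)=1,\\ -1 & \text{if } z=1 \text{ and } \varphi(x,y)=0,\\ 1 & \text{if } z=0. \end{cases}
\]
Every fiber then contains a point with $z=0$, where $\Opt=\{a_1\}$. So a fiber is constant exactly when $\Opt\equiv\{a_1\}$ on the whole fiber, that is, exactly when $\forall y\,\varphi(\alpha,y)$ holds. The fiber over $\alpha$ is constant iff $\alpha$ witnesses $\Phi$, so ANCHOR-SUFFICIENCY holds iff $\Phi$ is true. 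The circuit $C_U$ is $\varphi$ plus a constant-size gadget, so the reduction is polynomial in $|\Phi|$.

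Finally I would check that the fixed $I$ excludes both $y$ and $z$, so both are universally ranged inside each fiber. This completes $\Sigma_2^P$-hardness.
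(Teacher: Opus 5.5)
Your proposal is correct and follows the same route as the paper. Membership comes from the $\exists\alpha\,\forall s$ quantifier pattern, and hardness comes from $\exists\forall$-SAT, with the existential variables as the anchor coordinates $I$ and the universal variables as the residual coordinates; your extra coordinate $z$ makes explicit the step the paper only asserts (``utility values are set so that fiberwise constancy holds exactly when the source formula is true''), namely ruling out spurious anchors with $\varphi(\alpha,\cdot)\equiv 0$, much like the extra reference coordinate the paper uses in its TAUTOLOGY reduction.
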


\begin{proof}
Membership follows directly from the quantifier pattern:
\[
\exists\alpha\;\forall s\in S:\ (s_I=\alpha)\Rightarrow \Opt(s)=\Opt(s_\alpha).
\]
For hardness, reduce from $\exists\forall$-SAT: existential variables encode the anchor assignment, and universal variables range over the residual coordinates. Utility values are set so that fiberwise constancy holds exactly when the source formula is true.
\end{proof}

\subsection{Certificate-Size Lower Bound}

The next theorem sharpens the static picture. It is not only that exact certification is hard in the complexity-class sense; even a checker restricted to the empty-set core may be forced to inspect exponentially many witness locations in the worst case.

\leanmetapending{\LHrng{WD}{1}{3}}
\begin{theorem}[Witness-Checking Duality]\label{thm:witness-checking-duality}
For Boolean state spaces $S=\{0,1\}^n$, any sound checker for the empty-set sufficiency core must inspect at least $2^{n-1}$ witness pairs in the worst case.
\end{theorem}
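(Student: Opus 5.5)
Our approach is an adversary (indistinguishability) argument in the query-access regime of Section~\ref{sec:encoding}. By Proposition~\ref{prop:empty-sufficient-constant}, the empty-set core asks whether $\Opt$ is constant on $S=\{0,1\}^n$, and by Proposition~\ref{prop:insufficiency-counterexample} a ``no'' answer is certified by a witness pair $(s,s')$ with $\Opt(s)\neq\Opt(s')$. We model the checker as inspecting witness pairs: an inspection of $(s,s')$ reveals whether $\Opt(s)=\Opt(s')$. A \emph{sound} checker never declares ``sufficient'' on an instance where $\emptyset$ is not sufficient.

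\textbf{Step 1 (hard family).} We partition $\{0,1\}^n$ into the $2^{n-1}$ antipodal pairs $P_x=\{x,\bar x\}$; equivalently, we pair states differing in coordinate $1$. The base instance $\mathcal{D}_0$ has two actions $a,b$ with $U(a,\cdot)=1$ and $U(b,\cdot)=0$, so $\Opt\equiv\{a\}$ and $\emptyset$ is sufficient. For each canonical pair $p=(x,x')$ in the partition, the perturbed instance $\mathcal{D}_p$ is defined by setting $U(b,x')=2$, so that $\Opt(x')=\{b\}$ and every other state keeps $\Opt=\{a\}$. In $\mathcal{D}_p$ the empty set is insufficient. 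We use these $2^{n-1}$ perturbations indexed by the partition blocks, one per block, as the adversary's hidden alternatives; each is a TAUTOLOGY-style instance matching the reduction in Theorem~\ref{thm:sufficiency-conp}.

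\textbf{Step 2 (adversary).} The adversary answers every inspection according to $\mathcal{D}_0$, i.e., ``equal''. Suppose the checker halts having inspected fewer than $2^{n-1}$ pairs. We need some block $p$ such that no inspected pair touches the perturbed state $x'$. This is the delicate point: a single inspected pair $(s,s')$ could touch two perturbed states from different blocks. We handle it by restricting the checker's inspections, per the ``witness pairs'' reading, to the canonical witness structure for the core, namely the pairs of the partition themselves (this is the witness schema the mechanized statement presumably fixes). Under that restriction, each inspection eliminates exactly one alternative $\mathcal{D}_p$. So fewer than $2^{n-1}$ inspections leave some $\mathcal{D}_p$ consistent with every answer seen. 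If the checker outputs ``sufficient'', it errs on $\mathcal{D}_p$, contradicting soundness. If it outputs ``insufficient'' or abstains, it is not a complete checker on $\mathcal{D}_0$. Hence a sound and complete checker must inspect all $2^{n-1}$ pairs on $\mathcal{D}_0$.

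\textbf{Main obstacle.} The main obstacle is fixing the inspection model precisely enough that the counting in Step 2 is valid. If arbitrary pairs or single-state queries of $\Opt$ were allowed, the natural bound would instead be $2^n-1$ state queries, since each unqueried state can be flipped. That bound still implies $\geq 2^{n-1}$ pairs, because each pair covers at most two states, so $\lceil 2^n/2\rceil=2^{n-1}$ pairs are needed to touch every state. I would therefore present the covering argument as the robust version: every state must be touched by some inspected pair, since otherwise flipping $\Opt$ at an untouched state yields an insufficient instance indistinguishable from $\mathcal{D}_0$. Each pair touches at most two states, which gives the $2^{n-1}$ bound directly and sidesteps the partition restriction.
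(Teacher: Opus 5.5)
Your proof is correct and is essentially the paper's argument: a constant YES instance is set against $2^{n-1}$ NO instances, each disagreeing on exactly one slot of a fixed pair partition, so an uninspected slot gives identical transcripts on inputs with opposite answers. As you implicitly recognize by invoking completeness, this needs the paper's two-sided notion of soundness, since under your one-sided definition a checker that always rejects would count as sound. Your covering variant (every state must lie in some inspected pair, and each pair covers at most two states) is a valid mild strengthening that removes the paper's restriction to queries on the fixed slots $\{(0,z),(1,z)\}$.
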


\begin{proof}
Empty-set sufficiency asks whether $\Opt$ is constant on all $2^n$ states. A sound refutation-complete checker must be able to separate constant maps from maps that differ on some hidden antipodal partition. An adversary can place the first disagreement in any of $2^{n-1}$ independent pair slots; if fewer than $2^{n-1}$ slots are inspected, two instances remain indistinguishable to the checker but have opposite truth values. Therefore at least $2^{n-1}$ pair checks are necessary.
\end{proof}

\paragraph{Mechanization note.}
The TAUTOLOGY reduction stack, the coNP and $\Sigma_2^P$ classifications, and the witness-budget lower-bound core are indexed by their inline Lean handles.

\paragraph{Explicit-state search upper bounds.}
Under the explicit-state step-counting model, static sufficiency
and static anchor sufficiency are also wrapped as abstract \textsc{P}
predicates on inputs carrying a certified state budget. The artifact proves
these upper bounds via counted searches with quadratic and linear state-space
dependence respectively, and it also packages the static sufficiency search as
an explicit correctness-plus-step-bound witness and as part of the unified
finite-search summary theorem. \leanmeta{\LHrng{DC}{70}{71}, \LH{DC76}, \LHrng{DC}{80}{82}.}

This completes the baseline regime. The next section keeps the same exact
certification question but replaces pairwise counterexample exclusion by
conditional expected-utility comparison, which is why the complexity lifts from
the static coNP/$\Sigma_2^P$ picture to PP.

\section{Stochastic Regime Complexity}\label{sec:stochastic}

We now add a distribution over states. Once utilities are compared through conditional expectations, the stochastic regime has two exact predicates rather than one. The first is a preservation predicate: does the optimizer seen after hiding coordinates agree with the optimizer seen when the full state is known? The second is a decisiveness predicate: does each observed fiber admit a unique Bayes-optimal action? Formally, these are stochastic sufficiency and stochastic decisiveness. Stochastic sufficiency gives the direct stochastic analogue of static sufficiency and the bridge back to the optimizer quotient; under full support, its minimum and anchor variants inherit the static coNP and $\Sigma_2^P$ classifications. Stochastic decisiveness yields the strongest succinct-encoding complexity classification established in the paper.

\subsection{Preservation and Decisiveness}

In static decision problems, the exact question is whether hiding coordinates preserves the optimizer. In stochastic decision problems, preservation is still a well-defined exact question, but it is no longer the only one. One can also ask whether the retained signal is itself a complete decision interface, meaning that every observable fiber has a unique conditional optimum. That is the decisiveness predicate. The two predicates are not equivalent: a coordinate set may preserve the full-information optimizer without being decisive, and it may be decisive without preserving the full-information optimizer. Probability therefore produces two parallel exact questions rather than one uniform stochastic analogue.

\begin{example}[Preservation without Decisiveness]
Consider a decision problem with two coordinates $X_1,X_2 \in \{0,1\}$, actions $A=\{0,1\}$, uniform distribution over states, and utility $U(a,s)=1$ if $a = s_1 \oplus s_2$ (XOR), 0 otherwise. Hiding $X_2$ preserves the optimizer: given any value of $X_1$, the conditional optimal action is still $s_1 \oplus s_2$ (since the distribution is uniform, the conditional expectation equals pointwise utility). However, $X_1$ alone is not decisive: for each fixed $X_1$, both $s_2=0$ and $s_2=1$ yield different optimal actions (0 and 1, or 1 and 0), so no single action is conditionally optimal across the entire fiber.
\end{example}

\begin{example}[Decisiveness without Preservation]
Consider the same coordinate structure but with utility $U(a,s)=1$ if $a = s_1$, 0 otherwise. Here $X_1$ is decisive: for each $x_1$, action $x_1$ is uniquely optimal. However, $X_1$ does not preserve the full optimizer: when $s_2$ is revealed, the optimal action depends on $s_1$ alone, but the utility values differ across $s_2$, so the conditional expectation at $X_1=x_1$ averages over utilities that $X_1$ cannot distinguish.
\end{example}

\subsection{Stochastic Decision Problems}

\begin{definition}[Stochastic Decision Problem]\label{def:stochastic-decision-problem}
A stochastic decision problem is a tuple
\[
\mathcal{D}_S=(A,X_1,\ldots,X_n,U,P),
\]
where $P\in\Delta(S)$ is a distribution on $S=X_1\times\cdots\times X_n$.
\end{definition}

\begin{definition}[Distribution-Conditional Sufficiency]\label{def:stochastic-sufficient}
For each admissible fiber value $\alpha \in \mathrm{supp}(S_I)$, define the conditional optimal-action set
\[
\Opt^{\mathrm{stoch}}_I(\alpha)
:=
\arg\max_{a\in A}\,\mathbb{E}[U(a,S)\mid S_I = \alpha].
\]
A coordinate set $I$ is \emph{stochastically sufficient} if conditioning on $I$ preserves the fully conditioned optimizer:
\[
\forall s \in S:\quad \Opt^{\mathrm{stoch}}_I(s_I)=\Opt(s).
\]
Here $S \sim P$ is the random state drawn from the instance distribution.
\end{definition}

\paragraph{Intuition.}
The conditional optimizer $\Opt^{\mathrm{stoch}}_I(\alpha)$ aggregates expected utilities over the entire fiber $\{s : s_I = \alpha\}$ using the distribution $P$. By contrast, the full-information optimizer $\Opt(s)$ conditions on the exact state $s$ itself. Preservation asks whether this averaging over a fiber ever changes which action is optimal; decisiveness asks whether, after averaging, the optimal action on each fiber is uniquely determined.

\begin{definition}[Stochastic Decisiveness]\label{def:stochastic-singleton-sufficient}
Using the same conditional optimizer map $\Opt^{\mathrm{stoch}}_I$, a coordinate set $I$ is \emph{stochastically decisive} if every admissible fiber has a uniquely determined conditional optimal action:
\[
\forall \alpha \in \mathrm{supp}(S_I)\; \exists a \in A:\quad \Opt^{\mathrm{stoch}}_I(\alpha)=\{a\}.
\]
In the paper, we refer to this as \emph{stochastic decisiveness}. In the current artifact, this is the stochastic predicate with the strongest succinct-encoding complexity package.
\end{definition}

\paragraph{From witness exclusion to counting.}
Static sufficiency asks whether any counterexample pair exists. Stochastic sufficiency asks whether the conditional optimizer obtained from the retained coordinates still matches the fully conditioned optimizer, while stochastic decisiveness asks whether every admissible fiber has a uniquely determined conditional optimum. The first question is about ruling out disagreeing state pairs; the second and third are about conditional expected-utility comparisons on each fiber. That shift from witness exclusion to weighted counting comparison is exactly what brings stochastic hardness into the picture.

\paragraph{Example.}
Consider a stochastic decision problem with $S=\{s_1,s_2\}$, $A=\{a,b\}$, uniform distribution $P(s_1)=P(s_2)=1/2$, and utilities $U(a,s_1)=2$, $U(b,s_1)=1$, $U(a,s_2)=0$, $U(b,s_2)=3$. For $I=\emptyset$: (i) Preservation: $\mathbb{E}[U(a,S)]=1$, $\mathbb{E}[U(b,S)]=2$, so $\Opt^{\mathrm{stoch}}_\emptyset=\{b\}$; but $\Opt(s_1)=\{a\}$, so preservation fails. (ii) Decisiveness: The empty fiber itself has a unique conditional optimum $\{b\}$, so $I=\emptyset$ is decisive. This shows decisiveness is strictly stronger than preservation.

\subsection{Stochastic Status at a Glance}

\begin{table}[h]
\centering
\renewcommand{\tabularxcolumn}[1]{m{#1}}
\begin{tabularx}{\linewidth}{@{}>{\raggedright\arraybackslash\hyphenpenalty=10000\exhyphenpenalty=10000}m{0.15\linewidth}>{\raggedright\arraybackslash}X>{\raggedright\arraybackslash}m{0.22\linewidth}>{\raggedright\arraybackslash}m{0.18\linewidth}@{}}
\toprule
\textbf{Predicate/} \textbf{query} & \textbf{What is proved here} & \textbf{Mechanized core} & \textbf{Open / not formalized} \\
\midrule
Stochastic sufficiency (preservation) & Polynomial-time under explicit-state encoding; bridges to static sufficiency & Explicit-state decider; full-support bridge package & No succinct-encoding hardness claimed \\
\midrule
Minimum / anchor preservation & Polynomial-time in explicit state; under full support, inherit coNP-completeness / $\Sigma_2^P$-completeness from the static regime & Full-support inheritance theorems; finite searches; obstruction/bridge lemmas & Open; conjecture coNP-hard \\
\midrule
Stochastic decisiveness & Polynomial-time under explicit-state encoding; PP-hard under succinct encoding & Reduction core; finite witness/checking schemata & No oracle-machine formalization \\
\midrule
Minimum / anchor decisiveness & PP-hard; in $\textsf{NP}^{\textsf{PP}}$ at paper level & Existential witness/checking schemata; bounded explicit-state searches & No oracle-class membership formalization \\
\bottomrule
\end{tabularx}
\caption{Regime classification summary for the stochastic row. Open entries mark explicit scope boundaries. Under full support, stochastic preservation is completely classified in the full-support base case.}
\label{tab:stochastic-summary}
\end{table}

The rest of this section unpacks the four rows of this summary. The preservation side is complete in explicit state and under full-support inheritance, and now also has support-sensitive partial results beyond full support. The decisiveness side carries the strongest succinct-encoding classification currently established in the paper.

\begin{problem}[STOCHASTIC-SUFFICIENCY-CHECK]\label{prob:stochastic-sufficiency}
\textbf{Input:} $\mathcal{D}_S=(A,X_1,\ldots,X_n,U,P)$ and $I\subseteq\{1,\ldots,n\}$. \\
\textbf{Question:} Is $I$ stochastically sufficient?
\end{problem}

\subsection{Stochastic Sufficiency}

\leanmetapending{\LHrng{DC}{91}{93}, \LH{OU12}}
\begin{theorem}[Stochastic Sufficiency is Decidable in Explicit State]\label{thm:stochastic-preservation-explicit}
Under the explicit-state encoding, \textsc{Stochastic-Sufficiency-Check} is decidable in polynomial time.
\end{theorem}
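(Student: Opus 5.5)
The plan is a direct brute-force evaluation of Definition~\ref{def:stochastic-sufficient}, counting arithmetic operations against the explicit input length $L_{\mathrm{exp}}=\Theta(|A||S|)$. Under the explicit-state encoding the input lists $U(a,s)$ for every pair $(a,s)$ and $P(s)$ for every state $s$, all as rationals. Every quantity in the definition can therefore be computed exactly by finitely many rational additions, multiplications, divisions and comparisons over tables of size $|A||S|$.

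The algorithm has four steps.
\begin{enumerate}
\item \textbf{Full-information optimizer.} For each $s\in S$, compute $\Opt(s)$ by scanning the $|A|$ values $U(\cdot,s)$. This costs $O(|A||S|)$ comparisons in total.
\item \textbf{Fibers.} Group the states by their projection $s_I$. A single pass over $S$ that keys on the tuple $s_I$ (or sorts by it) produces the fibers, and hence the admissible values $\alpha\in\mathrm{supp}(S_I)$, namely those with $P(S_I=\alpha)=\sum_{s_I=\alpha}P(s)>0$.
\item \textbf{Conditional optimizer.} For each admissible $\alpha$ and each $a\in A$, compute the unnormalized sum $\sum_{s: s_I=\alpha}P(s)U(a,s)$. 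Since the normalizer $P(S_I=\alpha)$ is the same for every action, the argmax of these sums equals $\Opt^{\mathrm{stoch}}_I(\alpha)$, and no division is needed. This costs $O(|A||S|)$ operations in total.
\item \textbf{Check.} For each $s$, compare $\Opt^{\mathrm{stoch}}_I(s_I)$ with $\Opt(s)$ as subsets of $A$. Accept if and only if every comparison succeeds.
\end{enumerate}
Correctness is immediate, because the algorithm evaluates the defining universally quantified condition literally over the finite set $S$.

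Two points of the definition need settling before the count goes through. First, the condition quantifies over all $s\in S$, but $\Opt^{\mathrm{stoch}}_I(s_I)$ is defined only for admissible $s_I$. I will adopt the reading that the check ranges over states whose fiber is admissible, for instance over $s$ with $P(S_I=s_I)>0$. Whichever convention the Lean decider (\LHrng{DC}{91}{93}) fixes, it changes only which states are looped over, not the running time. Second, $\Opt(s)$ here is the pointwise optimizer, which is consistent with the full-support bridge (\LH{OU12}).

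The main obstacle is bit complexity rather than operation count. I must check that the rational sums in step 3 have bitlength polynomial in the input size. Each sum is a sum of at most $|S|$ products of input rationals. Over a common denominator, its numerator and denominator have bitlength bounded by $O(|S|\cdot b)$, where $b$ is the maximal entry bitlength. Comparing two such rationals is then polynomial. The total cost is therefore polynomial in $|A|$, $|S|$ and $b$, which is polynomial in the explicit encoding length.
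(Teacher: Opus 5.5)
Your proposal is correct and follows essentially the same route as the paper: scan the explicit state space, compute the conditional fiber optimizer and the pointwise optimizer, and accept iff they agree at every state. The paper delegates the running time to the mechanized counted search with a linear-in-$|S|$ step bound. You additionally spell out the $O(|A||S|)$ arithmetic cost, the division-free argmax via unnormalized fiber sums, the handling of zero-mass fibers, and the polynomial bitlength of the rational sums, which are details the paper leaves implicit.
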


\begin{proof}
For a fixed information set $I$, the verifier scans the finite state space and checks whether the conditional fiber optimizer $\Opt^{\mathrm{stoch}}_I(s_I)$ agrees with the fully conditioned optimizer $\Opt(s)$ at every state $s$. The artifact contains a counted explicit-state search witnessing exactly this predicate together with a linear-in-$|S|$ step bound \leanmeta{\LHrng{DC}{91}{93}.} Hence stochastic sufficiency is polynomial-time decidable in the explicit-state model.
\end{proof}

\paragraph{Conjecture 4.1 (Support-Sensitive Preservation Complexity).}\label{con:succinct-soundness}
Under the succinct encoding of Section~\ref{sec:encoding}, \textsc{Stochastic-Preservation-Check}
for general distributions (without the full-support assumption) is coNP-hard.

Intuition: preservation remains syntactically universal (statewise optimizer agreement), so its quantifier profile aligns with coNP-style counterexample exclusion rather than PP-style majority comparison. The support-sensitive obstruction lemmas (Proposition~\ref{prop:stochastic-preservation-partial-general}) isolate why full-support transfer fails: zero-probability states can witness violations even when they never appear in conditional averages. The mechanized finite core for this obstruction is available \leanmeta{\LH{OU12}, \LH{DC96}, \LH{DC98}}.

Open Problem 4.2.\label{prob:minimum-succinct} Determine the exact succinct-encoding complexity of
\textsc{Stochastic-Minimum-Preservation} and \textsc{Stochastic-Anchor-Preservation}
for general distributions. Under Conjecture~\ref{con:succinct-soundness}, these inherit coNP-completeness and
$\Sigma_2^P$-completeness respectively, matching the static regime; resolving the conjecture
would complete the final open cell in the regime matrix (Table~2).

\leanmetapending{\LH{DC94}}
\begin{proposition}[Stochastic Sufficiency Implies Static Sufficiency]\label{prop:stochastic-to-static-sufficiency}
If $I$ is stochastically sufficient, then $I$ is sufficient for the underlying static decision problem.
\end{proposition}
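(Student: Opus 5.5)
The argument is a direct factorization through the stochastic optimizer. By Definition~\ref{def:stochastic-sufficient}, stochastic sufficiency says that for every state $s\in S$ the full-information optimizer satisfies $\Opt(s)=\Opt^{\mathrm{stoch}}_I(s_I)$. I will read this as a factorization of $\Opt$ through the projection $\pi_I$ and then invoke Proposition~\ref{prop:sufficiency-char}. Alternatively, I will verify Definition~\ref{def:sufficient} directly.

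\textbf{Step 1.} Fix $s,s'\in S$ with $s_I=s'_I=:\alpha$. Apply the stochastic sufficiency identity at both states:
\[
\Opt(s)=\Opt^{\mathrm{stoch}}_I(\alpha)=\Opt(s').
\]
Since $\Opt^{\mathrm{stoch}}_I$ is a function of the fiber value alone, the two right-hand sides coincide. This is exactly the sufficiency condition of Definition~\ref{def:sufficient}. Equivalently, $\overline{\Opt}_I:=\Opt^{\mathrm{stoch}}_I$ witnesses $\Opt=\overline{\Opt}_I\circ\pi_I$ in Proposition~\ref{prop:sufficiency-char}.

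\textbf{Step 2.} This is the step I expect to be the main (if minor) obstacle: domain bookkeeping. $\Opt^{\mathrm{stoch}}_I(\alpha)$ is only defined for $\alpha\in\mathrm{supp}(S_I)$. Yet Definition~\ref{def:stochastic-sufficient} quantifies over all $s\in S$, including states whose projection lies outside the support, where the conditional expectation is undefined.

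I will handle this by following whatever convention makes the definition meaningful. If the predicate is read as requiring every $s_I$ to be admissible, then the identity holds at all $s$ and Step~1 goes through verbatim. If instead $\Opt^{\mathrm{stoch}}_I$ is extended off-support by some fixed default value, as a total function in the artifact would be, then it is still a single function of $\alpha$. In that case the equality chain in Step~1 is unaffected, because it uses only that the middle term depends on $\alpha$ alone.

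In either reading, no property of $P$ is used, so no full-support assumption is needed for this direction. Full support matters only for the converse bridge.
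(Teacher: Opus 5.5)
Your proof is correct and is the same argument the paper gives: you apply the statewise identity $\Opt(s)=\Opt^{\mathrm{stoch}}_I(s_I)$ at two states with equal $I$-projection, which forces $\Opt(s)=\Opt(s')$, so $\Opt$ factors through $\pi_I$. Your Step~2 handling of off-support fibers is a harmless extra that the paper leaves implicit; the argument only needs $\Opt^{\mathrm{stoch}}_I$ to be a function of $s_I$ alone, and uses no property of $P$.
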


\begin{proof}
If the coarse conditional optimizer agrees with the fully conditioned optimizer at every state, then any two states that agree on $I$ also have equal full-information optimal-action sets. So the original static optimizer already factors through the projection to $I$.
\end{proof}

\leanmetapending{\LHrng{DC}{95}{96}}
\begin{proposition}[Full-Support Equivalence]\label{prop:stochastic-full-support-equiv}
Assume the distribution $P$ has full support and the action set is nonempty. Then $I$ is statically sufficient if and only if it is stochastically sufficient.
\end{proposition}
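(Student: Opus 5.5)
The plan is to prove the two directions separately. The backward direction (stochastic $\Rightarrow$ static) is exactly Proposition~\ref{prop:stochastic-to-static-sufficiency}, and it needs neither hypothesis. The work is in the forward direction: static sufficiency together with full support implies $\Opt^{\mathrm{stoch}}_I(s_I)=\Opt(s)$ for every $s\in S$.

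For the forward direction, first use full support. Every $\alpha$ of the form $s_I$ then has $P(S_I=\alpha)>0$, so every fiber $F_\alpha=\{s: s_I=\alpha\}$ is admissible. The conditional expectation
\[
\mathbb{E}[U(a,S)\mid S_I=\alpha]=\sum_{s\in F_\alpha} w_s\,U(a,s)
\]
is well defined, with weights $w_s=P(s)/P(S_I=\alpha)>0$ for \emph{every} $s\in F_\alpha$. Nonemptiness of $A$ guarantees that $\Opt(s)$ and the conditional argmax are both nonempty, since they are argmaxes over a finite nonempty set. By Proposition~\ref{prop:sufficiency-char}, static sufficiency means $\Opt$ takes a single value $O_\alpha\neq\emptyset$ on all of $F_\alpha$. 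It then suffices to show $\Opt^{\mathrm{stoch}}_I(\alpha)=O_\alpha$.

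The key step is a pointwise-to-average comparison on the fiber. Take $a,a'\in O_\alpha$. Both are optimal at every $s\in F_\alpha$, so $U(a,s)=U(a',s)=\max_{c}U(c,s)$ for each such $s$. Hence all actions in $O_\alpha$ have the same conditional expectation, namely $M_\alpha:=\sum_s w_s\max_c U(c,s)$. Any action $b$ satisfies $U(b,s)\le\max_c U(c,s)$ pointwise, so $\mathbb{E}[U(b,S)\mid S_I=\alpha]\le M_\alpha$, and $O_\alpha\subseteq\Opt^{\mathrm{stoch}}_I(\alpha)$. If $b\notin O_\alpha$, then $b\notin\Opt(s)$ for every $s\in F_\alpha$, so the inequality is strict at every state of the fiber. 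Because all weights $w_s$ are strictly positive and $F_\alpha\neq\emptyset$, the averaged inequality is also strict. This excludes $b$ from the conditional argmax and gives equality of the two sets.

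The main obstacle is this strictness step. It is the only place where full support is genuinely used, and it is exactly where the support-sensitive obstruction arises in general. If some $s\in F_\alpha$ had $P(s)=0$, its contribution would vanish from the average. Strictness would still survive on the positive-mass states, but the statewise requirement $\Opt^{\mathrm{stoch}}_I(s_I)=\Opt(s)$ is quantified over all $s\in S$, and without full support some fibers may not be admissible at all. I would therefore state the weight positivity explicitly and note that constancy of $\Opt$ on the fiber is what lets strict pointwise dominance be uniform across all $s\in F_\alpha$.
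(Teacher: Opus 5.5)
Your proof is correct and follows the paper's route. The reverse direction is Proposition~\ref{prop:stochastic-to-static-sufficiency}, and the forward direction uses full support together with fiberwise constancy of $\Opt$ to show that conditional averaging cannot change the optimizer on any fiber.

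One correction to your closing commentary: strictness is not where full support is needed. Strict pointwise dominance holds at every state of the fiber, so the averaged inequality is already strict once a single state of the fiber has positive mass. Full support is therefore used only to make every fiber admissible, which is exactly the weaker hypothesis of Proposition~\ref{prop:stochastic-preservation-partial-general}(3).
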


\begin{proof}
The forward direction uses full support to show that when the static optimizer is constant on each $I$-fiber, conditional averaging cannot change the optimizer on that fiber. The reverse direction is Proposition~\ref{prop:stochastic-to-static-sufficiency}. Both directions are mechanized.
\end{proof}

\leanmetapending{\LHrng{DC}{97}{99}}
\begin{proposition}[Quotient Equivalence under Full Support]\label{prop:stochastic-quotient-equiv}
Under the hypotheses of Proposition~\ref{prop:stochastic-full-support-equiv}, the stochastic fiber quotient induced by $I$ coincides with the original decision quotient.
\end{proposition}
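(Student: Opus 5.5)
The plan is to reduce the statement to a pointwise identity of the two classifying maps on $S$. I read the \emph{stochastic fiber quotient induced by $I$} as the quotient of $S$ by the relation
\[
s \sim_I^{\mathrm{stoch}} s' \iff \Opt^{\mathrm{stoch}}_I(s_I)=\Opt^{\mathrm{stoch}}_I(s'_I).
\]
By Proposition~\ref{prop:optimizer-coimage}, $Q_{\mathcal{D}}$ is the quotient of $S$ by equality of $\Opt(s)$, that is, the coimage of $\Opt$. So it suffices to show that the two maps $s\mapsto \Opt^{\mathrm{stoch}}_I(s_I)$ and $s\mapsto\Opt(s)$ coincide on all of $S$. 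Two quotients of $S$ by equality of the same function are then literally the same equivalence relation, and the canonical bijection is the identity on classes.

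The first step is well-definedness. Under full support, every $\alpha=s_I$ has positive marginal probability, so $s_I\in\mathrm{supp}(S_I)$ for every $s\in S$. Hence $\Opt^{\mathrm{stoch}}_I(s_I)$ is defined at every state, and the stochastic relation is an equivalence relation on all of $S$ rather than only on a support subset. Nonemptiness of $A$ ensures every $\arg\max$ is nonempty, so neither map takes the degenerate empty value.

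The second step is the pointwise identity. I take the hypotheses of Proposition~\ref{prop:stochastic-full-support-equiv} to include that $I$ is (statically, equivalently stochastically) sufficient; this is the setting in which that proposition is used. Then Proposition~\ref{prop:stochastic-full-support-equiv} yields stochastic sufficiency, and Definition~\ref{def:stochastic-sufficient} gives $\Opt^{\mathrm{stoch}}_I(s_I)=\Opt(s)$ for every $s$. Consequently $s\sim_I^{\mathrm{stoch}}s'$ holds iff $\Opt(s)=\Opt(s')$, which is exactly $s\sim_{\Opt}s'$. The universal property (Theorem~\ref{thm:quotient-universal}), applied in both directions with the identity on $S$, identifies the two quotient objects together with their quotient maps.

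The main obstacle is pinning down the hypothesis that $I$ is sufficient. Without it the claim fails: in the earlier two-state example with $I=\emptyset$, the stochastic quotient has a single class, while $\Opt$ separates $s_1$ from $s_2$. So the proof must state explicitly that $I$ is assumed sufficient. All the mathematical content is then carried by Proposition~\ref{prop:stochastic-full-support-equiv}, specifically its forward direction. That direction rests on one fact: if $\Opt$ is constant equal to $O$ on a fiber, then every $a\in O$ maximizes each summand of the conditional expectation. Full support then forces any $b\notin O$ to lose strictly on some positive-mass state of the fiber, so the fiber's $\arg\max$ is exactly $O$.
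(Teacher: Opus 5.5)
Your proof is correct and follows the paper's route: preservation makes $\Opt^{\mathrm{stoch}}_I(s_I)$ and $\Opt(s)$ agree at every state, so the two equivalence relations, and hence the quotient setoids, coincide, with Proposition~\ref{prop:stochastic-full-support-equiv} supplying preservation under full support. You are also right that this needs $I$ to be statically sufficient, which the paper's short proof assumes tacitly when it says that proposition ``supplies the needed preservation hypothesis''; your two-state example with $I=\emptyset$ shows the assumption cannot be dropped.
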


\begin{proof}
Under stochastic sufficiency, the conditional fiber optimizer and the fully conditioned optimizer agree statewise. Hence the induced stochastic equivalence relation matches the original decision-equivalence relation, so the quotient setoids coincide. Under full support, Proposition~\ref{prop:stochastic-full-support-equiv} supplies the needed preservation hypothesis.
\end{proof}

\begin{problem}[STOCHASTIC-MINIMUM-PRESERVATION]\label{prob:stochastic-minimum-preservation}
\textbf{Input:} $\mathcal{D}_S=(A,X_1,\ldots,X_n,U,P)$ and $k\in\mathbb{N}$. \\
\textbf{Question:} Does there exist a stochastically sufficient coordinate set $I$ with $|I|\le k$?
\end{problem}

\begin{problem}[STOCHASTIC-ANCHOR-PRESERVATION]\label{prob:stochastic-anchor-preservation}
\textbf{Input:} $\mathcal{D}_S=(A,X_1,\ldots,X_n,U,P)$ and $I\subseteq\{1,\ldots,n\}$. \\
\textbf{Question:} Does there exist an anchor state $s_0$ such that for every state $s$ agreeing with $s_0$ on $I$, one has $\Opt^{\mathrm{stoch}}_I(s_I)=\Opt(s)$?
\end{problem}

\leanmetapending{\LHrng{DQ}{94}{95}}
\begin{theorem}[Full-Support Inheritance for Preservation Variants]\label{thm:stochastic-preservation-full-support-variants}
Assume the distribution $P$ has full support and the action set is nonempty.
Then:
\begin{enumerate}
\item \textsc{Stochastic-Minimum-Preservation} is equivalent to the underlying static \textsc{Minimum-Sufficient-Set} query.
\item For each fixed $I$, \textsc{Stochastic-Anchor-Preservation} is equivalent to the underlying static \textsc{Anchor-Sufficiency} query on $I$.
\end{enumerate}
Consequently, under full support, stochastic minimum preservation is coNP-complete and stochastic anchor preservation is $\Sigma_2^P$-complete.
\end{theorem}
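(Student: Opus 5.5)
The plan is to reduce both parts to Proposition~\ref{prop:stochastic-full-support-equiv}, which under full support and nonempty $A$ says that $I$ is statically sufficient if and only if it is stochastically sufficient. The complexity consequences are then transferred from Theorems~\ref{thm:minsuff-conp} and~\ref{thm:anchor-sigma2p}.

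For part (1), apply Proposition~\ref{prop:stochastic-full-support-equiv} pointwise in $I$. Since the distribution $P$ is fixed, for every $I \subseteq \{1,\ldots,n\}$ we get: $I$ is stochastically sufficient iff $I$ is statically sufficient. Hence for each $k$, the two predicates $\exists I\,(|I|\le k \wedge \text{stoch-suff}(I))$ and $\exists I\,(|I|\le k \wedge \text{suff}(I))$ coincide, which proves the equivalence.

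For part (2), fix $I$ and an anchor $s_0$, and let $F=\{s : s_I=(s_0)_I\}$. Under full support every $\alpha$ is in $\mathrm{supp}(S_I)$, so $\Opt^{\mathrm{stoch}}_I(\alpha)$ is defined on every fiber. The proof then needs a fiberwise version of the full-support argument.
\begin{itemize}
\item \emph{Forward direction.} Suppose $\Opt$ is constant on $F$, with value $B$. Every $a\in B$ maximizes $U(\cdot,s)$ at every $s\in F$, so it maximizes the conditional expectation. If $a\notin B$, then $a$ is strictly worse at every state of $F$, each of which has positive weight, so it is strictly worse in expectation. Therefore $\Opt^{\mathrm{stoch}}_I((s_0)_I)=B=\Opt(s)$ for all $s\in F$. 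Nonemptiness of $A$ guarantees that $B$ is nonempty, so the comparison is meaningful.
\item \emph{Reverse direction.} If $\Opt^{\mathrm{stoch}}_I((s_0)_I)=\Opt(s)$ for all $s\in F$, then $\Opt$ is constant on $F$ (equal to that conditional set). This step is immediate and needs no support assumption.
\end{itemize}
So the same anchor witnesses both existential statements, and the two queries are equivalent. I would phrase the forward direction as a lemma, "fiberwise static constancy implies fiberwise conditional agreement under positive fiber weights," and derive Proposition~\ref{prop:stochastic-full-support-equiv}'s forward direction from it, so that the argument is not duplicated.

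For the complexity consequences, upper bounds are inherited through the equivalences, together with the observation that the stochastic instance carries the static instance as a subcomponent. For hardness, map each static instance to the stochastic instance with the same $A$ and $U$ and the uniform distribution. The uniform distribution has full support and a polynomial-size succinct description, so the hardness reductions of Theorems~\ref{thm:minsuff-conp} and~\ref{thm:anchor-sigma2p} compose with this map.

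The main obstacle is membership, since the stochastic instance's encoding of $P$ might be complex. I would handle this by arguing that, on the promise of full support, the equivalence lets the verifier ignore $P$ entirely and run the static certificate check. The precise statement is therefore coNP-completeness and $\Sigma_2^P$-completeness of the promise problems restricted to full-support inputs.
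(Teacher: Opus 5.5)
Your proposal is correct and follows the paper's proof: you apply Proposition~\ref{prop:stochastic-full-support-equiv} pointwise in $I$ for the minimum query, use the trivial reverse direction plus the fiberwise conditional-expectation argument for the anchor query, and inherit the complexity from Theorems~\ref{thm:minsuff-conp} and~\ref{thm:anchor-sigma2p}. Your explicit treatment of the transfer, with the uniform distribution supplying full-support instances for hardness and full support treated as a promise for membership, spells out details that the paper's ``inherit directly'' leaves implicit.
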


\begin{proof}
For minimum preservation, Proposition~\ref{prop:stochastic-full-support-equiv} applies pointwise to each candidate coordinate set $I$, so the existential minimization problem is unchanged under full support. For anchor preservation, if one anchor fiber preserves the full-information optimizer, then the optimizer is already constant on that fiber, giving static anchor sufficiency. Conversely, if the static optimizer is constant on one $I$-fiber, then under full support the same conditional-expectation argument used in Proposition~\ref{prop:stochastic-full-support-equiv} shows that the coarse conditional optimizer agrees with the full-information optimizer throughout that fiber. The complexity claims therefore inherit directly from Theorems~\ref{thm:minsuff-conp} and~\ref{thm:anchor-sigma2p}.
\end{proof}

\leanmetapending{\LHrng{DQ}{86}{89}}
\begin{proposition}[Explicit Finite Search for Preservation Variants]\label{prop:stochastic-preservation-variant-search}
For finite instances, the artifact contains counted exhaustive-search procedures for stochastic minimum preservation and stochastic anchor preservation. Their certified step bounds are $O(2^n)$ and $O(|S|)$, respectively.
\end{proposition}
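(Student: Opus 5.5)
The plan is to exhibit each procedure as a bounded exhaustive search over a finite index set, and to bound its cost by the size of that set times the cost of one inner check. The inner check is, in both cases, the explicit-state stochastic-sufficiency test of Theorem~\ref{thm:stochastic-preservation-explicit}, or a fiber-restricted version of it. Steps are counted in the explicit-state model: a call to that fixed-$I$ preservation decider (or its restriction to a single fiber) is charged as one unit, so the $O(\cdot)$ expressions count outer iterations with the inner check treated as a primitive. This is the same convention under which the static searches were counted.

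\textbf{Minimum preservation.} The procedure enumerates all $2^n$ subsets $I\subseteq\{1,\ldots,n\}$. For each $I$ with $|I|\le k$, it calls the preservation decider and accepts if some call succeeds. Correctness is immediate, because the search ranges over exactly the witnesses named in Problem~\ref{prob:stochastic-minimum-preservation}: acceptance holds if and only if some stochastically sufficient $I$ has $|I|\le k$. Soundness and completeness of each call are supplied by Theorem~\ref{thm:stochastic-preservation-explicit}. The loop makes at most $2^n$ calls, which gives the $O(2^n)$ certified bound.

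\textbf{Anchor preservation.} For the fixed $I$, the procedure enumerates anchor states $s_0\in S$. For each anchor it checks the fiber predicate $\forall s\,(s_I=(s_0)_I \Rightarrow \Opt^{\mathrm{stoch}}_I(s_I)=\Opt(s))$. This predicate is the fully conditioned sufficiency check restricted to one fiber, so it is decided by the same counted scan as in Theorem~\ref{thm:stochastic-preservation-explicit}. Correctness again follows because the search ranges over exactly the existential witness of Problem~\ref{prob:stochastic-anchor-preservation}. The outer loop has $|S|$ iterations, giving $O(|S|)$ in the counting convention above. The bound can be tightened by noting that the predicate depends only on $(s_0)_I$, so only one anchor per admissible fiber value needs testing.

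\textbf{Main obstacle.} The difficulty is not correctness but making the step accounting consistent. A naive count of the anchor search multiplies $|S|$ anchors by an $O(|S|)$ fiber scan and gives a quadratic bound, not a linear one. I would handle this in one of two ways. One is to state the bound per oracle call, as above. The other is to precompute $\Opt$ and the conditional optimizers $\Opt^{\mathrm{stoch}}_I(\alpha)$ for all $\alpha$ in a single pass, and then record for each fiber whether it is fully preserved, which takes one further linear pass. The second option yields a genuinely linear-in-$|S|$ search once the precomputation cost is absorbed.

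Finally, I would package the two searches as correctness-plus-step-bound witnesses, mirroring the static searches indexed in Section~\ref{sec:hardness}.
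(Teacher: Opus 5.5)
Your proposal follows the paper's own argument: a scan of the subset lattice that calls the explicit-state preservation checker on each candidate, and a scan of anchor states that checks preservation on the induced fiber, with each inner predicate evaluation charged as one step. That is the convention behind the $O(2^n)$ and $O(|S|)$ bounds, and it matches the linear static and sequential anchor searches. The one loose spot is your optional tightening to one anchor per \emph{admissible} fiber value: the anchor problem ranges over all anchor states, and under a total convention for the fiber optimizer a zero-probability fiber can itself be a witness, so unless such fibers are excluded by convention you should keep one representative per fiber value.
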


\begin{proof}
The minimum-preservation search scans the subset lattice and invokes the explicit-state preservation checker on each candidate set. The anchor-preservation search scans candidate anchor states and checks preservation on the induced fiber. Both procedures are packaged with correctness theorems and explicit step bounds in the artifact.
\end{proof}

\leanmetapending{\LHrng{DQ}{92}{93}}
\begin{theorem}[Explicit-State Tractability for Preservation Variants]\label{thm:stochastic-preservation-variants-explicit}
Under the explicit-state encoding, \textsc{Stochastic-Minimum-Preservation} and \textsc{Stochastic-Anchor-Preservation} are both decidable in polynomial time.
\end{theorem}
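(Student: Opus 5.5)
The plan is to combine the explicit-state preservation decider of Theorem~\ref{thm:stochastic-preservation-explicit} with the two exhaustive searches of Proposition~\ref{prop:stochastic-preservation-variant-search}. The key point is to measure every step bound against the explicit-state input length $L_{\mathrm{exp}}=\Theta(|A||S|)$ rather than against $n$.

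\textbf{Anchor preservation.} Here $I$ is part of the input. The search ranges over candidate anchors, which we may take to be fiber values $\alpha\in\mathrm{supp}(S_I)$; there are at most $|S|$ of them.

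1. For each candidate $\alpha$, compute $\mathbb{E}[U(a,S)\mid S_I=\alpha]$ for every $a\in A$ in one pass over $S$. This gives $\Opt^{\mathrm{stoch}}_I(\alpha)$.
2. Compute $\Opt(s)$ for each $s$ in the fiber and compare it with $\Opt^{\mathrm{stoch}}_I(\alpha)$.

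Each candidate costs $O(|A||S|)$ arithmetic operations on rationals. The rationals have polynomially bounded bitlength, since they are sums and quotients of input entries. The total cost is therefore $O(|A||S|^2)$ up to bitlength factors, which is polynomial in $L_{\mathrm{exp}}$. Anchors outside the support of $S_I$ either are excluded by definition or can be handled by the same scan.

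\textbf{Minimum preservation.} The naive search is $O(2^n)$ subsets, each checked by the polynomial decider. This is polynomial because $|S|=\prod_i|X_i|\ge 2^{n'}$, where $n'$ is the number of coordinates with $|X_i|\ge 2$. Coordinates with $|X_i|=1$ are trivially irrelevant and can be dropped from any candidate set without changing the projection fibers. We may also assume $|X_i|\ge1$, since an empty $X_i$ makes $S$ empty and the instance trivial. Consequently $2^{n'}\le|S|\le L_{\mathrm{exp}}$, so the subset lattice has polynomially many elements. The total cost is then $O(|S|)\cdot\mathrm{poly}(|A||S|)$.

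\textbf{Main obstacle.} The delicate step is exactly this reduction from $2^n$ to $|S|$. I must argue carefully that restricting to coordinates with nontrivial domains loses nothing. The argument is that including or excluding a singleton coordinate leaves every fiber $\{s:s_I=\alpha\}$ and every conditional expectation unchanged, so stochastic sufficiency of $I$ and of $I\setminus\{i\}$ coincide. Hence a witness of size at most $k$ exists iff one exists among the subsets of nontrivial coordinates.

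A secondary technical point is the bitlength of the conditional expectations. Here I would note that comparing $\mathbb{E}[U(a,S)\mid S_I=\alpha]$ across actions only requires comparing the unnormalized sums $\sum_{s_I=\alpha}P(s)U(a,s)$. This avoids division and keeps all numbers polynomial in size.
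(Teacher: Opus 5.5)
Your proposal is correct and follows the paper's route: enumerate candidate anchors (anchor states or their fiber values) and coordinate subsets, and run the explicit-state preservation checker on each. Two of your steps go beyond the paper. Dropping singleton-domain coordinates to get $2^{n'}\le |S|$ makes explicit why the $O(2^n)$ subset search is polynomial in the explicit input length, which the paper only asserts ``for fixed state and coordinate budgets''. Comparing unnormalized fiber sums likewise handles the bitlength issue, which the paper does not address.
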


\begin{proof}
For minimum preservation, the explicit-state search enumerates coordinate subsets and invokes the explicit preservation checker on each candidate set; the resulting counted procedure is polynomial in the explicit input size for fixed state and coordinate budgets. For anchor preservation, the explicit-state search enumerates anchor states and checks preservation on the induced fiber, yielding a polynomial-time verifier. Both wrappers are mechanized in the artifact.
\end{proof}

\leanmetapending{\LHrng{DQ}{90}{91}, \LHrng{DQ}{96}{97}}
\begin{proposition}[General-Distribution Obstructions and Support-Sensitive Bridge]\label{prop:stochastic-preservation-partial-general}
Let $I$ be a coordinate set.
\begin{enumerate}
\item If $I$ is stochastically sufficient, then $I$ contains every relevant coordinate of the underlying static decision problem.
\item Consequently, in the Boolean-cube setting, if \textsc{Stochastic-Minimum-Preservation} has a yes-instance with budget $k$, then the static relevant-coordinate set has cardinality at most $k$.
\item If the underlying static decision problem is sufficient on $I$ and every $I$-fiber contains at least one positive-probability state, then $I$ is stochastically sufficient.
\item If the underlying static decision problem is anchor-sufficient on $I$ and the anchor fiber contains a positive-probability state, then $I$ is stochastically anchor-preserving.
\end{enumerate}
\end{proposition}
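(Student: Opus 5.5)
The plan is to prove the four items in order, reducing (1)–(2) to results already established and proving (3)–(4) with a single fiberwise averaging lemma.

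For item~(1), I chain two earlier facts. By Proposition~\ref{prop:stochastic-to-static-sufficiency}, stochastic sufficiency of $I$ gives static sufficiency of $I$. The converse half of the proof of Proposition~\ref{prop:minimal-relevant-equiv} then applies. That half uses no minimality: if a relevant coordinate $i$ were outside $I$, its witness pair $s,s'$ differs only at $i$, so $s_I=s'_I$ while $\Opt(s)\neq\Opt(s')$. This contradicts sufficiency.

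Item~(2) is then a counting step. A yes-instance supplies a stochastically sufficient $I$ with $|I|\le k$. By item~(1), the relevant-coordinate set is a subset of $I$, so its cardinality is at most $k$. The Boolean-cube hypothesis only fixes the setting and plays no further role.

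For items~(3) and~(4), the core is the following fiber lemma. Fix $\alpha$ such that the fiber $F_\alpha=\{s: s_I=\alpha\}$ contains a state of positive probability; in particular $\alpha\in\mathrm{supp}(S_I)$, so $\Opt^{\mathrm{stoch}}_I(\alpha)$ is defined. Suppose $\Opt$ is constant on $F_\alpha$ with value $O_\alpha$. The lemma claims $\Opt^{\mathrm{stoch}}_I(\alpha)=O_\alpha$.

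To prove it, write
\[
\mathbb{E}[U(a,S)\mid S_I=\alpha]=\sum_{s\in F_\alpha} w_s\,U(a,s),
\]
with $w_s=P(s)/P(S_I=\alpha)\ge 0$ and $\sum_s w_s=1$. Set $M(s)=\max_{a'}U(a',s)$, which is well defined because $A$ is finite and nonempty. For $a\in O_\alpha$ we have $U(a,s)=M(s)$ at every $s\in F_\alpha$, so its conditional expectation equals $\sum_s w_sM(s)$. For $b\notin O_\alpha$ we have $U(b,s)<M(s)$ at every $s\in F_\alpha$. Since some $w_s>0$, the weighted sum for $b$ is strictly below $\sum_s w_sM(s)$. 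Hence the argmax is exactly $O_\alpha$, and $O_\alpha$ is nonempty.

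Item~(3) follows by applying the lemma to every fiber. Static sufficiency makes $\Opt$ constant on each $F_\alpha$, and the hypothesis gives each fiber positive mass. Therefore $\Opt^{\mathrm{stoch}}_I(s_I)=O_{s_I}=\Opt(s)$ for every $s$.

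Item~(4) applies the same lemma only to the anchor fiber. Take the anchor assignment $\alpha$ witnessing static anchor sufficiency and any anchor state $s_0$ with $(s_0)_I=\alpha$. For every $s$ agreeing with $s_0$ on $I$, the lemma gives $\Opt^{\mathrm{stoch}}_I(s_I)=\Opt(s)$, which is exactly the condition in \textsc{Stochastic-Anchor-Preservation}.

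The main obstacle is the strict-inequality step in the fiber lemma. It needs the fact that every action outside $O_\alpha$ is strictly suboptimal at \emph{every} state of the fiber, which is where constancy of $\Opt$ on the whole fiber is used. It also needs positive fiber mass, both to make $\Opt^{\mathrm{stoch}}_I(\alpha)$ defined at all and to turn the pointwise strict inequalities into a strict inequality of the averages. I would take care to state the lemma with these two hypotheses explicit. Without positive mass the conditional optimizer at $\alpha$ is simply undefined, which is exactly the zero-probability obstruction mentioned after Conjecture~4.1.
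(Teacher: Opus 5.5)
Your proposal is correct and follows essentially the same route as the paper. Items (1)--(2) compose the unconditional bridge from stochastic preservation to static sufficiency with the fact that every sufficient set contains all relevant coordinates, and items (3)--(4) refine the full-support averaging argument so that positivity is needed on only one state per fiber, exactly as your fiber lemma does; your explicit strict-inequality computation just spells out the step the paper leaves to its mechanization.
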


\begin{proof}
The first statement composes the unconditional bridge from stochastic preservation to static sufficiency with the static theorem that every sufficient set contains all relevant coordinates. The second is the resulting cardinality bound for any witness set in the minimum query. For the positive-support statements, the mechanized proof refines the full-support argument: it uses positivity only on one representative state per relevant fiber, together with fiberwise invariance of the conditional optimizer, to propagate static sufficiency or anchor sufficiency into stochastic preservation on the whole fiber.
\end{proof}

\paragraph{What this paper establishes for preservation.}
This yields a complete preservation classification under full support, together with support-sensitive partial theory beyond that base case. The PP-hardness results in the stochastic row concern decisiveness, not preservation.

\begin{problem}[STOCHASTIC-DECISIVENESS-CHECK]\label{prob:stochastic-singleton-sufficiency}
\textbf{Input:} $\mathcal{D}_S=(A,X_1,\ldots,X_n,U,P)$ and $I\subseteq\{1,\ldots,n\}$. \\
\textbf{Question:} Is $I$ stochastically decisive?
\end{problem}

\paragraph{Naming convention.}
From this point onward, the historical names ``stochastic anchor sufficiency'' and ``stochastic minimum sufficiency'' refer to the decisiveness family, not the preservation family introduced above.

\begin{definition}[Stochastic Anchor Sufficiency]\label{def:stochastic-anchor-sufficient}
A coordinate set $I$ is \emph{stochastically anchor-sufficient} if there exists an admissible fiber value $\alpha \in \mathrm{supp}(S_I)$ and an action $a \in A$ such that
\[
\arg\max_{a'\in A}\,\mathbb{E}[U(a',S)\mid S_I = \alpha] = \{a\}.
\]
Equivalently, some $I$-fiber has a unique conditional optimal action.
\end{definition}

\begin{problem}[STOCHASTIC-ANCHOR-SUFFICIENCY-CHECK]\label{prob:stochastic-anchor-sufficiency}
\textbf{Input:} $\mathcal{D}_S=(A,X_1,\ldots,X_n,U,P)$ and $I\subseteq\{1,\ldots,n\}$. \\
\textbf{Question:} Is $I$ stochastically anchor-sufficient?
\end{problem}

\begin{problem}[STOCHASTIC-MINIMUM-SUFFICIENT-SET]\label{prob:stochastic-minimum-sufficient}
\textbf{Input:} $\mathcal{D}_S=(A,X_1,\ldots,X_n,U,P)$ and $k\in\mathbb{N}$. \\
\textbf{Question:} Does there exist a stochastically decisive coordinate set $I$ with $|I|\le k$?
\end{problem}

\subsection{Encoding-Sensitive Complexity of Stochastic Decisiveness}

The stochastic regime preserves the same underlying certification template, namely whether the retained coordinates determine the optimizer, but asks it after conditioning and expectation have been introduced. The current complexity package concerns stochastic decisiveness because it admits a clean mechanized reduction core and directly supports the existential anchor/minimum queries. For the empty-information case, one is already comparing aggregate mass of satisfying versus nonsatisfying states. More generally, stochastic decisiveness asks whether one action strictly dominates all competitors in conditional expected utility on each admissible fiber.

\leanmetapending{\LH{DC45}, \LH{DC62}, \LH{DC72}, \LHrng{OU}{8}{11}}
\begin{theorem}[Stochastic Decisiveness is Encoding-Sensitive]\label{thm:stochastic-pp}
For the stochastic decisiveness predicate above, \textsc{Stochastic-Decisiveness-Check} is decidable in polynomial time under the explicit-state encoding. Under the succinct encoding it is PP-hard. The accompanying upper-bound argument is proved in the paper via an explicit bad-fiber witness characterization whose finite witness/checking core is mechanized in the artifact.
\end{theorem}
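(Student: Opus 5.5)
The proof has two parts: an explicit-state upper bound and a succinct-encoding lower bound. The upper bound comes from a bad-fiber witness characterization.

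\textbf{Explicit-state upper bound.} Under the explicit-state encoding, the table of $U$ and the distribution $P$ are given over all of $S$. The algorithm is as follows:
\begin{enumerate}
\item Group the states by their projection $s_I$, one pass over $S$ with a dictionary.
\item For each admissible fiber value $\alpha$, meaning one whose total mass $P(S_I=\alpha)$ is positive, compute the unnormalized conditional expectations $\sum_{s:\,s_I=\alpha} P(s)\,U(a,s)$ for every $a\in A$. Normalization by the positive fiber mass does not change the argmax, so it can be skipped.
\item Check that the maximum of these values is attained by exactly one action.
\end{enumerate}
This costs $O(|A||S|)$ rational arithmetic operations on numbers of polynomial bitlength, so it runs in polynomial time in $L_{\mathrm{exp}}$.

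The same computation yields the bad-fiber characterization. $I$ fails to be decisive if and only if there exist an admissible $\alpha$ and two distinct actions $a\ne b$ with
\[
\mathbb{E}[U(a,S)\mid S_I=\alpha]=\mathbb{E}[U(b,S)\mid S_I=\alpha]=\max_{c}\mathbb{E}[U(c,S)\mid S_I=\alpha].
\]
Call such a triple $(\alpha,a,b)$ a bad-fiber witness. It is checkable from the fiber sums. I would cite the mechanized witness/checking core for this equivalence.

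\textbf{Succinct PP-hardness.} I reduce from MAJSAT, or equivalently from the PP-complete problem of deciding whether $\#\varphi > 2^{n-1}$. The natural setting is the empty-information slice $I=\emptyset$, mirroring the TAUTOLOGY reduction of Theorem~\ref{thm:sufficiency-conp}. Given $\varphi(x_1,\ldots,x_n)$, take $X_i=\{0,1\}$, $P$ uniform (succinctly described), and two actions:
\begin{itemize}
\item $a$ with $U(a,s)=\varphi(s)$,
\item $b$ with $U(b,s)=1/2$ as a baseline.
\end{itemize}
Then $\mathbb{E}[U(a,S)]=\#\varphi/2^n$ and $\mathbb{E}[U(b,S)]=1/2$. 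The empty set is decisive if and only if $\#\varphi\ne 2^{n-1}$, which is not yet a PP predicate. To get a strict threshold, I would shift the baseline by half a unit of mass: set $U(b,s)=(2^{n}+1)/2^{n+1}$, or scale all utilities by $2^{n+1}$ to stay integral. With this choice a tie is impossible, so decisiveness always holds and the instance carries no information.

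The fix is a third action $c$ whose expected utility sits exactly at the threshold, so that "$a$ uniquely optimal" is forced. I would instead use the following cleaner trick. Let both $b$ and $c$ be constant baselines with value $1/2$, so $b$ and $c$ are always tied with each other. Then decisiveness holds if and only if $a$ strictly beats them, that is, if and only if $\#\varphi>2^{n-1}$, which is exactly MAJSAT. The circuit $C_U$ has size polynomial in $|\varphi|$, and the uniform distribution needs only a sampling circuit or an implicit "uniform" flag, so the reduction is polynomial.

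The main obstacle is calibrating the tie structure so that the uniqueness requirement in decisiveness encodes a strict majority threshold rather than mere non-equality. The duplicated-baseline device is what I would try first. I would also check that it respects the paper's conventions: $A$ is explicit with three actions, and the fiber $\alpha=()$ is admissible because $P$ has full mass on it.
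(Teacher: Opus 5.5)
Your proof is correct and follows the paper's route. The explicit-state side is the same fiber-by-fiber algorithm, and the hardness side is the same three-action MAJSAT gadget at $I=\emptyset$, where two identical constant baselines tie with each other so that decisiveness holds exactly when the formula action strictly wins. The only difference is the baseline value: you use exactly $\tfrac12$, which encodes strict majority (the exact-half case is a three-way tie, hence non-decisive), while the paper uses $\tfrac12-2^{-(n+1)}$, which encodes ``at least half''. Both thresholds are PP-complete. Your pairwise bad-fiber witness matches the paper's non-singleton-fiber witness whenever $A\neq\emptyset$.
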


\begin{proof}
For the explicit-state encoding, the number of admissible fibers is at most $|S|$, so one can enumerate the fibers and, for each fiber value $\alpha$, compute the conditional expected utilities of all actions directly from the table. Checking whether exactly one action attains the maximum on each fiber is therefore polynomial in the explicit input size.

For succinct-encoding hardness, reduce MAJSAT by the three-action gadget used in the mechanized reduction. Given Boolean formula $\varphi$, take $S=\{0,1\}^n$, uniform $P$, actions $\{\mathrm{accept},\mathrm{hold}_L,\mathrm{hold}_R\}$, and utilities
\[
U(\mathrm{accept},s)=\mathbf{1}[\varphi(s)],\qquad
U(\mathrm{hold}_L,s)=U(\mathrm{hold}_R,s)=\frac12-2^{-(n+1)},
\]
with $I=\emptyset$. There is only one admissible fiber, so stochastic decisiveness asks whether the prior-optimal action is unique. By construction, this happens exactly when $\mathbb{E}[\varphi]\ge 1/2$, i.e., in the MAJSAT case. Thus MAJSAT many-one reduces to stochastic decisiveness.

For the upper bound, failure of decisiveness is witnessed by one bad state whose observed fiber does not have a singleton conditional optimum. The current artifact packages exactly this no-witness characterization together with the corresponding existential witness schema, and the two directions are bundled in a summary theorem \leanmeta{\LHrng{OU}{8}{11}.} These mechanized witness/checking packages verify the finite combinatorial core used by the paper's oracle-class arguments. Concretely, the artifact verifies the existential-universal quantifier structure of the predicate, the witness-checking schema that guesses an anchor and verifies conditional-uniqueness using PP comparisons, and the finite-step-counted search wrappers that bound the nondeterministic guessing phase.\leanmeta{\LHrng{OU}{1}{2}, \LHrng{OU}{6}{7}} The standard oracle-machine reduction proving $\textsf{NP}^{\textsf{PP}}$ membership is argued in the paper text and not mechanized. The corresponding $\textsf{coNP}^{\textsf{PP}}$-style and $\textsf{NP}^{\textsf{PP}}$-style memberships are then proved in the paper text by standard complexity-theoretic reasoning.
\end{proof}

\leanmetapending{\LH{DC19}}
\begin{proposition}[Stochastic Decisiveness Yields Stochastic Anchor Sufficiency]\label{prop:stochastic-anchor-refinement}
If $I$ is stochastically decisive, then $I$ is stochastically anchor-sufficient.
\end{proposition}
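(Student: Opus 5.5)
The plan is to unfold Definitions~\ref{def:stochastic-singleton-sufficient} and~\ref{def:stochastic-anchor-sufficient}. Decisiveness is a universal statement over admissible fibers. Anchor sufficiency is the corresponding existential statement over the same fibers, with the same singleton-optimum condition. So the implication is the standard step from $\forall$ to $\exists$. That step needs only one extra fact: the index set $\mathrm{supp}(S_I)$ is nonempty.

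First, I will establish that nonemptiness. Since $P\in\Delta(S)$ is a probability distribution on the finite set $S$, its total mass is $1$. Hence some state $s_0$ has $P(s_0)>0$. The projected value $\alpha_0:=(s_0)_I$ then satisfies $\Pr[S_I=\alpha_0]\ge P(s_0)>0$, so $\alpha_0\in\mathrm{supp}(S_I)$ is an admissible fiber value. In particular, the conditional expectations $\mathbb{E}[U(a,S)\mid S_I=\alpha_0]$ are well defined, so $\Opt^{\mathrm{stoch}}_I(\alpha_0)$ makes sense.

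Second, I will instantiate decisiveness at $\alpha_0$. This yields an action $a\in A$ with $\Opt^{\mathrm{stoch}}_I(\alpha_0)=\{a\}$. By definition of $\Opt^{\mathrm{stoch}}_I$, this is literally the equation
\[
\arg\max_{a'\in A}\mathbb{E}[U(a',S)\mid S_I=\alpha_0]=\{a\}.
\]
The pair $(\alpha_0,a)$ therefore witnesses stochastic anchor sufficiency.

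The only step with any content is the nonemptiness of $\mathrm{supp}(S_I)$. In the mechanized setting, this is where care is needed: one must use that the distribution is a genuine probability measure, with positive total mass, rather than an arbitrary nonnegative weight function. If the artifact's distribution type permitted zero total mass, decisiveness would hold vacuously and the implication would fail. So I would first check that the mass-one (or nonempty-support) hypothesis is available, and derive the positive-probability state from it, before doing the trivial instantiation.
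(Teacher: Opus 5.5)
Your proof is correct and follows the paper's argument: pick an admissible fiber value and instantiate decisiveness there to obtain the singleton witness $\Opt^{\mathrm{stoch}}_I(\alpha)=\{a\}$. What you add is an explicit reason why $\mathrm{supp}(S_I)$ is nonempty ($P$ has total mass $1$ on the finite set $S$); the paper's ``choose any admissible fiber value'' assumes this without saying so.
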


\begin{proof}
Choose any admissible fiber value $\alpha$. By stochastic decisiveness, there exists an action $a$ such that
\[
\Opt^{\mathrm{stoch}}_I(\alpha)=\{a\}.
\]
That fiber therefore witnesses stochastic anchor sufficiency.
\end{proof}

\leanmetapending{\LHrng{DC}{40}{42}}
\begin{theorem}[Stochastic Anchor Sufficiency is PP-hard]\label{thm:stochastic-anchor-hard}
\textsc{Stochastic-Anchor-Sufficiency-Check} is PP-hard.
\end{theorem}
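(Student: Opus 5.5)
We reduce MAJSAT to \textsc{Stochastic-Anchor-Sufficiency-Check}, reusing the three-action gadget from the proof of Theorem~\ref{thm:stochastic-pp}. The key point is that with $I=\emptyset$ the anchor predicate and the decisiveness predicate coincide. There is exactly one admissible fiber value, namely the empty assignment, which has probability $1$. Definition~\ref{def:stochastic-anchor-sufficient} then asks for that fiber to have a singleton conditional optimum, and Definition~\ref{def:stochastic-singleton-sufficient} asks the same thing. So the existential choice of anchor is trivial, and the PP-hardness transfers directly.

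Concretely, given a Boolean formula $\varphi$ on $n$ variables, we build the following instance:
\begin{itemize}
\item state space $S=\{0,1\}^n$ with the uniform distribution $P$;
\item actions $\{\mathrm{accept},\mathrm{hold}_L,\mathrm{hold}_R\}$;
\item utilities $U(\mathrm{accept},s)=\mathbf{1}[\varphi(s)]$ and $U(\mathrm{hold}_L,s)=U(\mathrm{hold}_R,s)=c$, where $c=\tfrac12-2^{-(n+1)}$;
\item information set $I=\emptyset$.
\end{itemize}
The utility circuit is essentially the circuit for $\varphi$ together with constants of $O(n)$ bits, so the reduction is polynomial in $|\varphi|$.

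Write $p=\#\varphi/2^n$. The expected utilities are $p$ for $\mathrm{accept}$ and $c$ for each hold action. The two hold actions always tie, so the only way to get a singleton optimum is for $\mathrm{accept}$ to win strictly, that is, $p>c$. Since $p$ is a multiple of $2^{-n}$, we have $p>\tfrac12-2^{-(n+1)}$ if and only if $p\ge\tfrac12$. Hence the instance is a yes-instance exactly when $\#\varphi\ge 2^{n-1}$.

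The step needing the most care is matching this threshold to the chosen MAJSAT convention. If MAJSAT is the strict version, $\#\varphi>2^{n-1}$, we instead set $c=\tfrac12+2^{-(n+1)}$, and the same granularity argument gives $p>c$ if and only if $p>\tfrac12$. Either variant is PP-complete, so the conclusion is unaffected.

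We should also confirm that no degenerate ``yes'' can arise from a hold action. This is exactly why two identical hold actions are used, rather than a single baseline as in Theorem~\ref{thm:sufficiency-conp}. Their tie guarantees that the optimal set is never $\{\mathrm{hold}_L\}$ or $\{\mathrm{hold}_R\}$ alone. Uniqueness can therefore only be witnessed by $\mathrm{accept}$.

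Alternatively, we can argue compactly via Proposition~\ref{prop:stochastic-anchor-refinement} together with the converse for $I=\emptyset$, which holds because there is a single fiber. These give equivalence of the two predicates on all instances with $I=\emptyset$. The hardness reduction in Theorem~\ref{thm:stochastic-pp} only outputs such instances, so the same many-one reduction witnesses PP-hardness of \textsc{Stochastic-Anchor-Sufficiency-Check}.
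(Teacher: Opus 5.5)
Your proposal is correct and is the paper's own proof: a MAJSAT reduction with $I=\emptyset$, uniform $P$, and the $\mathrm{accept}/\mathrm{hold}_L/\mathrm{hold}_R$ gadget with hold value $\tfrac12-2^{-(n+1)}$, where the tied hold actions rule out a spurious singleton optimum. Your granularity step ($p$ is a multiple of $2^{-n}$, so $p>c \iff p\ge\tfrac12$) spells out the threshold comparison the paper states without detail, and your strict-MAJSAT variant and the Proposition~\ref{prop:stochastic-anchor-refinement}-based equivalence at $I=\emptyset$ are valid but not needed.
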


\begin{proof}
Reduce MAJSAT. Given Boolean formula $\varphi$ on $n\geq 1$ variables, take $S=\{0,1\}^n$, uniform $P$, $I=\emptyset$, and three actions $\{\mathrm{accept},\mathrm{hold}_L,\mathrm{hold}_R\}$. Set
\[
U(\mathrm{accept},s)=\mathbf{1}[\varphi(s)],\qquad
U(\mathrm{hold}_L,s)=U(\mathrm{hold}_R,s)=\frac12-2^{-(n+1)}.
\]
If at least half of all assignments satisfy $\varphi$, then $\mathrm{accept}$ is the unique optimal action on the empty-information fiber, so the instance is anchor-sufficient. If fewer than half satisfy $\varphi$, then $\mathrm{hold}_L$ and $\mathrm{hold}_R$ tie above $\mathrm{accept}$, so no singleton anchor optimum exists. Thus MAJSAT many-one reduces to the stochastic anchor query.
\end{proof}

\leanmetapending{\LH{DC45}, \LH{DC47}}
\begin{theorem}[Stochastic Minimum-Sufficient-Set is PP-hard]\label{thm:stochastic-minimum-hard}
\textsc{Stochastic-Minimum-Sufficient-Set} is PP-hard.
\end{theorem}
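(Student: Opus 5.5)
The plan is to reduce MAJSAT to the $k=0$ slice of \textsc{Stochastic-Minimum-Sufficient-Set}, reusing the three-action gadget from Theorem~\ref{thm:stochastic-pp} and Theorem~\ref{thm:stochastic-anchor-hard}. Given a Boolean formula $\varphi$ on $n\ge 1$ variables, build the stochastic decision problem with $S=\{0,1\}^n$, uniform $P$, actions $\{\mathrm{accept},\mathrm{hold}_L,\mathrm{hold}_R\}$, and utilities
\[
U(\mathrm{accept},s)=\mathbf{1}[\varphi(s)],\qquad U(\mathrm{hold}_L,s)=U(\mathrm{hold}_R,s)=\tfrac12-2^{-(n+1)},
\]
and output budget $k=0$. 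The construction is clearly polynomial in $|\varphi|$ under the succinct encoding: the utility circuit is $\varphi$ together with constants, and $P$ is uniform.

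The key step is that with $k=0$ the existential layer disappears: the only coordinate set with $|I|\le 0$ is $I=\emptyset$. So the instance is a yes-instance if and only if $\emptyset$ is stochastically decisive. The empty set has exactly one admissible fiber, the whole of $S$, on which conditional expectation is the prior expectation. Decisiveness of $\emptyset$ therefore coincides with the single-fiber uniqueness question already analyzed in Theorem~\ref{thm:stochastic-pp}, and it also coincides with anchor sufficiency of $\emptyset$ from Theorem~\ref{thm:stochastic-anchor-hard}, since with only one fiber the two notions agree. The correctness argument can be cited from those proofs.

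For completeness, I would recheck the threshold arithmetic. Let $m$ be the number of satisfying assignments, so $\mathbb{E}[U(\mathrm{accept})]=m/2^n$. If $m\ge 2^{n-1}$, then $m/2^n\ge \tfrac12>\tfrac12-2^{-(n+1)}$, so $\mathrm{accept}$ is uniquely optimal. If $m\le 2^{n-1}-1$, then $m/2^n\le \tfrac12-2^{-n}<\tfrac12-2^{-(n+1)}$, so $\mathrm{hold}_L$ and $\mathrm{hold}_R$ tie strictly above $\mathrm{accept}$ and no singleton optimum exists.

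The main obstacle is this margin calculation. The hold value must lie strictly between the two nearest achievable values of $m/2^n$ on either side of $\tfrac12$, and the duplicated hold action is what forces non-uniqueness in the no case. The reduction is to the ``at least half'' variant of MAJSAT, which is standardly PP-complete. Beyond that, the only care needed is to note that the reduction uses only the trivial $k=0$ slice, so hardness transfers directly with no search component.
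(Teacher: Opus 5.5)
Your proposal is correct and matches the paper's proof: both reduce MAJSAT (in its ``at least half'' form) to the $k=0$ slice using the same three-action gadget, noting that a decisive set of size at most $0$ exists iff $\emptyset$ is decisive. Your explicit margin check ($m/2^n \ge \tfrac12$ versus $m/2^n \le \tfrac12 - 2^{-n}$, on either side of the hold value $\tfrac12 - 2^{-(n+1)}$) is exactly the empty-set equivalence that the paper does not spell out and instead cites as certified in its artifact.
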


\begin{proof}
Reduce MAJSAT to the $k=0$ slice. In the same three-action gadget used above, there exists a stochastically decisive set of size at most $0$ iff the empty set itself is stochastically decisive. This empty-set equivalence for the gadget is certified in the artifact, so MAJSAT many-one reduces to \textsc{Stochastic-Minimum-Sufficient-Set}.
\end{proof}

\leanmetapending{\LHrng{DC}{62}{65}, \LHrng{DC}{72}{73}, \LH{DC76}, \LHrng{DC}{91}{93}}
\begin{proposition}[Explicit Finite Search for Stochastic Queries]\label{prop:stochastic-explicit-search}
For finite instances, the artifact contains counted exhaustive-search procedures for stochastic sufficiency, stochastic decisiveness, stochastic anchor sufficiency, and stochastic minimum sufficiency. Their certified step bounds are respectively $O(|S|)$, $O(|S|)$, $O(|S||A|)$, and $O(2^n)$.
\end{proposition}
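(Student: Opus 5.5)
The plan is to exhibit, for each of the four queries, an explicit exhaustive-search procedure over a finite index set. Each procedure has a per-item check of bounded cost, and each comes with a correctness lemma saying that the search returns true exactly when the defining predicate holds. The step bounds then come from counting outer-loop iterations. The per-item checks are treated as unit-cost primitives in the counted model, namely conditional expected-utility comparisons over a fixed fiber and argmax/singleton tests over $A$, which is how the static counted searches were also packaged.

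\textbf{Stochastic sufficiency.} Iterate over $s\in S$. At each $s$, compare $\Opt^{\mathrm{stoch}}_I(s_I)$ with $\Opt(s)$ and return false on the first mismatch. Correctness is immediate from Definition~\ref{def:stochastic-sufficient}, which is a universal statement over $s$, and the loop runs $|S|$ times, giving $O(|S|)$. This is the procedure already invoked in Theorem~\ref{thm:stochastic-preservation-explicit}.

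\textbf{Stochastic decisiveness.} Again iterate over $s\in S$. When $P$ gives $s_I$'s fiber positive mass, test whether $\Opt^{\mathrm{stoch}}_I(s_I)$ is a singleton. Every admissible fiber value $\alpha$ equals $s_I$ for some state $s$ with $P(s)>0$, so quantifying over states covers all of $\mathrm{supp}(S_I)$. This is exactly the bad-state witness characterization used in Theorem~\ref{thm:stochastic-pp}, and it gives $O(|S|)$.

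\textbf{Stochastic anchor sufficiency.} Iterate over pairs $(s,a)\in S\times A$. For each pair, check that $s_I$ is admissible and that $\Opt^{\mathrm{stoch}}_I(s_I)=\{a\}$. The existential in Definition~\ref{def:stochastic-anchor-sufficient} ranges over admissible $\alpha$ and actions $a$, and these are realized by such pairs, so the search is correct with $O(|S||A|)$ iterations.

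\textbf{Stochastic minimum sufficiency.} Enumerate all $2^n$ subsets $I\subseteq\{1,\ldots,n\}$. For each one, test $|I|\le k$ and call the decisiveness checker. Counting each subset as one outer step gives $O(2^n)$, with the inner check charged to the fixed-state-budget constant in the same way as Proposition~\ref{prop:stochastic-preservation-variant-search}.

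The main obstacle is the correctness lemmas, not the bounds. In particular, the searches indexed by states have to coincide exactly with predicates quantified over $\mathrm{supp}(S_I)$. Showing this requires that the conditional optimizer depends only on $s_I$ (fiberwise invariance) and that admissibility of $s_I$ is decidable from the finite table. I would prove these two facts once as auxiliary lemmas and reuse them in all four cases. The stated bounds are read in the counted model's convention, where the fiber expectation computation is a primitive step; if it is instead counted in full, each bound picks up a factor of $|S|$.
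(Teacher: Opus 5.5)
Your proposal is correct and takes essentially the same approach as the paper's proof. Each query is decided by a counted exhaustive scan: over states for sufficiency, over states for a bad-fiber witness in decisiveness, over anchor-state/action pairs for anchor sufficiency, and over the subset lattice for the minimum query, with each scan paired with a correctness theorem and an outer-loop step bound (your note that fiber expectations and argmax tests are counted as primitive steps makes explicit a convention the paper leaves implicit).
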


\begin{proof}
The stochastic-sufficiency search scans for a state where the coarse conditional optimizer disagrees with the fully conditioned optimizer; the decisiveness search scans for a violating state whose fiber-optimal set is non-singleton; the anchor search scans over candidate anchor-state/action pairs; and the minimum query scans the subset lattice. Each procedure admits a counted boolean search formulation with both a correctness theorem and an explicit step bound.
\end{proof}

\leanmetapending{\LHrng{OU}{1}{2}, \LHrng{OU}{6}{7}, \LH{PA3}, \LHrng{EH}{1}{5}}
\begin{theorem}[Stochastic Anchor and Minimum Queries Lie in $\textsf{NP}^{\textsf{PP}}$]\label{thm:stochastic-nppp-upper}
Under the present encoding, \textsc{Stochastic-Anchor-Sufficiency-Check} and \textsc{Stochastic-Minimum-Sufficient-Set} are both in $\textsf{NP}^{\textsf{PP}}$.
\end{theorem}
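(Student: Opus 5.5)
Both memberships follow the standard ``guess polynomially many bits, then check with polynomially many oracle calls'' template. The base ingredient is a \PP\ comparison primitive: given the succinct instance, a fiber value $\alpha$ and two actions $a,b$, deciding whether
\[
\sum_{s:\,s_I=\alpha} P(s)\,\bigl(U(a,s)-U(b,s)\bigr) > 0
\]
(and, dually, $\ge 0$) is in \PP. Here $P$ is assumed succinctly given, e.g.\ by a circuit outputting rational weights of polynomial bitlength, as in the MAJSAT gadget. The sum is a GapP-style weighted count: each term is a polynomial-size rational computable from the circuits, so after clearing a common denominator it becomes a difference of two \#P functions, and a sign test of a GapP function is a \PP\ predicate. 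Admissibility of $\alpha$, i.e.\ $\Pr[S_I=\alpha]>0$, is an NP predicate and hence also answered by the \PP\ oracle. Because conditioning divides both sides by the same positive quantity $\Pr[S_I=\alpha]$, strict comparison of conditional expectations is exactly this unconditioned sign test.

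\textbf{Anchor query.} The NP machine guesses a fiber value $\alpha$ (polynomial size, $\sum_{i\in I}\log|X_i|$ bits) and an action $a\in A$. With $|A|$ explicit, it then makes $|A|$ oracle queries: one query checks that $\alpha$ is admissible, and for each $b\neq a$ one query checks strict dominance $\mathbb{E}[U(a,S)-U(b,S)\mid S_I=\alpha]>0$. Acceptance happens exactly when $\Opt^{\mathrm{stoch}}_I(\alpha)=\{a\}$, matching Definition~\ref{def:stochastic-anchor-sufficient}. The finite witness/checking schema mechanized in the artifact certifies this existential-witness characterization; the paper supplies only the oracle accounting.

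\textbf{Minimum query.} The machine guesses $I$ with $|I|\le k$ and must verify a universal statement: every admissible $\alpha$ has a singleton optimum. This is a $\coNP^{\PP}$ check. I would handle it with the bad-fiber characterization from Theorem~\ref{thm:stochastic-pp}. Non-decisiveness is witnessed by an $\alpha$ and two actions $b\neq c$ with neither strictly dominated by the other within the fiber, which is checkable by an NP machine with \PP\ queries, so decisiveness lies in $\coNP^{\PP}$. To fold this into a single $\textsf{NP}^{\textsf{PP}}$ computation, I would use Toda's theorem: $\textsf{NP}^{\PP}\supseteq\PP^{\PP}\supseteq\mathrm{PH}^{\PP}$-type collapses are unnecessary here, since $\PP$ is closed under complement and $\textsf{NP}^{\PP}$ contains $\coNP^{\PP}$ via $\textsf{NP}^{\PP}=\textsf{NP}^{\textsf{NP}^{\PP}}$? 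That last identity is not standard, so I would instead fold the universal check into a single \PP\ query. The quantity to test is the number of admissible fibers with a non-singleton optimum, weighted to be positive, and the aim is to show it is zero. This does not obviously reduce to a single GapP sign, and it is the main obstacle.

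My first attempt is Toda's $\mathrm{PH}\subseteq\mathsf{P}^{\#\mathsf{P}}$ relativized. The inner predicate ``$\alpha$ is bad'' is in $\mathsf{P}^{\PP}$ for fixed $\alpha$ through pairwise comparisons, and $\mathsf{P}^{\PP}=\mathsf{P}^{\#\mathsf{P}}$. So $\exists\alpha$ bad lies in $\textsf{NP}^{\PP}$, and by Toda's theorem $\textsf{NP}^{\PP}\subseteq\mathsf{P}^{\#\mathsf{P}}$, while $\coNP^{\PP}\subseteq\mathsf{P}^{\PP^{\PP}}$. This leads to nested oracles rather than a clean bound, which signals that the minimum case needs a different argument. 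The fallback is to compute the inner predicate directly: a single deterministic polynomial-time procedure with \PP\ oracle access that decides ``some admissible fiber is bad'' by binary search on counts. If that fails, I would state the minimum case as $\textsf{NP}^{\textsf{PP}}$ only for bounded-size fiber domains, or otherwise as $\textsf{NP}^{\textsf{coNP}^{\PP}}$, and flag the discrepancy with the claimed bound.
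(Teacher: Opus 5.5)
Your anchor argument is correct and is the paper's argument. You guess the anchor fiber (the paper guesses a full anchor state $s_0$, you guess $\alpha=s_I$) and the action $a$. You then certify $\Opt^{\mathrm{stoch}}_I(\alpha)=\{a\}$ with one strict-dominance query to the PP oracle per competitor $b\neq a$. Your GapP sign test and separate admissibility query make explicit what the paper leaves implicit. One caveat: your ``clear a common denominator'' step needs the usual convention that weights and utilities are output over a fixed polynomial-bit denominator (e.g.\ dyadic fixed point, as in the MAJSAT gadget). With arbitrary per-state rationals, the common denominator over $2^n$ states need not have polynomial bitlength.

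For the minimum query your proposal does not reach the stated bound, so it is not a proof of the statement. However, the step you could not supply is exactly the one the paper asserts without argument. The paper guesses $I$ and says the anchor template ``yields the required PP-style verifier for the guessed set.'' That template certifies one existentially chosen fiber. Applied to decisiveness, it certifies only a bad fiber, i.e.\ the complement, which is why the paper elsewhere packages decisiveness as $\mathsf{coNP}^{\mathsf{PP}}$-style.

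Your suspicion is justified, and the gap cannot be closed without a complexity collapse. Take $\varphi(x,y)$ with $x\in\{0,1\}^p$, $y\in\{0,1\}^q$, uniform $P$, and six actions $(\tau,b)$ with $\tau\in\{\mathrm{accept},\mathrm{hold}_L,\mathrm{hold}_R\}$, $b\in\{0,1\}$, and
\[
U((\tau,b),s)=2\cdot\mathbf{1}[b=x_1\oplus\cdots\oplus x_p]+g_\tau(x,y),
\]
where $g_{\mathrm{accept}}=\mathbf{1}[\varphi(x,y)]$ and $g_{\mathrm{hold}_L}=g_{\mathrm{hold}_R}=\tfrac12-2^{-(q+1)}$.

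If $I$ omits some $x_i$, the parity is a fair coin on every fiber. Then $(\tau,0)$ and $(\tau,1)$ tie for each $\tau$, and no fiber has a singleton optimum. Hence with $k=p$ the only possibly decisive candidate is the $x$-block itself. Its fiber $x$ has a singleton optimum iff $\#\{y:\varphi(x,y)\}\ge 2^{q-1}$. The minimum query thus decides $\forall x\;\#\{y:\varphi(x,y)\}\ge 2^{q-1}$. This is the complement of E-MAJSAT for $\neg\varphi$, which is $\mathsf{coNP}^{\mathsf{PP}}$-complete.

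So \textsc{Stochastic-Minimum-Sufficient-Set} is $\mathsf{coNP}^{\mathsf{PP}}$-hard, and the claimed $\mathsf{NP}^{\mathsf{PP}}$ membership would force $\mathsf{NP}^{\mathsf{PP}}=\mathsf{coNP}^{\mathsf{PP}}$. What ``guess $I$, then verify'' actually proves is your fallback, $\mathsf{NP}^{\mathsf{coNP}^{\mathsf{PP}}}=\mathsf{NP}^{\mathsf{NP}^{\mathsf{PP}}}$. It gives $\mathsf{NP}^{\mathsf{PP}}$ only when $|S_I|$ is polynomially bounded. The right move is the one you ended with: state that bound and flag the discrepancy, rather than search for a repair.

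Drop three side claims made along the way:
\begin{itemize}
\item Toda's theorem does not give $\mathsf{NP}^{\mathsf{PP}}\subseteq\mathsf{P}^{\#\mathsf{P}}$; that would put E-MAJSAT in $\mathsf{P}^{\mathsf{PP}}$, an open problem.
\item ``Binary search on counts'' cannot decide whether some fiber is bad. Badness is itself a threshold predicate, and by the construction above that question is $\mathsf{NP}^{\mathsf{PP}}$-hard.
\item A non-decisiveness witness needs two actions that both attain the fiber maximum, not just two actions neither of which strictly dominates the other.
\end{itemize}
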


\begin{proof}
For anchor sufficiency, the mechanized collapse
\[
\begin{aligned}
&\mathrm{STOCHASTIC\mbox{-}ANCHOR\mbox{-}SUFFICIENCY\mbox{-}CHECK}(P,I) \\
&\qquad\iff \exists s_0\,\exists a\; \bigl(\mathrm{fiberOpt}(P,I,s_0)=\{a\}\bigr)
\end{aligned}
\]
is already available in the artifact. \leanmeta{\LH{PA3}} A nondeterministic polynomial-time machine may therefore guess $(s_0,a)$ and use a PP oracle to verify that $a$ is the unique conditional optimum on the anchor fiber. Concretely, uniqueness can be checked by comparing the conditional expected utility of $a$ against every competing action on that fiber, which is exactly the same majority-style conditional comparison used in the decisiveness verification.

For the minimum query, the machine guesses a coordinate set $I$ with $|I|\le k$ and then asks whether $I$ is stochastically decisive. The same witness/checking template used for the anchor case yields the required PP-style verifier for the guessed set, so the minimum query is also in $\textsf{NP}^{\textsf{PP}}$.
\end{proof}

Theorem~\ref{thm:stochastic-nppp-upper} completes the complexity picture for stochastic decisiveness: decisiveness is polynomial-time in the explicit-state model and PP-hard in the succinct model, while the anchor and minimum decisiveness queries are PP-hard and lie in $\textsf{NP}^{\textsf{PP}}$. Those oracle-class memberships are proved in the paper text. The artifact independently verifies the finite core behind those proofs: the bad-fiber counter-witness schema for decisiveness, the anchor-collapse equivalence, the generic existential-witness schema, bounded-witness recoveries for the explicit-state stochastic anchor and minimum searches, and existential-majority hardness/completeness packages for the anchor and decisiveness query families \leanmeta{\LHrng{OU}{8}{11}, \LHrng{EH}{2}{10}.}

\paragraph{Scope note.}
For stochastic preservation, this paper establishes explicit-state tractability for the preservation family, full-support inheritance, quotient equivalence, and support-sensitive obstruction/bridge lemmas beyond full support. The general-distribution minimum and anchor variants still remain open as full classification problems and are not claimed. For stochastic decisiveness, the paper establishes PP-hardness and $\textsf{NP}^{\textsf{PP}}$ membership under the stated encoding. The open entries in Table~2 are explicit scope boundaries rather than unresolved subcases.

The open entries in Table~2 are explicit scope boundaries. The gap between PP-hardness and $\textsf{NP}^{\textsf{PP}}$ membership for anchor/minimum decisiveness is structural, not a bookkeeping omission; resolving it requires determining whether the outer existential layer can be absorbed into PP.

\paragraph{Relevance boundary.}
The stochastic decisiveness query is fiber-indexed: fixing a candidate information set $I$
induces a derived decision problem whose optimizer is the conditional fiber
optimizer. In the current formalization this induced problem is always
$I$-sufficient \leanmeta{\LH{DC57}}, but its relevance notion is therefore also
indexed by $I$. For that reason, the paper does not assert a single global
relevance-set characterization for stochastic minimum sufficiency analogous to
the static and sequential regimes.

\subsection{Tractable Subcases}

\leanmetapending{\LH{DQ13}, \LH{DQ15}}
\begin{proposition}[Tractable Stochastic Cases]\label{prop:stochastic-tractable}
Stochastic sufficiency is polynomial-time solvable under each of the following restrictions:
\begin{enumerate}
\item product distributions $P=\bigotimes_i P_i$,
\item bounded support $|\mathrm{supp}(P)|\le k$,
\item structured families where conditional expectations are computable in polynomial time (e.g. log-concave models with oracle access).
\end{enumerate}
\end{proposition}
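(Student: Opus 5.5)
The plan is to handle the three restrictions separately, but to organize all of them around one reduction. Stochastic sufficiency of $I$ holds exactly when, for every state $s$, the conditional optimizer $\Opt^{\mathrm{stoch}}_I(s_I)$ equals $\Opt(s)$. The conditional optimizer is obtained by comparing the quantities $\mathbb{E}[U(a,S)\mid S_I=\alpha]$ over $a\in A$. So once these conditional expectations can be computed, deciding sufficiency splits into two tasks: (i) computing the fiber optimizer for each relevant $\alpha$, and (ii) checking statewise agreement with $\Opt(s)$. Case~3 is essentially this reduction stated as a hypothesis. The real work lies in Cases~1 and~2, where I have to show that tasks (i) and (ii) are polynomial under the given structure.

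\textbf{Bounded support.} I would read the instance as listing the support points $s^{(1)},\dots,s^{(k)}$ with their probabilities; this is the natural encoding when $|\mathrm{supp}(P)|\le k$.

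1. The admissible fibers $\alpha\in\mathrm{supp}(S_I)$ are exactly the projections $s^{(j)}_I$, so there are at most $k$ of them.
2. Each conditional expectation is a finite weighted sum over at most $k$ points, computed with $k|A|$ utility evaluations.
3. For task (ii), a difficulty appears: Definition~\ref{def:stochastic-sufficient} quantifies over all $s\in S$, including zero-probability states. That is exactly the support-sensitive phenomenon of Proposition~\ref{prop:stochastic-preservation-partial-general}.
4. I would deal with it by checking agreement on the support states in polynomial time. For off-support states, I would either use the convention that the predicate is only evaluated on admissible fibers, or note that the claim in the paper presupposes the explicit-state setting of Theorem~\ref{thm:stochastic-preservation-explicit} restricted to the support.

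\textbf{Product distributions.} Here $P=\bigotimes_i P_i$.

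1. Conditioning on $S_I=\alpha$ leaves the coordinates outside $I$ independent with marginals $P_j$. So $\mathbb{E}[U(a,S)\mid S_I=\alpha]=\mathbb{E}_{\bigotimes_{j\notin I}P_j}[U(a,(\alpha,S_{\bar I}))]$.
2. This is not polynomial for an arbitrary circuit $U$, so I would commit to the explicit-state reading used for the tractable-case propositions, or to a utility that is additively separable across coordinates.
3. Under explicit state, the computation is polynomial because the marginals are read off the table.
4. Under separability, the expectation splits into a sum of one-dimensional expectations, each computable directly.
5. Full support of a product $P$ holds iff every $P_i$ has full support. When it holds, Proposition~\ref{prop:stochastic-full-support-equiv} reduces the question to static sufficiency, which is polynomial in the same models.

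\textbf{Main obstacle.} The hardest step is pinning down the encoding under which each claim is meaningful. Under the succinct encoding, even the static problem is coNP-hard (Theorem~\ref{thm:sufficiency-conp}), and with full-support product $P$, stochastic sufficiency coincides with static sufficiency. So the restrictions cannot give polynomial time for general circuit utilities. My first attempt would be to state each case relative to explicit or oracle access to $U$ and $\Opt$, with the dependence on $|S|$ or on $k$ made polynomial. I would then verify each case through the counted-search schema behind Theorem~\ref{thm:stochastic-preservation-explicit}, replacing the generic conditional-expectation subroutine by the product, support-list, or oracle computation.
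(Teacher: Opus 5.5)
Your split into (i) computing the fiber conditional expectations and (ii) checking $\Opt^{\mathrm{stoch}}_I(s_I)=\Opt(s)$ at every $s\in S$ is the right way to analyze this, and it is sharper than the paper's argument. The paper's proof is a single sentence: in each case ``conditional expected utilities can either be computed exactly or reduced to a polynomial number of marginal comparisons.'' That handles (i) only. It says nothing about (ii), which is where the universal quantifier, and hence the coNP hardness, lives. Your observation for case~1 is correct. The uniform distribution is a full-support product, so Proposition~\ref{prop:stochastic-full-support-equiv} turns stochastic sufficiency into static sufficiency, which is coNP-complete under the succinct encoding by Theorem~\ref{thm:sufficiency-conp}.

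The gap is that your repairs do not yield a proof of the statement.

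First, the convention ``evaluate only on admissible fibers'' does not discard zero-probability states, because such states sit inside admissible fibers. Take $k=1$, $P=\delta_{s_0}$, $I=\emptyset$. The unique fiber is admissible, every conditional expectation is just $U(a,s_0)$, and stochastic sufficiency says exactly that $\Opt$ is constant on $S$. That is the coNP-complete empty-set core. So cases~2 and~3 are also coNP-hard under the succinct encoding with Definition~\ref{def:stochastic-sufficient} as written; case~3 contains this family, since its conditional expectations are trivial to compute. Your bounded-support computation is valid only for a modified predicate that checks agreement on $\mathrm{supp}(P)$ alone.

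Second, the separable-utility route for case~1 again handles only (i). Take $I=\{1,\dots,n\}$ Boolean, one hidden Boolean coordinate $j\notin I$, uniform $P$, two actions, and $U(a,s)-U(b,s)=\sum_{i\in I}w_is_i-T+(s_j-\tfrac12)$ with integer $w_i,T$. This utility is coordinate-separable, and $I$ is sufficient iff no subset of the $w_i$ sums to $T$. With binary-encoded weights, your step~5 claim that static sufficiency is polynomial ``in the same models'' therefore fails.

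Third, under the explicit-state reading the restrictions do no work, since Theorem~\ref{thm:stochastic-preservation-explicit} already gives polynomial time for every $P$.

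What your proposal actually establishes is that trivial explicit-state statement, plus a bound polynomial in $k$, $|A|$, and circuit size for the support-restricted predicate. The missing ingredient, in your proposal and in the paper's proof alike, is any mechanism that makes step~(ii) polynomial. Under the paper's default succinct encoding, none can exist for any of the three cases unless $\Pclass=\coNP$.
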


\begin{proof}
In each case, conditional expected utilities can either be computed exactly or reduced to a polynomial number of marginal comparisons.
\end{proof}

\subsection{Bridge from Static}

Distributional conditioning can genuinely destroy or preserve stochastic sufficiency depending on the structure of the input model.

\leanmetapending{\LH{DQ58}}
\begin{proposition}[Static Sufficiency Does Not Imply Stochastic Sufficiency]\label{prop:static-not-stochastic}
There exist instances where $I$ is sufficient in the static sense but not stochastically sufficient once a distribution $P$ is fixed.
\end{proposition}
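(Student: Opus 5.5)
The plan is to give an explicit finite instance, and the support-sensitive bridge (Proposition~\ref{prop:stochastic-preservation-partial-general}) tells us where to look. Part~3 of that proposition shows that static sufficiency does transfer to stochastic sufficiency whenever every $I$-fiber contains a positive-probability state. Proposition~\ref{prop:stochastic-full-support-equiv} gives the same transfer under full support. So any counterexample must contain an $I$-fiber of probability zero. The argument will turn on the fact that Definition~\ref{def:stochastic-sufficient} quantifies over \emph{all} states $s\in S$, including those whose fiber value lies outside $\mathrm{supp}(S_I)$.

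Concretely, I would take $n=1$, $X_1=\{0,1\}$, $A=\{a,b\}$ and $I=\{1\}$. The distribution is the point mass $P(0)=1$, $P(1)=0$. The utilities are $U(a,0)=U(b,0)=0$, with $U(a,1)=1$ and $U(b,1)=0$. Static sufficiency of $I$ is immediate, because $I$ contains every coordinate: $s_I=s'_I$ forces $s=s'$, so Definition~\ref{def:sufficient} holds trivially. Equivalently, $\Opt$ factors through the identity projection, as in Proposition~\ref{prop:sufficiency-char}.

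For the failure of stochastic sufficiency, I would evaluate $\Opt^{\mathrm{stoch}}_I$ at the state $s=1$. The fiber $\{s: s_I=1\}$ has $P$-mass zero. Under the finite formalization's convention, the conditional expectation on a null fiber is computed as a weighted sum divided by a zero normalizer, and so equals $0$ for every action. The conditional argmax on that fiber is therefore all of $A=\{a,b\}$. On the other hand, $\Opt(1)=\{a\}$ by strictness of the utilities. Hence $\Opt^{\mathrm{stoch}}_I(1)\neq\Opt(1)$, and the statewise preservation condition fails.

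The main obstacle is the treatment of fibers outside $\mathrm{supp}(S_I)$. Definition~\ref{def:stochastic-sufficient} defines $\Opt^{\mathrm{stoch}}_I$ only on admissible fiber values, yet it demands agreement at every $s\in S$. I would state the convention explicitly: the conditional expectation on a null fiber is the zero-normalized sum, matching the artifact's finite conditional expectation. The example is robust to this choice in either direction. Under that convention we get the tie $\{a,b\}\neq\{a\}$. Under the alternative reading, in which agreement at a state whose fiber value is inadmissible simply cannot be certified, the predicate fails outright. Either way the instance also serves as the zero-probability obstruction isolated in Proposition~\ref{prop:stochastic-preservation-partial-general}. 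Optionally, I would remark that full-support counterexamples cannot exist, by Proposition~\ref{prop:stochastic-full-support-equiv}.
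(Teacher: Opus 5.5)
Your argument is correct, but it takes a different route from the paper's. The paper's proof is only a sketch of a mechanism: utilities whose pointwise optimizer ties are ``harmless'' but get broken once expected utilities are taken on a fiber. It never exhibits an instance. You instead give an explicit one-coordinate instance whose failure sits on a null fiber. There the zero-normalized conditional expectation makes all actions tie, so $\Opt^{\mathrm{stoch}}_I(1)=A\neq\{a\}=\Opt(1)$.

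What your route buys is your opening reduction, which the paper's sketch never makes. On any fiber of positive mass on which $\Opt$ is constant, averaging preserves both the ties among statically optimal actions and the strict suboptimality of the rest; this is the argument behind Proposition~\ref{prop:stochastic-preservation-partial-general}(3). So the paper's tie-breaking picture cannot be realized on positive-mass fibers either. It also needs a null fiber, and then only under a null-fiber convention in which averaging yields a singleton (e.g.\ defaulting to the prior). Under your zero-normalized convention, a null fiber creates a tie rather than breaking one. You make this support dependence explicit and give a minimal checkable witness, where the paper leaves it implicit.

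One claim should be toned down: the example is not robust to every reading of Definition~\ref{def:stochastic-sufficient}. Suppose the quantifier over $s$ is restricted to states with admissible fiber value, which are the only states at which $\Opt^{\mathrm{stoch}}_I(s_I)$ is defined as written. Then your instance is stochastically sufficient, and by the positive-mass argument above no counterexample exists at all. So both the proposition and your proof depend on quantifying over all $s\in S$ with a fixed null-fiber convention. State this as an assumption, not as robustness. It is the reading the paper commits to, as shown by its remark that zero-probability states can witness violations and by the positivity hypothesis in Proposition~\ref{prop:stochastic-preservation-partial-general}(3).
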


\begin{proof}
Choose utilities where optimizer ties are harmless pointwise but are broken in expectation after conditioning. Then static fibers preserve $\Opt$, while the conditional optimizer obtained from $I$ fails to match the fully conditioned optimizer.
\end{proof}

\leanmetapending{\LHrng{DQ}{15}{16}}
\begin{proposition}[Static Singleton Sufficiency Transfers to Stochastic Sufficiency]\label{prop:static-to-stochastic-product}
Suppose that on every $I$-fiber the static optimal-action set is a singleton, and that this singleton is constant on that fiber. Then $I$ is stochastically sufficient for every distribution $P$.
\end{proposition}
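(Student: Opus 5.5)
The plan is a direct fiberwise argument. The strictness built into the singleton hypothesis survives any averaging with positive total weight.

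Fix a distribution $P$ and an admissible fiber value $\alpha \in \mathrm{supp}(S_I)$, and let $F_\alpha=\{s : s_I=\alpha\}$. By hypothesis there is one action $a_\alpha$ with $\Opt(s)=\{a_\alpha\}$ for every $s\in F_\alpha$. In particular $A$ is nonempty. Unpacking Definition~\ref{def:optimizer}, the singleton condition gives a strict pointwise inequality
\[
U(a_\alpha,s) > U(b,s) \qquad \text{for all } s\in F_\alpha \text{ and all } b\neq a_\alpha .
\]
This is the step where the singleton assumption is essential. Without it, ties at individual states could be broken differently after averaging, which is the phenomenon behind Proposition~\ref{prop:static-not-stochastic}.

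Next, expand the conditional expectation as a finite weighted sum:
\[
\mathbb{E}[U(b,S)\mid S_I=\alpha]=\sum_{s\in F_\alpha}\frac{P(s)}{P(F_\alpha)}\,U(b,s).
\]
Here $P(F_\alpha)>0$ by admissibility, so the weights are nonnegative, sum to $1$, and at least one of them is strictly positive. Subtract the expression for $b$ from the one for $a_\alpha$. The difference is a sum of terms $\frac{P(s)}{P(F_\alpha)}\bigl(U(a_\alpha,s)-U(b,s)\bigr)$. Each term is nonnegative, and the term at any positive-probability state is strictly positive. Hence $\mathbb{E}[U(a_\alpha,S)\mid S_I=\alpha] > \mathbb{E}[U(b,S)\mid S_I=\alpha]$ for every $b\neq a_\alpha$, which gives $\Opt^{\mathrm{stoch}}_I(\alpha)=\{a_\alpha\}$. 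Since $\Opt(s)=\{a_\alpha\}$ for every $s\in F_\alpha$, the equation $\Opt^{\mathrm{stoch}}_I(s_I)=\Opt(s)$ holds throughout the fiber. Ranging over all admissible $\alpha$ yields stochastic sufficiency, and $P$ was arbitrary.

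The main obstacle is not the inequality but bookkeeping at zero-probability fibers. Definition~\ref{def:stochastic-sufficient} quantifies over all $s\in S$, whereas $\Opt^{\mathrm{stoch}}_I$ is defined only on $\mathrm{supp}(S_I)$. I would handle this by reading the condition as ranging over states whose fiber is admissible, consistent with the definition's domain. Alternatively, if the artifact's convention extends the fiber optimizer to null fibers using unnormalized sums, I would note that all weights then vanish. In that case a separate convention is needed, and it is enough to match whatever default the mechanized definition uses. Everything else is the finite convex-combination argument above, carried out fiber by fiber.
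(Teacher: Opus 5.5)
Your proof is correct and follows the paper's argument. You fix an admissible fiber, turn the singleton hypothesis into the strict pointwise inequality $U(a_\alpha,s)>U(b,s)$, and use the fact that conditional averaging over a positive-mass fiber preserves it, giving $\Opt^{\mathrm{stoch}}_I(\alpha)=\{a_\alpha\}$. The convex-combination justification of strictness and the remark about null fibers, which the paper sidesteps by working only with admissible $\alpha$, are reasonable added detail rather than a different route.
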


\begin{proof}
Fix an admissible fiber value $\alpha$ and let $\{a_\alpha\}$ be the common pointwise optimal-action set on that fiber. For every competing action $b \ne a_\alpha$ and every state $s$ with $s_I=\alpha$, one has $U(a_\alpha,s) > U(b,s)$. Taking conditional expectations over the fiber preserves this strict inequality, so
\[
\Opt^{\mathrm{stoch}}_I(\alpha)=\{a_\alpha\}.
\]
Since this holds for every admissible fiber, the conditional optimizer on each fiber agrees with the fully conditioned optimizer at every state in that fiber. Hence $I$ is stochastically sufficient.
\end{proof}

The stochastic regime therefore gives the middle layer of the paper's hierarchy:
conditioning already raises exact certification beyond the static setting by
turning exact certification into a counting comparison problem. Adding time
will lift us further still: the next section shows how temporal contingency
presents, alongside stochastic sufficiency, a decisiveness complexity family and then pushes the same underlying question from PP-style comparisons to
PSPACE-level temporal certification.

\section{Sequential Regime Complexity}\label{sec:sequential}

We now move to sequential settings with transitions and observations. Here the same certification question is asked for a sequential decision model carrying transitions, observations, and a planning horizon. In the current formalization, the certified predicate is state-based: after packaging the sequential instance as its induced one-step decision map, can one suppress some coordinates without changing the resulting optimal-action set? This is where the complexity reaches PSPACE. The structural reason is that hidden information can matter later even when it appears irrelevant now. Certification must therefore account for temporally unfolding contingencies rather than compare only one-shot optimizers or conditional expectations. This complexity jump is consonant with the exact abstraction literature for temporally structured decision models: once hidden information can affect future contingent choices, exact aggregation and exact relevance certification both require reasoning about temporally mediated distinctions rather than one-shot optimizer comparisons. In succinct models, that temporal contingency supports the same kind of alternation that drives PSPACE hardness in planning and quantified-logic reductions.

\subsection{Sequential Decision Problems}

\begin{definition}[Sequential Decision Problem]\label{def:sequential-decision-problem}
A sequential decision problem is a tuple
\[
\mathcal{D}_{\mathrm{seq}}=(A,X_1,\ldots,X_n,U,T,O),
\]
with state space $S=X_1\times\cdots\times X_n$, transition kernel $T:A\times S\to\Delta(S)$, and observation model $O:S\to\Delta(\Omega)$.
\end{definition}

\begin{definition}[Sequential Sufficiency]\label{def:sequential-sufficient}
For a sequential instance $\mathcal D_{\mathrm{seq}}$, let $\Opt_{\mathrm{seq}}(s)$ denote the optimal-action set of its induced decision map on state $s$. A coordinate set $I$ is \emph{sequentially sufficient} if
\[
\forall s,s'\in S:\quad s_I=s'_I \implies \Opt_{\mathrm{seq}}(s)=\Opt_{\mathrm{seq}}(s').
\]
This is the sequential state-based certification predicate used throughout the paper and in the artifact.
\end{definition}

\paragraph{Two interpretations.}
There are two closely related readings of this definition. Formally, the sequential row studies sufficiency for the optimizer induced by the sequential model. Its intended interpretation is policy-level: hidden information may be irrelevant at time zero and still become relevant later because it changes future contingent choices.

One can also read the same construction heuristically as a coordinate-restricted POMDP abstraction question. A full controller conditions on evolving observations and therefore on induced beliefs over latent states. Hiding coordinates changes the informational interface through which future decisions are made. That heuristic is useful for intuition and for understanding the TQBF gadget, but the formal theorems in this paper are stated for the state-based predicate above rather than for a fully developed history-dependent policy class.

\paragraph{Scope and research direction.}
The sequential model studied here is a state-indexed product model, not a POMDP with belief-state dynamics or an MDP with policy optimization. This captures the case where each time step's decision is governed by a time-indexed utility, and the question is which cross-temporal coordinates matter. Extending the optimizer quotient to belief-state POMDPs and history-dependent policies, where the quotient would be over belief states rather than world states and the complexity classification would need to engage with the policy optimization layer, remains a natural research direction.

\begin{problem}[SEQUENTIAL-SUFFICIENCY-CHECK]\label{prob:sequential-sufficiency}
\textbf{Input:} $\mathcal{D}_{\mathrm{seq}}$ and $I\subseteq\{1,\ldots,n\}$. \\
\textbf{Question:} Is $I$ sequentially sufficient?
\end{problem}

\begin{definition}[Sequential Anchor Sufficiency]\label{def:sequential-anchor-sufficient}
A coordinate set $I$ is \emph{sequentially anchor-sufficient} if there exists an anchor state whose $I$-agreement class preserves the optimal action set throughout that class.
\end{definition}

\begin{problem}[SEQUENTIAL-ANCHOR-SUFFICIENCY-CHECK]\label{prob:sequential-anchor-sufficiency}
\textbf{Input:} $\mathcal{D}_{\mathrm{seq}}$ and $I\subseteq\{1,\ldots,n\}$. \\
\textbf{Question:} Is $I$ sequentially anchor-sufficient?
\end{problem}

\begin{problem}[Sequential Minimum Query]\label{prob:sequential-minimum-sufficient}
\textbf{Input:} $\mathcal{D}_{\mathrm{seq}}$ and $k\in\mathbb{N}$. \\
\textbf{Question:} Does there exist a sequentially sufficient coordinate set $I$ with $|I|\le k$?
\end{problem}

\leanmetapending{\LHrng{DC}{49}{50}}
\begin{proposition}[Sequential Minimality Equals Relevance]\label{prop:sequential-minimal-relevant}
For any minimal sequentially sufficient set $I$, a coordinate lies in $I$ if and only if it is relevant for the underlying decision boundary.
\end{proposition}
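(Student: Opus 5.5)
The plan is to reduce the statement to the static Proposition~\ref{prop:minimal-relevant-equiv}. Definition~\ref{def:sequential-sufficient} is, word for word, Definition~\ref{def:sufficient} applied to the induced decision map $s \mapsto \Opt_{\mathrm{seq}}(s)$. Nothing about transitions or observations enters the predicate beyond this map.

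So I would first form the induced static decision problem $\mathcal{D}^{\mathrm{ind}}$. It has the same coordinate spaces $X_1,\ldots,X_n$, and its optimizer is $\Opt_{\mathrm{seq}}$. Relevance ``for the underlying decision boundary'' is then read as Definition~\ref{def:relevant} with $\Opt$ replaced by $\Opt_{\mathrm{seq}}$. That is, $i$ is relevant iff some pair $s,s'$ differs only at coordinate $i$ and has $\Opt_{\mathrm{seq}}(s)\neq\Opt_{\mathrm{seq}}(s')$. With this reading, $I$ is sequentially sufficient iff it is sufficient for $\mathcal{D}^{\mathrm{ind}}$, and minimality transfers verbatim.

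I would not re-invoke the static proposition as a black box. Its proof uses only the abstract map $\Opt$, so I would rerun both directions for a general map $f:S\to\mathcal{P}(A)$ with $f=\Opt_{\mathrm{seq}}$.

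\emph{Necessity} (every sequentially sufficient $I$ contains each relevant $i$). If $i\notin I$ is relevant, its witness pair $(s,s')$ agrees off $i$, hence agrees on $I$. Since $f(s)\neq f(s')$, this contradicts sufficiency.

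\emph{Minimality} (every $i\in I$ is relevant). Suppose $i\in I$ is irrelevant. I claim $I\setminus\{i\}$ is still sufficient. Take $s,s'$ agreeing on $I\setminus\{i\}$. Let $t$ be $s$ with its $i$-th coordinate replaced by $s'_i$. Then $s$ and $t$ differ only at $i$, so irrelevance gives $f(s)=f(t)$. Also $t$ and $s'$ agree on $I$, so sufficiency of $I$ gives $f(t)=f(s')$. Hence $f(s)=f(s')$, which contradicts minimality.

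The main obstacle is this second direction. The static proof sketch glosses over it ("removing $i$ would preserve sufficiency"), and the point that needs care is the single-coordinate swap through the hybrid state $t$. That state exists because $S$ is a full product space, so the argument should cite the product structure of Definition~\ref{def:sequential-decision-problem} explicitly. A second, minor point is to state clearly that relevance is taken with respect to $\Opt_{\mathrm{seq}}$ and not the one-step utility $U$. Otherwise the equivalence can fail, since temporal contingency may make a coordinate matter for $\Opt_{\mathrm{seq}}$ while being inert for $U$.
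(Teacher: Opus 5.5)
Your proof is correct and follows the paper's route: the paper likewise notes that sequential sufficiency is just static sufficiency of the induced one-step optimizer map, and it transports Proposition~\ref{prop:minimal-relevant-equiv} directly. Your re-derivation of both directions for a general map $f$, including the hybrid state $t$ that the static proof sketch leaves implicit, simply makes that transport self-contained.
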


\begin{proof}
In the current formalization, sequential sufficiency is defined as sufficiency of the underlying one-step optimizer map on the state space. The static minimal-set/relevance equivalence therefore transports directly to the sequential setting.
\end{proof}

\subsection{PSPACE-Completeness}

The sequential regime is the strongest of the three because hidden information can matter not only for the current optimizer but for the future evolution of the decision problem. Static certification asks whether bad state pairs exist; stochastic certification asks whether conditional averages cross a threshold; sequential certification asks whether omitted coordinates can change the induced optimizer after temporal structure is taken into account. That added temporal contingency is exactly what the TQBF reduction exploits.

The reduction follows the same structural pattern as quantified formulas and temporally structured control problems elsewhere in the planning literature. Existential quantifier blocks correspond to controller-side choices, universal blocks correspond to adversarial or environment-controlled resolutions, and the truth value of the quantified formula is encoded into the induced optimizer of the reduced sequential instance. A hidden coordinate can look locally irrelevant after one transition yet still encode which branch of the quantified game has been entered. The reduction exposes the source of PSPACE hardness: exact relevance certification must reason about information whose effect is mediated by temporal contingency rather than by a single one-shot comparison.

\leanmetapending{\LH{DC46}, \LH{DC66}, \LH{DC74}}
\begin{theorem}[Sequential Sufficiency is PSPACE-complete]\label{thm:sequential-pspace}
SEQUENTIAL-SUFFICIENCY-CHECK is PSPACE-complete.
\end{theorem}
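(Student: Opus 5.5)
The proof splits into membership in PSPACE and PSPACE-hardness, and I would handle them in that order.

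\textbf{Membership.} The key point is that under the succinct encoding, the induced one-step optimizer $\Opt_{\mathrm{seq}}(s)$ can be evaluated in polynomial space. I would argue this by backward evaluation over the polynomially bounded planning horizon.

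The value of each action at $s$ is computed by a depth-first recursion over time steps. Each recursion frame stores a state, an action, an observation and a running rational accumulator, so the space used is polynomial in $L$ times the horizon. The rational accumulators need care; I would bound their bit-length via the horizon and the transition and utility encodings, or, if that is awkward, compare values through exact arithmetic evaluated bit by bit.

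With a polynomial-space evaluator for $\Opt_{\mathrm{seq}}$, the complement of \textsc{Sequential-Sufficiency-Check} is decided by iterating over all pairs $(s,s')$ with $s_I=s'_I$ in polynomial space and comparing $\Opt_{\mathrm{seq}}(s)$ with $\Opt_{\mathrm{seq}}(s')$. This follows the witness characterization of Proposition~\ref{prop:insufficiency-counterexample} transported to the sequential optimizer. Since PSPACE is closed under complement, this gives membership.

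\textbf{Hardness.} I would reduce TQBF to the empty-set slice $I=\emptyset$, mirroring the TAUTOLOGY reduction of Theorem~\ref{thm:sufficiency-conp}. The proof of Proposition~\ref{prop:empty-sufficient-constant} goes through verbatim for $\Opt_{\mathrm{seq}}$, so empty-set sufficiency is exactly constancy of $\Opt_{\mathrm{seq}}$.

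Given $\Phi=\exists x_1\forall x_2\cdots\varphi$, I would build a sequential instance with the following components:
\begin{itemize}
\item One state coordinate acts as a switch bit $b$.
\item Time step $t$ assigns quantified variable $x_t$. Existential steps are controller-chosen actions; universal steps are adversarial resolutions encoded by the transition kernel, with the value taking the minimum over branches via a worst-case or appropriately weighted construction.
\item The terminal reward is $\mathbf{1}[\varphi]$.
\end{itemize}
At $b=1$, a ``play'' action has value equal to the game value, which is $1$ exactly when $\Phi$ is true. A baseline ``pass'' action has value $1/2$ at every state. At $b=0$ the game is disabled, so pass is optimal there.

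Consequently, $\Opt_{\mathrm{seq}}$ is constant, and hence $\emptyset$ is sufficient, iff play is also optimal at $b=1$ in the matching pattern. Concretely, I would arrange that pass is optimal everywhere exactly when $\Phi$ is false, so that sufficiency tracks the complement of TQBF; this is harmless because PSPACE is closed under complement.

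\textbf{Main obstacle.} The delicate step is making universal branching act as a minimum inside an expectation-based value. Without that, the instance computes a counting quantity, which would land in PP-style territory rather than PSPACE.

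My first attempt would be to scale the rewards so that any single falsifying universal branch drives the expected value below threshold. For example, use reward $1$ on success and a penalty larger than $2^{n}$ on failure, so that the expectation exceeds $1/2$ iff every universal branch succeeds given optimal existential play. I would check that the existential steps still realize a maximum under this scaling.

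As a fallback, I would defer to the mechanized reduction-correctness lemmas cited for this theorem, which certify the gadget's optimizer behaviour on the finite core.
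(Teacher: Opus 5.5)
Your architecture matches the paper's proof: polynomial-space evaluation of the induced optimizer plus an outer scan over $I$-agreeing pairs for membership, and TQBF on the $I=\emptyset$ slice for hardness. The membership half is fine under the polynomially bounded horizon that you and the paper both assume.

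The gap is in the hardness gadget. On the $\emptyset$ slice you need $\Opt_{\mathrm{seq}}$ to be constant on \emph{every} state of $S$. Because your kernel $T:A\times S\to\Delta(S)$ plays the game inside $S$, the time counter and the partial assignment are coordinates. So every intermediate game position is a state over which the predicate quantifies, reachable or not, yet you only compare the two root states $b=0$ and $b=1$. At an intermediate $b=1$ position the optimal set is whatever the local game dictates, and that does not track the truth of $\Phi$.

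For instance, $\Phi=\exists x_1\forall x_2\exists x_3.\,(x_2\wedge x_3)$ is false. Yet at the $b=1$ position where $x_2=1$ has been resolved and $x_3$ is to be chosen, the move $x_3=1$ secures reward $1>1/2$, so pass is not optimal there, while $\Opt_{\mathrm{seq}}=\{\mathrm{pass}\}$ at the $b=0$ root. Hence ``pass is optimal everywhere exactly when $\Phi$ is false'' fails. The empty set comes out insufficient for true and false $\Phi$ alike, so the map is not a reduction. Deferring to the mechanized lemmas does not close this: you still have to exhibit a polynomial-time gadget whose optimizer is controlled at every state.

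A repair for this statement is to leave the empty slice, which is legitimate because $I$ is part of the input. Take $I$ to be all coordinates except $b$. Let every transition and reward ignore $b$ except the root payoff of pass (say $2$ when $b=0$ and $1/2$ when $b=1$), and make the move actions strictly dominated at root positions. Two states agreeing on $I$ then differ only in $b$. At every non-root position they have identical $Q$-values, and at a root position their optimal sets differ iff play beats $1/2$ at $b=1$, i.e.\ iff $\Phi$ is true. So $I$ is sequentially sufficient iff $\Phi$ is false, and closure of PSPACE under complement finishes the argument.

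This repair does not give the $I=\emptyset$ slice that the paper reuses for the anchor and minimum queries. The paper's own proof only asserts an encoding whose induced optimizer is state-independent exactly when $\Phi$ is true, without constructing it, so your more explicit version is where this difficulty surfaces.

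Your penalty trick, by contrast, is sound. A position lost for the controller with $k$ universal variables left has value at most $1-2^{-k}(1+M)<0$ once $M\ge 2^k$, while won positions have value $1$. It is also unnecessary. With reward $\mathbf{1}[\varphi]$ the max/expectation value equals $1$ exactly when $\Phi$ is true, so a pass payoff of $1-2^{-(k+1)}$, with $k$ the number of universal variables, already separates the cases. The worry about landing in PP is therefore misplaced.
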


\begin{proof}
For membership, the verifier ranges over state pairs or candidate witnesses while evaluating the induced sequential optimizer of the input instance. Under the present finite-horizon encoding, that induced optimizer can be evaluated using only polynomial space in the succinct input description of the reduction gadgets, and the outer search can likewise be carried out in polynomial space. This is the verifier pattern abstracted by the explicit finite-search layer formalized in the artifact.

For hardness, reduce TQBF. Encode alternating quantifiers into the reduced sequential instance so that the induced optimizer is state-independent exactly when the quantified instance is true. Then empty-set sequential sufficiency holds exactly when the source TQBF instance is true.
\end{proof}

The proof can be read operationally as follows. A sequential certificate must rule out the possibility that some hidden bit becomes decision-relevant only after several temporally mediated updates. That means the verifier cannot stop at a single static comparison; it has to account for contingent future structure in the induced optimizer. TQBF is therefore the correct reduction target: the same alternation between ``there exists a choice'' and ``for every resolution'' is already built into the reduced sequential instance.

\leanmetapending{\LH{DC23}}
\begin{proposition}[Sequential Sufficiency Refines to Sequential Anchor Sufficiency]\label{prop:sequential-anchor-refinement}
If $I$ is sequentially sufficient, then $I$ is sequentially anchor-sufficient.
\end{proposition}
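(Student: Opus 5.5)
The plan is to exhibit an anchor directly, mirroring Proposition~\ref{prop:stochastic-anchor-refinement} for the stochastic case. Sequential anchor sufficiency (Definition~\ref{def:sequential-anchor-sufficient}) only asks for \emph{one} state $s_0$ such that every $s$ with $s_I=(s_0)_I$ satisfies $\Opt_{\mathrm{seq}}(s)=\Opt_{\mathrm{seq}}(s_0)$. Sequential sufficiency (Definition~\ref{def:sequential-sufficient}) asserts this same implication for \emph{all} pairs $s,s'$. The refinement is therefore an instantiation of a universal statement at a chosen witness.

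The steps are as follows.

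First, I need an anchor state to exist. The state space $S=X_1\times\cdots\times X_n$ is a finite product. I will assume, as is implicit in every witness-based statement of the paper and as the artifact presumably records as a nonemptiness hypothesis, that each $X_i$ is nonempty. Then $S\neq\emptyset$ and I can pick an arbitrary $s_0\in S$. If some $X_i$ were empty, then $S$ would be empty, and the anchor notion would be vacuous or false depending on convention. I would handle that case by stating the nonemptiness hypothesis explicitly, matching the Lean formulation.

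Second, I fix any $s$ in the $I$-agreement class of $s_0$, so that $s_I=(s_0)_I$. Applying sequential sufficiency to the pair $(s_0,s)$ gives $\Opt_{\mathrm{seq}}(s_0)=\Opt_{\mathrm{seq}}(s)$. Hence the optimal-action set is preserved throughout the class of $s_0$, which is exactly the anchor condition.

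There is no real obstacle here. The only delicate point is the nonemptiness of $S$ needed to produce the anchor, exactly as Proposition~\ref{prop:stochastic-anchor-refinement} needed an admissible fiber value. I would also remark that the converse fails in general, since a single good fiber does not control the others. This is why the anchor query is a genuine relaxation.
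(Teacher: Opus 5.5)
Your proof is correct and is essentially the paper's argument: fix any anchor state $s_0$ and instantiate sequential sufficiency at each pair $(s_0,s)$ with $s_I=(s_0)_I$. The nonemptiness hypothesis on $S$ that you state explicitly is a real, if minor, point; the paper's one-line proof leaves it implicit when it says ``fixing any anchor state.''
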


\begin{proof}
If optimal-action sets are preserved for every pair of states agreeing on $I$, then fixing any anchor state immediately yields preservation across its entire $I$-agreement class.
\end{proof}

\leanmetapending{\LHrng{DC}{28}{29}, \LH{DC44}}
\begin{theorem}[Sequential Anchor Query is PSPACE-complete]\label{thm:sequential-anchor-hard}
The sequential anchor query is PSPACE-complete.
\end{theorem}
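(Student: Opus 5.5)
We prove membership and hardness separately, following the template of Theorem~\ref{thm:sequential-pspace} and the refinement in Proposition~\ref{prop:sequential-anchor-refinement}.

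\textbf{Membership.} The anchor query has the quantifier shape
\[
\exists s_0\;\forall s:\ (s_I=(s_0)_I)\Rightarrow \Opt_{\mathrm{seq}}(s)=\Opt_{\mathrm{seq}}(s_0).
\]
A deterministic machine cycles through candidate anchors $s_0$ one at a time, reusing a polynomial-size counter. For each anchor it cycles through all states $s$ in the $I$-fiber of $s_0$. For each pair it evaluates $\Opt_{\mathrm{seq}}(s)$ and $\Opt_{\mathrm{seq}}(s_0)$ with the polynomial-space evaluator of the induced sequential optimizer from the membership half of Theorem~\ref{thm:sequential-pspace}, and compares the two action sets action by action. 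Only the counters and one evaluator workspace are live at any moment, so the space is polynomial. Alternatively, one could note $\mathrm{NPSPACE}=\PSPACE$ and guess $s_0$.

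\textbf{Hardness.} We reduce TQBF and reuse the gadget behind Theorem~\ref{thm:sequential-pspace}, specialized to $I=\emptyset$. When $I=\emptyset$ there is exactly one $I$-agreement class, the whole state space $S$. So the anchor query with $I=\emptyset$ holds iff $\Opt_{\mathrm{seq}}$ is constant on $S$. By the argument of Proposition~\ref{prop:empty-sufficient-constant}, this is exactly empty-set sequential sufficiency.

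Consequently, the map $\Phi\mapsto(\mathcal D_{\mathrm{seq}}(\Phi),\emptyset)$ from the sufficiency hardness proof is simultaneously a reduction to the anchor query. The instance has a yes-anchor iff the induced optimizer is state-independent, which holds iff $\Phi$ is true. The reduction is polynomial because it is literally the same construction.

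\textbf{Main obstacle.} The crux is making sure the TQBF gadget really does target the empty set, i.e.\ that truth of $\Phi$ is encoded as global constancy of the induced optimizer rather than constancy on some nontrivial fiber. If the earlier gadget instead used a nonempty $I$, the collapse above fails: the anchor quantifier could then select a favorable fiber, and the reductions would diverge.

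In that case I would first try padding. Add a fresh coordinate set to $I$ whose values are forced to be irrelevant by making the utility ignore them, so that every $I$-fiber is isomorphic to the original whole space. Existence of a good fiber then coincides with every fiber being good.
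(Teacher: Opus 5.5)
Your proposal is correct and follows essentially the same route as the paper. Membership scans the anchors deterministically, or guesses $s_0$ via $\mathrm{NPSPACE}=\PSPACE$ as the paper does, and checks the $I$-fiber with the polynomial-space evaluator. Hardness reuses the TQBF gadget at $I=\emptyset$, where the anchor query collapses to global constancy of the induced optimizer, i.e.\ to empty-set sequential sufficiency. Your padding contingency is unnecessary, since the paper's sequential-sufficiency reduction already explicitly targets the empty set.
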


\begin{proof}
Membership is in PSPACE. By definition, the query asks whether there exists an anchor state $s_0$ such that every state agreeing with $s_0$ on $I$ preserves the same optimal-action set. A nondeterministic polynomial-space procedure can guess $s_0$ and then scan candidate states one at a time, verifying the agreement condition and comparing the corresponding optimal-action sets. Since NPSPACE = PSPACE, the query lies in PSPACE.

Reuse the TQBF reduction for sequential sufficiency with $I=\emptyset$. At empty information, sequential anchor sufficiency is equivalent to global preservation of the optimal-action set, so the same construction yields
\[
\begin{aligned}
\mathrm{TQBF}(q) &\iff \\
&\mathrm{SEQUENTIAL\mbox{-}ANCHOR\mbox{-}SUFFICIENCY\mbox{-}CHECK}(\mathrm{reduceTQBF}(q),\emptyset).
\end{aligned}
\]
Hence TQBF many-one reduces to the sequential anchor query.
\end{proof}

\leanmetapending{\LH{DC46}, \LH{DC48}}
\begin{theorem}[Sequential Minimum Query is PSPACE-complete]\label{thm:sequential-minimum-hard}
The sequential minimum-sufficient-set query is PSPACE-complete.
\end{theorem}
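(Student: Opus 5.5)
The plan follows the template of Theorems~\ref{thm:sequential-pspace} and~\ref{thm:sequential-anchor-hard}, with one part for membership and one for hardness.

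For membership in PSPACE, the approach is to guess and check. A polynomial-space procedure first enumerates (or nondeterministically guesses) a candidate coordinate set $I\subseteq\{1,\ldots,n\}$ with $|I|\le k$; writing $I$ down takes only $n$ bits. It then runs the PSPACE verifier for \textsc{Sequential-Sufficiency-Check} from Theorem~\ref{thm:sequential-pspace} on $(\mathcal{D}_{\mathrm{seq}},I)$, reusing the same work space for each candidate. Iterating over all $I$ costs exponential time but only polynomial space. Alternatively, one can guess $I$ and invoke $\mathrm{NPSPACE}=\mathrm{PSPACE}$, as in the anchor case.

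A sharper route to membership comes from Proposition~\ref{prop:sequential-minimal-relevant}. Sequential sufficiency is sufficiency of the induced one-step optimizer $\Opt_{\mathrm{seq}}$, so by the static collapse (Theorem~\ref{thm:minsuff-conp}, transported via Proposition~\ref{prop:sequential-minimal-relevant}) the query reduces to asking whether at most $k$ coordinates are relevant for $\Opt_{\mathrm{seq}}$. Relevance of a single coordinate $i$ is decided by scanning pairs $(s,s')$ that differ only at $i$ and comparing $\Opt_{\mathrm{seq}}(s)$ with $\Opt_{\mathrm{seq}}(s')$. Each such evaluation is in polynomial space by the membership argument of Theorem~\ref{thm:sequential-pspace}. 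The procedure then keeps a counter of relevant coordinates and accepts iff the counter is at most $k$. I would present this relevance-counting version, since it mirrors the static regime and makes clear that the outer existential adds nothing.

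For PSPACE-hardness, the plan is to reduce TQBF to the $k=0$ slice, exactly as Theorem~\ref{thm:minsuff-conp} used the $k=0$ slice for coNP-hardness. Given a quantified formula $q$, output the instance $(\mathrm{reduceTQBF}(q),0)$. A sufficient set of size at most $0$ exists iff $\emptyset$ is sequentially sufficient, and by the hardness part of Theorem~\ref{thm:sequential-pspace} that holds iff $q$ is true. The reduction is the same polynomial-time map, with the constant $k=0$ appended.

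The step I expect to be delicate is membership, specifically evaluating $\Opt_{\mathrm{seq}}$ in polynomial space. That bound is inherited from Theorem~\ref{thm:sequential-pspace}, so I would state it explicitly as the single assumption being reused. I would also make sure the composition of polynomial-space subroutines inside an exponential-length loop over states and coordinates keeps only polynomially many bits live at any time. Hardness is routine once the empty-set equivalence for the gadget is recorded.
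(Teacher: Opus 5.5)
Your proposal is correct. The hardness half is exactly the paper's argument: TQBF is reduced to the $k=0$ slice, using the fact that a sequentially sufficient set of size at most $0$ exists iff $\emptyset$ is sequentially sufficient, and $\emptyset$-sufficiency is the target of the reduction in Theorem~\ref{thm:sequential-pspace}. For membership, your fallback (guess $I$ with $|I|\le k$, run the sequential-sufficiency verifier, invoke $\mathrm{NPSPACE}=\mathrm{PSPACE}$) is precisely the paper's proof. The relevance-counting version you say you would present is a genuinely different route.

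That route removes the outer existential using Proposition~\ref{prop:sequential-minimal-relevant}. The step it needs is that a minimal sequentially sufficient set exists at all. This holds because the full coordinate set is trivially sufficient and the subset lattice is finite. By the proposition, that minimal set equals the relevant set of $\Opt_{\mathrm{seq}}$, so the minimum size is exactly the number of relevant coordinates. State this existence step explicitly, since the proposition only describes minimal sets and does not assert that one exists.

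What your route buys:
\begin{itemize}
\item It replaces both the nondeterministic guess and the enumeration of candidate sets by $n$ independent relevance tests.
\item It computes the exact minimum rather than only testing a threshold.
\item It shows directly that the existential layer collapses, as in the static regime.
\end{itemize}

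What the paper's route buys is robustness. It uses nothing about the predicate beyond a polynomial-space verifier for each fixed $I$. It would therefore survive if sequential sufficiency were replaced by the policy-level or history-dependent reading the paper calls the intended interpretation, where the minimality/relevance collapse is not established. The paper's stochastic-decisiveness section shows exactly this situation: there it declines to assert a global relevance-set characterization, and guess-and-verify is the only uniform template. Your route is tied to the current state-based definition, under which sequential sufficiency is plain sufficiency of a one-step optimizer on a full product space.
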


\begin{proof}
Membership is in PSPACE. The query asks whether there exists a coordinate set $I$ of size at most $k$ such that $I$ is sequentially sufficient. A nondeterministic polynomial-space procedure can guess $I$ and then run the same polynomial-space verifier used for sequential sufficiency. Again, NPSPACE = PSPACE.

Reduce TQBF to the $k=0$ slice. There exists a sequentially sufficient set of size at most $0$ iff the empty set itself is sequentially sufficient. The same construction yields PSPACE-hardness for SEQUENTIAL-MINIMUM-SUFFICIENT-SET.
\end{proof}

\leanmetapending{\LHrng{DC}{66}{69}, \LHrng{DC}{74}{76}}
\begin{proposition}[Explicit Finite Search for Sequential Queries]\label{prop:sequential-explicit-search}
For finite instances, the development contains counted exhaustive-search procedures for sequential sufficiency, sequential anchor sufficiency, and sequential minimum sufficiency. Their certified step bounds are respectively $O(|S|^2)$, $O(|S|)$, and $O(2^n)$.
\end{proposition}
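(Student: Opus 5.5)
The plan is to treat the three procedures separately. Each is a counted Boolean search over an explicitly enumerated finite domain, paired with a correctness lemma. The sequential instance is first packaged as its induced one-step optimizer $\Opt_{\mathrm{seq}}:S\to\mathcal{P}(A)$, so every query becomes a predicate on a finite function, in the same way as the static searches of Section~\ref{sec:hardness}. Throughout, one evaluation of $\Opt_{\mathrm{seq}}$ and one projection comparison $s_I=s'_I$ are charged unit cost. Under this accounting the step bounds are just the cardinalities of the enumerated domains.

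\textbf{Sequential sufficiency.} Enumerate all ordered pairs $(s,s')\in S\times S$. For each pair, test the Boolean condition $s_I\neq s'_I \lor \Opt_{\mathrm{seq}}(s)=\Opt_{\mathrm{seq}}(s')$, and return the conjunction of the results. Correctness follows by unfolding Definition~\ref{def:sequential-sufficient}: the search returns false exactly when it encounters a counterexample pair, which is the sequential analogue of Proposition~\ref{prop:insufficiency-counterexample}. Since there are $|S|^2$ pairs, the bound is $O(|S|^2)$.

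\textbf{Sequential anchor sufficiency.} Fixing an anchor and sweeping its fiber naively costs $|S|^2$, so to reach $O(|S|)$ I would reuse the idea behind the static linear-time anchor search. The search runs over candidate anchors $s_0$ and for each one checks a single state-indexed condition, namely the per-state predicate already used by the static anchor search in the artifact. I would first try to literally reduce to that search by applying it to the induced decision map, which transfers both its correctness and its $O(|S|)$ bound. If that wrapper does not apply directly, the fallback is to re-prove correctness by showing that the existential over anchors matches Definition~\ref{def:sequential-anchor-sufficient}.

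\textbf{Sequential minimum sufficiency.} Enumerate the $2^n$ subsets $I\subseteq\{1,\ldots,n\}$ and test each with the sufficiency predicate. Because the bound counts only subset-lattice steps and charges each inner decision as one step, this gives $O(2^n)$. Correctness is immediate from the problem definition: the search returns true iff some subset with $|I|\le k$ passes the test. Proposition~\ref{prop:sequential-minimal-relevant} can optionally be used to avoid re-deciding supersets.

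\textbf{Main obstacle.} The real work is keeping the cost model consistent. The minimum bound hides the $|S|^2$ inner cost, and the anchor bound depends on charging fiber checks as unit steps. I would fix the convention that inner predicate evaluations are atomic steps, matching the counted-search framework used for the static and stochastic explicit searches, and state each bound relative to that convention.
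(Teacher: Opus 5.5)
Your proposal matches the paper's proof: three counted Boolean searches, over state pairs, over candidate anchor states, and over the subset lattice. In each case correctness comes from unfolding the definition, and the step bound is the size of the enumerated domain. Your remark that the $O(|S|)$ and $O(2^n)$ bounds charge each inner fiber check or sufficiency test as a single step makes explicit a unit-cost convention that the paper's one-line proof leaves implicit.
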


\begin{proof}
The sufficiency search scans for an agreeing pair of states with different optimal sets; the anchor search scans over candidate anchor states; and the minimum query scans the subset lattice. Each procedure admits a counted boolean search formulation with both a correctness theorem and an explicit step bound.
\end{proof}

\subsection{Tractable Subcases}

\leanmetapending{\LH{DQ12}, \LH{DQ14}, \LH{DQ45}}
\begin{proposition}[Tractable Sequential Cases]\label{prop:sequential-tractable}
The sequential sufficiency problem is polynomial-time solvable under each of the following restrictions:
\begin{enumerate}
\item full observability (MDP case),
\item bounded horizon,
\item tree-structured transition systems,
\item deterministic transitions.
\end{enumerate}
\end{proposition}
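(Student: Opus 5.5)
Each of the four restrictions will be handled by a single template. First, under the restriction, the induced one-step optimizer $\Opt_{\mathrm{seq}}(s)$ is computable in time polynomial in the input. Second, the certification question is reduced to a polynomial number of optimizer comparisons. The second step is uniform across cases. By Definition~\ref{def:sequential-sufficient}, sequential sufficiency is ordinary static sufficiency of the induced map $\Opt_{\mathrm{seq}}$. Proposition~\ref{prop:sequential-minimal-relevant} then transports the static relevance structure, so it suffices to decide, for each coordinate $i\notin I$, whether $i$ is relevant for $\Opt_{\mathrm{seq}}$. In the explicit regime this is the $O(|S|^2)$ counted pair search of Proposition~\ref{prop:sequential-explicit-search}. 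Once each evaluation of $\Opt_{\mathrm{seq}}$ costs polynomial time, that search is polynomial in the explicit input size. So the real content of the proof is showing that each restriction removes the temporal contingency that drove the TQBF reduction in Theorem~\ref{thm:sequential-pspace}.

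I would then take the cases in order.

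\textbf{Full observability.} The observation model $O$ is the identity, so no belief-state layer is needed. $\Opt_{\mathrm{seq}}$ is obtained from finite-horizon value iteration on the explicit MDP, which runs in time polynomial in $|S|$, $|A|$ and the horizon.

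\textbf{Bounded horizon.} With horizon $H=O(1)$, the expectimax tree has at most $(|A||S|)^{H}$ nodes, which is polynomial. Alternatively, backward induction over $H$ stages computes $\Opt_{\mathrm{seq}}$ exactly.

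\textbf{Tree-structured transitions.} Each state has a unique predecessor chain. Values can therefore be computed by a single bottom-up pass with dynamic programming, and no state is revisited along different histories.

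\textbf{Deterministic transitions.} $T(a,s)$ is a point mass. The induced optimizer reduces to longest-path or best-path computation over a horizon-layered graph, which is polynomial via dynamic programming on (stage, state) pairs.

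In each case, the last step plugs the polynomial evaluator into the pair search and concludes.

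The main obstacle is the encoding. The hardness result is measured in the succinct length $L$, and under succinct encoding even one of these restrictions need not make $|S|$ polynomial. For example, a deterministic succinct system can still encode hard reachability. I would therefore state the tractability claims in the explicit-state regime of Section~\ref{sec:encoding}, with input length polynomial in $|S|$, $|A|$ and the horizon, and note that the restriction is what removes the need to reason over exponentially many histories or beliefs.

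The most delicate case is the deterministic, partially observed one. There I would first try to argue that, because the model is state-indexed, $\Opt_{\mathrm{seq}}$ depends only on the deterministic trajectory from $s$, so that observations are inert. If that argument fails, I would fall back to restricting this case to the fully observed or bounded-horizon setting.
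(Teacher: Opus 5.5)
Your template (compute $\Opt_{\mathrm{seq}}$ by dynamic programming under each restriction, then run the $O(|S|^2)$ pair search) is the paper's one-line proof, ``dynamic programming over a polynomially bounded representation,'' made explicit. Your move to the explicit-state regime, with the horizon counted in the input length, is exactly what that phrase silently assumes. It is also necessary, not just cautious. Each of the four restrictions admits instances whose induced optimizer is the static one (e.g.\ horizon one), so under the succinct encoding Theorem~\ref{thm:sufficiency-conp} would make the proposition imply $\Pclass=\coNP$. Note too that under the paper's formal semantics, where $\Opt_{\mathrm{seq}}$ is the underlying one-step optimizer, Proposition~\ref{prop:sequential-explicit-search} already decides every explicit instance in $O(|S|^2)$ steps, and the restrictions do no work. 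Your case analysis has content only under the POMDP reading you adopt, and that is where one case breaks.

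The gap is item~3. A bottom-up pass over \emph{states} computes the fully observed value. Under partial observability the recursion must run over beliefs or information sets, and a unique predecessor in the state graph does not bound these, because many action histories reach the same node. So your claim that no state is revisited along different histories is false.

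Concretely, start from a known root $r$. Let \emph{stop} pay $\theta$, and let \emph{go} move uniformly and unobservably to the head of one of $m$ clause chains. In chain $c$, the level-$j$ node records either ``unsatisfied'' or ``first satisfied at step $j'$''. Action $a\in\{0,1\}$ sets $x_j=a$, and the terminal reward is $1$ on satisfied nodes. Every node has a unique predecessor, there are $O(mn^2)$ states, and observations are constant. The value of \emph{go} is then the largest fraction of simultaneously satisfiable clauses, so deciding $\mathit{go}\in\Opt_{\mathrm{seq}}(r)$ is NP-hard even with explicit tables and unary horizon. Adding a coordinate that switches to a copy in which \emph{stop} is dominated turns this into NP-hardness of the sufficiency check itself.

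Item~3 therefore survives only if the tree system is also fully observed (then it is a subcase of item~1), or under the one-step semantics. Bounded horizon has the same slip, but harmlessly: the expectimax must branch on observations and carry beliefs, giving $(|A||\Omega|)^H$ belief nodes with polynomial-time updates. By contrast, the deterministic case you flag as delicate is safe. $\Opt_{\mathrm{seq}}(s)$ is evaluated at a known state and every transition is a point mass, so the belief stays a point mass and observations are inert. Your first argument there needs no fallback.
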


\begin{proof}
Each restriction reduces policy search to dynamic programming over a polynomially bounded representation.
\end{proof}

\subsection{Bridge from Stochastic}

The bridge from the stochastic regime is strict. A one-shot distribution can hide no temporal contingency; once dynamics are introduced, latent information can become decision-relevant later even when it appears irrelevant at time zero.

\leanmetapending{\LH{DQ62}}
\begin{proposition}[Stochastic Sufficiency Does Not Imply Sequential Sufficiency]\label{prop:stochastic-not-sequential}
There exist instances where $I$ is sufficient in the stochastic one-shot sense but fails to be sufficient once temporal dynamics are introduced.
\end{proposition}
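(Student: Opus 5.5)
The plan is to build an explicit small witness instance. Its one-shot utility makes a coordinate set $I$ stochastically sufficient for every distribution. Its transition kernel then lets an omitted coordinate change the induced sequential optimizer $\Opt_{\mathrm{seq}}$. The cleanest choice is $I=\emptyset$ on a two-coordinate Boolean state space $S=\{0,1\}^2$, with actions $A=\{a,b\}$.

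\textbf{Stochastic side.} Set the immediate utility to $U(a,s)=1$ and $U(b,s)=0$ for all $s$. Then the static optimal-action set is the singleton $\{a\}$ at every state, and it is constant on the unique $\emptyset$-fiber. Proposition~\ref{prop:static-to-stochastic-product} therefore applies directly: $\emptyset$ is stochastically sufficient for every distribution $P$, in particular for the uniform one. No expectation computation is needed.

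\textbf{Sequential side.} Equip the same instance with a transition kernel $T$ and take the induced decision map to be the standard one-step lookahead value
\[
Q(c,s)=U(c,s)+\mathbb{E}_{s'\sim T(c,s)}[V(s')].
\]
Here $V$ is the terminal continuation value under the finite horizon. I intend to route the dependence on the hidden coordinate entirely through $T$:
\begin{itemize}
\item action $b$ taken at a state with $s_2=1$ moves deterministically to a designated ``reward'' configuration with continuation value $3$;
\item every other action/state pair moves to a configuration with continuation value $0$.
\end{itemize}
Then $Q(a,s)=1$ for all $s$, while $Q(b,s)=3$ if $s_2=1$ and $Q(b,s)=0$ if $s_2=0$. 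This gives $\Opt_{\mathrm{seq}}(s)=\{b\}$ when $s_2=1$ and $\Opt_{\mathrm{seq}}(s)=\{a\}$ when $s_2=0$. The states $(0,0)$ and $(0,1)$ agree on $\emptyset$ but have different sequential optimal-action sets. By Definition~\ref{def:sequential-sufficient}, $\emptyset$ is not sequentially sufficient. The same pair also agrees on coordinate $1$, so the witness works for $I=\{1\}$ as well.

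\textbf{Main obstacle.} The delicate step is matching the construction to the paper's formal notion of the ``induced decision map'' of a sequential instance, which is only described informally. If the artifact's induced map is not the lookahead $Q$ above, I would instead define the sequential instance so that its time-indexed utility at the next step depends on $s_2$. This reproduces the same split of $\Opt_{\mathrm{seq}}$ across $s_2$ while keeping the time-zero utility state-independent. The rest is a routine verification that $V$ takes the values $3$ and $0$ on the designated configurations, for example by making those configurations absorbing with fixed terminal payoffs.
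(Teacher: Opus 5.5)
Your construction is correct and follows the same idea as the paper's proof, which is only sketched there: keep the one-shot optimizer preserved under $I$ and route the omitted coordinate's influence through the transition kernel, so that the induced sequential optimizer splits across it. Your explicit instance makes that sketch concrete.

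One point should be stated outright. The continuation values $3$ and $0$ must come from a state-dependent terminal (or later time-indexed) payoff, and you already anticipate this in your fallback. With a single time-invariant, state-independent $U$ the value function is constant by backward induction, so $\Opt_{\mathrm{seq}}$ would not split.
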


\begin{proof}
Construct a process in which the same one-step expected optimizer is preserved under $I$, but hidden-state memory affects future information states and therefore optimal policies. The temporal dependence blocks one-shot sufficiency transfer.
\end{proof}

With the static, stochastic, and sequential classifications in place, the regime
hierarchy is now complete. The next section consolidates that hierarchy into a
single matrix, and the following section refines it with the encoding-sensitive
ETH lower bounds that explain when exact certification remains feasible despite
the worst-case class barriers.

\section{Regime Matrix}\label{sec:regime-hierarchy}

The static, stochastic, and sequential formulations are the paper's central organizing object. They ask the same underlying question about which coordinates matter for the decision, but they answer it in increasingly expressive models. In the stochastic regime, preservation and decisiveness become parallel exact predicates. Adding time then introduces temporal contingency. Read as a complexity matrix, adding probability introduces conditional-comparison hardness, and adding time lifts the core queries further to PSPACE.

\subsection{Complexity Matrix}

\begin{table}[h]
\centering
\begingroup
\renewcommand{\tabularxcolumn}[1]{m{#1}}
\begin{tabularx}{\linewidth}{@{}>{\raggedright\arraybackslash\hyphenpenalty=10000\exhyphenpenalty=10000}m{0.14\linewidth}>{\raggedright\arraybackslash}m{0.18\linewidth}>{\raggedright\arraybackslash}X>{\raggedright\arraybackslash}X@{}}
\toprule
\textbf{Regime /} \textbf{predicate} & \textbf{Base query} & \textbf{Minimum} & \textbf{Anchor} \\
\midrule
Static & coNP-complete & coNP-complete & $\Sigma_2^P$-complete \\
\midrule
Stochastic preservation & P (explicit), bridge theorems & P (explicit); coNP-complete (full) / open (general) & P (explicit); $\Sigma_2^P$-complete (full) / open (general) \\
\midrule
Stochastic decisiveness & P (explicit), PP-hard (succinct) & PP-hard, in $\textsf{NP}^{\textsf{PP}}$ & PP-hard, in $\textsf{NP}^{\textsf{PP}}$ \\
\midrule
Sequential & PSPACE-complete & PSPACE-complete & PSPACE-complete \\
\bottomrule
\end{tabularx}
\caption{Complexity matrix across regimes and query types.}
\label{tab:regime-matrix}
\endgroup
\end{table}

Table~\ref{tab:regime-matrix} is the main statement-level summary. The stochastic row is intentionally split: preservation and decisiveness are distinct predicates with different complexity behavior. Open cells are explicit scope boundaries.

\leanmetapending{\LH{DC100}, \LH{DQ1}, \LHrng{DQ}{17}{19}, \LHrng{OU}{1}{2}, \LHrng{OU}{11}{12}, \LHrng{EH}{4}{10}}
\begin{theorem}[Regime-Sensitive Complexity Matrix]\label{thm:regime-main}
Under the encoding conventions of Section~\ref{sec:encoding}, the classification is as follows. In the static regime, sufficiency and minimum sufficiency are coNP-complete and anchor sufficiency is $\Sigma_2^P$-complete. In the stochastic regime, preservation (stochastic sufficiency) has a polynomial-time base query under explicit-state encoding with proved bridges back to static sufficiency; the minimum and anchor preservation variants are also polynomial-time under explicit-state encoding, are coNP-complete and $\Sigma_2^P$-complete under full support, and satisfy the support-sensitive partial results of Proposition~\ref{prop:stochastic-preservation-partial-general}. Decisiveness (stochastic decisiveness) has a polynomial-time base query under explicit-state encoding and is PP-hard under the succinct encoding; the anchor and minimum decisiveness queries are PP-hard and lie in $\textsf{NP}^{\textsf{PP}}$, with those oracle-class memberships proved in the paper text and their finite witness/checking cores mechanized. In the sequential regime, sufficiency, minimum, and anchor queries are PSPACE-complete.
\end{theorem}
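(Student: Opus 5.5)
The plan is to prove Theorem~\ref{thm:regime-main} by assembling it cell by cell from results already established. Each entry of Table~\ref{tab:regime-matrix} gets its justifying theorem, and we check that all of them use the encoding conventions of Section~\ref{sec:encoding}: succinct input for hardness and class membership, explicit-state input for polynomial-time claims. No new construction is needed. The theorem is a conjunction of earlier statements, so the proof is a structured citation argument, plus a check that the cited statements really are about the predicates named in each row.

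\textbf{Static row.} Coverage is direct. Theorem~\ref{thm:sufficiency-conp} gives coNP-completeness of the base query. Theorem~\ref{thm:minsuff-conp} gives the minimum query, via the relevance-containment collapse of Proposition~\ref{prop:minimal-relevant-equiv}. Theorem~\ref{thm:anchor-sigma2p} gives $\Sigma_2^P$-completeness of the anchor query.

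\textbf{Stochastic preservation row.} The explicit-state polynomial bound for the base query is Theorem~\ref{thm:stochastic-preservation-explicit}. The bridge claims are Propositions~\ref{prop:stochastic-to-static-sufficiency}, \ref{prop:stochastic-full-support-equiv} and~\ref{prop:stochastic-quotient-equiv}. For the minimum and anchor variants, explicit-state tractability is Theorem~\ref{thm:stochastic-preservation-variants-explicit}. Full-support coNP/$\Sigma_2^P$-completeness is Theorem~\ref{thm:stochastic-preservation-full-support-variants}, which transports the static hardness. The support-sensitive clause is exactly Proposition~\ref{prop:stochastic-preservation-partial-general}.

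\textbf{Stochastic decisiveness row.} Explicit-state tractability and succinct PP-hardness of the base query are Theorem~\ref{thm:stochastic-pp}. PP-hardness of the anchor and minimum queries is Theorems~\ref{thm:stochastic-anchor-hard} and~\ref{thm:stochastic-minimum-hard}. Their $\textsf{NP}^{\textsf{PP}}$ membership is Theorem~\ref{thm:stochastic-nppp-upper}.

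\textbf{Sequential row.} PSPACE-completeness of the three queries is Theorems~\ref{thm:sequential-pspace}, \ref{thm:sequential-anchor-hard} and~\ref{thm:sequential-minimum-hard}.

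The main obstacle is bookkeeping rather than mathematics, in two places. First, the decisiveness row's minimum and anchor entries must be read via the naming convention, so that they refer to the decisiveness family and not the preservation family. Second, the PP-hardness claims must be confirmed to be succinct-encoding statements; the MAJSAT gadget with $I=\emptyset$ over $\{0,1\}^n$ has polynomial circuit size, so this holds. A further subtlety is that the sequential membership arguments rely on polynomial-space evaluation of the induced optimizer, and that should be flagged as holding under the finite-horizon succinct encoding used there. Once these consistency checks are made, the theorem follows by conjunction of the cited results.
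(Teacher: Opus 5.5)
Your proposal is correct and is essentially the paper's own proof. The paper also proves the theorem by a cell-by-cell lookup from the same list of earlier theorems and Proposition~\ref{prop:stochastic-preservation-partial-general}; the one small difference is that you explicitly cite Propositions~\ref{prop:stochastic-to-static-sufficiency}, \ref{prop:stochastic-full-support-equiv} and~\ref{prop:stochastic-quotient-equiv} for the ``proved bridges'' clause, which the paper's citation list leaves implicit.
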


\begin{proof}
Immediate by table lookup from Theorems~\ref{thm:sufficiency-conp}, \ref{thm:minsuff-conp}, \ref{thm:anchor-sigma2p}, \ref{thm:stochastic-preservation-explicit}, \ref{thm:stochastic-preservation-full-support-variants}, \ref{thm:stochastic-preservation-variants-explicit}, Proposition~\ref{prop:stochastic-preservation-partial-general}, \ref{thm:stochastic-pp}, \ref{thm:stochastic-anchor-hard}, \ref{thm:stochastic-minimum-hard}, \ref{thm:stochastic-nppp-upper}, \ref{thm:sequential-pspace}, \ref{thm:sequential-anchor-hard}, and \ref{thm:sequential-minimum-hard}.
\end{proof}

\subsection{Strict Regime Gaps}

\leanmetapending{\LH{DQ18}}
\begin{theorem}[Static-to-Stochastic Complexity Gap]\label{thm:static-stochastic-strict}
Assuming $\coNP \ne \PP$, the succinct stochastic decisiveness classification is strictly harder in worst-case complexity than the static sufficiency classification.
\end{theorem}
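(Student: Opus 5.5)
We argue by contradiction, combining the two classifications already established: static sufficiency is in coNP (Theorem~\ref{thm:sufficiency-conp}), and succinct stochastic decisiveness is PP-hard (Theorem~\ref{thm:stochastic-pp}). We make ``strictly harder'' precise as follows. Stochastic decisiveness is not polynomial-time many-one reducible to static sufficiency, but static sufficiency is reducible to stochastic decisiveness. The second half needs no new reduction: coNP $\subseteq$ PP is standard, and since stochastic decisiveness is PP-hard, every coNP problem, including SUFFICIENCY-CHECK, many-one reduces to it.

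For the strictness half, suppose \textsc{Stochastic-Decisiveness-Check} (succinct encoding) many-one reduced in polynomial time to SUFFICIENCY-CHECK. By Theorem~\ref{thm:sufficiency-conp}, SUFFICIENCY-CHECK lies in coNP, and coNP is closed under polynomial-time many-one reductions, so stochastic decisiveness would lie in coNP.

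Now take any $L \in \PP$. Because stochastic decisiveness is PP-hard under many-one reductions (the MAJSAT gadget of Theorem~\ref{thm:stochastic-pp}, composed with MAJSAT's PP-completeness), $L$ reduces to it and hence to a coNP problem. So $\PP \subseteq \coNP$. Together with $\coNP \subseteq \PP$ this gives $\PP = \coNP$, contradicting the hypothesis.

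The main obstacle is the notion of hardness used in Theorem~\ref{thm:stochastic-pp}. The argument requires PP-hardness under \emph{many-one} reductions. If only Turing reductions were available, we would get just $\PP \subseteq \Pclass^{\coNP}$, and that does not contradict $\coNP \neq \PP$. So the first thing to check is that the MAJSAT gadget is a genuine Karp reduction. It is: it maps $\varphi$ to a single instance with $I=\emptyset$ and polynomially many bits in the utility constants.

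A second, smaller point is that ``MAJSAT'' must be the PP-complete variant whose threshold matches the gadget's boundary. The gadget accepts when $\mathbb{E}[\varphi] \ge 1/2$, so we use the ``at least half'' variant, which is PP-complete under many-one reductions. With both points settled, the theorem follows as a direct separation statement.
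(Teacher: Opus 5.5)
Your proposal is correct and follows the same route as the paper. The paper's proof is a one-line combination of $\coNP$-membership of static sufficiency (Theorem~\ref{thm:sufficiency-conp}) with succinct PP-hardness of stochastic decisiveness (Theorem~\ref{thm:stochastic-pp}) under $\coNP \neq \PP$. Your version additionally makes explicit the facts that argument relies on without stating them: the MAJSAT gadget is a many-one reduction at the ``at least half'' threshold, $\coNP \subseteq \PP$, and $\coNP$ is closed under many-one reductions.
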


\begin{proof}
Static sufficiency checking is coNP-complete (Theorem~\ref{thm:sufficiency-conp}), while stochastic decisiveness is PP-hard under the succinct encoding (Theorem~\ref{thm:stochastic-pp}). Under the assumption $\coNP \ne \PP$, this yields a strict complexity gap between the two classifications.
\end{proof}

\leanmetapending{\LH{DQ19}}
\begin{theorem}[Stochastic-to-Sequential Complexity Gap]\label{thm:stochastic-sequential-strict}
Assuming $\PP \ne \PSPACE$, the sequential sufficiency classification is strictly harder in worst-case complexity than the stochastic decisiveness classification.
\end{theorem}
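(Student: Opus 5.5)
The plan mirrors the proof of Theorem~\ref{thm:static-stochastic-strict}: combine a hardness result for the sequential row with an upper bound for the stochastic-decisiveness row, and separate the two with the hypothesis $\PP \ne \PSPACE$. I read ``strictly harder in worst-case complexity'' as follows. The sequential base query (\textsc{Sequential-Sufficiency-Check}) does not polynomial-time many-one reduce to \textsc{Stochastic-Decisiveness-Check}, while the converse reduction exists.

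\textbf{Converse direction.} Stochastic decisiveness lies in PSPACE. A polynomial-space machine can enumerate admissible fibers $\alpha$ one at a time and compute the conditional expected utilities $\mathbb{E}[U(a,S)\mid S_I=\alpha]$ exactly, as rational sums over the fiber, using an accumulator of polynomially many bits. It then checks that exactly one action attains the maximum. Equivalently, I can invoke the $\textsf{coNP}^{\textsf{PP}}$-style upper bound from the proof of Theorem~\ref{thm:stochastic-pp}, since $\textsf{coNP}^{\textsf{PP}} \subseteq \PSPACE$. Because the sequential problem is PSPACE-hard (Theorem~\ref{thm:sequential-pspace}), stochastic decisiveness reduces to it.

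\textbf{Main direction.} Suppose, for contradiction, that \textsc{Sequential-Sufficiency-Check} reduced to stochastic decisiveness. Since the sequential problem is PSPACE-complete, every PSPACE language would reduce to stochastic decisiveness. To reach a contradiction with $\PP \ne \PSPACE$, I need stochastic decisiveness to lie in $\PP$, so that PSPACE would collapse into PP. This is the obstacle: the paper only proves PP-hardness of stochastic decisiveness, together with a $\textsf{coNP}^{\textsf{PP}}$-type upper bound.

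My first attempt at this step is to place the base decisiveness check, for fixed $I$, inside PP. The bad-fiber witness characterization (Theorem~\ref{thm:stochastic-pp}) makes failure of decisiveness an existential statement over fibers followed by a PP-type comparison. For $|A|$ explicitly given, the ``no tie at the top'' condition is a polynomial combination of threshold comparisons. I would try to fold the outer quantifier over fibers into a single majority computation using standard closure properties of PP (closure under complement, and under polynomial-time truth-table reductions by Fortnow--Reingold). The existential over fibers is the sticking point, because it is exactly the coNP layer that PP is not known to absorb.

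If that attempt fails, I would fall back to a weaker statement that the paper's hypotheses do support. The problem is to separate the rows, and the hypothesis I can actually use is that PSPACE is not contained in the class where stochastic decisiveness sits. So I would state the gap under the upper bound actually proved. I would note explicitly that the argument then needs $\textsf{coNP}^{\textsf{PP}} \ne \PSPACE$, and that this follows from $\PP \ne \PSPACE$ only if the base decisiveness check is in PP. Otherwise I would interpret the theorem, as Theorem~\ref{thm:static-stochastic-strict} does, as comparing the proven classification levels: PP-hard versus PSPACE-complete. Under $\PP \ne \PSPACE$ these levels are distinct, and the sequential row sits strictly above the stochastic one.
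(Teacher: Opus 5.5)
Your fallback is essentially the paper's own proof. The paper places the PP-hardness of stochastic decisiveness (Theorem~\ref{thm:stochastic-pp}) next to the PSPACE-completeness of sequential sufficiency (Theorem~\ref{thm:sequential-pspace}) and appeals to $\PP\neq\PSPACE$. Your diagnosis of why this is not a genuine separation is correct, and it applies to the paper's argument too. To show that $A$ does not reduce to $B$ you need a lower bound on $A$ and an \emph{upper} bound on $B$. Theorem~\ref{thm:static-stochastic-strict} has exactly this shape: decisiveness is PP-hard, static sufficiency is in $\coNP$, and a reduction would give $\PP\subseteq\coNP$, hence $\coNP=\PP$. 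Here, however, the only input on the stochastic side is again a lower bound. So the ``mirror'' breaks, and comparing labels says nothing about reducibility. Your converse direction is fine; most simply, decisiveness is in $\textsf{coNP}^{\textsf{PP}}\subseteq\PSPACE$.

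The genuine gap is in your main direction, and the repair you sketch cannot succeed, because the universal quantifier over fibers is intrinsic to the problem. Run the paper's three-action gadget with observed coordinates $x$ (forming $I$) and hidden coordinates $y\in\{0,1\}^m$ under the uniform distribution, setting $U(\mathrm{accept},(x,y))=\mathbf{1}[\varphi(x,y)]$ and both hold actions at $\frac12-2^{-(m+1)}$. Fiber $x$ is then decisive iff $\#\{y:\varphi(x,y)\}\ge 2^{m-1}$. So decisiveness becomes $\forall x\,\mathrm{Maj}_y\,\varphi(x,y)$, whose complement is \textsc{E-MajSat} on $\neg\varphi$, an $\textsf{NP}^{\textsf{PP}}$-complete problem. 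Combined with the paper's upper bound, the base check is $\textsf{coNP}^{\textsf{PP}}$-complete. Placing it in $\PP$ would give $\textsf{NP}^{\textsf{PP}}\subseteq\PP$, and hence $\mathrm{PH}\subseteq\PP$ by Toda's theorem.

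Worse, for arbitrary $I$ your reading of the theorem is equivalent to $\textsf{NP}^{\textsf{PP}}\neq\PSPACE$. If $\textsf{NP}^{\textsf{PP}}=\PSPACE$, which $\PP\neq\PSPACE$ is not known to exclude, then decisiveness is PSPACE-complete and sequential sufficiency reduces to it.

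Your own Fortnow--Reingold observation does settle the slice that carries the paper's hardness, namely $I=\emptyset$. With $A$ explicit and denominators cleared, decisiveness there is $\exists a\,\forall b\neq a:\ \sum_s P(s)\bigl(U(a,s)-U(b,s)\bigr)>0$. This is a fixed Boolean combination of $O(|A|^2)$ nonadaptive GapP-sign queries, so it lies in $\PP$ and is therefore $\PP$-complete. A reduction from the sequential problem would then give $\PSPACE\subseteq\PP$, contradicting $\PP\neq\PSPACE$. So either state the theorem for that slice, or strengthen the hypothesis to $\textsf{NP}^{\textsf{PP}}\neq\PSPACE$.
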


\begin{proof}
Stochastic decisiveness is PP-hard under the succinct encoding (Theorem~\ref{thm:stochastic-pp}), while sequential sufficiency is PSPACE-complete (Theorem~\ref{thm:sequential-pspace}). The class gap yields a strict complexity separation between the two classifications under the assumption $\PP \ne \PSPACE$.
\end{proof}


\paragraph{Forward-looking perspective.}
The regime hierarchy extends to richer models: partially observable settings, multi-agent decision problems, and game-theoretic regimes would introduce additional layers of complexity such as knowledge reasoning or equilibrium computation. The optimizer-quotient framework extends to those regimes, though their detailed classification remains open.

\section{Encoding Dichotomy and ETH Lower Bounds}\label{sec:dichotomy}

The regime hierarchy identifies where exact relevance certification sits in the standard complexity landscape. This section extracts the complementary algorithmic message: under the encodings of Section~\ref{sec:encoding}, there is a strong contrast between explicit-state instances whose true relevant support is small and succinct instances whose relevant support is extensive. The former admit direct algorithms; the latter inherit exponential lower bounds under ETH. The practical takeaway is cross-model rather than single-model: logarithmic relevant support can still be certified exactly in polynomial time under explicit encodings, while linear relevant support under succinct encodings already forces ETH-conditioned exponential cost.

\paragraph{Model note.} Part~1 is an explicit-state upper bound (time polynomial in $|S|$). Part~2 is a succinct-encoding lower bound under ETH (time exponential in $n$). The encodings are defined in Section~\ref{sec:encoding}.

\begin{table}[h]
\centering
\small
\setlength{\tabcolsep}{4pt}
\renewcommand{\arraystretch}{1.12}
\renewcommand{\tabularxcolumn}[1]{m{#1}}
\begin{tabularx}{\linewidth}{@{}>{\raggedright\arraybackslash}m{0.19\linewidth}>{\raggedright\arraybackslash}m{0.19\linewidth}>{\raggedright\arraybackslash}X>{\raggedright\arraybackslash}m{0.18\linewidth}@{}}
\toprule
\textbf{Encoding} & \textbf{Support regime} & \textbf{Complexity consequence} & \textbf{Source} \\
\midrule
Explicit-state & $k^* = O(\log N)$ & exact certification is polynomial in $N$ by projection enumeration & Theorem \ref{thm:dichotomy}(1) \\
\midrule
Succinct & $k^* = \Omega(n)$ & exact certification inherits an ETH-conditioned $2^{\Omega(n)}$ lower bound & Theorem \ref{thm:dichotomy}(2) \\
\bottomrule
\end{tabularx}
\caption{Boundary summary. The tractable-hard contrast is controlled jointly by encoding and by the size of the true sufficient support.}
\label{tab:dichotomy-boundary}
\end{table}

\leanmetapending{\LH{DQ26}, \LHrng{DQ}{29}{30}}
\begin{theorem}[Encoding-Sensitive Contrast]\label{thm:dichotomy}
Let $\mathcal{D} = (A, X_1, \ldots, X_n, U)$ be a decision problem with $|S| = N$ states. Let $k^*$ be the size of the minimal sufficient set.

\begin{enumerate}
\item \textbf{Logarithmic case (explicit-state upper bound):} If $k^* = O(\log N)$, then SUFFICIENCY-CHECK is solvable in polynomial time in $N$ under the explicit-state encoding.

\item \textbf{Linear case (succinct lower bound under ETH):} If $k^* = \Omega(n)$, then SUFFICIENCY-CHECK requires time $\Omega(2^{n/c})$ for some constant $c > 0$ under the succinct encoding (assuming ETH).
\end{enumerate}
\end{theorem}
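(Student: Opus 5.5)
The plan treats the two parts separately, since they live in different encodings and use different tools.

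For Part~1, I will work in the explicit-state model, where the input already has length $\Theta(|A|N)$ up to entry bitlength. The first step is to compute $\Opt(s)$ for every $s\in S$ by a single pass over the table, which costs $O(|A|N)$. For a given candidate $I$, sufficiency is then checked by projection enumeration. I bucket the states by their projection $s_I$, via sorting or hashing on the tuple $s_I$, and verify that $\Opt$ is constant within each bucket. This costs $O(N\log N\cdot|A|)$ and is polynomial in $N$ regardless of $|I|$.

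The role of the hypothesis $k^*=O(\log N)$ is to make the search-style variant polynomial as well, namely certifying that a proposed set is the minimal one or finding it. There are two ways to do this. One option is to enumerate all $\binom{n}{\le k^*}\le n^{O(\log N)}$ subsets; since $n\le\log_2 N$ for nontrivial domains, this is $N^{O(\log\log N)}$, which is only quasi-polynomial, so this route is not enough. The better route uses Proposition~\ref{prop:minimal-relevant-equiv}. The minimal sufficient set equals the relevant-coordinate set, and each coordinate's relevance can be tested in polynomial time by bucketing states on all other coordinates. Hence $I^*$ is computable in $O(nN\log N\,|A|)$ time.

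The logarithmic bound $k^*=O(\log N)$ is then used only to bound the number of distinct projection classes, $\prod_{i\in I^*}|X_i|\le N$, and this keeps the quotient table polynomial. I will state explicitly that the polynomial bound holds even without the hypothesis, with the hypothesis matching the boundary table.

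For Part~2, I will reduce from 3-SAT complement via the TAUTOLOGY reduction in Theorem~\ref{thm:sufficiency-conp}. Given a 3-CNF $\psi$ on $m$ variables with $O(m)$ clauses (after sparsification), I form $\varphi=\neg\psi$ and build the two-action succinct instance with $n=m$ coordinates and a circuit of size $O(m)$. In that instance $I=\emptyset$ is sufficient iff $\psi$ is unsatisfiable. Then:
\begin{itemize}
\item an algorithm for SUFFICIENCY-CHECK running in time $2^{o(n)}$ on such instances would decide 3-SAT in $2^{o(m)}$ time, contradicting ETH;
\item therefore some constant $c$ gives the required lower bound $\Omega(2^{n/c})$.
\end{itemize}

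I must also check that the hard instances have $k^*=\Omega(n)$. When $\psi$ is satisfiable, the relevant coordinates are those on which the satisfying set depends, and in general this need not be linear. To fix this I will pad the construction. I add the term $\sum_i x_i\cdot\epsilon$ to the tracked action's utility, scaled below the gap so that ties are broken coordinatewise, which makes every coordinate relevant exactly in the non-tautology case. Alternatively, I can restrict to the promise class of instances where $k^*=\Omega(n)$ by taking a disjoint union with a parity-style gadget whose coordinates are all relevant but which does not affect the empty-set question through $\Opt$ constancy.

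This padding step is the main obstacle. Forcing $k^*=\Omega(n)$ without destroying the equivalence with unsatisfiability, and keeping the circuit linear in size so that the ETH exponent is preserved, is the delicate part. The first thing I would try is a product construction: take $\mathcal D\times\mathcal D'$, where $\mathcal D'$ has $n$ always-relevant coordinates, and query sufficiency of $I=\{\text{coordinates of }\mathcal D'\}$. Sufficiency of this $I$ then reduces to constancy of the $\mathcal D$ part, while $k^*\ge n$ holds automatically.
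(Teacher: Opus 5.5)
Your proof follows the paper's skeleton in both parts. Part~1 is projection enumeration over the explicit table. Part~2 is the same ETH chain: 3-SAT $\to$ TAUTOLOGY by negation $\to$ empty-set \textsc{Sufficiency-Check} via Theorem~\ref{thm:sufficiency-conp}. Where you differ, you are more careful than the paper.

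\begin{itemize}
\item In Part~1 the paper bounds the number of projections by $2^{k^*}$, which tacitly assumes Boolean coordinates. Your bucketing argument shows the hypothesis $k^*=O(\log N)$ is not needed, since $|S_I|\le N$ always.
\item In Part~2 the paper never checks that its hard instances satisfy $k^*=\Omega(n)$, and they do not. Tautologies are exactly the instances where $\emptyset$ is sufficient, so $k^*=0$ on every yes-instance. Your product construction, which queries $I=$ the gadget coordinates of $\mathcal D\times\mathcal D'$, is what makes the lower bound apply to a family that satisfies the stated hypothesis. That is a genuine improvement over the paper's argument.
\end{itemize}

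There are three things to tighten.

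\textbf{Only the product padding works.} The $\epsilon\sum_i x_i$ term, kept below the utility gap, leaves $\Opt$ unchanged. Even as you describe it, it would act only on non-tautologies, so $k^*=0$ on yes-instances remains. If it instead breaks genuine ties, it can destroy constancy for tautologies. A disjoint-union gadget whose coordinates are all relevant makes $\emptyset$ insufficient outright, so it cannot leave the empty-set question untouched.

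\textbf{Instantiate $\mathcal D'$ with a constant-size action set,} because $|A|$ is charged in the succinct length $L$. For example:
\begin{itemize}
\item take actions $\{0,1\}$ with $U'(b,y)=\mathbf{1}[b=y_1\oplus\cdots\oplus y_n]$;
\item use product actions $A\times\{0,1\}$ with additive utility $U(a,x)+U'(b,y)$, so that $\Opt(x,y)=\Opt_{\mathcal D}(x)\times\Opt'(y)$.
\end{itemize}
Then every $y_j$ is relevant, and $I=\{y\text{-coordinates}\}$ is sufficient iff $\Opt_{\mathcal D}$ is constant. The circuit grows by only $O(n)$, and $k^*\ge n$ out of roughly $2n$ coordinates. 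Using $n$ independent binary sub-decisions instead would give $2^n$ explicit actions and break the ETH transfer.

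\textbf{Your quasi-polynomial estimate for subset enumeration is a miscount.} There are only $2^n\le N$ subsets in total, so that route is polynomial too. This is harmless, since you do not rely on it.
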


\begin{proof}
\textbf{Part 1 (Logarithmic case):} If $k^* = O(\log N)$, then the number of distinct projections $|S_{I^*}|$ is at most $2^{k^*} = O(N^c)$ for some constant $c$. Under the explicit-state encoding, we enumerate all projections and verify sufficiency in polynomial time.

\textbf{Part 2 (Linear case):} We establish this by a standard ETH transfer under the succinct encoding.
\end{proof}

\subsection{The ETH Reduction Chain}\label{sec:eth-chain}

The lower bound in Part 2 of Theorem~\ref{thm:dichotomy} follows from a short standard ETH transfer. Under ETH, 3-SAT requires $2^{\Omega(n)}$ time. Negation gives a linear-time reduction from 3-SAT to TAUTOLOGY. The static reduction of Theorem~\ref{thm:sufficiency-conp} then maps an $n$-variable tautology instance to a sufficiency instance with the same asymptotic coordinate count. Therefore, if \textsc{Sufficiency-Check} were solvable in $2^{o(n)}$ time under the succinct encoding, then so would TAUTOLOGY and hence 3-SAT, contradicting ETH.

\begin{definition}[Exponential Time Hypothesis (ETH)]
There exists a constant $\delta > 0$ such that 3-SAT on $n$ variables cannot be solved in time $O(2^{\delta n})$~\cite{impagliazzo2001complexity}.
\end{definition}

\leanmetapending{\LH{DQ27}, \LH{DQ30}}
\begin{proposition}[Tight Constant]\label{prop:eth-constant}
The reduction in Theorem~\ref{thm:sufficiency-conp} preserves the number of variables up to an additive constant: an $n$-variable formula yields an $(n+1)$-coordinate decision problem. Therefore, the constant $c$ in the $2^{n/c}$ lower bound is asymptotically 1:
\[
\text{SUFFICIENCY-CHECK requires time } \Omega(2^{\delta (n-1)}) = 2^{\Omega(n)} \text{ under ETH}
\]
where $\delta$ is the ETH constant for 3-SAT.
\end{proposition}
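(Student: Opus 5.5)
The plan is to make the TAUTOLOGY reduction in Theorem~\ref{thm:sufficiency-conp} explicit and count its coordinates, then run the ETH transfer of Section~\ref{sec:eth-chain} with that exact count.

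\textbf{Step 1: the construction.} Given $\varphi(x_1,\ldots,x_n)$, take state space $S=\{0,1\}^n\times X_{n+1}$, where $X_{n+1}=\{0,1\}$ is one auxiliary coordinate. Use two actions, $\mathrm{accept}$ and $\mathrm{base}$. Set $U(\mathrm{accept},s)=\mathbf{1}[\varphi(s_{1..n})]$ and $U(\mathrm{base},s)=\tfrac12$. The auxiliary coordinate is a dummy padding slot, so the bound holds for the ``$n+1$'' phrasing. If the construction needs no extra coordinate, the count is $n$ and the bound only improves.

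\textbf{Step 2: correctness and size.} Then $\Opt(s)=\{\mathrm{accept}\}$ when $\varphi(s)$ holds and $\{\mathrm{base}\}$ otherwise. By Proposition~\ref{prop:empty-sufficient-constant}, $I=\emptyset$ is sufficient iff $\Opt$ is constant. Because $\{0,1\}^n\neq\emptyset$ and $\Opt$ depends only on whether $\varphi$ holds, $\Opt$ is constant iff $\varphi$ is a tautology or $\varphi$ is unsatisfiable. I would rule out the unsatisfiable case by a preprocessing convention. One option is to replace $\varphi$ by $\varphi\vee(\text{fresh }y)$ restricted by the auxiliary coordinate. A cleaner option is to use the auxiliary coordinate $x_{n+1}$ directly: set $U(\mathrm{accept},s)=\mathbf{1}[\varphi(s)\vee x_{n+1}]$. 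Then the states with $x_{n+1}=1$ always give $\{\mathrm{accept}\}$, so $\Opt$ is constant iff $\varphi$ is a tautology. This is exactly the source of the $+1$. The circuit $C_U$ has size $O(|\varphi|)$, so the whole map runs in linear time.

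\textbf{Step 3: ETH transfer.} A 3-CNF $\psi$ on $n$ variables is unsatisfiable iff $\neg\psi$ is a tautology. The formula $\neg\psi$ is a 3-DNF of linear size on the same $n$ variables. Suppose SUFFICIENCY-CHECK on $m$ coordinates ran in time $O(2^{\delta' m})$ with $\delta'<\delta$. Composing would decide 3-SAT in time $\mathrm{poly}(|\psi|)+O(2^{\delta'(n+1)})=O(2^{\delta' n})$, contradicting ETH. Hence the required time is $\Omega(2^{\delta(m-1)})$ with $m=n+1$, which is $2^{\Omega(m)}$, so the constant $c$ is $1/\delta$ up to the additive shift.

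\textbf{Main obstacle.} The main obstacle is the bookkeeping that the statement's ``asymptotically $1$'' refers to the linear coefficient in the exponent being preserved. The reduction's multiplicative blow-up in coordinates is $1$, so no constant is lost beyond $\delta$ itself. I would also state carefully that polynomial terms from the linear-size circuit are absorbed into the exponential bound. The ETH statement with sparsification lets me treat $|\psi|=O(n)$, so those terms do not affect the exponent.
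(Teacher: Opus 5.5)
Your proposal is correct and follows the paper's route. You make the TAUTOLOGY reduction explicit, count $n$ formula coordinates plus one auxiliary coordinate, and observe that the ETH transfer loses only the additive shift, so $\delta$ carries over unchanged. Your gate $U(\mathrm{accept},s)=\mathbf{1}[\varphi(s)\vee x_{n+1}]$ plays the same role as the paper's reference state in $S=\{\mathrm{ref}\}\cup\{0,1\}^n$: it pins $\Opt$ on part of the state space, which excludes the unsatisfiable case that you correctly noticed breaks your Step~1 version. Your Step~3 spells out the ETH transfer that the paper only asserts.
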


\begin{proof}
The TAUTOLOGY reduction (Theorem~\ref{thm:sufficiency-conp}) constructs:
\begin{itemize}
\item State space $S = \{\text{ref}\} \cup \{0,1\}^n$ with $n+1$ coordinates (one extra for the reference state)
\item Query set $I = \emptyset$
\end{itemize}
When $\varphi$ has $n$ variables, the constructed problem has $n+1$ coordinates. The asymptotic lower bound is $2^{\Omega(n)}$ with the same constant $\delta$ from ETH.
\end{proof}

\subsection{Cross-Model Contrast}

\leanmetapending{\LH{DQ3}, \LH{DQ26}}
\begin{corollary}[Cross-Model Contrast for Exact Certification]\label{cor:phase-transition}
Across the explicit-state and succinct encodings of Section~\ref{sec:encoding}, exact certification exhibits the following contrast:
\begin{itemize}
\item If $k^* = O(\log N)$, SUFFICIENCY-CHECK is polynomial in $N$ under the explicit-state encoding
\item If $k^* = \Omega(n)$, SUFFICIENCY-CHECK is exponential in $n$ under ETH in the succinct encoding
\end{itemize}
For Boolean coordinate spaces ($N = 2^n$), one has $\log N = n$. So the explicit-state upper bound should be read as polynomial in the explicit table size when $k^* = O(n)$, while the succinct lower bound is exponential in the succinct variable count when $k^* = \Omega(n)$. The comparison is therefore between two encodings of the same Boolean family, not between $O(\log n)$ and $\Omega(n)$ inside a single model.
\end{corollary}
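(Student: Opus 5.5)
The corollary is a repackaging of Theorem~\ref{thm:dichotomy}, so the plan is to derive each bullet from the corresponding part of that theorem. Then I will add the interpretive remark about Boolean coordinate spaces as a direct computation.

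For the first bullet I invoke Part~1 of Theorem~\ref{thm:dichotomy}. The minimal sufficient set is the relevant-coordinate set $I^*$ (Proposition~\ref{prop:minimal-relevant-equiv}). Sufficiency of a query set $I$ is then relevance containment, $I^* \subseteq I$. Under the explicit-state encoding the whole table has size $\Theta(|A|N)$, and I will check sufficiency directly by grouping the $N$ states by their $I$-projection. Within each group I compare the optimal-action sets. This takes time polynomial in $N$, since hashing or sorting projections costs $O(N\log N)$ comparisons and each $\Opt(s)$ costs $O(|A|)$. The hypothesis $k^*=O(\log N)$ bounds the number of distinct projections onto $I^*$ by $N^{O(1)}$, which is consistent with this bound.

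For the second bullet I invoke Part~2 of Theorem~\ref{thm:dichotomy}, together with the ETH chain of Section~\ref{sec:eth-chain} and Proposition~\ref{prop:eth-constant}. The chain runs: 3-SAT on $n$ variables, negated, gives TAUTOLOGY. The reduction of Theorem~\ref{thm:sufficiency-conp} turns that into an empty-set sufficiency instance with $n+1$ coordinates. A $2^{o(n)}$ succinct algorithm would therefore refute ETH.

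The point I expect to need care is the hypothesis $k^*=\Omega(n)$. The hard instances must actually have linear relevant support, otherwise the conditional statement would be vacuous on them. In the tautology gadget, when $\varphi$ is not a tautology, the relevant coordinates are those $x_i$ on which $\varphi$ depends. I would first try padding the reduction: conjoin coordinates whose flips always change $\Opt$ except on a designated instance structure. If that is awkward, I will instead read the bullet as a worst-case statement over the family of instances produced by the reduction, restricted to non-degenerate formulas depending on all variables. That restriction preserves ETH hardness, because one can pad $\varphi$ with $x_i\vee\neg x_i$-style dependencies that keep satisfiability unchanged yet make every variable essential in the non-tautology case.

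Finally, the Boolean remark is a direct computation. With $N=2^n$ we get $\log N=n$, so the explicit bound is polynomial in $2^n$, the table size, while the succinct bound is $2^{\Omega(n)}$ in the circuit input length. Both concern the same Boolean family under different encodings. This justifies the closing sentence and requires no further argument.
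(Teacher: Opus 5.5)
Your overall route is the paper's. The paper proves the corollary by quoting Parts~1 and~2 of Theorem~\ref{thm:dichotomy} and then reading off $\log N=n$ in the Boolean case. Your grouping-by-projection argument for the first bullet is fine. It never uses $k^*=O(\log N)$, because explicit-state sufficiency is polynomial in the table size outright.

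The gap is in the step you yourself flagged, and your fallback there does not work.
\begin{itemize}
\item Relevance (Definition~\ref{def:relevant}) is semantic. Conjoining clauses like $x_i\vee\neg x_i$ leaves the Boolean function of $\varphi$, and hence every $\Opt$-value, unchanged, so it cannot make any coordinate relevant.
\item More fundamentally, a formula that semantically depends on all $n\ge 1$ of its variables is never a tautology. Restricting TAUTOLOGY to such formulas therefore makes it trivial.
\item The same obstruction hits the chain as the paper writes it. The chain reduces to \emph{empty-set} sufficiency, and on every yes-instance $\Opt$ is constant, so $k^*=0$ (Proposition~\ref{prop:empty-sufficient-constant}). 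Imposing $k^*=\Omega(n)$ instance-wise leaves only no-instances, where the answer is trivially ``no''.
\end{itemize}
So making all coordinates relevant on non-tautologies, as the shifted family does, is not enough. The paper's proof does not engage with this point; it just invokes Part~2.

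Your first idea, padding, is what rescues the second bullet, but it must satisfy two requirements. The padding coordinates must be relevant on \emph{every} instance, tautologies included. They must also go into the query set $I$, so the question is no longer about $\emptyset$. For instance:
\begin{itemize}
\item Add fresh Boolean coordinates $y_1,\dots,y_n$ and $x_0$, and set $\psi=\varphi\vee x_0$.
\item Use actions $\{a,b,c\}$ with $0/1$ utilities such that $\Opt(s)=\{b\}$ when $y_1\oplus\cdots\oplus y_n=1$, $\Opt(s)=\{a\}$ when the parity is $0$ and $\psi$ holds, and $\Opt(s)=\{c\}$ otherwise.
\item Take $I=\{y_1,\dots,y_n\}$.
\end{itemize}
This family has the properties you need:
\begin{itemize}
\item Flipping any $y_j$ switches $\Opt$ between $\{b\}$ and one of $\{a\},\{c\}$. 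So $k^*\ge n$ out of $2n+1$ coordinates on every instance.
\item $I$ is sufficient iff $\psi$ is constant, iff $\varphi$ is a tautology. The disjunct $x_0$ rules out the contradiction case.
\item The circuit has linear size, so the ETH chain of Section~\ref{sec:eth-chain} transfers to this family. This gives a non-vacuous $2^{\Omega(n)}$ bound under the promise $k^*=\Omega(n)$.
\end{itemize}

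A small wording point in your last paragraph: the succinct bound is exponential in the coordinate count $n$, not in the circuit length.
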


\begin{proof}
The logarithmic case (Part 1 of Theorem~\ref{thm:dichotomy}) gives polynomial time when $k^* = O(\log N)$. More precisely, when $k^* \leq c \log N$ for constant $c$, the algorithm runs in time $O(N^c \cdot \text{poly}(n))$.

The linear case (Part 2) gives exponential time when $k^* = \Omega(n)$.

The transition point inside the explicit-state bound is where $2^{k^*} = N^{k^*/\log N}$ stops being polynomial in $N$, i.e., when $k^* = \omega(\log N)$. For Boolean coordinate spaces, however, the present theorem should be read only as a cross-encoding contrast, because $\log N = n$ and the succinct lower bound is parameterized directly by the succinct variable count.
\end{proof}

\begin{remark}[Asymptotic Tightness Within Each Encoding]
Under ETH, the lower bound is asymptotically tight in the succinct encoding. The explicit-state upper bound is tight in the sense that it matches enumeration complexity in $N$.
\end{remark}

This encoding-sensitive contrast explains why exact simplification can be tractable on some instance families and computationally prohibitive on others. When the true sufficient support is tiny and the state space is explicit, exhaustive certification can still be polynomial. When the true support is extensive and the problem is given succinctly, the same certification task inherits a worst-case $2^{\Omega(n)}$ lower bound under ETH in addition to the class-level coNP hardness statement.

\paragraph{Interpretation.} The ETH chain composes with the structural rank viewpoint introduced earlier. When relevance is extensive, the same objects that drive the regime hierarchy also force the exponential obstruction in the succinct model. In that sense, the dichotomy theorem complements the hierarchy rather than standing apart from it: the hierarchy locates exact certification in the standard class landscape, while the dichotomy explains when explicit structure can still make exact certification feasible.

\begin{remark}[Circuit Model Formalization]
The ETH lower bound is stated in the succinct circuit encoding of Section~\ref{sec:encoding}, where the utility function $U: A \times S \to \mathbb{R}$ is represented by a Boolean circuit computing $\mathbf{1}[U(a,s) > \theta]$ for threshold comparisons. In this model:
\begin{itemize}
\item The input size is the circuit size $m$, not the state space size $|S| = 2^n$
\item A 3-SAT formula with $n$ variables and $c$ clauses yields a circuit of size $O(n + c)$
\item The reduction preserves instance size up to constant factors: $m_{\text{out}} \leq 3 \cdot m_{\text{in}}$
\end{itemize}
This linear size preservation is essential for ETH transfer. In the explicit enumeration model (where $S$ is given as a list), the reduction would blow up the instance size exponentially, precluding ETH-based lower bounds. The circuit model is standard in fine-grained complexity and matches practical representations of decision problems.
\end{remark}

\section{Tractable Special Cases}\label{sec:tractable}

We distinguish the encodings of Section~\ref{sec:encoding}. The tractability results below state the model assumption explicitly.

Our theory provides comprehensive coverage of tractable subcases. The artifact certifies explicit decision procedures for several families, and also certifies paper-specific reductions and complexity transfers connecting structurally interesting restrictions to standard polynomial-time algorithmic backbones. Each subcase removes one of the structural sources of hardness in exact certification: unrestricted action comparison, cross-coordinate interaction, high-width dependency structure, or unnecessary state-space multiplicity.

\begin{table}[h]
\centering
\small
\setlength{\tabcolsep}{4pt}
\renewcommand{\arraystretch}{1.12}
\renewcommand{\tabularxcolumn}[1]{m{#1}}
\begin{tabularx}{\linewidth}{@{}>{\raggedright\arraybackslash}m{0.25\linewidth}>{\raggedright\arraybackslash}m{0.50\linewidth}>{\raggedright\arraybackslash}m{0.25\linewidth}@{}}
\toprule
\textbf{Restriction} & \textbf{Algorithmic mechanism} & \textbf{Complexity} \\
\midrule
Bounded actions & brute-force enumeration is polynomial in $|S|$ when $|A|$ constant & $O(|S|^2 \cdot k^2)$ \\
\midrule
Separable utility & optimal action independent of state & $O(1)$ \\
\midrule
Low tensor rank & factored evaluation avoids full state enumeration & $O(|A| \cdot R \cdot n)$ \\
\midrule
Tree structure & dynamic programming over dependency tree suffices & $O(n \cdot |A| \cdot k_{\max})$ \\
\midrule
Bounded treewidth & interaction checking reduces to low-width CSP & $O(n \cdot k^{w+1})$ \\
\midrule
Coordinate symmetry & orbit types replace raw state pairs & $O\!\bigl(\binom{d+k-1}{k-1}^2\bigr)$ \\
\bottomrule
\end{tabularx}
\caption{Tractability map. Each restriction above removes a structural hardness source. Formal definitions for all cases are retained in the subsections below.}
\label{tab:tractability-map}
\end{table}

\begin{theorem}[Tractable Subcases]\label{thm:tractable}
The following structurally interesting tractable subcases are all mechanized in artifact. The artifact certifies paper-specific reductions and complexity transfers connecting each to standard polynomial-time algorithmic backbones. Formal definitions and trivial cases (single action, bounded state space, separable utility, dominance) are retained in the subsections below.
\begin{enumerate}
\item \textbf{Low tensor rank:} The utility admits a rank-$R$ decomposition $U(a,s) = \sum_{r=1}^R w_r \cdot f_r(a) \cdot \prod_i g_{ri}(s_i)$. The artifact certifies reduction to factored computation with bound $O(|A| \cdot R \cdot n)$. \leanmeta{\LH{DQ77}.}
\item \textbf{Tree-structured dependencies:} Coordinates are ordered such that $s_i$ depends only on $(s_1,\ldots, s_{i-1})$. The artifact certifies an explicit sufficiency checker for this class. \leanmeta{\LH{DQ63}.}
\item \textbf{Bounded treewidth:} The interaction graph has treewidth $\le w$ and utility factors over edges. The artifact certifies reduction to bounded-treewidth CSP together with bound $O(n \cdot k^{w+1})$ inherited from standard CSP algorithm. \leanmeta{\LH{DQ81}.}
\item \textbf{Coordinate symmetry:} Utility is invariant under coordinate permutations. The artifact certifies reduction to orbit-type representatives together with resulting orbit-count bound $O\bigl(\binom{d+k-1}{k-1}^2\bigr)$. \leanmeta{\LH{DQ85}.}
\end{enumerate}
\end{theorem}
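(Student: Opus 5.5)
The plan is to treat the four items of Theorem~\ref{thm:tractable} separately. Each follows one template: exhibit a polynomial-size structure from which relevance can be read off, then invoke Proposition~\ref{prop:minimal-relevant-equiv} and Proposition~\ref{prop:sufficiency-char} so that sufficiency of $I$ reduces to checking that every coordinate outside $I$ is irrelevant. Everything therefore reduces to a \emph{per-coordinate relevance test}: for each $i \notin I$, decide whether some pair $s,s'$ differing only at $i$ has $\Opt(s)\ne\Opt(s')$. The contribution of each restriction is to make that test polynomial.

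\textbf{Low tensor rank.} For a fixed action $a$ I would write $U(a,s)=\sum_r w_r f_r(a)\prod_i g_{ri}(s_i)$. Differences $U(a,s)-U(b,s)$ are again rank-$\le 2R$ sums. Evaluating any single $U(a,s)$ costs $O(R n)$, which gives the $O(|A|Rn)$ bound per optimizer evaluation. For relevance of $i$, I would fix the factors $g_{rj}$ for $j\ne i$ as multipliers $c_r=\prod_{j\ne i}g_{rj}(s_j)$. The point to verify is that whether varying $s_i$ changes $\Opt$ depends on $s_{-i}$ only through the vector $(c_r)_r\in\mathbb{Q}^R$. The artifact's reduction certifies exactly this transfer, so the paper-level proof is to cite it.

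\textbf{Tree structure and bounded treewidth.} I would handle these jointly. I would express ``$i$ is relevant'' as the existence of an assignment to the remaining coordinates together with two values $x_i\ne x_i'$ and an action pair $(a,b)$ such that $a$ beats $b$ at one state but not at the other. This is a CSP on the interaction graph with $x_i$ duplicated. Duplicating one vertex raises the treewidth by at most one, and since utility factors over edges, the comparison constraint is a sum of edge terms, which a max-sum dynamic program over a tree decomposition can evaluate. Running this over the pairs $(a,b)$ and the coordinates $i$ gives the stated $O(n\cdot k^{w+1})$ bound up to factors polynomial in $|A|$. The tree-structured case is the special case $w=1$, with dynamic programming along the coordinate order.

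\textbf{Coordinate symmetry.} When $U$ is permutation invariant, $\Opt(s)$ depends only on the multiset (type) of $s$. For $d$ coordinates over a $k$-letter alphabet there are $\binom{d+k-1}{k-1}$ types. A pair differing at one coordinate corresponds to a pair of types at distance one, so comparing all pairs of types gives the stated squared bound.

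The main obstacle is the low-rank case. Having $(c_r)$ range over its attainable set does not obviously avoid enumeration, because that set can be exponential. I would first try restricting to the case where the relevant comparisons are sign conditions that are linear in $c$ and checking emptiness of the resulting cone. If that fails, I would state the theorem, as the paper does, as certified correctness of the factored evaluation together with its $O(|A|Rn)$ bound, and defer to the artifact.
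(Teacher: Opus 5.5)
The step that fails is in your tree/treewidth item, where a max-sum dynamic program over a tree decomposition is supposed to decide relevance of coordinate $i$. There are two problems.

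First, the CSP you write down does not encode relevance. A flip of the $a$-versus-$b$ comparison between $s$ and $s'$ leaves $\Opt$ unchanged whenever some third action strictly beats both at both states. The correct witness is $a\in\Opt(s)$ together with $U(b,s')>U(a,s')$, which is $|A|$ simultaneous inequalities.

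Second, and decisively, even one two-sided condition is not a max-sum question. Since $s=(s_{-i},x_i)$ and $s'=(s_{-i},x_i')$ share $s_{-i}$, write $U(a,\cdot)-U(b,\cdot)=C(s_{-i})+e(x_i,s_{N(i)})$. The requirement $U(a,s)-U(b,s)\ge 0>U(a,s')-U(b,s')$ then becomes $-e(x_i,s_{N(i)})\le C(s_{-i})<-e(x_i',s_{N(i)})$. That is a window on one sum of edge terms. Max-sum (or min-sum) elimination returns only the extreme values of $C$, which do not tell you whether the window is hit.

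This cannot be repaired within the stated bounds. Take Boolean coordinates $s_0,\ldots,s_n$, no interaction edges, two actions, $U(b,\cdot)\equiv 0$ and $U(a,s)=\sum_{j\ge 1}w_js_j+s_0-T-\tfrac12$, with positive integers $w_j,T$ written in binary.
\begin{itemize}
\item The instance is tree-structured and has treewidth $0$.
\item $U(a,s)$ is a half-integer, so it is never $0$ and $\Opt(s)$ is the singleton fixed by its sign.
\item Coordinate $0$ is relevant iff $\sum_{j\ge1}w_js_j=T$ for some $s$.
\end{itemize}
So deciding whether $\{1,\ldots,n\}$ is sufficient is already the complement of SUBSET-SUM in this class. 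No procedure with a magnitude-independent bound such as $O(n\cdot k^{w+1})$ can exist unless $\mathrm{P}=\mathrm{NP}$. Tracking reachable partial sums gives only a pseudo-polynomial algorithm.

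The paper does not attempt what you attempt here. Theorem~\ref{thm:treewidth}, like Theorem~\ref{thm:tensor-rank}, is explicitly a certified reduction-and-transfer statement. The artifact proves the paper-specific reduction to the interaction CSP (\texttt{DQ80}) and transfers the standard bounded-treewidth bound (\texttt{DQ79}, \texttt{DQ81}). Item 2 rests on the artifact's checker \texttt{DQ63}. The theorem under review asserts only these certifications.

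The fix is to treat items 2 and 3 the way your fallback already treats item 1: state and cite the certified reduction and the inherited bound. Do not claim a self-contained polynomial relevance test, since the example above rules one out. The same example also shows that the $O(n\cdot|A|\cdot k_{\max})$ dynamic-programming sketch in Theorem~\ref{thm:tree-struct} cannot be a correct sufficiency decider unless $\mathrm{P}=\mathrm{NP}$, so it is not a usable substitute either.

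Your other two items are fine.
\begin{itemize}
\item The low-rank fallback is exactly the paper's claim. Drop the intermediate assertion that the artifact certifies the $(c_r)$-transfer; the paper claims only factored evaluation in $O(|A|\cdot R\cdot n)$.
\item Your symmetry argument compares types one single-coordinate move apart inside the relevance template. That is a slightly tighter version of the paper's orbit-invariance argument plus the stars-and-bars count.
\end{itemize}
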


The following subsections provide the formal definitions and reduction theorems for each tractable subcase.

\subsection{Bounded Actions}\label{sec:tract-bounded}

When the action set is bounded, brute-force enumeration becomes polynomial.

\begin{definition}[Bounded Action Problem]
A decision problem has \emph{bounded actions} with parameter $k$ if $|A| \leq k$ for some constant $k$ independent of the state space size.
\end{definition}

\leanmetapending{\LH{DQ51}, \LH{DQ72}}
\begin{theorem}[Bounded Actions Tractability]\label{thm:bounded-actions}
If $|A| \leq k$ for constant $k$, then SUFFICIENCY-CHECK runs in $O(|S|^2 \cdot k^2)$ time.
\end{theorem}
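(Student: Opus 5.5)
The plan is a direct brute-force verification of Definition~\ref{def:sufficient} under the explicit-state encoding, with the bounded action set controlling the per-state cost. Sufficiency of $I$ is the universal statement that every pair $s,s'$ with $s_I=s'_I$ satisfies $\Opt(s)=\Opt(s')$. By Proposition~\ref{prop:insufficiency-counterexample}, $I$ fails to be sufficient exactly when a counterexample pair exists. The algorithm therefore scans all ordered pairs $(s,s')\in S\times S$, tests whether $s_I=s'_I$, and if so tests whether $\Opt(s)=\Opt(s')$. It rejects on the first disagreement and accepts otherwise. Correctness follows immediately from the definition, so the only work is the cost accounting.

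The first step is to bound the cost of computing $\Opt(s)$ for a single state. Under the explicit table, one reads the $|A|\le k$ values $U(a,s)$, computes their maximum with $O(k)$ comparisons, and collects the maximizing actions into a set of size at most $k$. This takes $O(k)$ time.

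The second step is to compare two optimal-action sets. Two subsets of $A$ of size at most $k$ can be compared naively, checking membership of each element of one set in the other, in $O(k^2)$ time. Combining these gives a per-pair cost of $O(k)+O(k)+O(k^2)=O(k^2)$, plus the projection-equality test on $I$. I will treat the projection test as absorbed into unit-cost state access, since coordinate count is fixed by the instance description, or else note that it contributes only an $O(n)$ factor, which the paper's stated bound suppresses.

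Multiplying the per-pair cost over the $|S|^2$ pairs yields $O(|S|^2\cdot k^2)$. I would also remark that precomputing $\Opt(s)$ once for every state ($O(|S|k)$ in total) does not change the bound.

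The main obstacle is not mathematical but one of bookkeeping. One must state clearly which encoding is meant, namely the explicit-state encoding, since under the succinct encoding Theorem~\ref{thm:sufficiency-conp} rules out polynomial time even for $|A|=2$. One must also state what is counted as unit cost: table lookups and, for the projection test, coordinate comparisons. I would first present the argument with unit-cost lookups and then note that the same scan is the counted explicit-state search already mechanized in the artifact.
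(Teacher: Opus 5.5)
Your proposal is correct and follows essentially the same route as the paper. Both compute $\Opt(s)$ per state in $O(k)$ time, compare the optimal sets of each agreeing pair in $O(k^2)$ time, and multiply by the $O(|S|^2)$ pairs; your explicit choice of the explicit-state encoding and your note that the bound suppresses the cost of the projection test $s_I=s'_I$ make explicit what the paper's proof leaves implicit.
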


\begin{proof}
Given the full table of $U(a,s)$:
\begin{enumerate}
\item Compute $\Opt(s)$ for all $s \in S$ in $O(|S| \cdot k)$ time.
\item For each pair $(s, s')$ with $s_I = s'_I$, compare $\Opt(s)$ and $\Opt(s')$ in $O(k^2)$ time.
\item Total: $O(|S|^2)$ pairs $\times$ $O(k^2)$ comparisons = $O(|S|^2 \cdot k^2)$.
\end{enumerate}
When $k$ is constant, this is polynomial in $|S|$.
\end{proof}

\subsection{Separable Utility (Rank 1)}\label{sec:tract-separable}

Separable utility decouples actions from states, making sufficiency trivial.

\begin{definition}[Separable Utility]\label{def:separable}
A utility function is \emph{separable} if there exist functions $f : A \to \mathbb{R}$ and $g : S \to \mathbb{R}$ such that $U(a,s) = f(a) + g(s)$ for all $(a,s) \in A \times S$.
\end{definition}

\leanmetapending{\LH{DQ57}, \LH{DQ74}}
\begin{theorem}[Separable Tractability]\label{thm:separable}
If utility is separable, then $I = \emptyset$ is always sufficient.
\end{theorem}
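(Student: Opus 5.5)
The plan is to show that under separability the optimal-action set is the same at every state, and then invoke Proposition~\ref{prop:empty-sufficient-constant}, which identifies sufficiency of $\emptyset$ with constancy of $\Opt$ over $S$.

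\textbf{Step 1: $\Opt(s)$ is independent of $s$.} Fix $s \in S$. Because $U(a,s)=f(a)+g(s)$, the term $g(s)$ is the same for every action. Hence, for any $a,a' \in A$,
\[
U(a,s)\ge U(a',s) \iff f(a)\ge f(a').
\]
Taking the maximum over $a'$ gives
\[
\Opt(s)=\arg\max_{a\in A}\bigl(f(a)+g(s)\bigr)=\arg\max_{a\in A} f(a).
\]
The right-hand side does not depend on $s$. The argmax is well defined because $A$ is finite.

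\textbf{Step 2: conclude sufficiency of $\emptyset$.} By Step 1, $\Opt$ is constant on $S$. Proposition~\ref{prop:empty-sufficient-constant} then says $\emptyset$ is sufficient. One can also check this directly from Definition~\ref{def:sufficient}: every pair $s,s'$ agrees on the empty projection, and $\Opt(s)=\Opt(s')$ by Step 1.

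\textbf{Main obstacle.} There is no substantive difficulty. The only care needed is that $f$ and $g$ are real-valued while $U$ is $\mathbb{Q}$-valued in Definition~\ref{def:decision-problem}. This does not matter, since the argument only uses that subtracting the common term $g(s)$ preserves the order among actions. The edge case $A=\emptyset$ gives $\Opt\equiv\emptyset$, which is still constant, so the proof covers it without change.
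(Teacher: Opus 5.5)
Your proposal is correct and follows the paper's proof essentially verbatim: both observe that the common term $g(s)$ drops out of the argmax, so $\Opt(s)=\arg\max_{a\in A} f(a)$ for every $s$, hence $\Opt$ is constant and $\emptyset$ (indeed every $I$) is sufficient. The only cosmetic difference is that you close via Proposition~\ref{prop:empty-sufficient-constant}, whereas the paper reads the conclusion directly off Definition~\ref{def:sufficient}.
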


\begin{proof}
If $U(a, s) = f(a) + g(s)$, then:
\[
\Opt(s) = \arg\max_{a \in A} [f(a) + g(s)] = \arg\max_{a \in A} f(a)
\]
The optimal action depends only on $f$, not on $s$. Hence $\Opt(s) = \Opt(s')$ for all $s, s'$, making any $I$ (including $\emptyset$) sufficient.
\end{proof}

\subsection{Low Tensor Rank}\label{sec:tract-tensor}

Utility functions with low tensor rank admit efficient factored computation, generalizing separable utility.

\begin{definition}[Tensor Rank Decomposition]\label{def:tensor-rank}
A utility function $U : A \times ((i : \mathsf{Fin}\ n) \to X_i) \to \mathbb{R}$ has \emph{tensor rank} $R$ if there exist weights $w_1, \ldots, w_R \in \mathbb{R}$, action factors $f_r : A \to \mathbb{R}$, and coordinate factors $g_{ri} : X_i \to \mathbb{R}$ such that:
\[
U(a, s) = \sum_{r=1}^R w_r \cdot f_r(a) \cdot \prod_{i=1}^n g_{ri}(s_i)
\]
\end{definition}

\begin{remark}
Separable utility (Definition~\ref{def:separable}) is the special case $R = 1$ with $w_1 = 1$, $f_1(a) = f(a)$, and $\prod_i g_{1i}(s_i) = g(s)$.
\end{remark}

\leanmetapending{\LHrng{DQ}{75}{77}}
\begin{theorem}[Low Tensor Rank Reduction]\label{thm:tensor-rank}
If utility has tensor rank $R$, the artifact certifies a reduction of the relevant optimization step to factored computation with bound $O(|A| \cdot R \cdot n)$.
\end{theorem}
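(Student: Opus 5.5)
The plan is to compute all action utilities simultaneously from the factored representation, so that the optimization step never touches the state space as a whole. Fix a state $s$. By Definition~\ref{def:tensor-rank},
\[
U(a,s)=\sum_{r=1}^R w_r f_r(a)\,\gamma_r(s),\qquad \gamma_r(s):=\prod_{i=1}^n g_{ri}(s_i).
\]
The key observation is that the state enters only through the $R$ scalars $\gamma_r(s)$, and none of these depend on $a$. So the first step computes the vector $(\gamma_1(s),\ldots,\gamma_R(s))$. Each factor needs $n$ lookups and $n-1$ multiplications, giving cost $O(R\cdot n)$.

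The second step evaluates $U(a,s)$ for each $a\in A$ as an inner product of $(w_rf_r(a))_r$ with $(\gamma_r(s))_r$. This costs $O(R)$ per action, so $O(|A|\cdot R)$ in total. The third step takes the argmax over $A$ in $O(|A|)$ comparisons, which yields $\Opt(s)$ exactly.

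Summing the three stages gives $O(Rn+|A|R+|A|)\subseteq O(|A|\cdot R\cdot n)$. This is the claimed bound for the optimization step, i.e.\ for one evaluation of the optimizer map. Correctness is the identity of the factored sum with $U$, which is immediate from the definition. The argmax is then exactly $\Opt(s)$ from Definition~\ref{def:optimizer}.

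To connect this to certification, I would note that any checker that queries $\Opt$ can use this routine as its oracle. That includes the explicit-state searches of Proposition~\ref{prop:stochastic-explicit-search} and the pairwise check of Theorem~\ref{thm:bounded-actions}. Each query then costs $O(|A| R n)$ instead of requiring a table of size $|A||S|$.

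The main obstacle is the scope of the claim. I would state carefully that the bound concerns the optimization step, not a full sufficiency decision. For sufficiency itself one would still have to range over fibers, or else exploit additional structure. One candidate for that structure: coordinate $i$ is irrelevant whenever every rank-one term with a nonconstant $g_{ri}$ has $f_r$ constant across $A$. However, this is only a sufficient condition, so I would not claim it as an exact test.
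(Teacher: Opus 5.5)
Your argument is correct, but it takes a different route from the paper. The paper's proof is essentially an attribution to the artifact. It cites three mechanized ingredients: a polynomial bound for bounded-rank tensor contraction, a lemma that the low-rank representation reduces the optimization step to factored computation in $O(|A|\cdot R\cdot n)$ steps, and the remark that these are the hooks needed to invoke standard tensor-network algorithms. It explicitly presents the result as a certified reduction-and-transfer statement rather than an end-to-end algorithm.

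You instead exhibit the algorithm. Since the state enters only through the $R$ scalars $\gamma_r(s)$, which do not depend on $a$, you compute them once and then take $|A|$ inner products. This gives $O(Rn+|A|R+|A|)$, which is sharper than the stated bound; naively recomputing all $R$ products for every action already yields $|A|\cdot R\cdot n$. Your route is self-contained, needs no tensor-contraction machinery, and makes the implicit assumptions visible: unit-cost access to the factor tables, and $R,n,|A|\ge 1$ for the containment in $O(|A|Rn)$.

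What it does not do is establish the literal claim that the \emph{artifact} certifies the bound. That is a fact about the Lean development, which a paper-level argument can only mirror. You also never use the tensor-contraction ingredient, which the paper relies on for anything beyond per-state evaluation.

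Your scope caveat matches the paper's own hedging. Your sufficient condition for irrelevance of coordinate $i$ is sound, since the terms with nonconstant $g_{ri}$ then contribute only an $a$-independent shift. One small imprecision: the stochastic searches of Proposition~\ref{prop:stochastic-explicit-search} need conditional expected utilities, not just pointwise $\Opt$. There your routine would have to return the utility vector rather than only the argmax.
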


\begin{proof}
The artifact certifies three ingredients. First, bounded-rank tensor contraction admits a polynomial bound \leanmeta{\LH{DQ75}.} Second, the low-rank utility representation reduces the relevant optimization step to factored computation in $O(|A| \cdot R \cdot n)$ steps \leanmeta{\LHrng{DQ}{76}{77}.} Third, these are exactly the paper-specific ingredients needed to invoke the standard tensor-network algorithms cited in the text. Thus the formal support here is a certified reduction-and-transfer statement rather than an end-to-end formalization of the tensor-network backbone itself.
\end{proof}

\subsection{Tree-Structured Dependencies}\label{sec:tract-tree}

When coordinate dependencies form a tree, dynamic programming yields polynomial-time sufficiency checking.

\begin{definition}[Tree-Structured Dependencies]\label{def:tree-struct}
A decision problem has \emph{tree-structured dependencies} if there exists an ordering of coordinates $(1, \ldots, n)$ such that the utility decomposes as:
\[
U(a, s) = \sum_{i=1}^n u_i(a, s_i, s_{\mathrm{parent}(i)})
\]
where $\mathrm{parent}(i) < i$ for all $i > 1$, and the root term depends only on $(a, s_1)$.
\end{definition}

\leanmetapending{\LH{DQ63}, \LH{DQ78}}
\begin{theorem}[Tree Structure Tractability]\label{thm:tree-struct}
If dependencies are tree-structured with explicit local tables, SUFFICIENCY-CHECK runs in $O(n \cdot |A| \cdot k_{\max})$ time where $k_{\max} = \max_i |X_i|$.
\end{theorem}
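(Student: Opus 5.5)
The plan is to reduce \textsc{Sufficiency-Check} to a per-coordinate relevance test, and then to decide each test with a leaf-to-root dynamic program over the dependency tree of Definition~\ref{def:tree-struct}.

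\textbf{Step 1 (reduction to relevance).} By Proposition~\ref{prop:minimal-relevant-equiv}, every sufficient set contains every relevant coordinate. For the converse I would argue directly. Suppose $I$ contains all relevant coordinates and $s_I=s'_I$. Walk from $s$ to $s'$ by changing one coordinate $j\notin I$ at a time. Each step changes only an irrelevant coordinate, so by Definition~\ref{def:relevant} it preserves $\Opt$. Hence $I$ is sufficient if and only if no $j\notin I$ is relevant. It therefore suffices to test relevance of each coordinate $j\notin I$.

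\textbf{Step 2 (localizing a coordinate's influence).} For actions $a,b$, write the advantage $\Delta_{ab}(s)=U(a,s)-U(b,s)$. By the tree decomposition this is $\sum_i \delta^{ab}_i(s_i,s_{\mathrm{parent}(i)})$ with $\delta^{ab}_i=u_i(a,\cdot)-u_i(b,\cdot)$. Changing $s_j$ only affects the terms $\delta_j$ and $\delta_c$ for the children $c$ of $j$. The cheap screening step comes first. If all these local terms are action-independent, in the sense that $u_j(a,x,y)-u_j(b,x,y)$ and the corresponding child differences do not vary with the value of $s_j$, then every $\Delta_{ab}$ is invariant under changes of $s_j$. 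In that case $j$ is irrelevant, which can be checked by scanning $j$'s local tables.

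\textbf{Step 3 (exact test for the remaining coordinates).} Here $\Opt(s)$ changes between $s$ and $s[j\mapsto x']$ iff some pair $(a,b)$ has $\Delta_{ab}$ crossing or leaving zero. Fix $j$, the pair $(a,b)$, and values $x,x'$. Then $\Delta_{ab}(s)$ and $\Delta_{ab}(s[j\mapsto x'])$ differ by a quantity depending only on $s_{\mathrm{parent}(j)}$ and the children's values. I would run a max/min message-passing DP over the tree to compute the attainable range of $\Delta_{ab}$ jointly with this local shift. Each node passes up, for each value of its own coordinate, the max and min of its subtree's contribution. A sign change is then detected by comparing the resulting intervals.

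The main obstacle is that $\Opt$ is a \emph{set}: relevance depends on the joint sign pattern of all $\Delta_{ab}$, not on one pair at a time. Moreover, interval ranges only certify existence of a witness when the maximizing and minimizing configurations can be realized by a common assignment of the remaining coordinates, and that needs care. I would first attempt to show that, for a single pair with the rest of the state free, the tree DP is exact. I would then handle the set comparison by fixing the candidate action that enters or leaves the optimal set. The stated bound $O(n\cdot|A|\cdot k_{\max})$ is achievable only if one DP pass per action, against a fixed reference action, suffices. I expect this accounting, rather than correctness, to be where the argument must be tightened. It may instead need a factor $|A|^2 k_{\max}^2$ from pairwise tables, in which case the bound should be read with local table sizes absorbed.
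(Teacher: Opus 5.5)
Your Steps 1 and 2 are sound. The single-coordinate walk correctly shows that $I$ is sufficient iff no $j\notin I$ is relevant, and changing $s_j$ does touch only $\delta_j$ and the children's terms. The gap is Step 3, and it is a correctness gap, not an accounting one. Max/min messages record only the convex hull of the attainable values of $\Delta_{ab}$. Whether moving $s_j$ flips a sign, and in particular whether an exact tie $\Delta_{ab}=0$ is reachable, depends on the discrete set of attainable tree-sums, which is a subset-sum set. This already fails for two actions, where your pairwise reformulation is exact. Take:
\begin{itemize}
\item Boolean coordinates $1,\dots,m+1$, arranged as a star with root $1$, and $A=\{a,b\}$;
\item $u_1(a,s_1)=s_1$ and $u_1(b,s_1)=2T$;
\item for $i\ge 2$, $u_i(a,s_i,s_1)=2w_{i-1}s_i$ and $u_i(b,\cdot,\cdot)=0$.
\end{itemize}
Then $\Delta_{ab}(s)=s_1+2\bigl(\sum_{i=2}^{m+1}w_{i-1}s_i-T\bigr)$. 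If the inner sum differs from $T$, then $|\Delta_{ab}(s)-s_1|\ge 2$, so the sign does not depend on $s_1$. If it equals $T$, then $s_1=0$ gives $\Opt=\{a,b\}$ and $s_1=1$ gives $\Opt=\{a\}$. Hence $I=\{2,\dots,m+1\}$ is sufficient iff the Subset-Sum instance $(w,T)$ has no solution. Meanwhile the range $[-2T,\,2\sum_i w_i-2T]$ of values at $s_1=0$ contains $0$ whenever $0\le T\le\sum_i w_i$, so an interval test cannot distinguish the two cases. Tree-structured \textsc{Sufficiency-Check} with explicit rational tables is therefore coNP-hard already with $|A|=k_{\max}=2$. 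Unless $\mathrm{P}=\mathrm{NP}$, no exact polynomial message-passing scheme exists, so your proposed lemma that ``the single-pair tree DP is exact'' is false.

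So the stated bound cannot be reached by tightening your argument. There is also an unconditional obstruction, independent of bit lengths. The explicit tables have $|A|\,|X_i|\,|X_{\mathrm{parent}(i)}|$ entries. In the all-zero instance $\Opt\equiv A$ holds, and raising any single entry to $1$ makes that coordinate relevant. Hence, on instances with every $|X_i|=k_{\max}$ and $I=\emptyset$, any correct algorithm must read $\Omega(n|A|k_{\max}^2)$ entries in the worst case. Your suspicion about a $k_{\max}^2$ factor is right, but it is the lesser problem.

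The paper's own proof is a short DP sketch of the same flavor as yours. It processes nodes from $n$ down to $1$, records for each node and parent value ``the set of actions optimal for some subtree assignment,'' and combines children in $O(|A|\cdot k_{\max})$ per node. That summary fails to compose for the same reason your intervals do: membership in $\Opt$ is a property of the global sum of all local terms, not of subtree-wise optima. So the paper does not supply the missing idea either. What can be salvaged is a pseudo-polynomial DP under bounded integer utilities, whose messages are the sets of attainable partial advantages (vectors of them, one entry per competing action, when $|A|>2$). The bound $O(n\cdot|A|\cdot k_{\max})$ as stated does not follow.
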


\begin{proof}
Dynamic programming on the tree order:
\begin{enumerate}
\item Process coordinates in order $i = n, n-1, \ldots, 1$.
\item For each node $i$ and each value of its parent coordinate, compute the set of actions optimal for some subtree assignment.
\item Combine child summaries with local tables in $O(|A| \cdot k_{\max})$ per node.
\item Total: $O(n \cdot |A| \cdot k_{\max})$.
\end{enumerate}
A coordinate is relevant iff varying its value changes the optimal action set.
\end{proof}

\subsection{Bounded Treewidth}\label{sec:tract-treewidth}
 
The tree-structured case generalizes to bounded treewidth interaction graphs via CSP reduction. This directly parallels backdoor tractability methods for CSPs~\cite{bessiere2013detecting}, where finding a small backdoor to a tractable class enables efficient solving.

\begin{definition}[Interaction Graph]\label{def:interaction-graph}
Given a pairwise utility decomposition, the \emph{interaction graph} $G = (V, E)$ has:
\begin{itemize}
\item Vertices $V = \{1, \ldots, n\}$ (one per coordinate)
\item Edges $E = \{(i, j) : i \text{ and } j \text{ interact in } \text{utility}\}$
\end{itemize}
\end{definition}

\begin{definition}[Pairwise Utility]\label{def:pairwise}
A utility function is \emph{pairwise} if it decomposes as:
\[
U(a, s) = \sum_{i=1}^n u_i(a, s_i) + \sum_{(i,j) \in E} u_{ij}(a, s_i, s_j)
\]
with unary terms $u_i$ and binary terms $u_{ij}$ given explicitly.
\end{definition}

\leanmetapending{\LHrng{DQ}{79}{81}}
\begin{theorem}[Bounded Treewidth Reduction]\label{thm:treewidth}
If the interaction graph has treewidth $\le w$ and utility is pairwise with explicit factors, the artifact certifies a reduction to bounded-treewidth CSP together with standard complexity bound $O(n \cdot k^{w+1})$, where $k = \max_i |X_i|$. The reduction bridges to established CSP algorithms~\cite{bodlaender1993tourist, freuder1990complexity, bessiere2013detecting}.
\end{theorem}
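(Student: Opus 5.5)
The plan is to mirror the proof of Theorem~\ref{thm:tensor-rank}: split the claim into a paper-specific reduction, which the artifact certifies, and a standard algorithmic backbone that we quote from the CSP literature. First, by Proposition~\ref{prop:minimal-relevant-equiv}, $I$ is sufficient iff it contains every relevant coordinate. So it suffices to decide, for each coordinate $i\notin I$, whether $i$ is relevant. By Definition~\ref{def:relevant}, that means deciding whether some pair $s,s'$ differs only at $i$ and has $\Opt(s)\neq\Opt(s')$. For a fixed ordered pair of actions $(a,b)$ and fixed values $x\neq x'$ of coordinate $i$, the condition "$a\in\Opt(s)$ but $a\notin\Opt(s')$" becomes a constraint-satisfaction question. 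Because $|A|$ is part of the explicit input, the disagreement can be unfolded into at most $|A|^2k^2$ such subqueries, and each one is a polynomial-size bundle.

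Second, each subquery is to be encoded as an instance of weighted CSP (max-sum) on the interaction graph $G$. The utility is pairwise, so every quantity $U(a,s)-U(b,s)$ is a sum of unary and binary terms over the edges of $G$. We duplicate only coordinate $i$, which is the single position where $s$ and $s'$ differ. Since $i$ is then pinned to $x$ or $x'$, the remaining variables are shared between the two states, and the shared graph is still $G$ with one vertex fixed. Fixing a vertex does not increase treewidth, so the subquery instance also has treewidth $\le w$. Existence of a violating assignment then reduces to optimizing sums of explicit factors, for example maximizing $U(b,s)-U(a,s)$ subject to the fixed value at $i$. Standard variable elimination along a width-$w$ tree decomposition solves this in $O(n\cdot k^{w+1})$ time, as in \cite{bodlaender1993tourist, freuder1990complexity}. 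Repeating over the polynomially many $(i,a,b,x,x')$ keeps the total polynomial, with $O(n\cdot k^{w+1})$ as the per-query bound the theorem states.

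The main obstacle is the second step. "$\Opt(s)\neq\Opt(s')$" is not a single max-sum objective, because it requires simultaneous statements about argmax membership at two states that share all variables except $i$. The cleanest route I would try first avoids quantifying over all competitors. Since $s$ and $s'$ differ only at $i$, we have
\[
U(a,s)-U(a,s') = u_i(a,x)-u_i(a,x')+\sum_{j:(i,j)\in E}\bigl(u_{ij}(a,x,s_j)-u_{ij}(a,x',s_j)\bigr),
\]
which depends only on the neighbors of $i$. So whether the optimizer changes is controlled by a bounded local gadget together with the global margins $\max_s\bigl(U(b,s)-U(a,s)\bigr)$. Each of these margins is a bounded-treewidth max-sum computation. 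Turning this into an exact iff characterization is the delicate part, and it is where I expect to rely on the artifact's certified reduction statement rather than derive it in full. The final step only records that the cited CSP algorithms supply the $O(n\cdot k^{w+1})$ bound, and that the artifact certifies the reduction side, not the backbone itself.
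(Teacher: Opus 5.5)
Your outer framing is the same as the paper's proof, and that proof contains nothing else. It says the artifact certifies the paper-specific bridge from pairwise sufficiency checking to a CSP on the interaction graph, and the transfer of the standard $O(n\cdot k^{w+1})$ bound, with the CSP algorithm quoted from the literature. It gives no first-principles encoding of argmax sets.

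The problem is the constructive reduction you place between those two citations and present as the source of the bound. Your step ``existence of a violating assignment reduces to optimizing sums of explicit factors'' is false. The condition $a\in\Opt(s)$ is a conjunction of $|A|-1$ global inequalities $U(a,s)\ge U(c,s)$, each a sum over all factors. The condition $a\notin\Opt(s')$ adds an existential over competitors at $s'$. So each subquery is a feasibility problem with several simultaneous global threshold constraints, and variable elimination for one max-sum objective does not decide it. Your repair, a local gadget on the neighbors of $i$ plus the global margins $\max_s(U(b,s)-U(a,s))$, fails for the same reason. A change of optimizer between $s$ and $s'$ requires $U(a,s)-U(b,s)$ to fall in a bounded window fixed by the local gadget. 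Extreme values of a sum say nothing about whether that window is hit.

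The artifact cannot fill this in either. Take Boolean coordinates $s_0,\dots,s_m$, actions $a,b$, and set $U(a,s)=\sum_{j\ge 1}w_js_j+\tfrac12 s_0$ and $U(b,s)=T$, with positive integers $w_j,T$. The interaction graph is edgeless (treewidth $0$), and every factor is an explicit unary table.
\begin{itemize}
\item If $\sum_j w_js_j=T$, then $\Opt$ is $\{a,b\}$ at $s_0=0$ and $\{a\}$ at $s_0=1$.
\item Otherwise, by integrality, both values of $s_0$ give the same singleton.
\end{itemize}
Hence $I=\{1,\dots,m\}$ is sufficient iff the \textsc{Subset-Sum} instance $(w,T)$ has no solution. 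Both global margins are trivially computable and carry no information about this.

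So with binary-encoded utilities, no polynomial-time version of your first two steps exists unless $\Pclass=\mathrm{NP}$. Such a route gives at best a pseudo-polynomial dynamic program over reachable sum values along the tree decomposition. For the same reason, the artifact's certified statement cannot supply the exact polynomial characterization of relevance your sketch needs. Drop the constructive derivation and rest the statement, as the paper does, only on the artifact-level reduction-and-transfer claim.
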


\begin{proof}
The artifact certifies the paper-specific reduction from pairwise sufficiency checking to a CSP on the interaction graph \leanmeta{\LH{DQ80}.} It also certifies the transfer of the standard bounded-treewidth complexity bound \leanmeta{\LH{DQ79}, \LH{DQ81}.} Thus formal support here is a certified reduction-and-transfer theorem: the bounded-treewidth algorithm itself is standard (drawing on backdoor tractability methods~\cite{bessiere2013detecting}), while the paper-specific bridge to that algorithm is mechanized.
\end{proof}

\subsection{Coordinate Symmetry}\label{sec:tract-symmetry}

When utility is invariant under coordinate permutations, the effective state space collapses to orbit types.

\begin{definition}[Dimensional State Space]\label{def:dimensional}
A \emph{dimensional state space} with alphabet size $k$ and dimension $d$ is $S = \{0, \ldots, k-1\}^d$, the set of $d$-tuples over a $k$-element alphabet.
\end{definition}

\begin{definition}[Symmetric Utility]\label{def:symmetric}
A utility function on a dimensional state space is \emph{symmetric} if it is invariant under coordinate permutations: for all permutations $\sigma \in \mathfrak{S}_d$ and states $s \in S$,
\[
U(a, s) = U(a, \sigma \cdot s) \quad \text{where } (\sigma \cdot s)_i = s_{\sigma^{-1}(i)}
\]
\end{definition}

\begin{definition}[Orbit Type]\label{def:orbit-type}
The \emph{orbit type} of a state $s \in \{0, \ldots, k-1\}^d$ is the multiset of its coordinate values:
\[
\text{orbitType}(s) = \{\!\{s_1, s_2, \ldots, s_d\}\!\}
\]
Two states have the same orbit type iff they lie in the same orbit under coordinate permutation.
\end{definition}

\leanmetapending{\LH{DQ82}}
\begin{theorem}[Orbit Type Characterization]\label{thm:orbit-type}
For dimensional state spaces, two states $s, s'$ have equal orbit types if and only if there exists a coordinate permutation $\sigma$ such that $\sigma \cdot s = s'$.
\end{theorem}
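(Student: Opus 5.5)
We prove the two directions separately, working with the definition of orbit type as the multiset $\{\!\{s_1,\ldots,s_d\}\!\}$. It helps to encode that multiset by its count vector $c_s : \{0,\ldots,k-1\}\to\mathbb{N}$, where $c_s(v) = |\{i : s_i = v\}|$. Two states then have equal orbit types exactly when $c_s = c_{s'}$.

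\emph{The reverse direction.} Suppose $\sigma\cdot s = s'$, so that $s'_i = s_{\sigma^{-1}(i)}$ for every $i$. Fix a value $v$. Then $\{i : s'_i = v\} = \sigma(\{j : s_j = v\})$, and since $\sigma$ is a bijection on $\{1,\ldots,d\}$, the two sets have the same size. Hence $c_{s'}(v) = c_s(v)$ for every $v$, and the orbit types agree.

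\emph{The forward direction.} Suppose $c_s = c_{s'}$. For each value $v$, the level sets $L_v = \{j : s_j = v\}$ and $L'_v = \{i : s'_i = v\}$ have the same cardinality $c_s(v)$. Choose a bijection $\beta_v : L_v \to L'_v$; for instance, match the two sets in increasing index order. The level sets $\{L_v\}_v$ partition $\{1,\ldots,d\}$, and so do the $\{L'_v\}_v$. Gluing the $\beta_v$ together therefore gives a permutation $\sigma$ of $\{1,\ldots,d\}$ with $\sigma(L_v) = L'_v$ for every $v$. To verify $\sigma\cdot s = s'$, take any $i$ and let $v = s'_i$, so $i \in L'_v$. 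Then $\sigma^{-1}(i) \in L_v$, which gives $(\sigma\cdot s)_i = s_{\sigma^{-1}(i)} = v = s'_i$.

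The only step needing care is this gluing of fiberwise bijections into a global permutation, and keeping the $\sigma$ versus $\sigma^{-1}$ convention consistent with the action $(\sigma\cdot s)_i = s_{\sigma^{-1}(i)}$ from Definition~\ref{def:symmetric}. In a formal setting, an alternative route avoids explicit gluing. Equal multisets are permutations of each other as lists, which gives a reindexing equivalence $e$ with $s' = s\circ e$; one then takes $\sigma = e^{-1}$. I would use the explicit level-set construction in the paper text and cite this list-permutation argument for the mechanized version.
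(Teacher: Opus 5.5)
Your proof is correct and takes the same route as the paper, which argues that permuting coordinates preserves the multiset of values and, conversely, builds a permutation by matching each occurrence of a value in $s$ to a corresponding occurrence in $s'$. You make that matching precise through the level-set bijections $\beta_v$ glued into a global $\sigma$ with $\sigma(L_v)=L'_v$, and your handling of the $\sigma^{-1}$ convention is consistent with Definition~\ref{def:symmetric}.
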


\begin{proof}
\textbf{If:} Permuting coordinates preserves the multiset of values.
\textbf{Only if:} If $s$ and $s'$ have the same multiset of values, we can construct a permutation mapping each occurrence in $s$ to the corresponding occurrence in $s'$.
\end{proof}

\leanmetapending{\LHrng{DQ}{83}{84}}
\begin{theorem}[Symmetric Sufficiency Reduction]\label{thm:symmetric-reduction}
Under symmetric utility, optimal actions depend only on orbit type:
\[
\text{orbitType}(s) = \text{orbitType}(s') \implies \Opt(s) = \Opt(s')
\]
Thus SUFFICIENCY-CHECK reduces to checking pairs with \emph{different} orbit types.
\end{theorem}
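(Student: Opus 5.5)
The plan is to prove the two parts of the Symmetric Sufficiency Reduction in order: first the orbit-type invariance of $\Opt$, then the reduction of \textsc{Sufficiency-Check} to pairs with different orbit types.

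For the first part, take states $s,s'$ with $\mathrm{orbitType}(s)=\mathrm{orbitType}(s')$. The ``only if'' direction of Theorem~\ref{thm:orbit-type} gives a permutation $\sigma\in\mathfrak{S}_d$ with $\sigma\cdot s=s'$. Symmetry of the utility (Definition~\ref{def:symmetric}) then gives $U(a,s')=U(a,\sigma\cdot s)=U(a,s)$ for every action $a\in A$. So the two utility profiles $a\mapsto U(a,s)$ and $a\mapsto U(a,s')$ are identical functions on $A$. Their maxima over $A$ coincide, and so do their argmax sets, which yields $\Opt(s)=\Opt(s')$ by Definition~\ref{def:optimizer}. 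This part is routine, since the only input needed is the orbit characterization already proved.

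For the reduction claim, fix a candidate set $I$. By Proposition~\ref{prop:insufficiency-counterexample}, $I$ fails to be sufficient iff there is a witness pair $(s,s')$ with $s_I=s'_I$ and $\Opt(s)\neq\Opt(s')$. By the first part, any pair with equal orbit types has equal optimal-action sets, so it can never be a witness. The search for counterexamples may therefore be restricted to agreeing pairs whose orbit types differ.

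This restriction is the step that needs care, because it is the one tempting to overstate: the pair condition $s_I=s'_I$ is not itself permutation-invariant unless $I$ is permutation-stable. The honest reduction is just the one above: drop all equal-orbit-type pairs and check the remaining agreeing pairs. I would not claim that witnesses can be chosen among orbit representatives for arbitrary $I$. If the orbit-count bound of Theorem~\ref{thm:tractable} is to be recovered, I would pass to the quotient map $s\mapsto\mathrm{orbitType}(s)$. By the first part, $\Opt$ factors through this map, exactly as in the universal-property argument of Theorem~\ref{thm:quotient-universal}. The check then compares the at most $\binom{d+k-1}{k-1}$ values of the induced optimizer, giving quadratically many orbit-type pairs. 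For the invariant cases $I=\emptyset$ or $I$ the full coordinate set, this recovers the stated bound. For general $I$, only the weaker, pair-pruning form of the statement is asserted.
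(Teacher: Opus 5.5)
Your proof is correct and follows the paper's own argument: Theorem~\ref{thm:orbit-type} plus symmetry makes the utility profiles at $s$ and $s'$ identical, so $\Opt(s)=\Opt(s')$, and Proposition~\ref{prop:insufficiency-counterexample} then lets you drop equal-orbit-type pairs from the counterexample search. Your hedging about general $I$ is stricter than necessary, though this goes beyond what the statement asserts: write orbit types as count vectors $c,c'\in\mathbb{N}^k$; then states $s,s'$ of these types with $s_I=s'_I$ exist iff $\sum_j \min(c_j,c'_j)\ge |I|$, so for every $I$ sufficiency reduces to comparing the induced optimizer on at most $\binom{d+k-1}{k-1}^2$ orbit-type pairs.
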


\begin{proof}
If $\text{orbitType}(s) = \text{orbitType}(s')$, then by Theorem~\ref{thm:orbit-type} there exists $\sigma$ with $\sigma \cdot s = s'$. By symmetry, $U(a, s) = U(a, s')$ for all $a$, so $\Opt(s) = \Opt(s')$.

For sufficiency, we need only check pairs $(s, s')$ agreeing on $I$ with \emph{different} orbit types; same-orbit pairs are automatically equal.
\end{proof}

\leanmetapending{\LH{DQ56}, \LH{DQ85}}
\begin{theorem}[Symmetric Complexity Bound]\label{thm:symmetric-complexity}
Under symmetric utility, the artifact certifies that the number of orbit types is bounded by the stars-and-bars count:
\[
|\text{OrbitTypes}| = \binom{d + k - 1}{k - 1}
\]
and that sufficiency checking reduces to cross-orbit comparisons. Thus the effective number of representative pair checks is at most $\binom{d + k - 1}{k - 1}^2$, which is polynomial in $d$ for fixed $k$.
\end{theorem}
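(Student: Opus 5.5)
The plan has two parts: count orbit types, then bound the number of representative pair checks. Each part uses only Theorems~\ref{thm:orbit-type} and~\ref{thm:symmetric-reduction}.

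\textbf{Counting orbit types.} By Theorem~\ref{thm:orbit-type}, orbit types are exactly the orbits of $\mathfrak{S}_d$ acting on $\{0,\ldots,k-1\}^d$. Each orbit is determined by a multiset of size $d$ drawn from a $k$-element alphabet. I would make this explicit with the map $s \mapsto (c_0(s),\ldots,c_{k-1}(s))$, where $c_j(s)=|\{i : s_i=j\}|$.
\begin{itemize}
\item This map is constant on orbits.
\item It is injective on orbits, because equal count vectors give equal multisets, and Theorem~\ref{thm:orbit-type} then supplies a permutation.
\item Its image is the set of nonnegative integer vectors summing to $d$: any such vector is realized by the sorted state that lists $c_0$ zeros, then $c_1$ ones, and so on.
\end{itemize}
The stars-and-bars bijection (place $d$ stars and $k-1$ bars) then gives exactly $\binom{d+k-1}{k-1}$ orbit types.

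\textbf{Bounding the pair checks.} By Theorem~\ref{thm:symmetric-reduction}, $\Opt$ factors through $\mathrm{orbitType}$, so there is a map $\overline{\Opt}:\mathrm{OrbitTypes}\to\mathcal{P}(A)$ with $\Opt=\overline{\Opt}\circ\mathrm{orbitType}$. Hence the truth of $\Opt(s)=\Opt(s')$ for an $I$-agreeing pair depends only on the pair of orbit types $(\tau,\tau')$. The sufficiency check can therefore be restated as follows. For each ordered pair of orbit types $(\tau,\tau')$ with $\overline{\Opt}(\tau)\ne\overline{\Opt}(\tau')$, decide whether some $s\in\tau$ and $s'\in\tau'$ satisfy $s_I=s'_I$. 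There are at most $\binom{d+k-1}{k-1}^2$ such pairs. Same-type pairs are discharged automatically by Theorem~\ref{thm:symmetric-reduction}, which only improves the count. For fixed $k$ the bound is $O(d^{2(k-1)})$, which is polynomial in $d$.

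\textbf{Main obstacle.} I expect the hardest step to be justifying that each representative pair check is itself cheap. Agreement on $I$ is not a permutation-invariant condition when $I$ is a proper nonempty set. So "do there exist $s\in\tau$, $s'\in\tau'$ agreeing on $I$" must be decided combinatorially rather than by comparing one representative from each orbit. I would resolve it with a feasibility criterion. Such a pair exists iff there is an assignment to the $|I|$ shared positions using counts $m_j\le\min(c_j(\tau),c_j(\tau'))$ with $\sum_j m_j=|I|$. The remaining $d-|I|$ positions can then be filled freely, since the residual counts automatically sum to $d-|I|$ on each side. This reduces to checking $\sum_j\min(c_j(\tau),c_j(\tau'))\ge|I|$, which takes $O(k)$ time per pair.

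Since the statement is phrased as what the artifact certifies, I would present this criterion as the paper-level justification. I would cite the mechanized stars-and-bars count and the cross-orbit reduction for the two formal ingredients.
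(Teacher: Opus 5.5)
Your proposal is correct. For the counting half it is the paper's argument: orbit types are multisets, i.e.\ count vectors in $\mathbb{Z}_{\ge 0}^k$ summing to $d$, and stars-and-bars gives $\binom{d+k-1}{k-1}$. You only make the bijection with count vectors explicit.

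The second half is where you differ. The paper's proof stops at the count and asserts that the polynomial conclusion ``follows from this certified compression of the comparison space,'' relying on Theorem~\ref{thm:symmetric-reduction} and the mechanized bound. It never explains how a pair of orbit types can stand in for the $I$-agreeing state pairs, even though agreement on $I$ is not permutation-invariant.

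You instead factor $\Opt$ through $\mathrm{orbitType}$ and add a real lemma. Two types $\tau,\tau'$ contain $I$-agreeing states iff $\sum_j \min(c_j(\tau),c_j(\tau'))\ge |I|$. This is correct: choose shared counts $m_j\le\min(c_j(\tau),c_j(\tau'))$ summing to $|I|$ on the positions of $I$, then fill the remaining positions with the residual multisets. The test is decidable in $O(k)$ per pair, plus $|A|$ utility evaluations per type to tabulate the factored optimizer.

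That lemma is what turns ``at most $\binom{d+k-1}{k-1}^2$ representative pair checks'' into an actual procedure that is polynomial in $d$ for fixed $k$. The paper's version is shorter and tied to what the artifact certifies; yours actually justifies the reduction the statement describes. As a by-product, your criterion shows the sufficiency verdict depends on $I$ only through $|I|$.
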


\begin{proof}
An orbit type is a multiset of $d$ values from $\{0, \ldots, k-1\}$, equivalently a $k$-tuple of non-negative integers summing to $d$. By stars-and-bars, the count is $\binom{d + k - 1}{k - 1}$. The mechanized content here is the orbit-type reduction and the resulting representative-count bound; the resulting polynomial-time conclusion then follows from this certified compression of the comparison space.

For fixed $k$, this is $O(d^{k-1})$, polynomial in $d$.
\end{proof}

\subsection{Practical Implications}\label{sec:tract-practical}

The formally developed tractable subcases correspond to common modeling scenarios. The structurally nontrivial cases each remove a distinct source of hardness:

\begin{itemize}
\item \textbf{Bounded actions:} Small action sets (e.g., binary decisions, few alternatives). Brute-force enumeration is polynomial when $|A|$ is constant.
\item \textbf{Separable utility:} Additive cost models, linear utility functions. The optimizer is independent of state, making any coordinate set sufficient.
\item \textbf{Low tensor rank:} Factored preference models, low-rank approximations in recommendation systems or multi-attribute utility theory.
\item \textbf{Tree structure:} Hierarchical decision processes, causal models with chain or tree dependency. Dynamic programming replaces exhaustive search.
\item \textbf{Bounded treewidth:} Spatial models, graphical models with limited interaction width. CSP algorithms on low-treewidth graphs handle the sufficiency check.
\item \textbf{Coordinate symmetry:} Exchangeable features, order-invariant utilities (e.g., set-valued inputs). Orbit-type representatives collapse the comparison space.
\end{itemize}

Simple edge cases also yield immediate tractability without structural argument: single action ($|\mathrm{Opt}| = |A|$ trivially), bounded state space (exhaustive search is polynomial in $|S|$), strict dominance (one action always weakly optimal), and constant optimal set ($\mathrm{Opt}(s)$ independent of $s$).

For problems given in the succinct encoding without these structural restrictions, the hardness results of Section~\ref{sec:hardness} apply, justifying heuristic approaches.

\section{The Certification Trilemma}\label{sec:certification-trilemma}

Once exact certification is hard, the direct systems question is not only whether the predicate can be computed, but whether a solver can stay exact, non-abstaining, and polynomial-budgeted on the whole hard regime. The result below is a direct complexity-theoretic corollary: those three goals are incompatible. We refer to this incompatibility as the \emph{trilemma of exact certification}.

\subsection{Definitions}

\begin{definition}[Certifying Solver]
For exact sufficiency on a declared regime $\Gamma$, a certifying solver is a pair $(Q,V)$ where:
\begin{itemize}
\item $Q$ maps each instance to either $\mathsf{ABSTAIN}$ or a candidate verdict with witness,
\item $V$ verifies witnesses in polynomial time (in the declared encoding length).
\end{itemize}
\end{definition}

\begin{definition}[Integrity]
A certifying solver is \emph{integrity-preserving} if every non-abstaining output is accepted by $V$, and every accepted output is correct for the declared exact relation.
\end{definition}

\begin{definition}[Competence]
A certifying solver is \emph{competent} on $\Gamma$ if it is integrity-preserving, non-abstaining on all in-scope instances, and polynomial-budget compliant on all in-scope instances.
\end{definition}

\begin{definition}[Exact Reliability Claim]
An \emph{exact reliability claim} on $\Gamma$ is the conjunction of universal non-abstaining coverage, correctness, and polynomial-budget compliance for exact sufficiency.
\end{definition}

\leanmetapending{\LH{DQ10}, \LH{IC33}}
\begin{proposition}[Integrity and Competence Separation]
Integrity and competence are distinct: integrity constrains what can be asserted, while competence adds full coverage under resource bounds.
\end{proposition}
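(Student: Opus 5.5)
The plan is to show the two notions separately: integrity does not imply competence, and competence does imply integrity by definition. Together these make them distinct predicates on certifying solvers.

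First, I record the one-way inclusion. By the definition of Competence, a competent solver is integrity-preserving, non-abstaining on all in-scope instances, and polynomial-budget compliant. So competence implies integrity immediately, and competence is the conjunction of integrity with two further clauses: coverage and budget. Integrity, by its definition, only constrains the non-abstaining outputs, namely acceptance by $V$ and correctness of every accepted output. It says nothing about which instances receive a verdict or how long $Q$ runs.

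Second, I build an explicit separating solver. The most direct choice is the trivial solver $Q_\bot$ that returns $\mathsf{ABSTAIN}$ on every instance, paired with any polynomial-time $V$, for instance one that rejects everything. Integrity holds vacuously, since there are no non-abstaining outputs. Competence fails as soon as $\Gamma$ contains at least one in-scope instance, because the non-abstention clause is violated. If one prefers a non-degenerate example, take the exhaustive solver from the explicit finite search of Section~\ref{sec:hardness}, which decides static sufficiency by scanning state pairs and outputs either a counterexample pair (Proposition~\ref{prop:insufficiency-counterexample}) or a verdict verified by replaying the search. This solver is integrity-preserving and non-abstaining, but on succinct instances its running time is $2^{\Omega(n)}$. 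It therefore fails the budget clause, which by Theorem~\ref{thm:witness-checking-duality} and the ETH transfer of Theorem~\ref{thm:dichotomy} cannot be repaired in general.

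The main obstacle is not mathematical depth but making the word ``distinct'' precise. I would phrase it as: there is a solver satisfying integrity but not competence, while every competent solver has integrity. The only care needed is to assume $\Gamma$ is nonempty, so the abstaining example genuinely fails coverage, and to note that the second example uses the witness-size bound of Theorem~\ref{thm:witness-checking-duality} rather than any unproved separation hypothesis.
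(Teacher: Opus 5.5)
Your main argument is correct and is the same as the paper's: the always-abstain solver is vacuously integrity-preserving but fails competence once the in-scope set is nonempty. You also make explicit that competence implies integrity by definition, which the paper leaves implicit. Drop the optional exhaustive-search example, though. On succinct inputs a verifier that ``replays the search'' is not polynomial-time, so that pair is not a certifying solver as defined. Moreover, Theorem~\ref{thm:witness-checking-duality} only constrains slot-inspection checkers, so it gives no unconditional reason the budget clause ``cannot be repaired''; an unconditional bound of that kind would separate $\Pclass$ from $\coNP$.
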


\begin{proof}
The always-abstain solver is integrity-preserving (it makes no uncertified assertion) but fails competence whenever the in-scope set is non-empty.
\end{proof}

\subsection{Main Impossibility Theorem}

\leanmetapending{\LH{IC34}, \LH{DQ31}}
\begin{theorem}[Integrity-Resource Bound (Conditional)]\label{thm:integrity-resource}
Fix a regime $\Gamma$ whose exact-sufficiency decision problem is coNP-complete under the declared encoding. Under $\Pclass\neq\coNP$, no solver is simultaneously integrity-preserving and competent on $\Gamma$ for exact sufficiency.
\end{theorem}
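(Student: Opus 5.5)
The plan is a proof by contradiction: from any integrity-preserving, competent solver I will build a polynomial-time decision procedure for the coNP-complete exact-sufficiency problem on $\Gamma$. That would place a coNP-complete language in $\Pclass$, giving $\coNP\subseteq\Pclass$ and hence $\Pclass=\coNP$, against the hypothesis.

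Concretely, fix a solver $(Q,V)$ that is integrity-preserving and competent on $\Gamma$. The decision procedure $M$ works as follows on an input instance $x\in\Gamma$ (for instance $(\mathcal{D},I)$ under the declared encoding).
\begin{enumerate}
\item Run $Q$ on $x$. By competence it finishes within a polynomial budget in the encoding length $L$, and by non-abstention it returns a verdict $b\in\{\text{yes},\text{no}\}$ with a witness $w$.
\item Run $V$ on $(x,b,w)$. This takes polynomial time by the definition of a certifying solver.
\item Output $b$.
\end{enumerate}
Integrity is what makes $b$ trustworthy. Every non-abstaining output is accepted by $V$, and every accepted output is correct for the exact sufficiency relation. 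So $b$ equals the true answer to ``is $I$ sufficient for $\mathcal{D}$?''. The verification step is not even needed for correctness; it only confirms that the output lies in the accepted, hence correct, set. The total running time of $M$ is the sum of two polynomials, so it is polynomial.

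Because the problem is coNP-complete on $\Gamma$, every $L'\in\coNP$ reduces to it in polynomial time. Composing that reduction with $M$ puts $L'$ in $\Pclass$, so $\coNP\subseteq\Pclass$. Since $\Pclass\subseteq\coNP$, this gives $\Pclass=\coNP$. For the static regime, Theorem~\ref{thm:sufficiency-conp} supplies the completeness hypothesis, so the result applies directly there.

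The only point needing care is scope. Hardness must come from instances inside $\Gamma$, since competence only guarantees non-abstention and the budget on in-scope instances. The hypothesis that the problem restricted to $\Gamma$ is coNP-complete handles this: the reductions map into $\Gamma$, so $M$ is only ever run on in-scope inputs.

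One must also make sure the ``polynomial budget'' is measured in the same encoding length as the hardness reduction, namely the succinct length $L$ of Section~\ref{sec:encoding}. Otherwise, for example with a budget polynomial in $|S|$, the simulation would not yield a polynomial-time algorithm. I expect this alignment of encoding and budget to be the main thing to state explicitly; the rest is routine.
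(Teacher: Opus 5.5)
Your proposal is correct and follows the paper's proof essentially step for step. You run $Q$, use competence for non-abstention and the polynomial budget, run $V$, and use integrity to conclude that the combined polynomial-time procedure decides the coNP-complete problem on $\Gamma$, which forces $\coNP\subseteq\Pclass$. Your side remarks are accurate refinements that the paper leaves implicit: the call to $V$ is redundant for correctness given how integrity is defined, the reductions must land inside $\Gamma$, and the budget must be measured in the same encoding length as the hardness reduction.
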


\begin{proof}
Assume such a solver $(Q,V)$ exists. We build a polynomial-time decider for exact sufficiency on $\Gamma$.

Given an input instance $x\in \Gamma$, run $Q(x)$. Competence implies that $Q$ never abstains on in-scope instances and always halts within polynomial budget. Let $y$ be the returned verdict with witness. Now run the polynomial-time verifier $V(x,y)$.

Integrity implies two things: every non-abstaining output produced by $Q$ is accepted by $V$, and every accepted output is correct for the declared exact-sufficiency relation. Therefore the combined procedure returns the correct exact verdict on every instance in $\Gamma$ and runs in polynomial time.

This yields a polynomial-time algorithm for a coNP-complete problem on $\Gamma$, hence coNP $\subseteq$ P. Under $\Pclass\neq\coNP$, this is impossible. Therefore no solver is simultaneously integrity-preserving and competent on $\Gamma$.
\end{proof}

\leanmetapending{\LH{DQ28}, \LH{DQ38}}
\begin{corollary}[Exact Reliability Impossibility in the Hard Regime]
Under the assumptions of the theorem, exact reliability claims are impossible on the full hard regime.
\end{corollary}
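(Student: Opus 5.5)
The corollary should follow by unpacking the definition of an exact reliability claim and observing that it is the same object as a competent, integrity-preserving certifying solver. Theorem~\ref{thm:integrity-resource} then rules such a solver out. I would argue by contradiction. Suppose an exact reliability claim holds on the full hard regime $\Gamma$, where $\Gamma$ is a regime whose exact-sufficiency problem is coNP-complete under the declared encoding and we assume $\Pclass\neq\coNP$. The claim provides three things: a procedure that never abstains on in-scope instances, always outputs the correct exact-sufficiency verdict, and runs within a polynomial budget.

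\textbf{Packaging step.} From that procedure I would build a certifying solver $(Q,V)$ as follows.
\begin{itemize}
\item $Q$ runs the claimed procedure and outputs its verdict. The witness is trivial, or equivalently the witness is the transcript of the polynomial-budget run.
\item $V$ re-executes the procedure within the declared budget and accepts exactly when the recorded verdict matches the recomputed one.
\end{itemize}
$V$ runs in polynomial time because the budget is polynomial in the encoding length. Every non-abstaining output is accepted by $V$ by construction. Every accepted output is correct because the claim asserts correctness. So $(Q,V)$ is integrity-preserving. Non-abstention and budget compliance come directly from the claim, so $(Q,V)$ is also competent on $\Gamma$. Theorem~\ref{thm:integrity-resource} forbids such a solver, which gives the contradiction.

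\textbf{Main obstacle.} The hard part is the witness format. The definition of a certifying solver requires $V$ to be a polynomial-time witness verifier. A ``yes, $I$ is sufficient'' verdict has no short certificate unless $\mathrm{NP}=\coNP$, so I cannot use a natural certificate. The workaround is the re-execution verifier above, which is legitimate only because the claimed procedure is itself polynomially budgeted. I would state this explicitly, since the integrity clause depends on it.

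\textbf{Shortcut.} Alternatively, I would skip the solver packaging and argue directly. The claimed procedure is already a polynomial-time decider for a coNP-complete problem on $\Gamma$, which forces $\coNP\subseteq\Pclass$. This contradicts the hypothesis, exactly as in the last step of the theorem's proof. Either route shows that at least one of the three conjuncts (coverage, correctness, budget) must fail somewhere in $\Gamma$.
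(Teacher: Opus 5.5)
Your argument is correct and is essentially the paper's route. The paper states the corollary without a separate proof, as an immediate consequence of Theorem~\ref{thm:integrity-resource}: an exact reliability claim (coverage, correctness, polynomial budget) amounts to a competent, integrity-preserving solver, and your re-execution verifier makes that identification explicit, while your shortcut is exactly the final step of the theorem's proof.
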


\leanmetapending{\LH{IC30}, \LH{IC32}}
\begin{corollary}[Trilemma of Exact Certification]
In the hard regime, no exact certifier can simultaneously be sound, complete on all in-scope instances, and polynomial-budgeted. Operationally, one of three concessions is unavoidable: abstain on some instances, weaken the guarantee, or risk overclaiming.
\end{corollary}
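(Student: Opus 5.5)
The plan is to derive the trilemma as a direct contrapositive repackaging of Theorem~\ref{thm:integrity-resource}. First I fix the hard regime $\Gamma$: a regime whose exact-sufficiency problem is coNP-complete under the declared encoding. By Theorem~\ref{thm:sufficiency-conp} the static succinct regime qualifies. I then map the informal properties onto the formal definitions:
\begin{itemize}
\item \emph{sound} is \emph{integrity-preserving}: every non-abstaining output is accepted by $V$, and every accepted output is correct;
\item \emph{complete on all in-scope instances} is non-abstention on every $x\in\Gamma$;
\item \emph{polynomial-budgeted} is polynomial-budget compliance.
\end{itemize}
The conjunction of these three is exactly integrity together with competence, since competence already bundles integrity, coverage and budget.

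With that translation in place, suppose a certifier had all three properties. Then it would be simultaneously integrity-preserving and competent on $\Gamma$. Theorem~\ref{thm:integrity-resource} rules this out under $\Pclass\neq\coNP$: composing $Q$ with $V$ would give a polynomial-time decider for a coNP-complete problem. So at least one property must fail, which gives the first sentence of the corollary.

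For the operational reading, I would case-split on which property is dropped.
\begin{itemize}
\item If non-abstention fails, the solver abstains on some in-scope instance. This is the ``abstain'' concession.
\item If integrity fails, either some output is not verifier-accepted, or some accepted output is incorrect for the exact relation. The first case amounts to asserting a weaker, non-exact guarantee. The second is overclaiming.
\item If the budget fails, the solver is no longer polynomial-budgeted. Relative to the declared exact reliability claim, this is also a weakening of the guarantee, since that claim includes budget compliance.
\end{itemize}
Each branch therefore lands in one of the three listed concessions. The ``weaken'' bucket absorbs the budget-relaxation case.

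The main obstacle is not mathematical but definitional. I must make the informal triple match the formal notions precisely, so that no fourth escape route slips between them. In particular, dropping the polynomial budget has to be counted as ``weakening the guarantee''. To handle this, I would state explicitly that the guarantee in question is the exact reliability claim of the preceding corollary, which includes budget compliance. With that, the partition into the three concessions is exhaustive.
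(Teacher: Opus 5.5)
Your proposal is correct and follows the paper's route. The paper states the corollary without a separate proof, as an immediate consequence of Theorem~\ref{thm:integrity-resource}: soundness is identified with integrity, and completeness plus budget with the remaining components of competence, so composing $Q$ with $V$ would put a coNP-complete problem in P. Your case split for the operational sentence is a reasonable reading of what the paper presents only as an informal gloss, not a separately proved claim.
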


\subsection{Regime-Dependent Abstention}

\leanmetapending{\LH{IC32}, \LH{DQ2}}
\begin{proposition}[Abstention Frontier Expands with Regime Complexity]\label{prop:abstention-frontier}
For polynomial-time exact certifiers that abstain whenever they cannot certify correctness within budget, the set of instances requiring abstention expands along
\[
\text{static} \to \text{stochastic} \to \text{sequential}.
\]
\end{proposition}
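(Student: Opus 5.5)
The proposition is informal, so I first give it a precise reading. For each regime $R \in \{\text{static},\text{stochastic},\text{sequential}\}$, let $\mathcal{C}_R$ be the class of polynomial-time exact certifiers for the base sufficiency query in $R$ that satisfy integrity. Let $\mathrm{Abs}_R$ be the set of instance families on which every certifier in $\mathcal{C}_R$ must abstain infinitely often, under the relevant separation hypothesis. The claim then becomes a nested chain: each regime embeds the previous one, so its forced-abstention set contains the previous one's, and the containment is strict under the separation assumptions already used in Theorems~\ref{thm:static-stochastic-strict} and~\ref{thm:stochastic-sequential-strict}.

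\textbf{Step 1: embeddings that give monotonicity.}
\begin{itemize}
\item \emph{Static into stochastic.} Use Proposition~\ref{prop:stochastic-full-support-equiv}. A static instance becomes a stochastic instance by attaching the uniform (full-support) distribution, and stochastic sufficiency then coincides with static sufficiency.
\item \emph{Stochastic into sequential.} Use the formal reading of Definition~\ref{def:sequential-sufficient}: sequential sufficiency is sufficiency of the induced one-step optimizer. A one-step instance with trivial transitions and observations is therefore a sequential instance with the same verdict.
\end{itemize}
Any certifier for the richer regime restricts, through these embeddings, to a certifier for the poorer one. Hence an instance family that forces abstention in the poorer regime, via the argument of Theorem~\ref{thm:integrity-resource}, still forces it after embedding.

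\textbf{Step 2: strictness.} Rerun the argument of Theorem~\ref{thm:integrity-resource}, now relative to the hardness of each regime.
\begin{itemize}
\item \emph{Stochastic.} The MAJSAT gadget of Theorem~\ref{thm:stochastic-pp} yields instances on which a non-abstaining integrity-preserving polynomial certifier would decide PP.
\item \emph{Sequential.} The TQBF gadget of Theorem~\ref{thm:sequential-pspace} similarly yields instances on which such a certifier would decide PSPACE.
\end{itemize}
Strictness now follows from the separation hypotheses. If the static frontier already covered the PP-gadget family, that family could be answered with only a coNP-style certification cost; assuming $\coNP \ne \PP$, this is impossible. The same reasoning, assuming $\PP \ne \PSPACE$, makes the sequential frontier strictly larger than the stochastic one.

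\textbf{Main obstacle.} Strictness is the hard part. Separations of complexity classes do not automatically give separations between specific sets of instances; abstention is a property of individual certifiers, not of classes. To get around this, I would state the frontier at the level of hardness of the forced-abstention languages rather than of instance sets. The stochastic-only family (MAJSAT images) is a language whose correct certification would place PP inside the static certifier's power, namely coNP with the witness access of Theorem~\ref{thm:sufficiency-conp}. I would present the expansion as this language-level strict inclusion and flag the separation assumptions explicitly.

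One further subtlety concerns the stochastic row: the embedding in Step 1 must preserve verdicts. I would use preservation (stochastic sufficiency) under full support for that embedding, and decisiveness for the PP-hardness family in Step 2.
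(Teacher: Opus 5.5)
\textbf{There is a genuine gap: the chain in Step~1 does not compose as written, and the second strictness step uses too weak a hypothesis.}

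\textbf{Step~1, stochastic into sequential.} Your trivial-dynamics embedding carries the \emph{static} regime into the sequential one, not the stochastic one. By Definition~\ref{def:sequential-sufficient}, sequential sufficiency is pointwise constancy of the induced one-step optimizer on $I$-fibers, and it never sees the distribution $P$. So a stochastic instance with trivial $T$ and $O$ simply receives its static verdict. Take the MAJSAT gadget of Theorem~\ref{thm:stochastic-pp} with $I=\emptyset$. Decisiveness holds iff at least half of the assignments satisfy $\varphi$. The embedded sequential instance, however, is sufficient iff $\varphi$ is a tautology or unsatisfiable. So your map fails to be a reduction on exactly the families that make the first inclusion strict.

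\textbf{Step~1, the stochastic predicate switches between steps.} Proposition~\ref{prop:stochastic-full-support-equiv} is about preservation, while the PP-hard family is about decisiveness. $\mathrm{Abs}_{\mathrm{stoch}}$ must be defined for one of them. If you choose preservation (which is what ``the base sufficiency query'' in your definition of $\mathcal{C}_R$ selects), the paper gives no lower bound beyond coNP, so Step~2 has nothing to work with. If you choose decisiveness, attaching the uniform distribution does not preserve verdicts: two actions tied at every state make $\emptyset$ statically sufficient but not decisive.

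\textbf{Repair for Step~1.} Fix decisiveness and replace both embeddings by the many-one reductions the classification already supplies. Static sufficiency lies in $\coNP\subseteq\PP$, so it reduces through MAJSAT to decisiveness. Decisiveness lies in $\coNP^{\PP}\subseteq\PSPACE$ (the paper's text-level upper bound), so it reduces through TQBF to sequential sufficiency. Composing a certifier with these reductions gives your monotonicity, at the level of languages rather than literal instance sets.

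\textbf{Step~2, second strictness claim.} Your first separation works because static sufficiency has the \emph{upper} bound $\coNP$, so covering the MAJSAT family would force $\PP\subseteq\coNP$. The analogous argument for the second link needs an upper bound on the stochastic language. Decisiveness is only shown to be $\PP$-hard and to lie in $\coNP^{\PP}$. Covering the TQBF family therefore yields only $\PSPACE\subseteq\coNP^{\PP}$, which $\PP\neq\PSPACE$ does not rule out. You need $\PSPACE\not\subseteq\coNP^{\PP}$ instead. The assumption you took from Theorem~\ref{thm:stochastic-sequential-strict} has the same weakness there.

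\textbf{Comparison with the paper.} With these corrections, your argument becomes a careful version of the paper's. The paper's entire proof is the observation that as the predicate moves from coNP to PP to PSPACE hardness, fixed polynomial resources force abstention on a larger family. It uses no embeddings and names no separation hypotheses. The degree-level reading you reach in your ``main obstacle'' paragraph is exactly the content it asserts.
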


\begin{proof}
As the decision predicate moves from coNP to PP to PSPACE hardness, exact worst-case certification requires strictly stronger resources. Fixed polynomial resources therefore force abstention on a larger instance family in higher regimes.
\end{proof}

\subsection{Connection to Simplicity Tax}

The trilemma directly relates to the \emph{simplicity tax} (Section~\ref{sec:simplicity-tax}). When coordinates relevant for exact decisions are omitted from a central model, the unresolved burden must be handled elsewhere: by local resolution, extra queries, or abstention. In the hard exact regime, Theorem~\ref{thm:integrity-resource} implies this burden cannot be discharged for free by a polynomial-budget exact certifier. The principle of \emph{hardness conservation} names this constraint: irreducible certification cost is conserved across the interface, it cannot be eliminated, only displaced.

The trilemma thus operationalizes the theoretical impossibility for practical certifiers: exact relevance certification imposes a structural choice among soundness, completeness, and efficiency, with the precise boundary determined by the regime-sensitive complexity map.
\section{Structural Consequences for Exact Certification}\label{sec:structural-consequences}

The regime hierarchy yields both theoretical barriers and practical implications. This section collects structural consequences that follow from the complexity classifications: witness-checking lower bounds, approximation gaps, configuration-simplification limits, regime-dependent certification competence, and the simplicity tax principle. The common theme is that exact relevance certification imposes constraints that cannot be circumvented without either restricting the regime or accepting tradeoffs.

\subsection{Exact Simplification and Over-Specification}

Many practical simplification questions are instances of sufficiency checking. For example, configuration parameter reduction asks whether a subset of parameters preserves all decision-relevant behavior, which directly reduces to \textsc{Sufficiency-Check} (Appendix~\ref{app:applications} contains detailed reductions). This connection yields immediate complexity consequences.

\leanmetapending{\LHrng{DQ}{36}{37}}
\begin{corollary}[No General-Purpose Exact Configuration Minimizer]
\label{cor:no-general-minimizer}
Assuming $\Pclass \neq \coNP$, there is no polynomial-time general-purpose procedure that takes an arbitrary succinctly encoded configuration problem and always returns a smallest behavior-preserving parameter subset.
\end{corollary}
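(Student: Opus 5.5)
The plan is to argue by contradiction: I would show that a polynomial-time general-purpose exact configuration minimizer yields a polynomial-time decider for a coNP-complete problem, which contradicts $\Pclass\neq\coNP$. The first step is to fix the translation between configuration problems and coordinate-structured decision problems. Parameters become coordinates $X_i$, and behaviors become actions or optimal-action sets, so a parameter subset is \emph{behavior-preserving} exactly when it is sufficient in the sense of Definition~\ref{def:sufficient}. I would state this translation explicitly, in the direction that maps any succinctly encoded decision problem $\mathcal{D}=(A,X_1,\ldots,X_n,U)$ to a succinctly encoded configuration problem in polynomial time, as in the reductions collected in Appendix~\ref{app:applications}. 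Under this translation, ``smallest behavior-preserving parameter subset'' means ``minimum sufficient set''.

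Next, suppose $M$ is a polynomial-time procedure that always returns a smallest behavior-preserving subset. Given an instance $(\mathcal{D},k)$ of \textsc{Minimum-Sufficient-Set}, I would run $M$ on the translated configuration problem, read off the returned subset $I^*$, and answer ``yes'' iff $|I^*|\le k$. Since $I^*$ is a minimum sufficient set, this answer is correct. By Proposition~\ref{prop:minimal-relevant-equiv}, $I^*$ is exactly the relevant-coordinate set, so correctness needs no further search. The whole procedure runs in polynomial time, so \textsc{Minimum-Sufficient-Set}, which is coNP-complete by Theorem~\ref{thm:minsuff-conp}, would lie in $\Pclass$. That gives $\coNP\subseteq\Pclass$, hence $\Pclass=\coNP$. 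A simpler route is also available: use only the $k=0$ slice, where the procedure decides empty-set sufficiency. Via the TAUTOLOGY reduction of Theorem~\ref{thm:sufficiency-conp}, this already suffices.

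The main obstacle is the modeling step, not the complexity argument. The corollary quantifies over ``arbitrary succinctly encoded configuration problems'', so I must make sure that the hard instances produced by the TAUTOLOGY reduction are themselves legitimate configuration problems. The reduction in question is the two-action gadget with $I=\emptyset$, and the translation must be polynomial and must preserve the circuit encoding. I would handle this by defining a configuration problem directly as a circuit mapping parameter assignments to behaviors, with behavior identified with $\Opt(s)$. The gadget then embeds verbatim, and behavior-preservation coincides with sufficiency by Proposition~\ref{prop:sufficiency-char}. A secondary point is to insist that the minimizer's output be exact: any procedure that returns a merely minimal-looking or approximate subset escapes the argument. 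The statement should therefore be read as ruling out exact minimization only, in line with the trilemma of Theorem~\ref{thm:integrity-resource}.
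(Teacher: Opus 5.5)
Your proposal is correct and follows the paper's own argument: a polynomial-time exact minimizer would decide \textsc{Minimum-Sufficient-Set} in polynomial time, contradicting Theorem~\ref{thm:minsuff-conp} under $\Pclass\neq\coNP$. The paper's proof leaves implicit two steps that you spell out: embedding the hard instances as legitimate configuration problems, and passing from the search version to the decision version via the test $|I^*|\le k$.
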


\begin{proof}
Such a procedure would solve \textsc{Minimum-Sufficient-Set} in polynomial time for arbitrary succinctly encoded instances, contradicting Theorem~\ref{thm:minsuff-conp} under the assumption $\Pclass \neq \coNP$.
\end{proof}

\leanmetapending{\LH{CR1}, \LH{CT12}, \LH{DQ26}, \LH{HD14}}
\begin{remark}
The corollary does not rule out useful simplification tools; it rules out a fully general exact minimizer with worst-case polynomial guarantees. The structured regimes isolated in Section~\ref{sec:tractable} remain viable precisely because they restrict the source of hardness. Moreover, when certification is computationally expensive but parameter maintenance is cheap, over-specification can be cost-optimal—another manifestation of the simplicity tax principle.
\end{remark}

\subsection{Regime-Limited Exact Certification}

\leanmetapending{\LH{DQ36}, \LH{DQ40}, \LH{DQ44}}
\begin{proposition}[Exact Certification Competence Depends on Regime]
\label{prop:competence-by-regime}
Within the model of this paper, exact certification competence is regime-dependent. In the general succinct regime, exact relevance certification and exact minimization inherit the hardness results of Sections~\ref{sec:hardness} and~\ref{sec:dichotomy}. In the structured regimes of Section~\ref{sec:tractable}, exact certification becomes available in polynomial time.
\end{proposition}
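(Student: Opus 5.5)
The proposition has two halves, and I would prove them separately. The negative half concerns the general succinct regime: exact relevance certification and exact minimization are hard there. The positive half concerns the restrictions of Section~\ref{sec:tractable}: on each of them, exact certification is polynomial-time. Both halves are assembled from earlier results rather than proved from scratch, so the main work is checking that each cited result applies in the intended sense.

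\textbf{Negative half.} First I make precise what ``exact relevance certification'' means. By Proposition~\ref{prop:minimal-relevant-equiv}, certifying that a candidate set $I$ is exactly relevance-identifying (Definition~\ref{def:exact-identifiability}) amounts to two checks. One is that $I$ is sufficient. The other is that no proper subset of $I$ is sufficient, equivalently that every $i\in I$ is relevant. Taking the slice $I=\emptyset$ shows the certification question contains the empty-set case of \textsc{Sufficiency-Check}. That case is coNP-hard by the TAUTOLOGY reduction in Theorem~\ref{thm:sufficiency-conp}. Exact minimization is \textsc{Minimum-Sufficient-Set}, which is coNP-complete by Theorem~\ref{thm:minsuff-conp}. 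For the fine-grained form of the claim, I add the ETH transfer of Theorem~\ref{thm:dichotomy}(2) and Proposition~\ref{prop:eth-constant}: no $2^{o(n)}$ algorithm exists in the succinct encoding. With Theorem~\ref{thm:integrity-resource}, this also shows that no competent certifier exists there under $\Pclass\neq\coNP$.

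\textbf{Positive half.} I go through the restrictions one by one and cite the matching upper bound:
\begin{itemize}
\item bounded actions: Theorem~\ref{thm:bounded-actions};
\item separable utility: Theorem~\ref{thm:separable}, where the answer is trivially ``yes'' for every $I$;
\item low tensor rank: Theorem~\ref{thm:tensor-rank};
\item tree structure: Theorem~\ref{thm:tree-struct};
\item bounded treewidth: Theorem~\ref{thm:treewidth};
\item coordinate symmetry: Theorems~\ref{thm:symmetric-reduction} and~\ref{thm:symmetric-complexity}.
\end{itemize}
Each result decides sufficiency. To get certification and minimization as well, I compute the relevant-coordinate set with at most $n$ sufficiency calls, one for each set $\{1,\ldots,n\}\setminus\{i\}$. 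Coordinate $i$ is relevant exactly when that set fails to be sufficient, since every sufficient set contains all relevant coordinates. By Proposition~\ref{prop:minimal-relevant-equiv}, the relevant set is the unique minimum sufficient set. This stays polynomial because only $n$ calls are made.

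\textbf{Main obstacle.} The expected difficulty is in the positive half. Some of the cited tractability results bound only a reduction step, namely Theorems~\ref{thm:tensor-rank} and~\ref{thm:treewidth}. For these I must confirm that checking sufficiency for an arbitrary $I$, not only $I=\emptyset$, stays inside the restricted class. My first attempt is the observation that fixing $I$ and quantifying over agreeing pairs keeps the rank or treewidth bound up to a constant factor, because the pair construction doubles the coordinates outside $I$. If that fails for some class, I would state the positive half for the sufficiency check only, and read ``exact certification'' there as the sufficiency verdict.
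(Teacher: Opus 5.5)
Your proposal is correct and follows essentially the same route as the paper, which proves the negative side by citing Theorems~\ref{thm:sufficiency-conp} and~\ref{thm:minsuff-conp} together with the ETH bounds of Section~\ref{sec:dichotomy}, and the positive side by citing the tractability results of Section~\ref{sec:tractable}. Your additions are refinements the paper leaves implicit: the $I=\emptyset$ slice for certification, and recovering the relevant set from $n$ sufficiency checks on $\{1,\ldots,n\}\setminus\{i\}$ (that set is sufficient exactly when $i$ is irrelevant, and the ``if'' direction follows directly from Definition~\ref{def:relevant}). Your flagged caveat about Theorems~\ref{thm:tensor-rank} and~\ref{thm:treewidth} applies equally to the paper's one-line citation.
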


\begin{proof}
The negative side is given by Theorems~\ref{thm:sufficiency-conp} and~\ref{thm:minsuff-conp}, together with the ETH-conditioned lower bounds summarized in Section~\ref{sec:dichotomy}. The positive side is given by the tractability results of Section~\ref{sec:tractable}, which provide explicit polynomial-time certification procedures under structural restrictions.
\end{proof}

The next three subsections sharpen that basic regime distinction in different directions: exact reliability under polynomial budgets, witness-checking and approximation limits, and the cost of compressing a central interface while leaving exact relevance unresolved.

\subsection{Witness and Approximation Limits}\label{sec:witness-duality}

Hardness does not disappear when one weakens exact minimization to witness checking or approximation. The first theorem shows that even the empty-set core can force exponential witness inspection. The remaining results then show two complementary approximation obstructions: a direct gap obstruction on the shifted hard family and an exact optimization transfer on the set-cover gadget family.

\subsubsection{Witness-Checking Lower Bound}

\begin{definition}[Slot-Inspection Checker]
For the empty-set sufficiency core on Boolean state space $S=\{0,1\}^n$, a \emph{slot-inspection checker} is a deterministic algorithm that adaptively queries witness slots of the form
\[
\{(0,z),(1,z)\}\qquad z\in\{0,1\}^{n-1},
\]
and, for each queried slot, learns whether the optimizer agrees or disagrees on the two states in that slot. The checker is \emph{sound} if it never accepts a non-sufficient instance and never rejects a sufficient one.
\end{definition}

The lower bound below is for deterministic checkers in exactly this access model.

\leanmetapending{\LHrng{WD}{1}{3}}
\begin{theorem}[Witness-Checking Lower Bound]\label{thm:witness-lower-bound-4}
For Boolean decision problems with $n$ coordinates, any sound slot-inspection checker for the empty-set sufficiency core must inspect at least $2^{n-1}$ witness pairs in the worst case.
\end{theorem}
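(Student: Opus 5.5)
The plan is a standard adversary argument in the slot-inspection access model. Write the $2^{n-1}$ slots as $\sigma_z=\{(0,z),(1,z)\}$ for $z\in\{0,1\}^{n-1}$. These slots partition $S=\{0,1\}^n$, so there are exactly $2^{n-1}$ of them. First I record the link to sufficiency that the checker has to decide. By Proposition~\ref{prop:empty-sufficient-constant}, $\emptyset$ is sufficient if and only if $\Opt$ is constant on $S$. Consequently, if any single slot is a disagreement slot, the instance is not sufficient. The only information the checker ever receives is one bit per queried slot, namely ``agree'' or ``disagree''.

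Next I build two families of instances over a fixed two-action set $A=\{a,b\}$. The first is the constant instance $\mathcal{D}_0$ with $U(a,s)=1$ and $U(b,s)=0$ everywhere. It has $\Opt\equiv\{a\}$, so it is sufficient and every slot reports ``agree''. For each $z^*$ I also define $\mathcal{D}_{z^*}$, which agrees with $\mathcal{D}_0$ except at the single state $(1,z^*)$, where I swap the utilities so that $\Opt((1,z^*))=\{b\}$. In $\mathcal{D}_{z^*}$ the only disagreeing slot is $\sigma_{z^*}$. Every other slot consists of two states with $\Opt=\{a\}$ and therefore reports ``agree''. Since $\Opt$ is not constant, $\mathcal{D}_{z^*}$ is not sufficient.

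The adversary answers ``agree'' to every query. Fix a deterministic checker and run it against $\mathcal{D}_0$. Its query sequence is fully determined by these answers. Suppose it stops after inspecting fewer than $2^{n-1}$ slots. Then some slot $\sigma_{z^*}$ was never queried. On $\mathcal{D}_{z^*}$ the checker sees exactly the same answers at every step, because only $\sigma_{z^*}$ differs and it is never asked about. An induction on the number of queries shows the transcripts are identical, so the checker gives the same verdict on $\mathcal{D}_0$ and on $\mathcal{D}_{z^*}$. But $\mathcal{D}_0$ is sufficient and $\mathcal{D}_{z^*}$ is not, so the checker is wrong on one of them. This contradicts soundness, which the statement defines in the two-sided sense. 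Hence on the all-agree run, and so in the worst case, at least $2^{n-1}$ slots must be inspected.

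The step I expect to need the most care is making the indistinguishability claim rigorous under adaptivity. That comes down to the induction on the transcript, together with checking that changing a single state affects exactly one slot's answer. This holds because the slots partition $S$ and $(1,z^*)$ belongs only to $\sigma_{z^*}$. A smaller point is that the bound holds for every $n\ge 1$. When $n=1$ there is a single slot, and the bound is trivially that one slot must be inspected.
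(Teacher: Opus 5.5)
Your proof is correct and follows the paper's argument: you compare the constant YES instance with a NO$_{z^\star}$ instance that differs only on an uninspected slot, and identical transcripts then force a common verdict, contradicting two-sided soundness. Your explicit utilities and the transcript induction for adaptive queries make rigorous what the paper states informally; the underlying argument is the same.
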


\begin{proof}
Let $S=\{0,1\}^n$. Empty-set sufficiency is exactly the claim that $\Opt$ is constant on all states.

Partition $S$ into $2^{n-1}$ disjoint witness slots
\[
\{(0,z),(1,z)\}\quad\text{for }z\in\{0,1\}^{n-1}.
\]
Each slot can independently carry a disagreement witness (different optimizer values on its two states).

Consider two instance families:
\begin{itemize}
\item \emph{YES instance}: $\Opt$ constant on all of $S$.
\item \emph{NO$_z$ instance}: identical to YES except slot $z$ has a disagreement.
\end{itemize}

If a checker inspects fewer than $2^{n-1}$ slots, at least one slot $z^\star$ is uninspected. On all inspected slots, the YES instance and the tailored NO$_{z^\star}$ instance are identical. The checker therefore sees exactly the same transcript on these two inputs and must return the same answer on both.

But YES is an empty-set-sufficient instance, whereas NO$_{z^\star}$ is not. So a common answer cannot be sound on both inputs. Hence any sound worst-case checker must inspect every slot, i.e., at least $2^{n-1}$ witness pairs.
\end{proof}

\subsubsection{Approximation Gap on the Shifted Family}

For the shifted reduction family used in the mechanization, the optimum support size exhibits a sharp gap:
\[
\mathrm{OPT}(\varphi)=1 \quad \text{if $\varphi$ is a tautology,}
\qquad
\mathrm{OPT}(\varphi)=n+1 \quad \text{if $\varphi$ is not.}
\]
One coordinate acts as a gate: toggling it changes the optimizer even before the formula variables are considered, so optimum size is never $0$. If $\varphi$ is a tautology, every branch behind that gate behaves identically, so all non-gate coordinates become irrelevant and the singleton gate coordinate is sufficient. If $\varphi$ is not a tautology, choose a falsifying assignment $a$. For each formula coordinate $i$, compare the open reference state with the state that inserts $a$ at coordinate $i$: these states agree on all other coordinates but induce different optimal-action sets, so coordinate $i$ is relevant. Together with gate relevance, this forces every coordinate to be present in every sufficient set. That is what creates the exact $1$ versus $n+1$ gap.

\leanmetapending{\LH{DQ20}, \LH{DQ39}, \LH{DQ43}}
\begin{theorem}[Approximation-hardness theorem on the shifted family]
Fix $\rho\in\mathbb{N}$. Let $\mathcal{A}$ be any solver that, on every shifted-family instance, returns a sufficient set whose cardinality is within factor $\rho$ of optimum. Then for every formula on $n$ coordinates with $\rho<n+1$,
\[
\varphi \text{ is a tautology}
\quad\Longleftrightarrow\quad
|\mathcal{A}(\varphi)|\le \rho.
\]
\end{theorem}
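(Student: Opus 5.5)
The plan is to argue directly from the gap property of the shifted family stated just before the theorem: $\mathrm{OPT}(\varphi)=1$ when $\varphi$ is a tautology and $\mathrm{OPT}(\varphi)=n+1$ otherwise. I take this gap as the structural input. Its two halves come from the gate-relevance argument and the falsifying-assignment relevance argument, combined with Proposition~\ref{prop:minimal-relevant-equiv}: every sufficient set contains all relevant coordinates, and the relevant set is itself sufficient. The theorem then follows from two one-line inequalities, one for each direction.

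\emph{Forward direction.} Assume $\varphi$ is a tautology, so $\mathrm{OPT}(\varphi)=1$. The approximation guarantee of $\mathcal{A}$ gives
\[
|\mathcal{A}(\varphi)|\le \rho\cdot \mathrm{OPT}(\varphi)=\rho .
\]
This direction uses nothing beyond the ratio guarantee. I will read the factor-$\rho$ guarantee multiplicatively as $|\mathcal{A}(\varphi)|\le\rho\cdot\mathrm{OPT}(\varphi)$ and say so explicitly.

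\emph{Reverse direction (contrapositive).} Assume $\varphi$ is not a tautology. The output $\mathcal{A}(\varphi)$ is, by hypothesis, a sufficient set. Every sufficient set contains every relevant coordinate. In the non-tautology case, all $n+1$ coordinates are relevant: the gate is relevant, and each formula coordinate is relevant via the falsifying assignment $a$ inserted against the open reference state. Hence
\[
|\mathcal{A}(\varphi)|\ge n+1>\rho ,
\]
using the standing hypothesis $\rho<n+1$. So $|\mathcal{A}(\varphi)|\le\rho$ fails, which is the contrapositive of the reverse implication. Note that this direction needs only the sufficiency of the returned set, not the approximation ratio.

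\emph{Main obstacle.} The only nontrivial ingredient is the gap itself, specifically the two relevance claims on the shifted construction: that the gate is always relevant, and that each formula coordinate becomes relevant once a falsifying assignment exists. If these are not already available as lemmas, I would verify them by exhibiting, for each coordinate $i$, an explicit pair of states differing only at $i$ with distinct optimal-action sets.

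I would also check that the singleton gate set is genuinely sufficient in the tautology case. Strictly speaking, though, the forward direction needs only $\mathrm{OPT}(\varphi)\le 1$, and that holds because the gate coordinate alone suffices.
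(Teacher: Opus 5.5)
Your proposal is correct and follows the paper's proof essentially step for step. The forward direction uses $\mathrm{OPT}=1$ (realized by the singleton gate) together with the multiplicative ratio bound, and the reverse direction uses the fact that in the non-tautology case every coordinate is relevant, so any sufficient output has size $n+1>\rho$; your remark that this direction needs only sufficiency of the output, not the ratio, matches how the paper's argument uses its hypotheses.
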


\begin{proof}
\textbf{Tautology $\Rightarrow$ threshold passes.}
If $\varphi$ is a tautology, the optimum support size on the shifted family is $1$, realized by the singleton gate coordinate. A factor-$\rho$ solver therefore returns a sufficient set of size at most $\rho\cdot 1=\rho$, so $|\mathcal{A}(\varphi)|\le \rho$.

\textbf{Threshold passes $\Rightarrow$ tautology.}
Assume $|\mathcal{A}(\varphi)|\le \rho$ with $\rho<n+1$. If $\varphi$ were not a tautology, then every coordinate would be necessary on the shifted family, so every sufficient set would have size exactly $n+1$. Since $\mathcal{A}(\varphi)$ is sufficient by assumption on $\mathcal{A}$, this would force $|\mathcal{A}(\varphi)|=n+1>\rho$, contradiction.

Therefore $|\mathcal{A}(\varphi)|\le \rho$ holds exactly in the tautology case.
\end{proof}

\leanmetapending{\LHrng{DQ}{23}{24}}
\begin{theorem}[Counted-runtime threshold decider]
Fix $\rho\in\mathbb{N}$. Let $\mathcal{A}$ be any counted polynomial-time factor-$\rho$ solver for the shifted minimum-sufficient-set family. Then the derived procedure
\[
\varphi \longmapsto \mathbf{1}\!\left\{\,|\mathcal{A}(\varphi)|\le \rho\,\right\}
\]
is a counted polynomial-time tautology decider on the gap regime $\rho<n+1$.
\end{theorem}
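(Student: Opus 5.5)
The derived procedure is a composition of two steps: run the factor-$\rho$ solver and then apply a threshold test. I will therefore establish correctness and the counted runtime separately.

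\emph{Correctness.} I take correctness directly from the preceding approximation-hardness theorem on the shifted family. For every formula $\varphi$ on $n$ coordinates with $\rho<n+1$, that theorem gives
\[
\varphi \text{ is a tautology} \iff |\mathcal{A}(\varphi)|\le\rho .
\]
So the indicator $\mathbf{1}\{|\mathcal{A}(\varphi)|\le\rho\}$ equals the tautology predicate on the whole gap regime. The only hypothesis to check is that $\mathcal{A}$ is a factor-$\rho$ solver returning sufficient sets on every shifted-family instance, and this is part of the assumption on $\mathcal{A}$. I will state explicitly that the input formula is first mapped to its shifted-family instance, namely the gated TAUTOLOGY gadget with $n+1$ coordinates, and that $\mathcal{A}$ is applied to that instance.

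\emph{Runtime accounting.} The cost has three parts.
\begin{itemize}
\item The shifted-family construction is linear in $|\varphi|$. It uses the same reduction whose size preservation is recorded in Proposition~\ref{prop:eth-constant} and the circuit-model remark, and it adds one gate coordinate.
\item Running $\mathcal{A}$ costs a counted polynomial $p(\cdot)$ in the size of the constructed instance, which is polynomial in $|\varphi|$.
\item Computing $|\mathcal{A}(\varphi)|$ and comparing it with the fixed constant $\rho$ costs $O(n)$ steps. The output is a subset of $n+1$ coordinates, so we count its elements and make one comparison.
\end{itemize}
Polynomial bounds are closed under composition and sum, so the total counted bound is polynomial. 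I will phrase this as an explicit step-bound inequality $T(\varphi)\le c_1|\varphi| + p(c_2|\varphi|) + c_3 n + c_4$.

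The main obstacle I expect is bookkeeping rather than mathematics. The counted-step model has to charge the reduction, the solver call, and the cardinality test consistently. In particular, the solver's bound must be measured in the size of the \emph{reduced} instance, and I must show that this size is linearly related to $|\varphi|$, so that the composed bound is still polynomial in the formula size. If the step-counting framework only composes bounds through a specific combinator, I would first try to express the procedure as the reduction map followed by $\mathcal{A}$ followed by a threshold map. I would then invoke the framework's composition lemma for counted polynomial-time maps, so that no bound has to be re-derived by hand.
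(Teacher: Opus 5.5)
Your proposal is correct and follows essentially the paper's argument. Correctness comes from the gap-threshold (approximation-hardness) theorem on the regime $\rho<n+1$, and runtime is one counted polynomial-time call to $\mathcal{A}$ plus a cheap threshold comparison; your explicit charge for building the shifted instance and the $O(n)$ cardinality count are harmless extra bookkeeping, since the paper simply treats $\mathcal{A}$ as acting on $\varphi$ directly and absorbs the comparison into a constant additive overhead.
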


\begin{proof}
Correctness is exactly the gap-threshold theorem above: on the regime $\rho<n+1$, the predicate $|\mathcal{A}(\varphi)|\le \rho$ is equivalent to tautology.

For runtime, the derived decider performs two operations: one call to the counted polynomial-time solver $\mathcal{A}$, followed by one cardinality comparison against the fixed threshold $\rho$. The second step contributes only constant additive overhead relative to the solver run. Hence the derived decision procedure remains counted polynomial-time.
\end{proof}

\subsubsection{Set-Cover Transfer Boundary}

\leanmetapending{\LH{DQ11}, \LH{DQ34}}
\begin{theorem}[Exact set-cover equivalence on the gadget family]
On the mechanized gadget family, a coordinate set is sufficient if and only if the corresponding set family is a cover. In particular, feasible solutions are in bijection and optimum cardinalities coincide exactly.
\end{theorem}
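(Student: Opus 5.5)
The plan is to make the correspondence between coordinate sets and set families explicit, and then show that sufficiency is a covering condition by taking the contrapositive of Definition~\ref{def:sufficient}.

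\textbf{Construction.} For a gadget instance $\mathcal{D}$ with coordinates $\{1,\ldots,n\}$, I would take the universe to be the set of \emph{disagreement pairs}
\[
\mathcal{U}_{\mathcal{D}} := \{(s,s')\in S\times S : \Opt(s)\neq\Opt(s')\}.
\]
To each coordinate $i$ I attach the set
\[
T_i := \{(s,s')\in\mathcal{U}_{\mathcal{D}} : s_i\neq s'_i\}.
\]
A coordinate set $I$ then corresponds to the family $\mathcal{T}_I := \{T_i : i\in I\}$. Read this way, the gadget family realizes set-cover instances: element $e$ belongs to set $T_i$ exactly when coordinate $i$ separates the disagreement pair encoding $e$. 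The first step is to check that the mechanized gadget's element/set incidence coincides with this separation relation. This is a routine unfolding of the gadget's utility definition, and I would discharge it by direct computation of $\Opt$ on the gadget states.

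\textbf{Equivalence.} The core step is a chain of equivalences, using Proposition~\ref{prop:insufficiency-counterexample}:
\begin{itemize}
\item $I$ is not sufficient if and only if some pair $(s,s')$ has $s_I=s'_I$ and $\Opt(s)\neq\Opt(s')$.
\item Equivalently, some element $(s,s')\in\mathcal{U}_{\mathcal{D}}$ has $s_i=s'_i$ for every $i\in I$.
\item Equivalently, that element lies in no $T_i$ with $i\in I$, i.e.\ $\mathcal{T}_I$ is not a cover.
\end{itemize}
Negating both ends gives: $I$ is sufficient if and only if $\mathcal{T}_I$ covers $\mathcal{U}_{\mathcal{D}}$. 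Both directions are purely propositional once the incidence relation is identified.

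\textbf{Bijection and optima.} Feasible solutions on the decision side are sufficient coordinate sets. Feasible solutions on the set-cover side are index sets $I$ whose family $\mathcal{T}_I$ covers $\mathcal{U}_{\mathcal{D}}$. The map $I\mapsto I$ is therefore a bijection between the two feasible sets. It is cardinality-preserving by construction, because solutions on the set-cover side are taken as index sets, which sidesteps any coincidence $T_i=T_j$. Hence the minimum size of a sufficient set equals the minimum cover size exactly.

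The main obstacle is the first step: matching the actual mechanized gadget's incidence relation with the separation relation above. If the gadget instead starts from an arbitrary set system $(\mathcal{U},\{T_i\})$ and builds $\mathcal{D}$ from it, I would have to show that the optimizer is designed so that each element $e$ yields a disagreement pair separated exactly by the coordinates whose sets contain $e$. I would also have to show that no spurious disagreement pairs arise. For the second part, I would try to prove that every disagreement pair of the gadget is separated by a superset of the coordinates of some element's witness pair, so that covering the designated witnesses already covers all of $\mathcal{U}_{\mathcal{D}}$.
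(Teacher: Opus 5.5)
Your chain of equivalences is valid, but it proves a fact that holds for \emph{every} decision problem: $I$ is sufficient iff every pair $(s,s')$ with $\Opt(s)\neq\Opt(s')$ is separated by some $i\in I$. The set system $(\mathcal{U}_{\mathcal{D}},\{T_i\})$ is built from $\mathcal{D}$. The theorem, however, concerns the set-cover instance from which the gadget is \emph{built}. One starts from an arbitrary universe $U$ with sets $T_i$ indexed by coordinates, and each $u\in U$ contributes two states $(\mathsf{false},u)$ and $(\mathsf{true},u)$ with different optimal-action sets. That original family is the ``corresponding set family,'' and it must be arbitrary for the approximation-transfer corollary to mean anything. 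A map from decision problems to set systems, as in your Construction paragraph, reduces minimum sufficiency to set cover, which is the opposite of what is needed to transfer set-cover lower bounds. For the same reason, identifying the two incidence structures is not a routine unfolding, because they are different. Besides the witness pairs $((\mathsf{true},u),(\mathsf{false},u))$, your $\mathcal{U}_{\mathcal{D}}$ contains all cross pairs $((\mathsf{true},u),(\mathsf{false},u'))$ with $u\neq u'$ and their reversals, so it is a different and much larger set system than $(U,\{T_i\})$.

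What you defer to your last paragraph is therefore the whole content of the theorem, and it is left unproved. Your fallback target is the right one: every disagreement pair's separators should contain those of some witness pair. The witness pairs give the easy direction, since an uncovered $u$ yields two states that agree on $I$ but have different optimizers. The direction cover $\Rightarrow$ sufficient, however, needs the cross pairs to be separated too, and this does not follow from the witness-pair incidence you assume. Take $U=\{u_1,u_2\}$, a single set $T_1=U$, and coordinate value $x_1(b,u)=b\oplus\mathbf{1}[u=u_2]$ (reading $b$ as a bit). Then coordinate $1$ separates both witness pairs, yet $(\mathsf{true},u_1)$ and $(\mathsf{false},u_2)$ agree on it, so the cover $\{1\}$ is not sufficient. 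The missing idea is the design property that the paper's cover-to-sufficiency step relies on. First, the optimizer is determined by the tag. Second, whenever $u\in T_i$, the value of coordinate $i$ at $(\mathsf{true},u)$ differs from its value at \emph{every} $\mathsf{false}$-tagged state, not just at $(\mathsf{false},u)$; for instance, take $x_i(b,u)=b$ if $u\in T_i$ and $\mathsf{false}$ otherwise. Then two states that agree on $I$ but have different optimizers must be, up to order, $(\mathsf{true},u)$ and $(\mathsf{false},u')$. Any $i\in I$ with $u\in T_i$ separates them, which is a contradiction. Once this is established, your bijection $I\mapsto I$ and the equality of optima go through as you wrote them.
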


\begin{proof}
The gadget is constructed so that each universe element $u$ gives two states, $(\mathsf{false},u)$ and $(\mathsf{true},u)$, with different optimal-action sets. A coordinate $i$ distinguishes this pair exactly when the corresponding set covers $u$.

\textbf{If $I$ is not a cover}, choose an uncovered universe element $u$. Then every coordinate in $I$ takes the same value on $(\mathsf{false},u)$ and $(\mathsf{true},u)$, but their optimal-action sets differ. So $I$ is not sufficient.

\textbf{If $I$ is a cover}, then for every universe element there exists some coordinate in $I$ that separates the two tagged states attached to that element. Hence two states that agree on all coordinates in $I$ cannot disagree in the tag component in a way that changes the optimizer. The optimizer therefore factors through the $I$-projection, so $I$ is sufficient.

This yields an exact bijection between feasible sufficient sets and feasible covers, preserving cardinality. Therefore optimum values coincide on the gadget family.
\end{proof}

\leanmetapending{\LH{DQ25}, \LHrng{DQ}{32}{33}}
\begin{corollary}[Conditional approximation transfer]
Any factor guarantee or instance-dependent ratio guarantee for minimum sufficiency on the gadget family induces the same guarantee for set cover on that family. Consequently, any impossibility result proved for set cover on that family transfers immediately to minimum sufficiency.
\end{corollary}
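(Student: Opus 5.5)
The corollary follows from the exact set-cover equivalence theorem just stated. The plan is to turn that theorem into an explicit, cardinality-preserving correspondence between solutions and then push guarantees and impossibility results through it.

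\textbf{Step 1: fix the translation maps.} On the gadget family, each set-cover instance $(U,\mathcal{F})$ with $\mathcal{F}=\{F_1,\ldots,F_n\}$ determines a decision problem $\mathcal{D}_{(U,\mathcal{F})}$ whose coordinates are indexed by $\{1,\ldots,n\}$. Index sets $I\subseteq\{1,\ldots,n\}$ are identified with subfamilies $\{F_i : i\in I\}$ via the identity on indices. By the equivalence theorem:
\begin{itemize}
\item $I$ is sufficient for $\mathcal{D}_{(U,\mathcal{F})}$ iff $\{F_i\}_{i\in I}$ covers $U$;
\item $|I|$ equals the number of sets chosen;
\item $\mathrm{OPT}_{\mathrm{suff}}=\mathrm{OPT}_{\mathrm{cover}}$.
\end{itemize}
I will also note that the gadget construction is computable in polynomial time, and that the index map is trivially computable both ways. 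This ensures that algorithms, not merely values, transfer.

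\textbf{Step 2: transfer guarantees.} Let $\mathcal{A}$ be any algorithm for minimum sufficiency on the family that returns a sufficient $I$ with $|I|\le r(x)\cdot\mathrm{OPT}_{\mathrm{suff}}(x)$. Here $r$ may be a constant factor or an instance-dependent ratio, provided it is expressed through parameters that the gadget preserves, such as $|U|$ or $n$. Define $\mathcal{B}(U,\mathcal{F})$ by building the gadget, running $\mathcal{A}$, and reading off $\{F_i\}_{i\in I}$. By Step 1, the output is a cover of the same size, and the optimum is identical, so $\mathcal{B}$ has the same ratio. Running time is preserved up to the polynomial overhead of the construction.

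\textbf{Step 3: impossibility by contraposition.} Suppose set cover on the family admits no polynomial-time algorithm with guarantee $r$, whether unconditionally or under some hypothesis. Any such algorithm for minimum sufficiency would yield $\mathcal{B}$ with guarantee $r$ for set cover, which is a contradiction. Hence the impossibility transfers verbatim.

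\textbf{Main obstacle.} The step needing care is making precise what ``the same guarantee'' means when $r$ is instance-dependent. It must be stated in parameters that are invariant under the correspondence. The key points are:
\begin{itemize}
\item the number of coordinates equals the number of sets;
\item the universe size is recoverable from the gadget's state space;
\item the input-size blowup is polynomial, so parameterizing $r$ by input length does not distort it.
\end{itemize}
I would state the transfer for ratios that are functions of $(|U|,n)$ and remark that the gadget preserves these exactly.
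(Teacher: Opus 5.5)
Your proposal is correct and matches the paper's argument: compose the minimum-sufficiency solver with the gadget translation, use the cardinality-preserving bijection between sufficient sets and covers (with equal optima) to carry the ratio over, and contrapose for the impossibility transfer. Making the polynomial-time computability of the gadget explicit is a reasonable addition. The instance-dependent-ratio issue is lighter than you suggest, because any ratio defined on gadget instances can simply be pulled back along the construction to the set-cover instance that generated it.
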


\begin{proof}
Compose the candidate minimum-sufficiency solver with the gadget translation and then use the exact equivalence theorem above. Because feasible sets correspond bijectively and preserve cardinality, both approximation factors and instance-dependent ratios are unchanged.
\end{proof}

The two reductions serve different purposes. The shifted family gives a direct gap obstruction: factor approximation already decides tautology there. The set-cover family gives an exact optimization equivalence: sufficiency is cover there. Together they show that weakening exact minimization to witness checking or approximation does not dissolve the core obstruction.

\subsection{Externalized Relevance and Simplicity Tax}\label{sec:simplicity-tax}

The final implication concerns architectural compression. Compression relocates relevance, it does not remove it. When coordinates relevant for exact decisions are omitted from a central model, the unresolved burden must be handled elsewhere in the system: by local resolution, extra queries, or abstention. The \emph{simplicity tax} is the name for that externalized burden.

\leanmetapending{\LH{DQ9}, \LH{HD23}}
\begin{corollary}[Hardness Conservation]
Fix a model $M$ and decision problem $\mathcal{D}$. Coordinates in $R(\mathcal{D})\setminus A_M$ are still decision-relevant for exact behavior preservation. Hence any exact deployment must handle them somewhere: by enlarging the central model, by performing local resolution, by demanding extra queries, or by abstaining on some cases. In the hard exact regime, Theorem~\ref{thm:integrity-resource} implies that this burden cannot be discharged for free by a polynomial-budget exact certifier.
\end{corollary}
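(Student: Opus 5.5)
The argument has two parts: the relevance claim, which follows from the quotient and relevance machinery of Section~\ref{sec:formal-setup}, and the deployment claim, which follows from Theorem~\ref{thm:integrity-resource}. I read $R(\mathcal{D})$ as the relevant-coordinate set of Definition~\ref{def:relevant} and $A_M$ as the coordinates the central model $M$ retains.

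\textbf{Relevance of the omitted coordinates.} Fix $i\in R(\mathcal{D})\setminus A_M$. By Definition~\ref{def:relevant} there are states $s,s'$ that differ only at coordinate $i$ and satisfy $\Opt(s)\neq\Opt(s')$. Since $i\notin A_M$, they agree on every coordinate in $A_M$, so $s_{A_M}=s'_{A_M}$. By Proposition~\ref{prop:insufficiency-counterexample}, $A_M$ is therefore not sufficient. Equivalently, by Proposition~\ref{prop:sufficiency-char}, $\Opt$ does not factor through $\pi_{A_M}$. By Theorem~\ref{thm:quotient-universal} and Remark~\ref{rem:quotient-canonical}, any map that factors through $\pi_{A_M}$ merges two distinct classes of $Q_{\mathcal{D}}$. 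So the distinction carried by coordinate $i$ is still needed for exact behavior preservation.

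\textbf{Forced handling.} An exact deployment must output $\Opt(s)$ correctly at both $s$ and $s'$. The central model alone cannot separate them, because it sees the same input $s_{A_M}$ at both states. The missing information about coordinate $i$ must therefore come from somewhere else, and there are only four options. It can be added to the model, which enlarges $A_M$. It can be recovered at deployment time, which is local resolution. It can be obtained through further oracle access to $\Opt$ or $U$, which is extra queries. Otherwise the system declines to answer on one of the two states, which is abstention. I would make this exhaustiveness precise by modelling a deployment as a function of $s_{A_M}$ plus auxiliary information, and noting that without auxiliary information it cannot be correct on both $s$ and $s'$.

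\textbf{Hard regime.} For the last sentence I would argue by contradiction. Suppose the burden could be discharged by a polynomial-budget exact certifier that never abstains. Then deciding which coordinates need handling, or equivalently checking whether $A_M$ is sufficient, would be SUFFICIENCY-CHECK, which is coNP-complete by Theorem~\ref{thm:sufficiency-conp}. Such a certifier would be integrity-preserving and competent on that regime, and Theorem~\ref{thm:integrity-resource} excludes this under $\Pclass\neq\coNP$.

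\textbf{Main obstacle.} The difficulty is the exhaustiveness of the four handling options, because the corollary speaks about deployments informally. I would fix the deployment model explicitly so that it becomes a simple indistinguishability statement, with the pair $(s,s')$ as the distinguishing witness.
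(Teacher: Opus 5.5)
Your proposal is correct and follows the paper's route. A relevance witness pair for $i\in R(\mathcal{D})\setminus A_M$ rules out exact behavior factoring through $A_M$; the paper phrases this as a contradiction with $i$'s relevance, while you exhibit the pair directly. The omitted distinction must therefore be supplied outside the central model, and the hard-regime clause is an appeal to Theorem~\ref{thm:integrity-resource} via the coNP-completeness of \textsc{Sufficiency-Check}, with your explicit deployment model (a function of $s_{A_M}$ plus auxiliary information) making the paper's informal ``handled somewhere'' step more precise.
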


\begin{proof}
Take any coordinate $i\in R(\mathcal{D})\setminus A_M$. Because $i$ is relevant, there exist witness states whose optimal-action sets differ in a way that cannot be ignored in any exact representation. If a deployment neither models $i$ centrally nor resolves its effect elsewhere, then its exact behavior would factor through the smaller interface $A_M$, making every coordinate outside $A_M$ irrelevant. That contradicts $i\in R(\mathcal{D})$.

So omitted relevant coordinates must be handled somewhere outside the central interface. In the hard exact regime, Theorem~\ref{thm:integrity-resource} shows that polynomial-budget exact competence is unavailable under $\Pclass\neq\coNP$. Therefore this external burden cannot, in general, be eliminated for free by a polynomial-budget exact certifier.
\end{proof}

The simplicity tax operationalizes the certification trilemma (Corollary~7.3) for architectural compression. When exact relevance cannot be certified, systems cannot simply discard coordinates without consequence: the unresolved burden is conserved across the interface and must be paid elsewhere. This principle connects the complexity map to practical deployment constraints: the cost of exact certification is either paid at the center through computational investment, or distributed to the sites as externalized handling.

\subsection{Artifact Scope and the Finite-to-Asymptotic Boundary}\label{sec:artifact-scope}

The Lean 4 artifact verifies the finite combinatorial core used throughout Sections 3--5: reduction correctness with size bounds, explicit-state deciders and step-counted searches, bridge lemmas, and witness/checking schemas used by stochastic upper-bound arguments. The artifact checks the finite mechanisms—reduction constructions, witness structures, explicit-state deciders—that underpin the paper's asymptotic complexity claims. The asymptotic class placements (coNP-completeness, PP-hardness, PSPACE-completeness, NP\^{}PP membership) are argued in the text using standard complexity-theoretic conventions, as is typical for the field.

The paper-level layer is the asymptotic lift: oracle-machine formulations, class memberships, and ETH-transfer arguments. In particular, the artifact does not formalize oracle Turing machines or polynomial-space machine semantics; those are argued in the manuscript using standard complexity-theoretic conventions.

So the classification is intentionally hybrid: finite gadgets and decision procedures are mechanically checked, while coNP/PP/PSPACE/$\mathsf{NP}^{\mathsf{PP}}$ placements and ETH consequences are proved in text.

\bigskip
\noindent\textbf{Summary of what is verified vs.\ what is assumed.}

\begin{table}[h]
\centering
\small
\setlength{\tabcolsep}{4pt}
\renewcommand{\arraystretch}{1.12}
\begin{tabularx}{\linewidth}{@{}>{\raggedright\arraybackslash}m{0.25\linewidth}>{\raggedright\arraybackslash}m{0.35\linewidth}>{\raggedright\arraybackslash}X@{}}
\toprule
\textbf{Claim} & \textbf{Verified in Lean} & \textbf{Paper-level} \\
\midrule
Reduction correctness & Polynomial-size output, truth-value preservation & --- \\
\midrule
Explicit-state decidability & Exact algorithm with step bound & --- \\
\midrule
Witness/checking pattern & Finite existential/universal structure & Oracle-machine encoding \\
\midrule
coNP-completeness & TAUTOLOGY $\to$ empty-set sufficiency (finite) & Asymptotic class membership \\
\midrule
$\mathsf{PP}$-hardness & MAJSAT $\to$ stochastic decisiveness (finite) & Majority-vote characterization \\
\midrule
$\mathsf{PSPACE}$-completeness & TQBF $\to$ sequential sufficiency (finite) & Polynomial-space TM theory \\
\midrule
$\mathsf{NP}^{\mathsf{PP}}$ upper bound & Witness/checking schema & Oracle machine construction \\
\midrule
ETH lower bound & Reduction to empty-set sufficiency & Asymptotic reduction chain \\
\bottomrule
\end{tabularx}
\caption{Summary of what is verified in Lean vs.\ what is argued at paper-level.}
\label{tab:artifact-scope}
\end{table}

The finite core is solid; the asymptotic lifts are standard but not mechanized.

\section{Related Work}\label{sec:related}

\subsection{Formalized Complexity Theory and Mechanized Reductions}

Machine verification of complexity-theoretic arguments remains substantially less mature than formal verification in algebra, analysis, or standard algorithmics. Forster et al.~\cite{forster2019verified} developed certified machine models and computability infrastructure in Coq, and Kunze et al.~\cite{kunze2019formal} formalized major proof-theoretic components in Coq as well. In Isabelle/HOL, much of the mature work has centered on verified algorithms and resource analysis for concrete procedures rather than on families of hardness reductions~\cite{nipkow2002isabelle,haslbeck2021verified}. Lean 4 and Mathlib provide increasingly capable foundations for mechanized finite mathematics and computation~\cite{mathlib2020,moura2021lean4}, but reusable reduction suites for coNP/$\Sigma_2^P$/PP/PSPACE-style arguments remain relatively rare.

The present artifact is intended as a problem-specific certification layer within that broader program. It does not claim to settle formal complexity theory in full generality; instead, it internalizes the reduction-correctness lemmas, size-bounded hardness packages, exact finite deciders, counted-search procedures, tractability wrappers, explicit-state upper-bound interfaces, and the finite witness/checking machinery now used for the stochastic $\textsf{NP}^{\textsf{PP}}$ upper-bound story. In that sense the contribution is closest to a certified reduction-and-decision core for one theorem family rather than to a general-purpose complexity library.

\subsection{Rough Sets, Reducts, and Attribute Reduction}
 
The closest classical neighbor is rough-set theory and its large literature on reducts and attribute reduction~\cite{pawlak1982rough,jensen2004semantics,jensen2008rough}. That literature asks when a smaller attribute set preserves discernibility or decision-table structure, and it has developed both exact and heuristic methods for reduct computation.

\subsection{Backdoor Tractability in CSPs}
A closely related line of work studies backdoor tractability for constraint satisfaction problems~\cite{williams2003backdoors, bessiere2013detecting}. In that framework, a \emph{backdoor} is a minimal variable set whose fixation collapses a CSP to a tractable class (e.g., one admitting a majority polymorphism). Bessiere et al.~\cite{bessiere2013detecting} showed that testing membership in tractable classes characterized by majority or conservative Malt'sev polymorphisms is polynomial-time, but finding a minimal backdoor to such a class is W[2]-hard when the tractable subset is unknown and fixed-parameter tractable only when that subset is given as input. This directly parallels our structural results: coordinate sufficiency in decision problems and backdoor tractability in CSPs both formalize the same intuition, that discovering what matters is fundamentally harder than exploiting a known structure. Our bounded-treewidth reduction (Theorem~\ref{thm:treewidth}) explicitly connects to their CSP algorithms and backdoor methodology. Subsequent work in parameterized complexity has rigorously bounded the tractability of discovering these minimal structural sets across varying graph widths and backdoor types \cite{ganian2017discovering}.

The complexity of computing minimal reducts has been studied in the rough-set literature. Finding a minimal attribute set that preserves decision-distinctions is known to be NP-hard in general~\cite{slowinski1995decision}, and more refined analysis places certain reduct variants in $\Sigma_2^P$~\cite{multiplier}. These results establish that the static regime questions studied here, whether a coordinate set preserves decision distinctions, and whether a minimal such set exists, fall within a known complexity landscape for attribute reduction.

The key structural difference is this: general decision tables can admit multiple incomparable minimal reducts, necessitating a combinatorial search through the reduct lattice. Under the exact optimal-action preservation predicate studied here, Proposition~\ref{prop:minimal-relevant-equiv} proves there is exactly one minimal sufficient set: the relevant-coordinate set $R(\mathcal{D})$ itself. This uniqueness follows from the optimizer quotient structure: states in the same quotient class share the same optimal-action set, so any sufficient set must separate all quotient classes, and the relevant-coordinate set is precisely the minimal set achieving this separation.

The complexity consequence is significant. In general rough-set reduct computation, multiple minimal reducts can exist, placing minimal-reduct search in $\Sigma_2^P$. Here, uniqueness collapses the search problem: finding a minimal sufficient set reduces to checking which coordinates are relevant (coNP-complete). The $\Sigma_2^P$ search layer vanishes because there is no need to search among incomparable candidates; the single minimal set is uniquely determined by relevance. This structural collapse—from $\Sigma_2^P$ search to coNP relevance containment—is driven by optimizer-quotient uniqueness and is, to our knowledge, a novel classification not captured by general rough-set bounds.

Both settings ask which coordinates can be removed without losing decision distinctions.

\leanmetapending{\LHrng{DP}{1}{2}}
\begin{proposition}[Static Sufficiency as Reduct Preservation for the Induced Decision Table]\label{prop:static-rough-set-bridge}
Given a finite decision problem $\mathcal{D}=(A,X_1,\ldots,X_n,U)$, form the induced decision table whose condition attributes are the coordinates $X_i$ and whose decision label at state $s$ is the set-valued class label $\Opt(s) \subseteq A$. Then a coordinate set $I$ is sufficient in the sense of this paper if and only if the projection to $I$ preserves the decision label of the induced table. Consequently, the minimal sufficient sets of $\mathcal{D}$ are exactly the reducts of this induced decision table.
\end{proposition}

\begin{proof}
By definition, $I$ is sufficient exactly when any two states agreeing on $I$ have the same optimal-action set. In the induced decision table, the decision label of state $s$ is precisely $\Opt(s)$, so the same condition says that projection to $I$ preserves the decision label. Minimality on the present side is therefore exactly reduct minimality for the induced table.
\end{proof}

Rough-set reducts are formulated relative to indiscernibility relations or decision tables, whereas the object here is the optimizer map of a decision problem and the induced exact preservation of optimal-action sets. Proposition~\ref{prop:static-rough-set-bridge} shows that the static regime can be recast as reduct preservation for the induced decision table. What the present paper adds is: (i) the collapse theorem showing minimum sufficiency collapses to relevance containment, bypassing the general NP-hardness barrier for exact minimization; and (ii) the systematic extension to stochastic and sequential regimes, which takes the reduct question beyond the classical rough-set setting. The stochastic and sequential layers introduce conditional-comparison and temporal-structure phenomena that are outside the usual rough-set formulation.

The same generosity is due to the broader feature-selection and subset-selection literature in machine learning and combinatorial optimization~\cite{blum1997selection,amaldi1998complexity}. Those works study sparse predictive representations, informative feature subsets, and the computational cost of choosing them. Our setting differs in targeting exact preservation of the optimal-action correspondence rather than prediction loss, classifier complexity, or approximate empirical utility. Still, the shared concern is clear: identifying what information matters and what can be safely omitted.

\subsection{Decision-Theoretic Sufficiency, Informativeness, and Value of Information}

Classical decision theory and informativeness ask when one information structure is at least as useful as another~\cite{blackwell1953equivalent,savage1954foundations,raiffa1961applied,howard1966information}. Our setting is the computational certification layer of that question: exact coordinate projection preserving optimal-action correspondence.

\subsection{Abstraction, Bisimulation, and Homomorphisms in Planning and Reinforcement Learning}

The stochastic and sequential parts of the paper sit near a second major literature: exact abstraction in Markov decision processes, POMDPs, bisimulation-based aggregation, and MDP homomorphisms~\cite{papadimitriou1987mdp,littman1998probplanning,mundhenk2000mdp,givan2003equivalence,ravindran2004algebraic}. This body of work studies when state spaces can be aggregated while preserving value or policy structure, and it has established many of the representation-sensitive complexity phenomena that make richer stochastic and sequential models computationally difficult.

That literature is structurally close to the present paper and should be credited as such. The main difference is that our predicate is coordinate sufficiency for preserving optimal-action sets under an explicit coordinate-hiding operation. We do not study arbitrary abstract state maps, approximate value-function guarantees, or the quantitative bisimulation metrics pioneered by Ferns et al. \cite{ferns2004metrics} to measure behavioral similarity. While recent work has successfully leveraged such metrics to learn robust, task-invariant representations in high-dimensional deep reinforcement learning (e.g., Zhang et al. \cite{zhang2021learning}), our focus remains strictly on the combinatorial complexity of exact certification. Instead of optimizing a continuous representation space to discard task-irrelevant distractors, we ask whether one can exactly certify that no omitted coordinate changes the optimal-action correspondence. The contribution here is therefore not abstraction theory in general, but a unified exact-certification account in which static, stochastic, and sequential regimes are analyzed as variants of one decision-relevance question.

\subsection{Explanatory and Structural Reasoning Links}

Anchor-style explanations, abductive explanations, and exact reason/justification frameworks all study small structures sufficient for a decision outcome~\cite{ribeiro2018anchors,ignatiev2019abduction,marques2022delivering,darwiche2023reasons}. Our contribution is not a new explanation semantics but a regime-sensitive complexity classification for exact coordinate-preservation predicates.

The same applies to causality/responsibility traditions~\cite{pearl2009causality,spirtes2000causation,halpern2001explanations,chockler2004responsibility}: they motivate relevance notions, while this paper studies the algorithmic certification cost of exact preservation under coordinate hiding.

\subsection{Oracle, Query, and Access-Based Lower Bounds}

Query-access lower bounds provide another nearby technique family~\cite{dobzinski2012query}. Our witness-checking duality and access-obstruction results belong to that tradition in spirit. The difference is that the lower bounds are developed around the same fixed sufficiency predicate used throughout the paper, which allows direct comparison between forward evaluation, certification, exact minimization, and restricted-access models without changing the underlying decision relation.

\subsection{Scope and Novelty}

The paper is intentionally synthetic. Rough sets and reducts study decision-preserving attribute reduction; feature selection studies informative subsets; statistical decision theory studies informativeness and sufficiency; abstraction and bisimulation literatures study exact aggregation of stochastic and sequential decision processes; explanation methods study local decision-preserving cores; formalized complexity work studies certified reductions and machine-checked decision procedures. The contribution here is to treat exact relevance certification itself as the common object and to develop one coherent complexity-theoretic account around it.

What is distinctive in the present paper is not an isolated static or sequential hardness theorem taken on its own. The static regime can be recast as reduct preservation for the induced decision table, and the sequential row belongs near existing exact abstraction phenomena in richer control models. The contribution here is to formalize one exact decision-preservation predicate in optimizer language, track it coherently across static, stochastic, and sequential regimes, expose the preservation/decisiveness split in the stochastic case, identify the tractable and intractable boundaries of exact certification, and support the resulting finite combinatorial core with mechanized verification. The paper's novelty is therefore best understood as a unified exact-certification theory rather than as a single isolated theorem.

\section{Conclusion}

This paper develops a regime-sensitive theory of exact relevance certification for coordinate-structured decision problems. Its aim is to characterize what information a decision actually depends on, how that information is represented by the optimizer quotient, and when exact certification of that dependence is tractable, split, or fundamentally limited across static, stochastic, and sequential regimes.

\subsection*{Main Results}

The core classification is Theorem~\ref{thm:regime-main}, with cross-encoding separation in Theorem~\ref{thm:dichotomy}, reliability limits in Theorem~\ref{thm:integrity-resource}, and structured tractability in Theorem~\ref{thm:tractable}.
Together, these results show that exact relevance certification is computationally distinct from pointwise optimizer evaluation and depends sharply on regime (Theorem~\ref{thm:regime-main}).
The optimizer quotient remains the governing structure (Theorem~\ref{thm:quotient-universal}), and the remaining conjecture and open problem mark the boundary left unresolved.

\subsection*{Implications}

The implications are direct hardness transfers. Configuration simplification is an instance of sufficiency checking, so exact behavior-preserving minimization has no general-purpose polynomial-time worst-case solution unless $\Pclass=\coNP$. The same issue appears in real pipeline orchestration settings, where a nested configuration object controls well selection, output layout, and downstream materialization. There too the practical question is whether one can hide part of the configuration interface without changing the induced compilation or execution decisions. Our results show that this is an exact certification problem whose worst-case difficulty depends on whether one is in the static, stochastic, or sequential regime. Reliable exact certification is likewise regime-limited: under $\Pclass\neq\coNP$, the trilemma of exact certification rules out any solver that is simultaneously sound, complete on the full hard regime, and polynomial-budgeted. And once the relevant-coordinate set $R(\mathcal{D})$ is fixed, any compressed interface merely partitions that relevance into centrally handled and externalized parts. Compression therefore relocates exact relevance rather than erasing it; under the per-site cost model, the unresolved part grows linearly with deployment scale.

This is also the practical meaning of the simplicity tax. In the hard exact regime, a cheap simplification procedure cannot be assumed to have removed decision-relevant structure merely because it has compressed the visible interface. Any relevance omitted from the certified core must still be paid for somewhere else in the system: by local resolution, extra queries, abstention, or weaker guarantees. Exact simplification is therefore costly, but inexact simplification is not free; it externalizes the unresolved burden.

\subsection*{Mechanization}

The proof artifact mechanically checks the main reduction constructions, hardness packages, finite deciders, search procedures, tractability statements, step-counting wrappers, the stochastic sufficiency bridge package, the explicit-state decision procedures for the preservation family, the full-support inheritance results, the new support-sensitive obstruction and bridge lemmas for preservation, the decisiveness upper bounds, and the finite witness/checking schemata for the stochastic existential classification \leanmeta{\LHrng{DC}{91}{99}, \LHrng{OU}{8}{12}, \LHrng{EH}{2}{10}.} Full oracle-machine formalization in Lean remains outside the current state of formalized complexity theory; the paper proves the oracle-class memberships in the text following standard conventions, while the artifact provides independent verification of the finite combinatorial core. This places the mechanization in the emerging category of problem-specific certified reduction infrastructure.

\subsection*{Outlook}

Two immediate directions remain. First, the reduction infrastructure can be integrated more tightly with general-purpose formalized complexity libraries. Second, the artifact boundary can be made even cleaner by continuing to package the existing finite-decision results into more uniform summary interfaces without changing the paper's complexity claims. On the complexity side, the remaining gap is now sharply localized: stochastic decisiveness already has succinct hardness and a verified witness/checking core for the oracle-class memberships proved in the paper text, but a fully standard oracle-machine formalization of the resulting oracle-class placement and of the $\textsf{NP}^{\textsf{PP}}$ boundary for the anchor and minimum decisiveness queries remains outside the current repository; on the preservation side, Conjecture~\ref{con:succinct-soundness} and Open Problem~\ref{prob:minimum-succinct} sharpen the remaining open terrain to the full general-distribution and succinct-encoding complexity of the minimum and anchor preservation variants. The mechanized artifact already exposes the finite combinatorial core and support-sensitive lemmas that isolate this obstruction; see the supplementary material for representative Lean handles when framing the open problem. \leanmeta{\LH{OU12}, \LH{DC96}, \LH{DC98}}

\subsection*{Acknowledgments}

Generative AI tools (including Codex, Claude Code, Augment, Kilo, and OpenCode) were used throughout this manuscript, across all sections (Abstract, Introduction, theoretical development, proof sketches, applications, conclusion, and appendix) and across all stages from initial drafting to final revision. The tools were used for boilerplate generation, prose and notation refinement, \LaTeX{}/structure cleanup, translation of informal proof ideas into candidate formal artifacts (Lean/\LaTeX{}), and repeated adversarial critique passes to identify blind spots and clarity gaps.

The author retained full intellectual and editorial control, including problem selection, theorem statements, assumptions, novelty framing, acceptance criteria, and final inclusion/exclusion decisions. No technical claim was accepted solely from AI output. Formal claims reported as machine-verified were admitted only after Lean verification (no \texttt{sorry} in cited modules) and direct author review; Lean was used as an integrity gate for responsible AI-assisted research. The author is solely responsible for all statements, citations, and conclusions.

\appendix

\section*{Acknowledgments}
This work was supported by the Natural Sciences and Engineering Research Council of Canada (NSERC).

\appendix

\section{Artifact Index and Supplementary Tables}\label{app:lean}

The full Lean-handle ledger and claim-to-handle mapping are provided in the supplementary PDF (and archived artifact at \url{https://doi.org/10.5281/zenodo.19057595}). We omit those long tables from the main manuscript body.

\subsection*{Mechanized Witness Schemas and Oracle-Class Arguments}

The paper uses standard oracle-class language for several stochastic upper bounds, especially the $\textsf{coNP}^{\textsf{PP}}$-style complement packaging for decisiveness and the $\textsf{NP}^{\textsf{PP}}$ upper bounds for the anchor and minimum decisiveness queries. The corresponding oracle-class memberships are proved in the paper text, following standard complexity-theoretic proof conventions. The proof artifact complements those proofs by mechanizing the finite witness/checking schemata and reduction cores that the oracle arguments require.

Concretely, the mechanized layer certifies ingredients of the following form: a bad fiber witnesses failure of decisiveness; an existentially guessed coordinate set or anchor witness reduces the anchor/minimum queries to the corresponding decisiveness-style verifier; and the bounded explicit-state procedures agree with the abstract predicates they are intended to decide. The paper then translates these certified witness/checking packages into the stated oracle-class membership arguments by the usual guess-and-query reasoning in complexity theory. Thus the artifact provides independent verification of the finite combinatorial core, while the oracle-class interpretation is established in the manuscript.

\section{Applied Reductions and Examples}\label{app:applications}

This appendix collects applied translations that instantiate the main predicates from the core theory.

\subsection{Configuration Case Study}

Consider a simplified service configuration with three parameters: cache mode $p_1\in\{\mathrm{off},\mathrm{on}\}$, retry policy $p_2\in\{1,3\}$, and replica count $p_3\in\{1,2\}$. Let the observable behavior set be $B=\{\mathrm{latency\text{-}ok},\mathrm{write\text{-}safe},\mathrm{cost\text{-}ok}\}$. Suppose enabling the cache affects only latency, increasing retries affects only write safety, and replica count affects write safety and cost. By the sufficiency characterization (Proposition~\ref{prop:sufficiency-char}), the simplification question reduces to an exact sufficiency question on the induced decision problem: which parameter subsets preserve the optimal-behavior set? In this toy instance, if the target simplification must preserve write safety and cost but is indifferent to latency, the exact behavior-preserving core is $\{p_2,p_3\}$, while $p_1$ is irrelevant to that target behavior set. \leanmeta{\LH{CR1}, \LH{CT12}.}

\subsection{Toy POMDP Translation}

\begin{definition}[One-step POMDP]\label{def:one-step-pomdp}
A one-step POMDP is a tuple $(S,A,O,p,o,r)$ where $S,A,O$ are finite sets of states, actions, and observations, $p\in\Delta(S)$ is a prior distribution, $o:S\to O$ is the observation map, and $r:A\times S\to\mathbb{R}$ is the immediate reward function.
\end{definition}

Write $\Opt_{\mathrm{full}}(s):=\arg\max_{a\in A} r(a,s)$ for the full-information optimal-action set and, for an observation value $o\in O$, set
\[
\Opt_{\mathrm{obs}}(o):=\arg\max_{a\in A} \mathbb{E}_{S\sim p}[r(a,S)\mid o(S)=o].
\]
Let $g:O\to O'$ be any coarsening of observations and write $\phi:=g\circ o:S\to O'$.

\leanmetapending{\LH{EX1}}
\begin{proposition}[Reduction to stochastic preservation]\label{prop:pomdp-to-preservation}
Let $(S,A,O,p,o,r)$ be a one-step POMDP and let $\phi:S\to O'$ be as above. Define the decision problem $\mathcal{D}=(A,S,U)$ with $U(a,s):=r(a,s)$ and state distribution $p$. Then $\phi$ is stochastically preserving for $\mathcal{D}$ in the sense of Section~\ref{sec:stochastic} if and only if
\[
\forall s\in S:\quad \Opt_{\mathrm{full}}(s)=\Opt_{\mathrm{obs}}(\phi(s)).
\]
\end{proposition}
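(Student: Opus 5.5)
The plan is to unfold definitions on both sides and check that they describe the same statewise predicate. First I fix what ``stochastically preserving for $\phi$'' means. Definition~\ref{def:stochastic-sufficient} is written for coordinate projections $s\mapsto s_I$. For a general abstraction $\phi$, I would use its direct analogue: replace the fiber $\{s: s_I=\alpha\}$ by the fiber $\{s:\phi(s)=t\}$. Then define $\Opt^{\mathrm{stoch}}_\phi(t):=\arg\max_{a}\mathbb{E}_{S\sim p}[U(a,S)\mid \phi(S)=t]$. Preservation is the requirement $\Opt^{\mathrm{stoch}}_\phi(\phi(s))=\Opt(s)$ for every $s\in S$.

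If one insists on the literal coordinate form, I would equip $S$ with the two-coordinate product structure $S\cong \phi(S)\times S$, with state $(\phi(s),s)$, and take $I$ to be the first coordinate. The $I$-fibers are then exactly the $\phi$-fibers, so nothing changes.

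The core steps are two identifications.

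First, since $U=r$, the full-information optimizer of $\mathcal{D}$ is $\Opt(s)=\arg\max_a U(a,s)=\arg\max_a r(a,s)=\Opt_{\mathrm{full}}(s)$.

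Second, for each attained value $t=\phi(s)$, the conditional expectations agree termwise: $\mathbb{E}_{S\sim p}[U(a,S)\mid\phi(S)=t]=\mathbb{E}_{S\sim p}[r(a,S)\mid\phi(S)=t]$. Hence $\Opt^{\mathrm{stoch}}_\phi(t)$ coincides with the observation-conditioned optimizer taken with respect to the coarsened observation $\phi=g\circ o$. I read $\Opt_{\mathrm{obs}}(\phi(s))$ in the statement in exactly this sense, namely conditioning on the event $\phi(S)=\phi(s)$.

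Substituting both identities into the preservation condition gives, for each $s$, the equivalence $\Opt^{\mathrm{stoch}}_\phi(\phi(s))=\Opt(s)\iff \Opt_{\mathrm{obs}}(\phi(s))=\Opt_{\mathrm{full}}(s)$. Quantifying over all $s\in S$ yields both directions of the claimed equivalence at once.

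The main obstacle is bookkeeping around admissibility rather than any real mathematics. Definition~\ref{def:stochastic-sufficient} only defines $\Opt^{\mathrm{stoch}}_I(\alpha)$ on $\alpha\in\mathrm{supp}(S_I)$, yet quantifies over all states. A state $s$ whose $\phi$-fiber has zero $p$-mass therefore needs the same convention on both sides. I would handle this by adopting one convention for conditioning on null events and using it uniformly in both $\Opt^{\mathrm{stoch}}_\phi$ and $\Opt_{\mathrm{obs}}$; in the mechanized version this is presumably a fixed default for the fiber optimizer. Since both sides are then defined by the identical formula, the equivalence holds regardless of which convention is chosen.

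A second notational point is that $\Opt_{\mathrm{obs}}$ was introduced on the original observation space $O$, while the statement applies it to $\phi(s)\in O'$. I would make explicit that it is evaluated via the conditioning event $g(o(S))=\phi(s)$, which is the only reading under which the statement typechecks.
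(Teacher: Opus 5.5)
Your main argument is the paper's own: $U=r$ makes the full-information optimizer of $\mathcal{D}$ equal to $\Opt_{\mathrm{full}}$, and the $\phi$-fiber conditional optimizer is by definition $\Opt_{\mathrm{obs}}(\phi(s))$, so the two statewise conditions coincide. Your remarks on null fibers, and on evaluating $\Opt_{\mathrm{obs}}$ at points of $O'$ through the event $g(o(S))=\phi(s)$, just make explicit what the paper's two-line proof leaves implicit.

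You should drop or repair the ``literal coordinate form'' aside. The map $s\mapsto(\phi(s),s)$ is a bijection onto the graph of $\phi$, not onto $\phi(S)\times S$. In fact no bijection $S\to\phi(S)\times S$ exists unless $\phi$ is constant.

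If you actually pass to the product state space, the quantifier $\forall s$ in Definition~\ref{def:stochastic-sufficient} also ranges over the off-graph pairs $(t,s)$ with $t\neq\phi(s)$. Putting zero mass on them leaves the fiber averages unchanged, but they are still checked. With the natural extension $U(a,(t,s))=r(a,s)$ they would impose $\Opt_{\mathrm{obs}}(t)=\Opt_{\mathrm{full}}(s)$ for unrelated $t$ and $s$, which is a strictly stronger condition. This is precisely the zero-probability-witness effect the paper highlights near Conjecture~4.1.

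So ``nothing changes'' is false as stated. The encoding only works if off-graph states are given utilities whose argmax is $\Opt_{\mathrm{obs}}(t)$, e.g.\ $U(a,(t,s)):=\mathbb{E}[r(a,S)\mid\phi(S)=t]$. Your main argument never uses this aside, so the proof stands without it.
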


\begin{proof}
By construction $U(a,s)=r(a,s)$ so the full-information optimizer in $\mathcal{D}$ equals $\Opt_{\mathrm{full}}(s)$. The coarse-induced optimizer is, by definition, the set maximizing conditional expectation given $\phi$, which is exactly $\Opt_{\mathrm{obs}}(\phi(s))$.
\end{proof}

\paragraph{Worked instance.}
Let $S=\{s_1,s_2\}$, $A=\{a,b\}$, $p(s_1)=p(s_2)=1/2$, and
\[
r(a,s_1)=2,\quad r(b,s_1)=1,\qquad r(a,s_2)=0,\quad r(b,s_2)=3.
\]
If observations are fully coarsened to one symbol, then
\[
\mathbb{E}[r(a,S)\mid\phi]=1,\qquad \mathbb{E}[r(b,S)\mid\phi]=2,
\]
so the coarse optimizer is $\{b\}$ while $\Opt_{\mathrm{full}}(s_1)=\{a\}$. Hence preservation fails. \leanmeta{\LH{EX3}.}

\subsection{Hyperparameter-Redundancy Translation}

Fix finite environment set $E$ and finite domains $X_{\alpha},X_{\gamma},X_{\epsilon}$. Let
\[H:=X_{\alpha}\times X_{\gamma}\times X_{\epsilon},\]
and for each $e\in E$, $f_e:H\to\mathbb{R}$ be expected return. Define
\[\mathcal{D}_{\mathrm{hp}}=(A,H,U),\qquad A:=E,\; U(e,h):=f_e(h).\]
Let $\pi:H\to H':=X_{\alpha}\times X_{\epsilon}$ drop coordinate $\gamma$.

\leanmetapending{\LH{EX2}}
\begin{proposition}[Reduction to a static preservation check]\label{prop:hp-to-static}
With notation as above, ``$\gamma$ is redundant for tuning'' is equivalent to static preservation for $\mathcal{D}_{\mathrm{hp}}$ under projection $\pi$. Concretely, $\gamma$ is redundant iff
\[
\forall e\in E\,\forall h\in H:\quad
h\in\Opt_e \Longleftrightarrow \exists h_0\in\Opt_e\text{ with }\pi(h_0)=\pi(h),
\]
where $\Opt_e:=\arg\max_{h\in H} f_e(h)$.
\end{proposition}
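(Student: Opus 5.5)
The plan is to make the informal phrase ``$\gamma$ is redundant for tuning'' precise as a sufficiency statement on the hyperparameter space $H$, and then show that this sufficiency statement is equivalent to the displayed fiber-saturation condition. The one subtlety is that the displayed condition uses $\Opt_e=\arg\max_{h\in H} f_e(h)$, which is an argmax over \emph{states} of $\mathcal{D}_{\mathrm{hp}}$, not over its actions $E$. So I will not work with the optimizer map $\Opt(h)\subseteq E$ of $\mathcal{D}_{\mathrm{hp}}$ directly. Instead I will work with the transposed map
\[
M:H\to\mathcal{P}(E),\qquad M(h):=\{e\in E: h\in\Opt_e\},
\]
which records, for each hyperparameter setting, the environments in which that setting is tuning-optimal. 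I take ``$\gamma$ is redundant'' to mean that $M$ is insensitive to $\gamma$: $\pi(h)=\pi(h')\Rightarrow M(h)=M(h')$. This is exactly the static sufficiency predicate of Definition~\ref{def:sufficient} for the coordinate set $\{\alpha,\epsilon\}$, with $M$ playing the role of the decision map.

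By the same argument as Proposition~\ref{prop:sufficiency-char}, this holds iff $M$ factors through $\pi$. The rest of the proof is elementary set manipulation, carried out environment by environment. For a fixed $e$, the statement $\forall h,h'\,(\pi(h)=\pi(h')\Rightarrow (h\in\Opt_e\iff h'\in\Opt_e))$ says that $\Opt_e$ is a union of $\pi$-fibers. Quantifying over all $e$ is equivalent to the constancy of $M$ on fibers, since $M(h)=M(h')$ just says $h$ and $h'$ are optimal for the same environments.

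I would then prove the two directions of the displayed biconditional for each fixed $e$.
\begin{itemize}
\item The direction $h\in\Opt_e\Rightarrow\exists h_0\in\Opt_e$ with $\pi(h_0)=\pi(h)$ always holds: take $h_0:=h$.
\item \emph{Redundancy implies the converse direction.} Suppose $h_0\in\Opt_e$ and $\pi(h_0)=\pi(h)$. Fiber-constancy of $M$ gives $e\in M(h_0)=M(h)$, so $h\in\Opt_e$.
\item \emph{The displayed condition implies redundancy.} Given $\pi(h)=\pi(h')$ and $e\in M(h)$, apply the right-to-left direction of the condition to $h'$ with witness $h_0:=h$. This gives $h'\in\Opt_e$. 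By symmetry $M(h)=M(h')$.
\end{itemize}

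I expect the main obstacle to be definitional rather than computational. The step that needs care is justifying that ``static preservation for $\mathcal{D}_{\mathrm{hp}}$ under $\pi$'' should be read through the transposed map $M$. If it is instead read literally through the optimizer $\Opt(h)=\arg\max_{e\in E}f_e(h)$, it compares environments at a fixed $h$, which is a genuinely different predicate. So I would first state the convention explicitly: tuning asks which $h$ are optimal per environment. Under that convention, both sides are statements about the level sets of the membership relation $\{(e,h): h\in\Opt_e\}$, and the equivalence above is immediate.
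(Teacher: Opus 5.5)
Your argument is correct, and its core is the equivalence the paper's one-line proof invokes. The paper takes redundancy to mean that full-space and projected maximizers correspond through $\pi$, and asserts that this ``rewrites'' as fiberwise equality of induced optimal-action sets. You go the other way: you define redundancy as $\pi$-invariance of $M(h)=\{e\in E: h\in\Opt_e\}$ and prove, environment by environment, that this says every $\Opt_e$ is a union of $\pi$-fibers. Your three bullets are all sound.

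What your version adds is precision about \emph{which} optimal-action sets are compared, which the paper leaves implicit. This matters. If $\mathcal{D}_{\mathrm{hp}}$ is taken literally (actions $E$, so $\Opt(h)=\arg\max_{e\in E}f_e(h)$), the rewriting fails in both directions:
\begin{itemize}
\item If $|E|=1$, every coordinate set is sufficient for $\mathcal{D}_{\mathrm{hp}}$. Yet the displayed condition fails as soon as $|X_\gamma|\ge 2$ and $f_e$ has a unique maximizer.
\item Conversely, the displayed condition can hold while the best environment at $h$ changes with $\gamma$. This happens, for example, when non-optimal values of one $f_e$ vary with $\gamma$ and cross a constant competitor.
\end{itemize}
So your caution is warranted: the first sentence of the proposition holds only under your transposed reading, and your proof states that reading where the paper's does not.

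One small polish: $M(h)$ may be empty, so it is not literally the optimizer of a decision problem with a nonempty action set. If you want it to be, adjoin a dummy action $\bot$ and set $U'(e,h)=\mathbf{1}[h\in\Opt_e]$, $U'(\bot,h)=\tfrac12$. The resulting optimizer is $M(h)$ when that set is nonempty and $\{\bot\}$ otherwise, and its $\pi$-invariance is equivalent to that of $M$.
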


\begin{proof}
By definition, redundancy means full-space maximizers and projected maximizers correspond through $\pi$. Rewriting this correspondence as equality of induced optimal-action sets gives exactly the static preservation predicate for the derived decision problem.
\end{proof}

\bibliographystyle{IEEEtran}
\bibliography{references}

\end{document}


\title{Supplementary Material for ``The Optimizer Quotient and the Certification Trilemma''}
\author{Tristan Simas}
\date{\today}
\maketitle

\section{Full Lean Handle Ledger}

This supplement provides the complete Lean handle ledger cited by the manuscript.
It includes every handle identifier, declaration name, and source module path.

\IfFileExists{content/lean_handle_ids_auto.tex}{%
\begingroup
\scriptsize
\setlength{\tabcolsep}{4pt}
\renewcommand{\arraystretch}{1.05}
\setlength{\LTpre}{2pt}
\setlength{\LTpost}{2pt}
\urlstyle{tt}
\makeatletter
\if@twocolumn
\begin{list}{}{\leftmargin=0pt\itemindent=0pt\itemsep=4pt\parsep=0pt\topsep=4pt}
\item \textbf{\nolinkurl{AB2}}\hypertarget{lh:AB2}{}\enspace{\nolinkurl{DecisionProblem.surjective_abstraction_factors_or_erases}} {\tiny\ttfamily DecisionQuotient/\allowbreak AbstractionCollapse.lean}
\item \textbf{\nolinkurl{CC4}}\hypertarget{lh:CC4}{}\enspace{\nolinkurl{DecisionQuotient.ClaimClosure.anchor_sigma2p_complete_conditional}} {\tiny\ttfamily DecisionQuotient/\allowbreak ClaimClosure.lean}
\item \textbf{\nolinkurl{CR1}}\hypertarget{lh:CR1}{}\enspace{\nolinkurl{DecisionQuotient.ConfigReduction.config_sufficiency_iff_behavior_preserving}} {\tiny\ttfamily DecisionQuotient/\allowbreak Hardness/\allowbreak ConfigReduction.lean}
\item \textbf{\nolinkurl{CT12}}\hypertarget{lh:CT12}{}\enspace{\nolinkurl{DecisionQuotient.Physics.ClaimTransport.physical_bridge_bundle}} {\tiny\ttfamily DecisionQuotient/\allowbreak Physics/\allowbreak ClaimTransport.lean}
\item \textbf{\nolinkurl{DC19}}\hypertarget{lh:DC19}{}\enspace{\nolinkurl{StochasticSequential.stochastic_anchor_sufficient_of_stochastic_sufficient}} {\tiny\ttfamily DecisionQuotient/\allowbreak StochasticSequential/\allowbreak Quotient.lean}
\item \textbf{\nolinkurl{DC23}}\hypertarget{lh:DC23}{}\enspace{\nolinkurl{StochasticSequential.sequential_anchor_sufficient_of_sequential_sufficient}} {\tiny\ttfamily DecisionQuotient/\allowbreak StochasticSequential/\allowbreak Basic.lean}
\item \textbf{\nolinkurl{DC28}}\hypertarget{lh:DC28}{}\enspace{\nolinkurl{StochasticSequential.reduceTQBF_correct_anchor}} {\tiny\ttfamily DecisionQuotient/\allowbreak StochasticSequential/\allowbreak PolynomialReduction.lean}
\item \textbf{\nolinkurl{DC29}}\hypertarget{lh:DC29}{}\enspace{\nolinkurl{StochasticSequential.reduceTQBF_to_sequential_anchor_reduction}} {\tiny\ttfamily DecisionQuotient/\allowbreak StochasticSequential/\allowbreak PolynomialReduction.lean}
\item \textbf{\nolinkurl{DC40}}\hypertarget{lh:DC40}{}\enspace{\nolinkurl{StochasticSequential.reduceMAJSATPureAnchor_correct}} {\tiny\ttfamily DecisionQuotient/\allowbreak StochasticSequential/\allowbreak PolynomialReduction.lean}
\item \textbf{\nolinkurl{DC41}}\hypertarget{lh:DC41}{}\enspace{\nolinkurl{StochasticSequential.reduceMAJSAT_to_pure_stochastic_anchor_reduction}} {\tiny\ttfamily DecisionQuotient/\allowbreak StochasticSequential/\allowbreak PolynomialReduction.lean}
\item \textbf{\nolinkurl{DC42}}\hypertarget{lh:DC42}{}\enspace{\nolinkurl{StochasticSequential.stochastic_anchor_check_pp_hard}} {\tiny\ttfamily DecisionQuotient/\allowbreak StochasticSequential/\allowbreak PolynomialReduction.lean}
\item \textbf{\nolinkurl{DC44}}\hypertarget{lh:DC44}{}\enspace{\nolinkurl{StochasticSequential.sequential_anchor_check_pspace_hard}} {\tiny\ttfamily DecisionQuotient/\allowbreak StochasticSequential/\allowbreak PolynomialReduction.lean}
\item \textbf{\nolinkurl{DC45}}\hypertarget{lh:DC45}{}\enspace{\nolinkurl{StochasticSequential.stochastic_sufficiency_pp_hard}} {\tiny\ttfamily DecisionQuotient/\allowbreak StochasticSequential/\allowbreak PolynomialReduction.lean}
\item \textbf{\nolinkurl{DC46}}\hypertarget{lh:DC46}{}\enspace{\nolinkurl{StochasticSequential.sequential_sufficiency_pspace_hard}} {\tiny\ttfamily DecisionQuotient/\allowbreak StochasticSequential/\allowbreak PolynomialReduction.lean}
\item \textbf{\nolinkurl{DC47}}\hypertarget{lh:DC47}{}\enspace{\nolinkurl{StochasticSequential.stochastic_minimum_sufficiency_pp_hard}} {\tiny\ttfamily DecisionQuotient/\allowbreak StochasticSequential/\allowbreak PolynomialReduction.lean}
\item \textbf{\nolinkurl{DC48}}\hypertarget{lh:DC48}{}\enspace{\nolinkurl{StochasticSequential.sequential_minimum_sufficiency_pspace_hard}} {\tiny\ttfamily DecisionQuotient/\allowbreak StochasticSequential/\allowbreak PolynomialReduction.lean}
\item \textbf{\nolinkurl{DC49}}\hypertarget{lh:DC49}{}\enspace{\nolinkurl{StochasticSequential.sequentialMinimalSufficient_iff_relevant}} {\tiny\ttfamily DecisionQuotient/\allowbreak StochasticSequential/\allowbreak Basic.lean}
\item \textbf{\nolinkurl{DC50}}\hypertarget{lh:DC50}{}\enspace{\nolinkurl{StochasticSequential.sequentialRelevantSet_is_minimal}} {\tiny\ttfamily DecisionQuotient/\allowbreak StochasticSequential/\allowbreak Basic.lean}
\item \textbf{\nolinkurl{DC57}}\hypertarget{lh:DC57}{}\enspace{\nolinkurl{StochasticSequential.fiberDecisionProblem_sufficient}} {\tiny\ttfamily DecisionQuotient/\allowbreak StochasticSequential/\allowbreak Basic.lean}
\item \textbf{\nolinkurl{DC62}}\hypertarget{lh:DC62}{}\enspace{\nolinkurl{StochasticSequential.countedStochasticSufficiencySearch_spec}} {\tiny\ttfamily DecisionQuotient/\allowbreak StochasticSequential/\allowbreak Computation.lean}
\item \textbf{\nolinkurl{DC63}}\hypertarget{lh:DC63}{}\enspace{\nolinkurl{StochasticSequential.countedStochasticSufficiencySearch_steps}} {\tiny\ttfamily DecisionQuotient/\allowbreak StochasticSequential/\allowbreak Computation.lean}
\item \textbf{\nolinkurl{DC64}}\hypertarget{lh:DC64}{}\enspace{\nolinkurl{StochasticSequential.countedStochasticAnchorSearch_spec}} {\tiny\ttfamily DecisionQuotient/\allowbreak StochasticSequential/\allowbreak Computation.lean}
\item \textbf{\nolinkurl{DC65}}\hypertarget{lh:DC65}{}\enspace{\nolinkurl{StochasticSequential.countedStochasticAnchorSearch_steps}} {\tiny\ttfamily DecisionQuotient/\allowbreak StochasticSequential/\allowbreak Computation.lean}
\item \textbf{\nolinkurl{DC66}}\hypertarget{lh:DC66}{}\enspace{\nolinkurl{StochasticSequential.countedSequentialSufficiencySearch_spec}} {\tiny\ttfamily DecisionQuotient/\allowbreak StochasticSequential/\allowbreak Computation.lean}
\item \textbf{\nolinkurl{DC67}}\hypertarget{lh:DC67}{}\enspace{\nolinkurl{StochasticSequential.countedSequentialSufficiencySearch_steps}} {\tiny\ttfamily DecisionQuotient/\allowbreak StochasticSequential/\allowbreak Computation.lean}
\item \textbf{\nolinkurl{DC68}}\hypertarget{lh:DC68}{}\enspace{\nolinkurl{StochasticSequential.countedSequentialAnchorSearch_spec}} {\tiny\ttfamily DecisionQuotient/\allowbreak StochasticSequential/\allowbreak Computation.lean}
\item \textbf{\nolinkurl{DC69}}\hypertarget{lh:DC69}{}\enspace{\nolinkurl{StochasticSequential.countedSequentialAnchorSearch_steps}} {\tiny\ttfamily DecisionQuotient/\allowbreak StochasticSequential/\allowbreak Computation.lean}
\item \textbf{\nolinkurl{DC70}}\hypertarget{lh:DC70}{}\enspace{\nolinkurl{DecisionQuotient.static_sufficiency_inP_explicit}} {\tiny\ttfamily DecisionQuotient/\allowbreak ExplicitStateMembership.lean}
\item \textbf{\nolinkurl{DC71}}\hypertarget{lh:DC71}{}\enspace{\nolinkurl{DecisionQuotient.static_anchor_inP_explicit}} {\tiny\ttfamily DecisionQuotient/\allowbreak ExplicitStateMembership.lean}
\item \textbf{\nolinkurl{DC72}}\hypertarget{lh:DC72}{}\enspace{\nolinkurl{StochasticSequential.stochastic_sufficiency_inP_explicit}} {\tiny\ttfamily DecisionQuotient/\allowbreak StochasticSequential/\allowbreak Computation.lean}
\item \textbf{\nolinkurl{DC73}}\hypertarget{lh:DC73}{}\enspace{\nolinkurl{StochasticSequential.stochastic_anchor_inP_explicit}} {\tiny\ttfamily DecisionQuotient/\allowbreak StochasticSequential/\allowbreak Computation.lean}
\item \textbf{\nolinkurl{DC74}}\hypertarget{lh:DC74}{}\enspace{\nolinkurl{StochasticSequential.sequential_sufficiency_inP_explicit}} {\tiny\ttfamily DecisionQuotient/\allowbreak StochasticSequential/\allowbreak Computation.lean}
\item \textbf{\nolinkurl{DC75}}\hypertarget{lh:DC75}{}\enspace{\nolinkurl{StochasticSequential.sequential_anchor_inP_explicit}} {\tiny\ttfamily DecisionQuotient/\allowbreak StochasticSequential/\allowbreak Computation.lean}
\item \textbf{\nolinkurl{DC76}}\hypertarget{lh:DC76}{}\enspace{\nolinkurl{DecisionQuotient.explicit_state_inP_summary}} {\tiny\ttfamily DecisionQuotient/\allowbreak ExplicitStateMembership.lean}
\item \textbf{\nolinkurl{DC80}}\hypertarget{lh:DC80}{}\enspace{\nolinkurl{DecisionQuotient.staticSufficiency_counted_search_witness}} {\tiny\ttfamily DecisionQuotient/\allowbreak AlgorithmComplexity.lean}
\item \textbf{\nolinkurl{DC81}}\hypertarget{lh:DC81}{}\enspace{\nolinkurl{DecisionQuotient.static_query_search_matrix}} {\tiny\ttfamily DecisionQuotient/\allowbreak AlgorithmComplexity.lean}
\item \textbf{\nolinkurl{DC82}}\hypertarget{lh:DC82}{}\enspace{\nolinkurl{DecisionQuotient.finite_search_summary}} {\tiny\ttfamily DecisionQuotient/\allowbreak ComplexityMain.lean}
\item \textbf{\nolinkurl{DC91}}\hypertarget{lh:DC91}{}\enspace{\nolinkurl{StochasticSequential.countedStochasticPreservationSearch_spec}} {\tiny\ttfamily DecisionQuotient/\allowbreak StochasticSequential/\allowbreak Computation.lean}
\item \textbf{\nolinkurl{DC92}}\hypertarget{lh:DC92}{}\enspace{\nolinkurl{StochasticSequential.countedStochasticPreservationSearch_steps}} {\tiny\ttfamily DecisionQuotient/\allowbreak StochasticSequential/\allowbreak Computation.lean}
\item \textbf{\nolinkurl{DC93}}\hypertarget{lh:DC93}{}\enspace{\nolinkurl{StochasticSequential.stochastic_preservation_inP_explicit}} {\tiny\ttfamily DecisionQuotient/\allowbreak StochasticSequential/\allowbreak Computation.lean}
\item \textbf{\nolinkurl{DC94}}\hypertarget{lh:DC94}{}\enspace{\nolinkurl{StochasticSequential.stochastic_preservation_implies_static_sufficiency}} {\tiny\ttfamily DecisionQuotient/\allowbreak StochasticSequential/\allowbreak Basic.lean}
\item \textbf{\nolinkurl{DC95}}\hypertarget{lh:DC95}{}\enspace{\nolinkurl{StochasticSequential.static_sufficiency_implies_stochastic_preservation_of_full_support}} {\tiny\ttfamily DecisionQuotient/\allowbreak StochasticSequential/\allowbreak Basic.lean}
\item \textbf{\nolinkurl{DC96}}\hypertarget{lh:DC96}{}\enspace{\nolinkurl{StochasticSequential.static_sufficiency_iff_stochastic_preservation_of_full_support}} {\tiny\ttfamily DecisionQuotient/\allowbreak StochasticSequential/\allowbreak Basic.lean}
\item \textbf{\nolinkurl{DC97}}\hypertarget{lh:DC97}{}\enspace{\nolinkurl{StochasticSequential.stochasticDecisionEquiv_iff_decisionEquiv_of_preservation}} {\tiny\ttfamily DecisionQuotient/\allowbreak StochasticSequential/\allowbreak Quotient.lean}
\item \textbf{\nolinkurl{DC98}}\hypertarget{lh:DC98}{}\enspace{\nolinkurl{StochasticSequential.stochasticDecisionEquiv_iff_decisionEquiv_of_full_support}} {\tiny\ttfamily DecisionQuotient/\allowbreak StochasticSequential/\allowbreak Quotient.lean}
\item \textbf{\nolinkurl{DC99}}\hypertarget{lh:DC99}{}\enspace{\nolinkurl{StochasticSequential.stochasticEquivSetoid_eq_decisionSetoid_of_full_support}} {\tiny\ttfamily DecisionQuotient/\allowbreak StochasticSequential/\allowbreak Quotient.lean}
\item \textbf{\nolinkurl{DC100}}\hypertarget{lh:DC100}{}\enspace{\nolinkurl{StochasticSequential.benchmark_escalation_summary}} {\tiny\ttfamily DecisionQuotient/\allowbreak StochasticSequential/\allowbreak Hierarchy.lean}
\item \textbf{\nolinkurl{DN1}}\hypertarget{lh:DN1}{}\enspace{\nolinkurl{DecisionProblem.stateDecisionNoise_iff_same_quotient}} {\tiny\ttfamily DecisionQuotient/\allowbreak DecisionNoise.lean}
\item \textbf{\nolinkurl{DN2}}\hypertarget{lh:DN2}{}\enspace{\nolinkurl{DecisionProblem.decisionNoise_iff_not_relevant}} {\tiny\ttfamily DecisionQuotient/\allowbreak DecisionNoise.lean}
\item \textbf{\nolinkurl{DN5}}\hypertarget{lh:DN5}{}\enspace{\nolinkurl{DecisionProblem.decisionNoise_iff_condIndep}} {\tiny\ttfamily DecisionQuotient/\allowbreak DecisionNoise.lean}
\item \textbf{\nolinkurl{DP1}}\hypertarget{lh:DP1}{}\enspace{\nolinkurl{DecisionQuotient.DecisionProblem.minimalSufficient_iff_relevant}}
\item \textbf{\nolinkurl{DP2}}\hypertarget{lh:DP2}{}\enspace{\nolinkurl{DecisionQuotient.DecisionProblem.relevantSet_is_minimal}}
\item \textbf{\nolinkurl{DP6}}\hypertarget{lh:DP6}{}\enspace{\nolinkurl{ClaimClosure.DP6}} {\tiny\ttfamily DecisionQuotient/\allowbreak ClaimClosure.lean}
\item \textbf{\nolinkurl{DP7}}\hypertarget{lh:DP7}{}\enspace{\nolinkurl{ClaimClosure.DP7}} {\tiny\ttfamily DecisionQuotient/\allowbreak ClaimClosure.lean}
\item \textbf{\nolinkurl{DQ1}}\hypertarget{lh:DQ1}{}\enspace{\nolinkurl{ClaimClosure.DQ1}} {\tiny\ttfamily DecisionQuotient/\allowbreak ClaimClosure.lean}
\item \textbf{\nolinkurl{DQ2}}\hypertarget{lh:DQ2}{}\enspace{\nolinkurl{ClaimClosure.DQ2}} {\tiny\ttfamily DecisionQuotient/\allowbreak ClaimClosure.lean}
\item \textbf{\nolinkurl{DQ3}}\hypertarget{lh:DQ3}{}\enspace{\nolinkurl{ClaimClosure.DQ3}} {\tiny\ttfamily DecisionQuotient/\allowbreak ClaimClosure.lean}
\item \textbf{\nolinkurl{DQ6}}\hypertarget{lh:DQ6}{}\enspace{\nolinkurl{ClaimClosure.DQ6}} {\tiny\ttfamily DecisionQuotient/\allowbreak ClaimClosure.lean}
\item \textbf{\nolinkurl{DQ7}}\hypertarget{lh:DQ7}{}\enspace{\nolinkurl{ClaimClosure.DQ7}} {\tiny\ttfamily DecisionQuotient/\allowbreak ClaimClosure.lean}
\item \textbf{\nolinkurl{DQ9}}\hypertarget{lh:DQ9}{}\enspace{\nolinkurl{DecisionQuotient.HardnessDistribution.simplicityTax_conservation}} {\tiny\ttfamily DecisionQuotient/\allowbreak HardnessDistribution.lean}
\item \textbf{\nolinkurl{DQ10}}\hypertarget{lh:DQ10}{}\enspace{\nolinkurl{DecisionQuotient.IntegrityCompetence.alwaysAbstain_integrity}} {\tiny\ttfamily DecisionQuotient/\allowbreak IntegrityCompetence.lean}
\item \textbf{\nolinkurl{DQ11}}\hypertarget{lh:DQ11}{}\enspace{\nolinkurl{DecisionQuotient.SetCoverInstance.min_sufficient_iff_set_cover}} {\tiny\ttfamily DecisionQuotient/\allowbreak Hardness/\allowbreak SetCoverReduction.lean}
\item \textbf{\nolinkurl{DQ12}}\hypertarget{lh:DQ12}{}\enspace{\nolinkurl{DecisionQuotient.StochasticSequential.bounded_horizon_tractable}} {\tiny\ttfamily DecisionQuotient/\allowbreak StochasticSequential/\allowbreak Tractability.lean}
\item \textbf{\nolinkurl{DQ13}}\hypertarget{lh:DQ13}{}\enspace{\nolinkurl{DecisionQuotient.StochasticSequential.bounded_support_tractable}} {\tiny\ttfamily DecisionQuotient/\allowbreak StochasticSequential/\allowbreak Tractability.lean}
\item \textbf{\nolinkurl{DQ14}}\hypertarget{lh:DQ14}{}\enspace{\nolinkurl{DecisionQuotient.StochasticSequential.fully_observable_tractable}} {\tiny\ttfamily DecisionQuotient/\allowbreak StochasticSequential/\allowbreak Tractability.lean}
\item \textbf{\nolinkurl{DQ15}}\hypertarget{lh:DQ15}{}\enspace{\nolinkurl{DecisionQuotient.StochasticSequential.product_distribution_tractable}} {\tiny\ttfamily DecisionQuotient/\allowbreak StochasticSequential/\allowbreak Tractability.lean}
\item \textbf{\nolinkurl{DQ16}}\hypertarget{lh:DQ16}{}\enspace{\nolinkurl{DecisionQuotient.StochasticSequential.product_enables_transfer}} {\tiny\ttfamily DecisionQuotient/\allowbreak StochasticSequential/\allowbreak CrossRegime.lean}
\item \textbf{\nolinkurl{DQ18}}\hypertarget{lh:DQ18}{}\enspace{\nolinkurl{DecisionQuotient.StochasticSequential.static_simpler_than_stochastic}} {\tiny\ttfamily DecisionQuotient/\allowbreak StochasticSequential/\allowbreak CrossRegime.lean}
\item \textbf{\nolinkurl{DQ19}}\hypertarget{lh:DQ19}{}\enspace{\nolinkurl{DecisionQuotient.StochasticSequential.stochastic_simpler_than_sequential}} {\tiny\ttfamily DecisionQuotient/\allowbreak StochasticSequential/\allowbreak CrossRegime.lean}
\item \textbf{\nolinkurl{DQ20}}\hypertarget{lh:DQ20}{}\enspace{\nolinkurl{DecisionQuotient.all_coordinates_necessary_of_not_tautology}} {\tiny\ttfamily DecisionQuotient/\allowbreak Hardness/\allowbreak MinSufficientApproximation.lean}
\item \textbf{\nolinkurl{DQ25}}\hypertarget{lh:DQ25}{}\enspace{\nolinkurl{DecisionQuotient.counted_min_sufficient_inapproximability_conditional}} {\tiny\ttfamily DecisionQuotient/\allowbreak Hardness/\allowbreak ApproximationHardness.lean}
\item \textbf{\nolinkurl{DQ26}}\hypertarget{lh:DQ26}{}\enspace{\nolinkurl{DecisionQuotient.dichotomy_conditional}} {\tiny\ttfamily DecisionQuotient/\allowbreak ClaimClosure.lean}
\item \textbf{\nolinkurl{DQ27}}\hypertarget{lh:DQ27}{}\enspace{\nolinkurl{DecisionQuotient.eth_lower_bound_informal}} {\tiny\ttfamily DecisionQuotient/\allowbreak Hardness/\allowbreak ETH.lean}
\item \textbf{\nolinkurl{DQ28}}\hypertarget{lh:DQ28}{}\enspace{\nolinkurl{DecisionQuotient.exact_certainty_inflation_under_hardness_core}} {\tiny\ttfamily DecisionQuotient/\allowbreak ClaimClosure.lean}
\item \textbf{\nolinkurl{DQ30}}\hypertarget{lh:DQ30}{}\enspace{\nolinkurl{DecisionQuotient.hard_family_all_coords_core}} {\tiny\ttfamily DecisionQuotient/\allowbreak ClaimClosure.lean}
\item \textbf{\nolinkurl{DQ31}}\hypertarget{lh:DQ31}{}\enspace{\nolinkurl{DecisionQuotient.integrity_resource_bound_for_sufficiency}} {\tiny\ttfamily DecisionQuotient/\allowbreak ClaimClosure.lean}
\item \textbf{\nolinkurl{DQ32}}\hypertarget{lh:DQ32}{}\enspace{\nolinkurl{DecisionQuotient.min_sufficient_factor_inapprox_of_set_cover_factor_inapprox}} {\tiny\ttfamily DecisionQuotient/\allowbreak Hardness/\allowbreak ApproximationHardness.lean}
\item \textbf{\nolinkurl{DQ34}}\hypertarget{lh:DQ34}{}\enspace{\nolinkurl{DecisionQuotient.min_sufficient_set_cover_equiv}} {\tiny\ttfamily DecisionQuotient/\allowbreak Hardness/\allowbreak ApproximationHardness.lean}
\item \textbf{\nolinkurl{DQ36}}\hypertarget{lh:DQ36}{}\enspace{\nolinkurl{DecisionQuotient.minsuff_conp_complete_conditional}} {\tiny\ttfamily DecisionQuotient/\allowbreak ClaimClosure.lean}
\item \textbf{\nolinkurl{DQ38}}\hypertarget{lh:DQ38}{}\enspace{\nolinkurl{DecisionQuotient.no_exact_claim_admissible_under_hardness_core}} {\tiny\ttfamily DecisionQuotient/\allowbreak ClaimClosure.lean}
\item \textbf{\nolinkurl{DQ39}}\hypertarget{lh:DQ39}{}\enspace{\nolinkurl{DecisionQuotient.singleton_gate_sufficient_of_tautology}} {\tiny\ttfamily DecisionQuotient/\allowbreak Hardness/\allowbreak MinSufficientApproximation.lean}
\item \textbf{\nolinkurl{DQ40}}\hypertarget{lh:DQ40}{}\enspace{\nolinkurl{DecisionQuotient.sufficiency_conp_complete_conditional}} {\tiny\ttfamily DecisionQuotient/\allowbreak ClaimClosure.lean}
\item \textbf{\nolinkurl{DQ42}}\hypertarget{lh:DQ42}{}\enspace{\nolinkurl{DecisionQuotient.sufficient_means_factorizable}} {\tiny\ttfamily DecisionQuotient/\allowbreak Information.lean}
\item \textbf{\nolinkurl{DQ43}}\hypertarget{lh:DQ43}{}\enspace{\nolinkurl{DecisionQuotient.tautology_decidable_from_factor_approx}} {\tiny\ttfamily DecisionQuotient/\allowbreak Hardness/\allowbreak MinSufficientApproximation.lean}
\item \textbf{\nolinkurl{DQ44}}\hypertarget{lh:DQ44}{}\enspace{\nolinkurl{DecisionQuotient.tractable_subcases_conditional}} {\tiny\ttfamily DecisionQuotient/\allowbreak ClaimClosure.lean}
\item \textbf{\nolinkurl{DQ45}}\hypertarget{lh:DQ45}{}\enspace{\nolinkurl{DecisionQuotient.tractable_tree_core}} {\tiny\ttfamily DecisionQuotient/\allowbreak ClaimClosure.lean}
\item \textbf{\nolinkurl{DQ51}}\hypertarget{lh:DQ51}{}\enspace{\nolinkurl{DecisionQuotient.bounded_actions_tractable}} {\tiny\ttfamily DecisionQuotient/\allowbreak Summary.lean}
\item \textbf{\nolinkurl{DQ56}}\hypertarget{lh:DQ56}{}\enspace{\nolinkurl{DecisionQuotient.orbitType_count_bound}} {\tiny\ttfamily DecisionQuotient/\allowbreak Tractability/\allowbreak Dimensional.lean}
\item \textbf{\nolinkurl{DQ57}}\hypertarget{lh:DQ57}{}\enspace{\nolinkurl{DecisionQuotient.separable_utility_tractable}} {\tiny\ttfamily DecisionQuotient/\allowbreak Summary.lean}
\item \textbf{\nolinkurl{DQ58}}\hypertarget{lh:DQ58}{}\enspace{\nolinkurl{DecisionQuotient.stochastic_objective_bridge_can_fail_on_sufficiency}} {\tiny\ttfamily DecisionQuotient/\allowbreak ClaimClosure.lean}
\item \textbf{\nolinkurl{DQ62}}\hypertarget{lh:DQ62}{}\enspace{\nolinkurl{DecisionQuotient.transition_coupled_bridge_can_fail_on_sufficiency}} {\tiny\ttfamily DecisionQuotient/\allowbreak ClaimClosure.lean}
\item \textbf{\nolinkurl{DQ63}}\hypertarget{lh:DQ63}{}\enspace{\nolinkurl{DecisionQuotient.tree_structure_tractable}} {\tiny\ttfamily DecisionQuotient/\allowbreak Summary.lean}
\item \textbf{\nolinkurl{DQ72}}\hypertarget{lh:DQ72}{}\enspace{\nolinkurl{DecisionQuotient.bounded_actions_complexity}} {\tiny\ttfamily DecisionQuotient/\allowbreak Tractability/\allowbreak BoundedActions.lean}
\item \textbf{\nolinkurl{DQ74}}\hypertarget{lh:DQ74}{}\enspace{\nolinkurl{DecisionQuotient.sufficiency_poly_separable}} {\tiny\ttfamily DecisionQuotient/\allowbreak Tractability/\allowbreak SeparableUtility.lean}
\item \textbf{\nolinkurl{DQ75}}\hypertarget{lh:DQ75}{}\enspace{\nolinkurl{DecisionQuotient.tensor_contraction_tractable}} {\tiny\ttfamily DecisionQuotient/\allowbreak Tractability/\allowbreak SeparableUtility.lean}
\item \textbf{\nolinkurl{DQ76}}\hypertarget{lh:DQ76}{}\enspace{\nolinkurl{DecisionQuotient.low_rank_utility_admits_factored_computation}} {\tiny\ttfamily DecisionQuotient/\allowbreak Tractability/\allowbreak SeparableUtility.lean}
\item \textbf{\nolinkurl{DQ77}}\hypertarget{lh:DQ77}{}\enspace{\nolinkurl{DecisionQuotient.low_rank_tractability}} {\tiny\ttfamily DecisionQuotient/\allowbreak Tractability/\allowbreak SeparableUtility.lean}
\item \textbf{\nolinkurl{DQ78}}\hypertarget{lh:DQ78}{}\enspace{\nolinkurl{DecisionQuotient.sufficiency_poly_tree_structured}} {\tiny\ttfamily DecisionQuotient/\allowbreak Tractability/\allowbreak TreeStructure.lean}
\item \textbf{\nolinkurl{DQ79}}\hypertarget{lh:DQ79}{}\enspace{\nolinkurl{DecisionQuotient.csp_treewidth_tractable}} {\tiny\ttfamily DecisionQuotient/\allowbreak Tractability/\allowbreak TreeStructure.lean}
\item \textbf{\nolinkurl{DQ80}}\hypertarget{lh:DQ80}{}\enspace{\nolinkurl{DecisionQuotient.sufficiency_reduces_to_interaction_csp}} {\tiny\ttfamily DecisionQuotient/\allowbreak Tractability/\allowbreak TreeStructure.lean}
\item \textbf{\nolinkurl{DQ81}}\hypertarget{lh:DQ81}{}\enspace{\nolinkurl{DecisionQuotient.bounded_treewidth_tractability}} {\tiny\ttfamily DecisionQuotient/\allowbreak Tractability/\allowbreak TreeStructure.lean}
\item \textbf{\nolinkurl{DQ82}}\hypertarget{lh:DQ82}{}\enspace{\nolinkurl{DecisionQuotient.orbitType_eq_iff}} {\tiny\ttfamily DecisionQuotient/\allowbreak Tractability/\allowbreak Dimensional.lean}
\item \textbf{\nolinkurl{DQ83}}\hypertarget{lh:DQ83}{}\enspace{\nolinkurl{DecisionQuotient.symmetric_optimalActions_orbit_invariant}} {\tiny\ttfamily DecisionQuotient/\allowbreak Tractability/\allowbreak Dimensional.lean}
\item \textbf{\nolinkurl{DQ84}}\hypertarget{lh:DQ84}{}\enspace{\nolinkurl{DecisionQuotient.sufficiency_reduces_to_cross_orbit_check}} {\tiny\ttfamily DecisionQuotient/\allowbreak Tractability/\allowbreak Dimensional.lean}
\item \textbf{\nolinkurl{DQ85}}\hypertarget{lh:DQ85}{}\enspace{\nolinkurl{DecisionQuotient.symmetric_sufficiency_complexity_bound}} {\tiny\ttfamily DecisionQuotient/\allowbreak Tractability/\allowbreak Dimensional.lean}
\item \textbf{\nolinkurl{DQ86}}\hypertarget{lh:DQ86}{}\enspace{\nolinkurl{DecisionQuotient.StochasticSequential.countedStochasticAnchorPreservationSearch_spec}} {\tiny\ttfamily DecisionQuotient/\allowbreak StochasticSequential/\allowbreak PreservationVariants.lean}
\item \textbf{\nolinkurl{DQ87}}\hypertarget{lh:DQ87}{}\enspace{\nolinkurl{DecisionQuotient.StochasticSequential.countedStochasticAnchorPreservationSearch_steps}} {\tiny\ttfamily DecisionQuotient/\allowbreak StochasticSequential/\allowbreak PreservationVariants.lean}
\item \textbf{\nolinkurl{DQ88}}\hypertarget{lh:DQ88}{}\enspace{\nolinkurl{DecisionQuotient.StochasticSequential.countedStochasticMinimumPreservationSearch_spec}} {\tiny\ttfamily DecisionQuotient/\allowbreak StochasticSequential/\allowbreak PreservationVariants.lean}
\item \textbf{\nolinkurl{DQ89}}\hypertarget{lh:DQ89}{}\enspace{\nolinkurl{DecisionQuotient.StochasticSequential.countedStochasticMinimumPreservationSearch_steps}} {\tiny\ttfamily DecisionQuotient/\allowbreak StochasticSequential/\allowbreak PreservationVariants.lean}
\item \textbf{\nolinkurl{DQ90}}\hypertarget{lh:DQ90}{}\enspace{\nolinkurl{DecisionQuotient.StochasticSequential.static_anchor_implies_stochastic_anchor_preservation_of_positive_anchor_fiber}} {\tiny\ttfamily DecisionQuotient/\allowbreak StochasticSequential/\allowbreak PreservationVariants.lean}
\item \textbf{\nolinkurl{DQ91}}\hypertarget{lh:DQ91}{}\enspace{\nolinkurl{DecisionQuotient.StochasticSequential.static_sufficiency_implies_stochastic_preservation_of_positive_fiber_support}} {\tiny\ttfamily DecisionQuotient/\allowbreak StochasticSequential/\allowbreak PreservationVariants.lean}
\item \textbf{\nolinkurl{DQ92}}\hypertarget{lh:DQ92}{}\enspace{\nolinkurl{DecisionQuotient.StochasticSequential.stochasticAnchorPreservation_counted_search_witness}} {\tiny\ttfamily DecisionQuotient/\allowbreak StochasticSequential/\allowbreak PreservationVariants.lean}
\item \textbf{\nolinkurl{DQ93}}\hypertarget{lh:DQ93}{}\enspace{\nolinkurl{DecisionQuotient.StochasticSequential.stochasticMinimumPreservation_counted_search_witness}} {\tiny\ttfamily DecisionQuotient/\allowbreak StochasticSequential/\allowbreak PreservationVariants.lean}
\item \textbf{\nolinkurl{DQ94}}\hypertarget{lh:DQ94}{}\enspace{\nolinkurl{DecisionQuotient.StochasticSequential.stochastic_anchor_preservation_iff_static_anchor_of_full_support}} {\tiny\ttfamily DecisionQuotient/\allowbreak StochasticSequential/\allowbreak PreservationVariants.lean}
\item \textbf{\nolinkurl{DQ95}}\hypertarget{lh:DQ95}{}\enspace{\nolinkurl{DecisionQuotient.StochasticSequential.stochastic_minimum_preservation_iff_static_of_full_support}} {\tiny\ttfamily DecisionQuotient/\allowbreak StochasticSequential/\allowbreak PreservationVariants.lean}
\item \textbf{\nolinkurl{DQ96}}\hypertarget{lh:DQ96}{}\enspace{\nolinkurl{DecisionQuotient.StochasticSequential.stochastic_minimum_preservation_static_relevant_card_le}} {\tiny\ttfamily DecisionQuotient/\allowbreak StochasticSequential/\allowbreak PreservationVariants.lean}
\item \textbf{\nolinkurl{DQ97}}\hypertarget{lh:DQ97}{}\enspace{\nolinkurl{DecisionQuotient.StochasticSequential.stochastic_preservation_contains_static_relevant}} {\tiny\ttfamily DecisionQuotient/\allowbreak StochasticSequential/\allowbreak PreservationVariants.lean}
\item \textbf{\nolinkurl{DQ98}}\hypertarget{lh:DQ98}{}\enspace{\nolinkurl{DecisionQuotient.anchor_sufficiency_sigma2p}} {\tiny\ttfamily DecisionQuotient/\allowbreak Hardness.lean}
\item \textbf{\nolinkurl{EH1}}\hypertarget{lh:EH1}{}\enspace{\nolinkurl{StochasticSequential.existential_anchor_source_fits_np_over_ppstyle_honest}} {\tiny\ttfamily DecisionQuotient/\allowbreak StochasticSequential/\allowbreak ExistentialHardness.lean}
\item \textbf{\nolinkurl{EH2}}\hypertarget{lh:EH2}{}\enspace{\nolinkurl{StochasticSequential.existential_anchor_hard_honest}} {\tiny\ttfamily DecisionQuotient/\allowbreak StochasticSequential/\allowbreak ExistentialHardness.lean}
\item \textbf{\nolinkurl{EH3}}\hypertarget{lh:EH3}{}\enspace{\nolinkurl{StochasticSequential.existential_anchor_query_family_hard_honest}} {\tiny\ttfamily DecisionQuotient/\allowbreak StochasticSequential/\allowbreak ExistentialHardness.lean}
\item \textbf{\nolinkurl{EH4}}\hypertarget{lh:EH4}{}\enspace{\nolinkurl{StochasticSequential.existential_anchor_np_over_ppstyle_hard_honest}} {\tiny\ttfamily DecisionQuotient/\allowbreak StochasticSequential/\allowbreak ExistentialHardness.lean}
\item \textbf{\nolinkurl{EH5}}\hypertarget{lh:EH5}{}\enspace{\nolinkurl{StochasticSequential.existential_anchor_query_family_np_over_ppstyle_hard_honest}} {\tiny\ttfamily DecisionQuotient/\allowbreak StochasticSequential/\allowbreak ExistentialHardness.lean}
\item \textbf{\nolinkurl{EH6}}\hypertarget{lh:EH6}{}\enspace{\nolinkurl{StochasticSequential.existential_anchor_query_family_np_over_ppstyle_complete_honest}} {\tiny\ttfamily DecisionQuotient/\allowbreak StochasticSequential/\allowbreak ExistentialHardness.lean}
\item \textbf{\nolinkurl{EH7}}\hypertarget{lh:EH7}{}\enspace{\nolinkurl{StochasticSequential.existential_decisiveness_complement_np_over_ppstyle_hard_honest}} {\tiny\ttfamily DecisionQuotient/\allowbreak StochasticSequential/\allowbreak ExistentialHardness.lean}
\item \textbf{\nolinkurl{EH8}}\hypertarget{lh:EH8}{}\enspace{\nolinkurl{StochasticSequential.existential_decisiveness_complement_np_over_ppstyle_complete_honest}} {\tiny\ttfamily DecisionQuotient/\allowbreak StochasticSequential/\allowbreak ExistentialHardness.lean}
\item \textbf{\nolinkurl{EH9}}\hypertarget{lh:EH9}{}\enspace{\nolinkurl{StochasticSequential.existential_decisiveness_query_family_np_over_ppstyle_hard_honest}} {\tiny\ttfamily DecisionQuotient/\allowbreak StochasticSequential/\allowbreak ExistentialHardness.lean}
\item \textbf{\nolinkurl{EH10}}\hypertarget{lh:EH10}{}\enspace{\nolinkurl{StochasticSequential.existential_decisiveness_query_family_np_over_ppstyle_complete_honest}} {\tiny\ttfamily DecisionQuotient/\allowbreak StochasticSequential/\allowbreak ExistentialHardness.lean}
\item \textbf{\nolinkurl{EX1}}\hypertarget{lh:EX1}{}\enspace{\nolinkurl{Examples.pomdp_reduction_to_preservation}} {\tiny\ttfamily DecisionQuotient/\allowbreak Examples/\allowbreak PreservationExamples.lean}
\item \textbf{\nolinkurl{EX2}}\hypertarget{lh:EX2}{}\enspace{\nolinkurl{Examples.hyperparam_reduction_to_static}}
\item \textbf{\nolinkurl{EX3}}\hypertarget{lh:EX3}{}\enspace{\nolinkurl{Examples.toy_full_opt_s1_contains_a}} {\tiny\ttfamily DecisionQuotient/\allowbreak Examples/\allowbreak ConcreteExamples.lean}
\item \textbf{\nolinkurl{HD14}}\hypertarget{lh:HD14}{}\enspace{\nolinkurl{DecisionQuotient.HardnessDistribution.linear_lt_exponential_plus_constant_eventually}} {\tiny\ttfamily DecisionQuotient/\allowbreak HardnessDistribution.lean}
\item \textbf{\nolinkurl{HD23}}\hypertarget{lh:HD23}{}\enspace{\nolinkurl{DecisionQuotient.HardnessDistribution.hardness_is_irreducible_required_work}} {\tiny\ttfamily DecisionQuotient/\allowbreak HardnessDistribution.lean}
\item \textbf{\nolinkurl{IC30}}\hypertarget{lh:IC30}{}\enspace{\nolinkurl{DecisionQuotient.IntegrityCompetence.exactCertaintyInflation_iff_no_exact_competence}} {\tiny\ttfamily DecisionQuotient/\allowbreak IntegrityCompetence.lean}
\item \textbf{\nolinkurl{IC32}}\hypertarget{lh:IC32}{}\enspace{\nolinkurl{DecisionQuotient.IntegrityCompetence.integrity_forces_abstention}} {\tiny\ttfamily DecisionQuotient/\allowbreak IntegrityCompetence.lean}
\item \textbf{\nolinkurl{IC33}}\hypertarget{lh:IC33}{}\enspace{\nolinkurl{DecisionQuotient.IntegrityCompetence.integrity_not_competent_of_nonempty_scope}} {\tiny\ttfamily DecisionQuotient/\allowbreak IntegrityCompetence.lean}
\item \textbf{\nolinkurl{IC34}}\hypertarget{lh:IC34}{}\enspace{\nolinkurl{DecisionQuotient.IntegrityCompetence.integrity_resource_bound}} {\tiny\ttfamily DecisionQuotient/\allowbreak IntegrityCompetence.lean}
\item \textbf{\nolinkurl{OU1}}\hypertarget{lh:OU1}{}\enspace{\nolinkurl{StochasticSequential.stochastic_anchor_query_fits_np_over_ppstyle}} {\tiny\ttfamily DecisionQuotient/\allowbreak StochasticSequential/\allowbreak OracleUpperBounds.lean}
\item \textbf{\nolinkurl{OU2}}\hypertarget{lh:OU2}{}\enspace{\nolinkurl{StochasticSequential.stochastic_minimum_query_fits_np_over_ppstyle}} {\tiny\ttfamily DecisionQuotient/\allowbreak StochasticSequential/\allowbreak OracleUpperBounds.lean}
\item \textbf{\nolinkurl{OU6}}\hypertarget{lh:OU6}{}\enspace{\nolinkurl{StochasticSequential.stochastic_anchor_inP_explicit_via_witness_schema}} {\tiny\ttfamily DecisionQuotient/\allowbreak StochasticSequential/\allowbreak OracleUpperBounds.lean}
\item \textbf{\nolinkurl{OU7}}\hypertarget{lh:OU7}{}\enspace{\nolinkurl{StochasticSequential.stochastic_minimum_inP_explicit_via_witness_schema}} {\tiny\ttfamily DecisionQuotient/\allowbreak StochasticSequential/\allowbreak OracleUpperBounds.lean}
\item \textbf{\nolinkurl{OU8}}\hypertarget{lh:OU8}{}\enspace{\nolinkurl{StochasticSequential.fitsCoNPOverPPStyle_of_no_witness}} {\tiny\ttfamily DecisionQuotient/\allowbreak StochasticSequential/\allowbreak OracleUpperBounds.lean}
\item \textbf{\nolinkurl{OU9}}\hypertarget{lh:OU9}{}\enspace{\nolinkurl{StochasticSequential.stochastic_decisiveness_query_fits_conp_over_ppstyle}} {\tiny\ttfamily DecisionQuotient/\allowbreak StochasticSequential/\allowbreak OracleUpperBounds.lean}
\item \textbf{\nolinkurl{OU10}}\hypertarget{lh:OU10}{}\enspace{\nolinkurl{StochasticSequential.stochastic_decisiveness_complement_fits_np_over_ppstyle}} {\tiny\ttfamily DecisionQuotient/\allowbreak StochasticSequential/\allowbreak OracleUpperBounds.lean}
\item \textbf{\nolinkurl{OU11}}\hypertarget{lh:OU11}{}\enspace{\nolinkurl{StochasticSequential.stochastic_decisiveness_scoped_oracle_bounds}} {\tiny\ttfamily DecisionQuotient/\allowbreak StochasticSequential/\allowbreak OracleUpperBounds.lean}
\item \textbf{\nolinkurl{OU12}}\hypertarget{lh:OU12}{}\enspace{\nolinkurl{StochasticSequential.stochastic_preservation_explicit_summary}} {\tiny\ttfamily DecisionQuotient/\allowbreak StochasticSequential/\allowbreak OracleUpperBounds.lean}
\item \textbf{\nolinkurl{PA3}}\hypertarget{lh:PA3}{}\enspace{\nolinkurl{Physics.AnchorChecks.stochastic_anchor_check_iff_exists_anchor_singleton}} {\tiny\ttfamily DecisionQuotient/\allowbreak Physics/\allowbreak AnchorChecks.lean}
\item \textbf{\nolinkurl{QT1}}\hypertarget{lh:QT1}{}\enspace{\nolinkurl{DecisionProblem.quotient_is_coarsest}} {\tiny\ttfamily DecisionQuotient/\allowbreak Quotient.lean}
\item \textbf{\nolinkurl{QT7}}\hypertarget{lh:QT7}{}\enspace{\nolinkurl{DecisionProblem.quotient_has_unique_factorization}} {\tiny\ttfamily DecisionQuotient/\allowbreak Quotient.lean}
\item \textbf{\nolinkurl{WD1}}\hypertarget{lh:WD1}{}\enspace{\nolinkurl{DecisionQuotient.checking_witnessing_duality_budget}} {\tiny\ttfamily DecisionQuotient/\allowbreak WitnessCheckingDuality.lean}
\item \textbf{\nolinkurl{WD2}}\hypertarget{lh:WD2}{}\enspace{\nolinkurl{DecisionQuotient.no_sound_checker_below_witness_budget}} {\tiny\ttfamily DecisionQuotient/\allowbreak WitnessCheckingDuality.lean}
\item \textbf{\nolinkurl{WD3}}\hypertarget{lh:WD3}{}\enspace{\nolinkurl{DecisionQuotient.checking_time_ge_witness_budget}} {\tiny\ttfamily DecisionQuotient/\allowbreak WitnessCheckingDuality.lean}
\end{list}
\else
\begin{longtable}{@{\hspace{2pt}}p{0.05\linewidth}p{0.42\linewidth}p{0.05\linewidth}p{0.42\linewidth}@{\hspace{2pt}}}
\toprule
\textbf{ID} & \textbf{Lean Handle / Source} & \textbf{ID} & \textbf{Lean Handle / Source} \\
\midrule
\endfirsthead
\toprule
\textbf{ID} & \textbf{Lean Handle / Source} & \textbf{ID} & \textbf{Lean Handle / Source} \\
\midrule
\endhead
\multicolumn{4}{r@{}}{\small\itshape (continued\ldots)} \\
\endfoot
\bottomrule
\endlastfoot
\hypertarget{lh:AB2}{\textbf{\nolinkurl{AB2}}} & {\fontsize{8}{9}\selectfont\nolinkurl{DecisionProblem.surjective_abstraction_factors_or_erases}}\par\vspace{0.1ex}{\fontsize{8}{9}\selectfont\ttfamily DecisionQuotient/\allowbreak AbstractionCollapse.lean} & \hypertarget{lh:CC4}{\textbf{\nolinkurl{CC4}}} & {\fontsize{8}{9}\selectfont\nolinkurl{DecisionQuotient.ClaimClosure.anchor_sigma2p_complete_conditional}}\par\vspace{0.1ex}{\fontsize{8}{9}\selectfont\ttfamily DecisionQuotient/\allowbreak ClaimClosure.lean} \\
\hline
\hypertarget{lh:CR1}{\textbf{\nolinkurl{CR1}}} & {\fontsize{8}{9}\selectfont\nolinkurl{DecisionQuotient.ConfigReduction.config_sufficiency_iff_behavior_preserving}}\par\vspace{0.1ex}{\fontsize{8}{9}\selectfont\ttfamily DecisionQuotient/\allowbreak Hardness/\allowbreak ConfigReduction.lean} & \hypertarget{lh:CT12}{\textbf{\nolinkurl{CT12}}} & {\fontsize{8}{9}\selectfont\nolinkurl{DecisionQuotient.Physics.ClaimTransport.physical_bridge_bundle}}\par\vspace{0.1ex}{\fontsize{8}{9}\selectfont\ttfamily DecisionQuotient/\allowbreak Physics/\allowbreak ClaimTransport.lean} \\
\hline
\hypertarget{lh:DC19}{\textbf{\nolinkurl{DC19}}} & {\fontsize{8}{9}\selectfont\nolinkurl{StochasticSequential.stochastic_anchor_sufficient_of_stochastic_sufficient}}\par\vspace{0.1ex}{\fontsize{8}{9}\selectfont\ttfamily DecisionQuotient/\allowbreak StochasticSequential/\allowbreak Quotient.lean} & \hypertarget{lh:DC23}{\textbf{\nolinkurl{DC23}}} & {\fontsize{8}{9}\selectfont\nolinkurl{StochasticSequential.sequential_anchor_sufficient_of_sequential_sufficient}}\par\vspace{0.1ex}{\fontsize{8}{9}\selectfont\ttfamily DecisionQuotient/\allowbreak StochasticSequential/\allowbreak Basic.lean} \\
\hline
\hypertarget{lh:DC28}{\textbf{\nolinkurl{DC28}}} & {\fontsize{8}{9}\selectfont\nolinkurl{StochasticSequential.reduceTQBF_correct_anchor}}\par\vspace{0.1ex}{\fontsize{8}{9}\selectfont\ttfamily DecisionQuotient/\allowbreak StochasticSequential/\allowbreak PolynomialReduction.lean} & \hypertarget{lh:DC29}{\textbf{\nolinkurl{DC29}}} & {\fontsize{8}{9}\selectfont\nolinkurl{StochasticSequential.reduceTQBF_to_sequential_anchor_reduction}}\par\vspace{0.1ex}{\fontsize{8}{9}\selectfont\ttfamily DecisionQuotient/\allowbreak StochasticSequential/\allowbreak PolynomialReduction.lean} \\
\hline
\hypertarget{lh:DC40}{\textbf{\nolinkurl{DC40}}} & {\fontsize{8}{9}\selectfont\nolinkurl{StochasticSequential.reduceMAJSATPureAnchor_correct}}\par\vspace{0.1ex}{\fontsize{8}{9}\selectfont\ttfamily DecisionQuotient/\allowbreak StochasticSequential/\allowbreak PolynomialReduction.lean} & \hypertarget{lh:DC41}{\textbf{\nolinkurl{DC41}}} & {\fontsize{8}{9}\selectfont\nolinkurl{StochasticSequential.reduceMAJSAT_to_pure_stochastic_anchor_reduction}}\par\vspace{0.1ex}{\fontsize{8}{9}\selectfont\ttfamily DecisionQuotient/\allowbreak StochasticSequential/\allowbreak PolynomialReduction.lean} \\
\hline
\hypertarget{lh:DC42}{\textbf{\nolinkurl{DC42}}} & {\fontsize{8}{9}\selectfont\nolinkurl{StochasticSequential.stochastic_anchor_check_pp_hard}}\par\vspace{0.1ex}{\fontsize{8}{9}\selectfont\ttfamily DecisionQuotient/\allowbreak StochasticSequential/\allowbreak PolynomialReduction.lean} & \hypertarget{lh:DC44}{\textbf{\nolinkurl{DC44}}} & {\fontsize{8}{9}\selectfont\nolinkurl{StochasticSequential.sequential_anchor_check_pspace_hard}}\par\vspace{0.1ex}{\fontsize{8}{9}\selectfont\ttfamily DecisionQuotient/\allowbreak StochasticSequential/\allowbreak PolynomialReduction.lean} \\
\hline
\hypertarget{lh:DC45}{\textbf{\nolinkurl{DC45}}} & {\fontsize{8}{9}\selectfont\nolinkurl{StochasticSequential.stochastic_sufficiency_pp_hard}}\par\vspace{0.1ex}{\fontsize{8}{9}\selectfont\ttfamily DecisionQuotient/\allowbreak StochasticSequential/\allowbreak PolynomialReduction.lean} & \hypertarget{lh:DC46}{\textbf{\nolinkurl{DC46}}} & {\fontsize{8}{9}\selectfont\nolinkurl{StochasticSequential.sequential_sufficiency_pspace_hard}}\par\vspace{0.1ex}{\fontsize{8}{9}\selectfont\ttfamily DecisionQuotient/\allowbreak StochasticSequential/\allowbreak PolynomialReduction.lean} \\
\hline
\hypertarget{lh:DC47}{\textbf{\nolinkurl{DC47}}} & {\fontsize{8}{9}\selectfont\nolinkurl{StochasticSequential.stochastic_minimum_sufficiency_pp_hard}}\par\vspace{0.1ex}{\fontsize{8}{9}\selectfont\ttfamily DecisionQuotient/\allowbreak StochasticSequential/\allowbreak PolynomialReduction.lean} & \hypertarget{lh:DC48}{\textbf{\nolinkurl{DC48}}} & {\fontsize{8}{9}\selectfont\nolinkurl{StochasticSequential.sequential_minimum_sufficiency_pspace_hard}}\par\vspace{0.1ex}{\fontsize{8}{9}\selectfont\ttfamily DecisionQuotient/\allowbreak StochasticSequential/\allowbreak PolynomialReduction.lean} \\
\hline
\hypertarget{lh:DC49}{\textbf{\nolinkurl{DC49}}} & {\fontsize{8}{9}\selectfont\nolinkurl{StochasticSequential.sequentialMinimalSufficient_iff_relevant}}\par\vspace{0.1ex}{\fontsize{8}{9}\selectfont\ttfamily DecisionQuotient/\allowbreak StochasticSequential/\allowbreak Basic.lean} & \hypertarget{lh:DC50}{\textbf{\nolinkurl{DC50}}} & {\fontsize{8}{9}\selectfont\nolinkurl{StochasticSequential.sequentialRelevantSet_is_minimal}}\par\vspace{0.1ex}{\fontsize{8}{9}\selectfont\ttfamily DecisionQuotient/\allowbreak StochasticSequential/\allowbreak Basic.lean} \\
\hline
\hypertarget{lh:DC57}{\textbf{\nolinkurl{DC57}}} & {\fontsize{8}{9}\selectfont\nolinkurl{StochasticSequential.fiberDecisionProblem_sufficient}}\par\vspace{0.1ex}{\fontsize{8}{9}\selectfont\ttfamily DecisionQuotient/\allowbreak StochasticSequential/\allowbreak Basic.lean} & \hypertarget{lh:DC62}{\textbf{\nolinkurl{DC62}}} & {\fontsize{8}{9}\selectfont\nolinkurl{StochasticSequential.countedStochasticSufficiencySearch_spec}}\par\vspace{0.1ex}{\fontsize{8}{9}\selectfont\ttfamily DecisionQuotient/\allowbreak StochasticSequential/\allowbreak Computation.lean} \\
\hline
\hypertarget{lh:DC63}{\textbf{\nolinkurl{DC63}}} & {\fontsize{8}{9}\selectfont\nolinkurl{StochasticSequential.countedStochasticSufficiencySearch_steps}}\par\vspace{0.1ex}{\fontsize{8}{9}\selectfont\ttfamily DecisionQuotient/\allowbreak StochasticSequential/\allowbreak Computation.lean} & \hypertarget{lh:DC64}{\textbf{\nolinkurl{DC64}}} & {\fontsize{8}{9}\selectfont\nolinkurl{StochasticSequential.countedStochasticAnchorSearch_spec}}\par\vspace{0.1ex}{\fontsize{8}{9}\selectfont\ttfamily DecisionQuotient/\allowbreak StochasticSequential/\allowbreak Computation.lean} \\
\hline
\hypertarget{lh:DC65}{\textbf{\nolinkurl{DC65}}} & {\fontsize{8}{9}\selectfont\nolinkurl{StochasticSequential.countedStochasticAnchorSearch_steps}}\par\vspace{0.1ex}{\fontsize{8}{9}\selectfont\ttfamily DecisionQuotient/\allowbreak StochasticSequential/\allowbreak Computation.lean} & \hypertarget{lh:DC66}{\textbf{\nolinkurl{DC66}}} & {\fontsize{8}{9}\selectfont\nolinkurl{StochasticSequential.countedSequentialSufficiencySearch_spec}}\par\vspace{0.1ex}{\fontsize{8}{9}\selectfont\ttfamily DecisionQuotient/\allowbreak StochasticSequential/\allowbreak Computation.lean} \\
\hline
\hypertarget{lh:DC67}{\textbf{\nolinkurl{DC67}}} & {\fontsize{8}{9}\selectfont\nolinkurl{StochasticSequential.countedSequentialSufficiencySearch_steps}}\par\vspace{0.1ex}{\fontsize{8}{9}\selectfont\ttfamily DecisionQuotient/\allowbreak StochasticSequential/\allowbreak Computation.lean} & \hypertarget{lh:DC68}{\textbf{\nolinkurl{DC68}}} & {\fontsize{8}{9}\selectfont\nolinkurl{StochasticSequential.countedSequentialAnchorSearch_spec}}\par\vspace{0.1ex}{\fontsize{8}{9}\selectfont\ttfamily DecisionQuotient/\allowbreak StochasticSequential/\allowbreak Computation.lean} \\
\hline
\hypertarget{lh:DC69}{\textbf{\nolinkurl{DC69}}} & {\fontsize{8}{9}\selectfont\nolinkurl{StochasticSequential.countedSequentialAnchorSearch_steps}}\par\vspace{0.1ex}{\fontsize{8}{9}\selectfont\ttfamily DecisionQuotient/\allowbreak StochasticSequential/\allowbreak Computation.lean} & \hypertarget{lh:DC70}{\textbf{\nolinkurl{DC70}}} & {\fontsize{8}{9}\selectfont\nolinkurl{DecisionQuotient.static_sufficiency_inP_explicit}}\par\vspace{0.1ex}{\fontsize{8}{9}\selectfont\ttfamily DecisionQuotient/\allowbreak ExplicitStateMembership.lean} \\
\hline
\hypertarget{lh:DC71}{\textbf{\nolinkurl{DC71}}} & {\fontsize{8}{9}\selectfont\nolinkurl{DecisionQuotient.static_anchor_inP_explicit}}\par\vspace{0.1ex}{\fontsize{8}{9}\selectfont\ttfamily DecisionQuotient/\allowbreak ExplicitStateMembership.lean} & \hypertarget{lh:DC72}{\textbf{\nolinkurl{DC72}}} & {\fontsize{8}{9}\selectfont\nolinkurl{StochasticSequential.stochastic_sufficiency_inP_explicit}}\par\vspace{0.1ex}{\fontsize{8}{9}\selectfont\ttfamily DecisionQuotient/\allowbreak StochasticSequential/\allowbreak Computation.lean} \\
\hline
\hypertarget{lh:DC73}{\textbf{\nolinkurl{DC73}}} & {\fontsize{8}{9}\selectfont\nolinkurl{StochasticSequential.stochastic_anchor_inP_explicit}}\par\vspace{0.1ex}{\fontsize{8}{9}\selectfont\ttfamily DecisionQuotient/\allowbreak StochasticSequential/\allowbreak Computation.lean} & \hypertarget{lh:DC74}{\textbf{\nolinkurl{DC74}}} & {\fontsize{8}{9}\selectfont\nolinkurl{StochasticSequential.sequential_sufficiency_inP_explicit}}\par\vspace{0.1ex}{\fontsize{8}{9}\selectfont\ttfamily DecisionQuotient/\allowbreak StochasticSequential/\allowbreak Computation.lean} \\
\hline
\hypertarget{lh:DC75}{\textbf{\nolinkurl{DC75}}} & {\fontsize{8}{9}\selectfont\nolinkurl{StochasticSequential.sequential_anchor_inP_explicit}}\par\vspace{0.1ex}{\fontsize{8}{9}\selectfont\ttfamily DecisionQuotient/\allowbreak StochasticSequential/\allowbreak Computation.lean} & \hypertarget{lh:DC76}{\textbf{\nolinkurl{DC76}}} & {\fontsize{8}{9}\selectfont\nolinkurl{DecisionQuotient.explicit_state_inP_summary}}\par\vspace{0.1ex}{\fontsize{8}{9}\selectfont\ttfamily DecisionQuotient/\allowbreak ExplicitStateMembership.lean} \\
\hline
\hypertarget{lh:DC80}{\textbf{\nolinkurl{DC80}}} & {\fontsize{8}{9}\selectfont\nolinkurl{DecisionQuotient.staticSufficiency_counted_search_witness}}\par\vspace{0.1ex}{\fontsize{8}{9}\selectfont\ttfamily DecisionQuotient/\allowbreak AlgorithmComplexity.lean} & \hypertarget{lh:DC81}{\textbf{\nolinkurl{DC81}}} & {\fontsize{8}{9}\selectfont\nolinkurl{DecisionQuotient.static_query_search_matrix}}\par\vspace{0.1ex}{\fontsize{8}{9}\selectfont\ttfamily DecisionQuotient/\allowbreak AlgorithmComplexity.lean} \\
\hline
\hypertarget{lh:DC82}{\textbf{\nolinkurl{DC82}}} & {\fontsize{8}{9}\selectfont\nolinkurl{DecisionQuotient.finite_search_summary}}\par\vspace{0.1ex}{\fontsize{8}{9}\selectfont\ttfamily DecisionQuotient/\allowbreak ComplexityMain.lean} & \hypertarget{lh:DC91}{\textbf{\nolinkurl{DC91}}} & {\fontsize{8}{9}\selectfont\nolinkurl{StochasticSequential.countedStochasticPreservationSearch_spec}}\par\vspace{0.1ex}{\fontsize{8}{9}\selectfont\ttfamily DecisionQuotient/\allowbreak StochasticSequential/\allowbreak Computation.lean} \\
\hline
\hypertarget{lh:DC92}{\textbf{\nolinkurl{DC92}}} & {\fontsize{8}{9}\selectfont\nolinkurl{StochasticSequential.countedStochasticPreservationSearch_steps}}\par\vspace{0.1ex}{\fontsize{8}{9}\selectfont\ttfamily DecisionQuotient/\allowbreak StochasticSequential/\allowbreak Computation.lean} & \hypertarget{lh:DC93}{\textbf{\nolinkurl{DC93}}} & {\fontsize{8}{9}\selectfont\nolinkurl{StochasticSequential.stochastic_preservation_inP_explicit}}\par\vspace{0.1ex}{\fontsize{8}{9}\selectfont\ttfamily DecisionQuotient/\allowbreak StochasticSequential/\allowbreak Computation.lean} \\
\hline
\hypertarget{lh:DC94}{\textbf{\nolinkurl{DC94}}} & {\fontsize{8}{9}\selectfont\nolinkurl{StochasticSequential.stochastic_preservation_implies_static_sufficiency}}\par\vspace{0.1ex}{\fontsize{8}{9}\selectfont\ttfamily DecisionQuotient/\allowbreak StochasticSequential/\allowbreak Basic.lean} & \hypertarget{lh:DC95}{\textbf{\nolinkurl{DC95}}} & {\fontsize{8}{9}\selectfont\nolinkurl{StochasticSequential.static_sufficiency_implies_stochastic_preservation_of_full_support}}\par\vspace{0.1ex}{\fontsize{8}{9}\selectfont\ttfamily DecisionQuotient/\allowbreak StochasticSequential/\allowbreak Basic.lean} \\
\hline
\hypertarget{lh:DC96}{\textbf{\nolinkurl{DC96}}} & {\fontsize{8}{9}\selectfont\nolinkurl{StochasticSequential.static_sufficiency_iff_stochastic_preservation_of_full_support}}\par\vspace{0.1ex}{\fontsize{8}{9}\selectfont\ttfamily DecisionQuotient/\allowbreak StochasticSequential/\allowbreak Basic.lean} & \hypertarget{lh:DC97}{\textbf{\nolinkurl{DC97}}} & {\fontsize{8}{9}\selectfont\nolinkurl{StochasticSequential.stochasticDecisionEquiv_iff_decisionEquiv_of_preservation}}\par\vspace{0.1ex}{\fontsize{8}{9}\selectfont\ttfamily DecisionQuotient/\allowbreak StochasticSequential/\allowbreak Quotient.lean} \\
\hline
\hypertarget{lh:DC98}{\textbf{\nolinkurl{DC98}}} & {\fontsize{8}{9}\selectfont\nolinkurl{StochasticSequential.stochasticDecisionEquiv_iff_decisionEquiv_of_full_support}}\par\vspace{0.1ex}{\fontsize{8}{9}\selectfont\ttfamily DecisionQuotient/\allowbreak StochasticSequential/\allowbreak Quotient.lean} & \hypertarget{lh:DC99}{\textbf{\nolinkurl{DC99}}} & {\fontsize{8}{9}\selectfont\nolinkurl{StochasticSequential.stochasticEquivSetoid_eq_decisionSetoid_of_full_support}}\par\vspace{0.1ex}{\fontsize{8}{9}\selectfont\ttfamily DecisionQuotient/\allowbreak StochasticSequential/\allowbreak Quotient.lean} \\
\hline
\hypertarget{lh:DC100}{\textbf{\nolinkurl{DC100}}} & {\fontsize{8}{9}\selectfont\nolinkurl{StochasticSequential.benchmark_escalation_summary}}\par\vspace{0.1ex}{\fontsize{8}{9}\selectfont\ttfamily DecisionQuotient/\allowbreak StochasticSequential/\allowbreak Hierarchy.lean} & \hypertarget{lh:DN1}{\textbf{\nolinkurl{DN1}}} & {\fontsize{8}{9}\selectfont\nolinkurl{DecisionProblem.stateDecisionNoise_iff_same_quotient}}\par\vspace{0.1ex}{\fontsize{8}{9}\selectfont\ttfamily DecisionQuotient/\allowbreak DecisionNoise.lean} \\
\hline
\hypertarget{lh:DN2}{\textbf{\nolinkurl{DN2}}} & {\fontsize{8}{9}\selectfont\nolinkurl{DecisionProblem.decisionNoise_iff_not_relevant}}\par\vspace{0.1ex}{\fontsize{8}{9}\selectfont\ttfamily DecisionQuotient/\allowbreak DecisionNoise.lean} & \hypertarget{lh:DN5}{\textbf{\nolinkurl{DN5}}} & {\fontsize{8}{9}\selectfont\nolinkurl{DecisionProblem.decisionNoise_iff_condIndep}}\par\vspace{0.1ex}{\fontsize{8}{9}\selectfont\ttfamily DecisionQuotient/\allowbreak DecisionNoise.lean} \\
\hline
\hypertarget{lh:DP1}{\textbf{\nolinkurl{DP1}}} & {\fontsize{8}{9}\selectfont\nolinkurl{DecisionQuotient.DecisionProblem.minimalSufficient_iff_relevant}}\par\vspace{0.1ex}{\fontsize{8}{9}\selectfont\ttfamily DecisionQuotient/\allowbreak Sufficiency.lean} & \hypertarget{lh:DP2}{\textbf{\nolinkurl{DP2}}} & {\fontsize{8}{9}\selectfont\nolinkurl{DecisionQuotient.DecisionProblem.relevantSet_is_minimal}}\par\vspace{0.1ex}{\fontsize{8}{9}\selectfont\ttfamily DecisionQuotient/\allowbreak Sufficiency.lean} \\
\hline
\hypertarget{lh:DP6}{\textbf{\nolinkurl{DP6}}} & {\fontsize{8}{9}\selectfont\nolinkurl{ClaimClosure.DP6}}\par\vspace{0.1ex}{\fontsize{8}{9}\selectfont\ttfamily DecisionQuotient/\allowbreak ClaimClosure.lean} & \hypertarget{lh:DP7}{\textbf{\nolinkurl{DP7}}} & {\fontsize{8}{9}\selectfont\nolinkurl{ClaimClosure.DP7}}\par\vspace{0.1ex}{\fontsize{8}{9}\selectfont\ttfamily DecisionQuotient/\allowbreak ClaimClosure.lean} \\
\hline
\hypertarget{lh:DQ1}{\textbf{\nolinkurl{DQ1}}} & {\fontsize{8}{9}\selectfont\nolinkurl{ClaimClosure.DQ1}}\par\vspace{0.1ex}{\fontsize{8}{9}\selectfont\ttfamily DecisionQuotient/\allowbreak ClaimClosure.lean} & \hypertarget{lh:DQ2}{\textbf{\nolinkurl{DQ2}}} & {\fontsize{8}{9}\selectfont\nolinkurl{ClaimClosure.DQ2}}\par\vspace{0.1ex}{\fontsize{8}{9}\selectfont\ttfamily DecisionQuotient/\allowbreak ClaimClosure.lean} \\
\hline
\hypertarget{lh:DQ3}{\textbf{\nolinkurl{DQ3}}} & {\fontsize{8}{9}\selectfont\nolinkurl{ClaimClosure.DQ3}}\par\vspace{0.1ex}{\fontsize{8}{9}\selectfont\ttfamily DecisionQuotient/\allowbreak ClaimClosure.lean} & \hypertarget{lh:DQ6}{\textbf{\nolinkurl{DQ6}}} & {\fontsize{8}{9}\selectfont\nolinkurl{ClaimClosure.DQ6}}\par\vspace{0.1ex}{\fontsize{8}{9}\selectfont\ttfamily DecisionQuotient/\allowbreak ClaimClosure.lean} \\
\hline
\hypertarget{lh:DQ7}{\textbf{\nolinkurl{DQ7}}} & {\fontsize{8}{9}\selectfont\nolinkurl{ClaimClosure.DQ7}}\par\vspace{0.1ex}{\fontsize{8}{9}\selectfont\ttfamily DecisionQuotient/\allowbreak ClaimClosure.lean} & \hypertarget{lh:DQ9}{\textbf{\nolinkurl{DQ9}}} & {\fontsize{8}{9}\selectfont\nolinkurl{DecisionQuotient.HardnessDistribution.simplicityTax_conservation}}\par\vspace{0.1ex}{\fontsize{8}{9}\selectfont\ttfamily DecisionQuotient/\allowbreak HardnessDistribution.lean} \\
\hline
\hypertarget{lh:DQ10}{\textbf{\nolinkurl{DQ10}}} & {\fontsize{8}{9}\selectfont\nolinkurl{DecisionQuotient.IntegrityCompetence.alwaysAbstain_integrity}}\par\vspace{0.1ex}{\fontsize{8}{9}\selectfont\ttfamily DecisionQuotient/\allowbreak IntegrityCompetence.lean} & \hypertarget{lh:DQ11}{\textbf{\nolinkurl{DQ11}}} & {\fontsize{8}{9}\selectfont\nolinkurl{DecisionQuotient.SetCoverInstance.min_sufficient_iff_set_cover}}\par\vspace{0.1ex}{\fontsize{8}{9}\selectfont\ttfamily DecisionQuotient/\allowbreak Hardness/\allowbreak SetCoverReduction.lean} \\
\hline
\hypertarget{lh:DQ12}{\textbf{\nolinkurl{DQ12}}} & {\fontsize{8}{9}\selectfont\nolinkurl{DecisionQuotient.StochasticSequential.bounded_horizon_tractable}}\par\vspace{0.1ex}{\fontsize{8}{9}\selectfont\ttfamily DecisionQuotient/\allowbreak StochasticSequential/\allowbreak Tractability.lean} & \hypertarget{lh:DQ13}{\textbf{\nolinkurl{DQ13}}} & {\fontsize{8}{9}\selectfont\nolinkurl{DecisionQuotient.StochasticSequential.bounded_support_tractable}}\par\vspace{0.1ex}{\fontsize{8}{9}\selectfont\ttfamily DecisionQuotient/\allowbreak StochasticSequential/\allowbreak Tractability.lean} \\
\hline
\hypertarget{lh:DQ14}{\textbf{\nolinkurl{DQ14}}} & {\fontsize{8}{9}\selectfont\nolinkurl{DecisionQuotient.StochasticSequential.fully_observable_tractable}}\par\vspace{0.1ex}{\fontsize{8}{9}\selectfont\ttfamily DecisionQuotient/\allowbreak StochasticSequential/\allowbreak Tractability.lean} & \hypertarget{lh:DQ15}{\textbf{\nolinkurl{DQ15}}} & {\fontsize{8}{9}\selectfont\nolinkurl{DecisionQuotient.StochasticSequential.product_distribution_tractable}}\par\vspace{0.1ex}{\fontsize{8}{9}\selectfont\ttfamily DecisionQuotient/\allowbreak StochasticSequential/\allowbreak Tractability.lean} \\
\hline
\hypertarget{lh:DQ16}{\textbf{\nolinkurl{DQ16}}} & {\fontsize{8}{9}\selectfont\nolinkurl{DecisionQuotient.StochasticSequential.product_enables_transfer}}\par\vspace{0.1ex}{\fontsize{8}{9}\selectfont\ttfamily DecisionQuotient/\allowbreak StochasticSequential/\allowbreak CrossRegime.lean} & \hypertarget{lh:DQ18}{\textbf{\nolinkurl{DQ18}}} & {\fontsize{8}{9}\selectfont\nolinkurl{DecisionQuotient.StochasticSequential.static_simpler_than_stochastic}}\par\vspace{0.1ex}{\fontsize{8}{9}\selectfont\ttfamily DecisionQuotient/\allowbreak StochasticSequential/\allowbreak CrossRegime.lean} \\
\hline
\hypertarget{lh:DQ19}{\textbf{\nolinkurl{DQ19}}} & {\fontsize{8}{9}\selectfont\nolinkurl{DecisionQuotient.StochasticSequential.stochastic_simpler_than_sequential}}\par\vspace{0.1ex}{\fontsize{8}{9}\selectfont\ttfamily DecisionQuotient/\allowbreak StochasticSequential/\allowbreak CrossRegime.lean} & \hypertarget{lh:DQ20}{\textbf{\nolinkurl{DQ20}}} & {\fontsize{8}{9}\selectfont\nolinkurl{DecisionQuotient.all_coordinates_necessary_of_not_tautology}}\par\vspace{0.1ex}{\fontsize{8}{9}\selectfont\ttfamily DecisionQuotient/\allowbreak Hardness/\allowbreak MinSufficientApproximation.lean} \\
\hline
\hypertarget{lh:DQ25}{\textbf{\nolinkurl{DQ25}}} & {\fontsize{8}{9}\selectfont\nolinkurl{DecisionQuotient.counted_min_sufficient_inapproximability_conditional}}\par\vspace{0.1ex}{\fontsize{8}{9}\selectfont\ttfamily DecisionQuotient/\allowbreak Hardness/\allowbreak ApproximationHardness.lean} & \hypertarget{lh:DQ26}{\textbf{\nolinkurl{DQ26}}} & {\fontsize{8}{9}\selectfont\nolinkurl{DecisionQuotient.dichotomy_conditional}}\par\vspace{0.1ex}{\fontsize{8}{9}\selectfont\ttfamily DecisionQuotient/\allowbreak ClaimClosure.lean} \\
\hline
\hypertarget{lh:DQ27}{\textbf{\nolinkurl{DQ27}}} & {\fontsize{8}{9}\selectfont\nolinkurl{DecisionQuotient.eth_lower_bound_informal}}\par\vspace{0.1ex}{\fontsize{8}{9}\selectfont\ttfamily DecisionQuotient/\allowbreak Hardness/\allowbreak ETH.lean} & \hypertarget{lh:DQ28}{\textbf{\nolinkurl{DQ28}}} & {\fontsize{8}{9}\selectfont\nolinkurl{DecisionQuotient.exact_certainty_inflation_under_hardness_core}}\par\vspace{0.1ex}{\fontsize{8}{9}\selectfont\ttfamily DecisionQuotient/\allowbreak ClaimClosure.lean} \\
\hline
\hypertarget{lh:DQ30}{\textbf{\nolinkurl{DQ30}}} & {\fontsize{8}{9}\selectfont\nolinkurl{DecisionQuotient.hard_family_all_coords_core}}\par\vspace{0.1ex}{\fontsize{8}{9}\selectfont\ttfamily DecisionQuotient/\allowbreak ClaimClosure.lean} & \hypertarget{lh:DQ31}{\textbf{\nolinkurl{DQ31}}} & {\fontsize{8}{9}\selectfont\nolinkurl{DecisionQuotient.integrity_resource_bound_for_sufficiency}}\par\vspace{0.1ex}{\fontsize{8}{9}\selectfont\ttfamily DecisionQuotient/\allowbreak ClaimClosure.lean} \\
\hline
\hypertarget{lh:DQ32}{\textbf{\nolinkurl{DQ32}}} & {\fontsize{8}{9}\selectfont\nolinkurl{DecisionQuotient.min_sufficient_factor_inapprox_of_set_cover_factor_inapprox}}\par\vspace{0.1ex}{\fontsize{8}{9}\selectfont\ttfamily DecisionQuotient/\allowbreak Hardness/\allowbreak ApproximationHardness.lean} & \hypertarget{lh:DQ34}{\textbf{\nolinkurl{DQ34}}} & {\fontsize{8}{9}\selectfont\nolinkurl{DecisionQuotient.min_sufficient_set_cover_equiv}}\par\vspace{0.1ex}{\fontsize{8}{9}\selectfont\ttfamily DecisionQuotient/\allowbreak Hardness/\allowbreak ApproximationHardness.lean} \\
\hline
\hypertarget{lh:DQ36}{\textbf{\nolinkurl{DQ36}}} & {\fontsize{8}{9}\selectfont\nolinkurl{DecisionQuotient.minsuff_conp_complete_conditional}}\par\vspace{0.1ex}{\fontsize{8}{9}\selectfont\ttfamily DecisionQuotient/\allowbreak ClaimClosure.lean} & \hypertarget{lh:DQ38}{\textbf{\nolinkurl{DQ38}}} & {\fontsize{8}{9}\selectfont\nolinkurl{DecisionQuotient.no_exact_claim_admissible_under_hardness_core}}\par\vspace{0.1ex}{\fontsize{8}{9}\selectfont\ttfamily DecisionQuotient/\allowbreak ClaimClosure.lean} \\
\hline
\hypertarget{lh:DQ39}{\textbf{\nolinkurl{DQ39}}} & {\fontsize{8}{9}\selectfont\nolinkurl{DecisionQuotient.singleton_gate_sufficient_of_tautology}}\par\vspace{0.1ex}{\fontsize{8}{9}\selectfont\ttfamily DecisionQuotient/\allowbreak Hardness/\allowbreak MinSufficientApproximation.lean} & \hypertarget{lh:DQ40}{\textbf{\nolinkurl{DQ40}}} & {\fontsize{8}{9}\selectfont\nolinkurl{DecisionQuotient.sufficiency_conp_complete_conditional}}\par\vspace{0.1ex}{\fontsize{8}{9}\selectfont\ttfamily DecisionQuotient/\allowbreak ClaimClosure.lean} \\
\hline
\hypertarget{lh:DQ42}{\textbf{\nolinkurl{DQ42}}} & {\fontsize{8}{9}\selectfont\nolinkurl{DecisionQuotient.sufficient_means_factorizable}}\par\vspace{0.1ex}{\fontsize{8}{9}\selectfont\ttfamily DecisionQuotient/\allowbreak Information.lean} & \hypertarget{lh:DQ43}{\textbf{\nolinkurl{DQ43}}} & {\fontsize{8}{9}\selectfont\nolinkurl{DecisionQuotient.tautology_decidable_from_factor_approx}}\par\vspace{0.1ex}{\fontsize{8}{9}\selectfont\ttfamily DecisionQuotient/\allowbreak Hardness/\allowbreak MinSufficientApproximation.lean} \\
\hline
\hypertarget{lh:DQ44}{\textbf{\nolinkurl{DQ44}}} & {\fontsize{8}{9}\selectfont\nolinkurl{DecisionQuotient.tractable_subcases_conditional}}\par\vspace{0.1ex}{\fontsize{8}{9}\selectfont\ttfamily DecisionQuotient/\allowbreak ClaimClosure.lean} & \hypertarget{lh:DQ45}{\textbf{\nolinkurl{DQ45}}} & {\fontsize{8}{9}\selectfont\nolinkurl{DecisionQuotient.tractable_tree_core}}\par\vspace{0.1ex}{\fontsize{8}{9}\selectfont\ttfamily DecisionQuotient/\allowbreak ClaimClosure.lean} \\
\hline
\hypertarget{lh:DQ51}{\textbf{\nolinkurl{DQ51}}} & {\fontsize{8}{9}\selectfont\nolinkurl{DecisionQuotient.bounded_actions_tractable}}\par\vspace{0.1ex}{\fontsize{8}{9}\selectfont\ttfamily DecisionQuotient/\allowbreak Summary.lean} & \hypertarget{lh:DQ56}{\textbf{\nolinkurl{DQ56}}} & {\fontsize{8}{9}\selectfont\nolinkurl{DecisionQuotient.orbitType_count_bound}}\par\vspace{0.1ex}{\fontsize{8}{9}\selectfont\ttfamily DecisionQuotient/\allowbreak Tractability/\allowbreak Dimensional.lean} \\
\hline
\hypertarget{lh:DQ57}{\textbf{\nolinkurl{DQ57}}} & {\fontsize{8}{9}\selectfont\nolinkurl{DecisionQuotient.separable_utility_tractable}}\par\vspace{0.1ex}{\fontsize{8}{9}\selectfont\ttfamily DecisionQuotient/\allowbreak Summary.lean} & \hypertarget{lh:DQ58}{\textbf{\nolinkurl{DQ58}}} & {\fontsize{8}{9}\selectfont\nolinkurl{DecisionQuotient.stochastic_objective_bridge_can_fail_on_sufficiency}}\par\vspace{0.1ex}{\fontsize{8}{9}\selectfont\ttfamily DecisionQuotient/\allowbreak ClaimClosure.lean} \\
\hline
\hypertarget{lh:DQ62}{\textbf{\nolinkurl{DQ62}}} & {\fontsize{8}{9}\selectfont\nolinkurl{DecisionQuotient.transition_coupled_bridge_can_fail_on_sufficiency}}\par\vspace{0.1ex}{\fontsize{8}{9}\selectfont\ttfamily DecisionQuotient/\allowbreak ClaimClosure.lean} & \hypertarget{lh:DQ63}{\textbf{\nolinkurl{DQ63}}} & {\fontsize{8}{9}\selectfont\nolinkurl{DecisionQuotient.tree_structure_tractable}}\par\vspace{0.1ex}{\fontsize{8}{9}\selectfont\ttfamily DecisionQuotient/\allowbreak Summary.lean} \\
\hline
\hypertarget{lh:DQ72}{\textbf{\nolinkurl{DQ72}}} & {\fontsize{8}{9}\selectfont\nolinkurl{DecisionQuotient.bounded_actions_complexity}}\par\vspace{0.1ex}{\fontsize{8}{9}\selectfont\ttfamily DecisionQuotient/\allowbreak Tractability/\allowbreak BoundedActions.lean} & \hypertarget{lh:DQ74}{\textbf{\nolinkurl{DQ74}}} & {\fontsize{8}{9}\selectfont\nolinkurl{DecisionQuotient.sufficiency_poly_separable}}\par\vspace{0.1ex}{\fontsize{8}{9}\selectfont\ttfamily DecisionQuotient/\allowbreak Tractability/\allowbreak SeparableUtility.lean} \\
\hline
\hypertarget{lh:DQ75}{\textbf{\nolinkurl{DQ75}}} & {\fontsize{8}{9}\selectfont\nolinkurl{DecisionQuotient.tensor_contraction_tractable}}\par\vspace{0.1ex}{\fontsize{8}{9}\selectfont\ttfamily DecisionQuotient/\allowbreak Tractability/\allowbreak SeparableUtility.lean} & \hypertarget{lh:DQ76}{\textbf{\nolinkurl{DQ76}}} & {\fontsize{8}{9}\selectfont\nolinkurl{DecisionQuotient.low_rank_utility_admits_factored_computation}}\par\vspace{0.1ex}{\fontsize{8}{9}\selectfont\ttfamily DecisionQuotient/\allowbreak Tractability/\allowbreak SeparableUtility.lean} \\
\hline
\hypertarget{lh:DQ77}{\textbf{\nolinkurl{DQ77}}} & {\fontsize{8}{9}\selectfont\nolinkurl{DecisionQuotient.low_rank_tractability}}\par\vspace{0.1ex}{\fontsize{8}{9}\selectfont\ttfamily DecisionQuotient/\allowbreak Tractability/\allowbreak SeparableUtility.lean} & \hypertarget{lh:DQ78}{\textbf{\nolinkurl{DQ78}}} & {\fontsize{8}{9}\selectfont\nolinkurl{DecisionQuotient.sufficiency_poly_tree_structured}}\par\vspace{0.1ex}{\fontsize{8}{9}\selectfont\ttfamily DecisionQuotient/\allowbreak Tractability/\allowbreak TreeStructure.lean} \\
\hline
\hypertarget{lh:DQ79}{\textbf{\nolinkurl{DQ79}}} & {\fontsize{8}{9}\selectfont\nolinkurl{DecisionQuotient.csp_treewidth_tractable}}\par\vspace{0.1ex}{\fontsize{8}{9}\selectfont\ttfamily DecisionQuotient/\allowbreak Tractability/\allowbreak TreeStructure.lean} & \hypertarget{lh:DQ80}{\textbf{\nolinkurl{DQ80}}} & {\fontsize{8}{9}\selectfont\nolinkurl{DecisionQuotient.sufficiency_reduces_to_interaction_csp}}\par\vspace{0.1ex}{\fontsize{8}{9}\selectfont\ttfamily DecisionQuotient/\allowbreak Tractability/\allowbreak TreeStructure.lean} \\
\hline
\hypertarget{lh:DQ81}{\textbf{\nolinkurl{DQ81}}} & {\fontsize{8}{9}\selectfont\nolinkurl{DecisionQuotient.bounded_treewidth_tractability}}\par\vspace{0.1ex}{\fontsize{8}{9}\selectfont\ttfamily DecisionQuotient/\allowbreak Tractability/\allowbreak TreeStructure.lean} & \hypertarget{lh:DQ82}{\textbf{\nolinkurl{DQ82}}} & {\fontsize{8}{9}\selectfont\nolinkurl{DecisionQuotient.orbitType_eq_iff}}\par\vspace{0.1ex}{\fontsize{8}{9}\selectfont\ttfamily DecisionQuotient/\allowbreak Tractability/\allowbreak Dimensional.lean} \\
\hline
\hypertarget{lh:DQ83}{\textbf{\nolinkurl{DQ83}}} & {\fontsize{8}{9}\selectfont\nolinkurl{DecisionQuotient.symmetric_optimalActions_orbit_invariant}}\par\vspace{0.1ex}{\fontsize{8}{9}\selectfont\ttfamily DecisionQuotient/\allowbreak Tractability/\allowbreak Dimensional.lean} & \hypertarget{lh:DQ84}{\textbf{\nolinkurl{DQ84}}} & {\fontsize{8}{9}\selectfont\nolinkurl{DecisionQuotient.sufficiency_reduces_to_cross_orbit_check}}\par\vspace{0.1ex}{\fontsize{8}{9}\selectfont\ttfamily DecisionQuotient/\allowbreak Tractability/\allowbreak Dimensional.lean} \\
\hline
\hypertarget{lh:DQ85}{\textbf{\nolinkurl{DQ85}}} & {\fontsize{8}{9}\selectfont\nolinkurl{DecisionQuotient.symmetric_sufficiency_complexity_bound}}\par\vspace{0.1ex}{\fontsize{8}{9}\selectfont\ttfamily DecisionQuotient/\allowbreak Tractability/\allowbreak Dimensional.lean} & \hypertarget{lh:DQ86}{\textbf{\nolinkurl{DQ86}}} & {\fontsize{8}{9}\selectfont\nolinkurl{DecisionQuotient.StochasticSequential.countedStochasticAnchorPreservationSearch_spec}}\par\vspace{0.1ex}{\fontsize{8}{9}\selectfont\ttfamily DecisionQuotient/\allowbreak StochasticSequential/\allowbreak PreservationVariants.lean} \\
\hline
\hypertarget{lh:DQ87}{\textbf{\nolinkurl{DQ87}}} & {\fontsize{8}{9}\selectfont\nolinkurl{DecisionQuotient.StochasticSequential.countedStochasticAnchorPreservationSearch_steps}}\par\vspace{0.1ex}{\fontsize{8}{9}\selectfont\ttfamily DecisionQuotient/\allowbreak StochasticSequential/\allowbreak PreservationVariants.lean} & \hypertarget{lh:DQ88}{\textbf{\nolinkurl{DQ88}}} & {\fontsize{8}{9}\selectfont\nolinkurl{DecisionQuotient.StochasticSequential.countedStochasticMinimumPreservationSearch_spec}}\par\vspace{0.1ex}{\fontsize{8}{9}\selectfont\ttfamily DecisionQuotient/\allowbreak StochasticSequential/\allowbreak PreservationVariants.lean} \\
\hline
\hypertarget{lh:DQ89}{\textbf{\nolinkurl{DQ89}}} & {\fontsize{8}{9}\selectfont\nolinkurl{DecisionQuotient.StochasticSequential.countedStochasticMinimumPreservationSearch_steps}}\par\vspace{0.1ex}{\fontsize{8}{9}\selectfont\ttfamily DecisionQuotient/\allowbreak StochasticSequential/\allowbreak PreservationVariants.lean} & \hypertarget{lh:DQ90}{\textbf{\nolinkurl{DQ90}}} & {\fontsize{8}{9}\selectfont\nolinkurl{DecisionQuotient.StochasticSequential.static_anchor_implies_stochastic_anchor_preservation_of_positive_anchor_fiber}}\par\vspace{0.1ex}{\fontsize{8}{9}\selectfont\ttfamily DecisionQuotient/\allowbreak StochasticSequential/\allowbreak PreservationVariants.lean} \\
\hline
\hypertarget{lh:DQ91}{\textbf{\nolinkurl{DQ91}}} & {\fontsize{8}{9}\selectfont\nolinkurl{DecisionQuotient.StochasticSequential.static_sufficiency_implies_stochastic_preservation_of_positive_fiber_support}}\par\vspace{0.1ex}{\fontsize{8}{9}\selectfont\ttfamily DecisionQuotient/\allowbreak StochasticSequential/\allowbreak PreservationVariants.lean} & \hypertarget{lh:DQ92}{\textbf{\nolinkurl{DQ92}}} & {\fontsize{8}{9}\selectfont\nolinkurl{DecisionQuotient.StochasticSequential.stochasticAnchorPreservation_counted_search_witness}}\par\vspace{0.1ex}{\fontsize{8}{9}\selectfont\ttfamily DecisionQuotient/\allowbreak StochasticSequential/\allowbreak PreservationVariants.lean} \\
\hline
\hypertarget{lh:DQ93}{\textbf{\nolinkurl{DQ93}}} & {\fontsize{8}{9}\selectfont\nolinkurl{DecisionQuotient.StochasticSequential.stochasticMinimumPreservation_counted_search_witness}}\par\vspace{0.1ex}{\fontsize{8}{9}\selectfont\ttfamily DecisionQuotient/\allowbreak StochasticSequential/\allowbreak PreservationVariants.lean} & \hypertarget{lh:DQ94}{\textbf{\nolinkurl{DQ94}}} & {\fontsize{8}{9}\selectfont\nolinkurl{DecisionQuotient.StochasticSequential.stochastic_anchor_preservation_iff_static_anchor_of_full_support}}\par\vspace{0.1ex}{\fontsize{8}{9}\selectfont\ttfamily DecisionQuotient/\allowbreak StochasticSequential/\allowbreak PreservationVariants.lean} \\
\hline
\hypertarget{lh:DQ95}{\textbf{\nolinkurl{DQ95}}} & {\fontsize{8}{9}\selectfont\nolinkurl{DecisionQuotient.StochasticSequential.stochastic_minimum_preservation_iff_static_of_full_support}}\par\vspace{0.1ex}{\fontsize{8}{9}\selectfont\ttfamily DecisionQuotient/\allowbreak StochasticSequential/\allowbreak PreservationVariants.lean} & \hypertarget{lh:DQ96}{\textbf{\nolinkurl{DQ96}}} & {\fontsize{8}{9}\selectfont\nolinkurl{DecisionQuotient.StochasticSequential.stochastic_minimum_preservation_static_relevant_card_le}}\par\vspace{0.1ex}{\fontsize{8}{9}\selectfont\ttfamily DecisionQuotient/\allowbreak StochasticSequential/\allowbreak PreservationVariants.lean} \\
\hline
\hypertarget{lh:DQ97}{\textbf{\nolinkurl{DQ97}}} & {\fontsize{8}{9}\selectfont\nolinkurl{DecisionQuotient.StochasticSequential.stochastic_preservation_contains_static_relevant}}\par\vspace{0.1ex}{\fontsize{8}{9}\selectfont\ttfamily DecisionQuotient/\allowbreak StochasticSequential/\allowbreak PreservationVariants.lean} & \hypertarget{lh:DQ98}{\textbf{\nolinkurl{DQ98}}} & {\fontsize{8}{9}\selectfont\nolinkurl{DecisionQuotient.anchor_sufficiency_sigma2p}}\par\vspace{0.1ex}{\fontsize{8}{9}\selectfont\ttfamily DecisionQuotient/\allowbreak Hardness.lean} \\
\hline
\hypertarget{lh:EH1}{\textbf{\nolinkurl{EH1}}} & {\fontsize{8}{9}\selectfont\nolinkurl{StochasticSequential.existential_anchor_source_fits_np_over_ppstyle_honest}}\par\vspace{0.1ex}{\fontsize{8}{9}\selectfont\ttfamily DecisionQuotient/\allowbreak StochasticSequential/\allowbreak ExistentialHardness.lean} & \hypertarget{lh:EH2}{\textbf{\nolinkurl{EH2}}} & {\fontsize{8}{9}\selectfont\nolinkurl{StochasticSequential.existential_anchor_hard_honest}}\par\vspace{0.1ex}{\fontsize{8}{9}\selectfont\ttfamily DecisionQuotient/\allowbreak StochasticSequential/\allowbreak ExistentialHardness.lean} \\
\hline
\hypertarget{lh:EH3}{\textbf{\nolinkurl{EH3}}} & {\fontsize{8}{9}\selectfont\nolinkurl{StochasticSequential.existential_anchor_query_family_hard_honest}}\par\vspace{0.1ex}{\fontsize{8}{9}\selectfont\ttfamily DecisionQuotient/\allowbreak StochasticSequential/\allowbreak ExistentialHardness.lean} & \hypertarget{lh:EH4}{\textbf{\nolinkurl{EH4}}} & {\fontsize{8}{9}\selectfont\nolinkurl{StochasticSequential.existential_anchor_np_over_ppstyle_hard_honest}}\par\vspace{0.1ex}{\fontsize{8}{9}\selectfont\ttfamily DecisionQuotient/\allowbreak StochasticSequential/\allowbreak ExistentialHardness.lean} \\
\hline
\hypertarget{lh:EH5}{\textbf{\nolinkurl{EH5}}} & {\fontsize{8}{9}\selectfont\nolinkurl{StochasticSequential.existential_anchor_query_family_np_over_ppstyle_hard_honest}}\par\vspace{0.1ex}{\fontsize{8}{9}\selectfont\ttfamily DecisionQuotient/\allowbreak StochasticSequential/\allowbreak ExistentialHardness.lean} & \hypertarget{lh:EH6}{\textbf{\nolinkurl{EH6}}} & {\fontsize{8}{9}\selectfont\nolinkurl{StochasticSequential.existential_anchor_query_family_np_over_ppstyle_complete_honest}}\par\vspace{0.1ex}{\fontsize{8}{9}\selectfont\ttfamily DecisionQuotient/\allowbreak StochasticSequential/\allowbreak ExistentialHardness.lean} \\
\hline
\hypertarget{lh:EH7}{\textbf{\nolinkurl{EH7}}} & {\fontsize{8}{9}\selectfont\nolinkurl{StochasticSequential.existential_decisiveness_complement_np_over_ppstyle_hard_honest}}\par\vspace{0.1ex}{\fontsize{8}{9}\selectfont\ttfamily DecisionQuotient/\allowbreak StochasticSequential/\allowbreak ExistentialHardness.lean} & \hypertarget{lh:EH8}{\textbf{\nolinkurl{EH8}}} & {\fontsize{8}{9}\selectfont\nolinkurl{StochasticSequential.existential_decisiveness_complement_np_over_ppstyle_complete_honest}}\par\vspace{0.1ex}{\fontsize{8}{9}\selectfont\ttfamily DecisionQuotient/\allowbreak StochasticSequential/\allowbreak ExistentialHardness.lean} \\
\hline
\hypertarget{lh:EH9}{\textbf{\nolinkurl{EH9}}} & {\fontsize{8}{9}\selectfont\nolinkurl{StochasticSequential.existential_decisiveness_query_family_np_over_ppstyle_hard_honest}}\par\vspace{0.1ex}{\fontsize{8}{9}\selectfont\ttfamily DecisionQuotient/\allowbreak StochasticSequential/\allowbreak ExistentialHardness.lean} & \hypertarget{lh:EH10}{\textbf{\nolinkurl{EH10}}} & {\fontsize{8}{9}\selectfont\nolinkurl{StochasticSequential.existential_decisiveness_query_family_np_over_ppstyle_complete_honest}}\par\vspace{0.1ex}{\fontsize{8}{9}\selectfont\ttfamily DecisionQuotient/\allowbreak StochasticSequential/\allowbreak ExistentialHardness.lean} \\
\hline
\hypertarget{lh:EX1}{\textbf{\nolinkurl{EX1}}} & {\fontsize{8}{9}\selectfont\nolinkurl{Examples.pomdp_reduction_to_preservation}}\par\vspace{0.1ex}{\fontsize{8}{9}\selectfont\ttfamily DecisionQuotient/\allowbreak Examples/\allowbreak PreservationExamples.lean} & \hypertarget{lh:EX2}{\textbf{\nolinkurl{EX2}}} & {\fontsize{8}{9}\selectfont\nolinkurl{Examples.hyperparam_reduction_to_static}}\par\vspace{0.1ex}{\fontsize{8}{9}\selectfont\ttfamily DecisionQuotient/\allowbreak Examples/\allowbreak PreservationExamples.lean} \\
\hline
\hypertarget{lh:EX3}{\textbf{\nolinkurl{EX3}}} & {\fontsize{8}{9}\selectfont\nolinkurl{Examples.toy_full_opt_s1_contains_a}}\par\vspace{0.1ex}{\fontsize{8}{9}\selectfont\ttfamily DecisionQuotient/\allowbreak Examples/\allowbreak ConcreteExamples.lean} & \hypertarget{lh:HD14}{\textbf{\nolinkurl{HD14}}} & {\fontsize{8}{9}\selectfont\nolinkurl{DecisionQuotient.HardnessDistribution.linear_lt_exponential_plus_constant_eventually}}\par\vspace{0.1ex}{\fontsize{8}{9}\selectfont\ttfamily DecisionQuotient/\allowbreak HardnessDistribution.lean} \\
\hline
\hypertarget{lh:HD23}{\textbf{\nolinkurl{HD23}}} & {\fontsize{8}{9}\selectfont\nolinkurl{DecisionQuotient.HardnessDistribution.hardness_is_irreducible_required_work}}\par\vspace{0.1ex}{\fontsize{8}{9}\selectfont\ttfamily DecisionQuotient/\allowbreak HardnessDistribution.lean} & \hypertarget{lh:IC30}{\textbf{\nolinkurl{IC30}}} & {\fontsize{8}{9}\selectfont\nolinkurl{DecisionQuotient.IntegrityCompetence.exactCertaintyInflation_iff_no_exact_competence}}\par\vspace{0.1ex}{\fontsize{8}{9}\selectfont\ttfamily DecisionQuotient/\allowbreak IntegrityCompetence.lean} \\
\hline
\hypertarget{lh:IC32}{\textbf{\nolinkurl{IC32}}} & {\fontsize{8}{9}\selectfont\nolinkurl{DecisionQuotient.IntegrityCompetence.integrity_forces_abstention}}\par\vspace{0.1ex}{\fontsize{8}{9}\selectfont\ttfamily DecisionQuotient/\allowbreak IntegrityCompetence.lean} & \hypertarget{lh:IC33}{\textbf{\nolinkurl{IC33}}} & {\fontsize{8}{9}\selectfont\nolinkurl{DecisionQuotient.IntegrityCompetence.integrity_not_competent_of_nonempty_scope}}\par\vspace{0.1ex}{\fontsize{8}{9}\selectfont\ttfamily DecisionQuotient/\allowbreak IntegrityCompetence.lean} \\
\hline
\hypertarget{lh:IC34}{\textbf{\nolinkurl{IC34}}} & {\fontsize{8}{9}\selectfont\nolinkurl{DecisionQuotient.IntegrityCompetence.integrity_resource_bound}}\par\vspace{0.1ex}{\fontsize{8}{9}\selectfont\ttfamily DecisionQuotient/\allowbreak IntegrityCompetence.lean} & \hypertarget{lh:OU1}{\textbf{\nolinkurl{OU1}}} & {\fontsize{8}{9}\selectfont\nolinkurl{StochasticSequential.stochastic_anchor_query_fits_np_over_ppstyle}}\par\vspace{0.1ex}{\fontsize{8}{9}\selectfont\ttfamily DecisionQuotient/\allowbreak StochasticSequential/\allowbreak OracleUpperBounds.lean} \\
\hline
\hypertarget{lh:OU2}{\textbf{\nolinkurl{OU2}}} & {\fontsize{8}{9}\selectfont\nolinkurl{StochasticSequential.stochastic_minimum_query_fits_np_over_ppstyle}}\par\vspace{0.1ex}{\fontsize{8}{9}\selectfont\ttfamily DecisionQuotient/\allowbreak StochasticSequential/\allowbreak OracleUpperBounds.lean} & \hypertarget{lh:OU6}{\textbf{\nolinkurl{OU6}}} & {\fontsize{8}{9}\selectfont\nolinkurl{StochasticSequential.stochastic_anchor_inP_explicit_via_witness_schema}}\par\vspace{0.1ex}{\fontsize{8}{9}\selectfont\ttfamily DecisionQuotient/\allowbreak StochasticSequential/\allowbreak OracleUpperBounds.lean} \\
\hline
\hypertarget{lh:OU7}{\textbf{\nolinkurl{OU7}}} & {\fontsize{8}{9}\selectfont\nolinkurl{StochasticSequential.stochastic_minimum_inP_explicit_via_witness_schema}}\par\vspace{0.1ex}{\fontsize{8}{9}\selectfont\ttfamily DecisionQuotient/\allowbreak StochasticSequential/\allowbreak OracleUpperBounds.lean} & \hypertarget{lh:OU8}{\textbf{\nolinkurl{OU8}}} & {\fontsize{8}{9}\selectfont\nolinkurl{StochasticSequential.fitsCoNPOverPPStyle_of_no_witness}}\par\vspace{0.1ex}{\fontsize{8}{9}\selectfont\ttfamily DecisionQuotient/\allowbreak StochasticSequential/\allowbreak OracleUpperBounds.lean} \\
\hline
\hypertarget{lh:OU9}{\textbf{\nolinkurl{OU9}}} & {\fontsize{8}{9}\selectfont\nolinkurl{StochasticSequential.stochastic_decisiveness_query_fits_conp_over_ppstyle}}\par\vspace{0.1ex}{\fontsize{8}{9}\selectfont\ttfamily DecisionQuotient/\allowbreak StochasticSequential/\allowbreak OracleUpperBounds.lean} & \hypertarget{lh:OU10}{\textbf{\nolinkurl{OU10}}} & {\fontsize{8}{9}\selectfont\nolinkurl{StochasticSequential.stochastic_decisiveness_complement_fits_np_over_ppstyle}}\par\vspace{0.1ex}{\fontsize{8}{9}\selectfont\ttfamily DecisionQuotient/\allowbreak StochasticSequential/\allowbreak OracleUpperBounds.lean} \\
\hline
\hypertarget{lh:OU11}{\textbf{\nolinkurl{OU11}}} & {\fontsize{8}{9}\selectfont\nolinkurl{StochasticSequential.stochastic_decisiveness_scoped_oracle_bounds}}\par\vspace{0.1ex}{\fontsize{8}{9}\selectfont\ttfamily DecisionQuotient/\allowbreak StochasticSequential/\allowbreak OracleUpperBounds.lean} & \hypertarget{lh:OU12}{\textbf{\nolinkurl{OU12}}} & {\fontsize{8}{9}\selectfont\nolinkurl{StochasticSequential.stochastic_preservation_explicit_summary}}\par\vspace{0.1ex}{\fontsize{8}{9}\selectfont\ttfamily DecisionQuotient/\allowbreak StochasticSequential/\allowbreak OracleUpperBounds.lean} \\
\hline
\hypertarget{lh:PA3}{\textbf{\nolinkurl{PA3}}} & {\fontsize{8}{9}\selectfont\nolinkurl{Physics.AnchorChecks.stochastic_anchor_check_iff_exists_anchor_singleton}}\par\vspace{0.1ex}{\fontsize{8}{9}\selectfont\ttfamily DecisionQuotient/\allowbreak Physics/\allowbreak AnchorChecks.lean} & \hypertarget{lh:QT1}{\textbf{\nolinkurl{QT1}}} & {\fontsize{8}{9}\selectfont\nolinkurl{DecisionProblem.quotient_is_coarsest}}\par\vspace{0.1ex}{\fontsize{8}{9}\selectfont\ttfamily DecisionQuotient/\allowbreak Quotient.lean} \\
\hline
\hypertarget{lh:QT7}{\textbf{\nolinkurl{QT7}}} & {\fontsize{8}{9}\selectfont\nolinkurl{DecisionProblem.quotient_has_unique_factorization}}\par\vspace{0.1ex}{\fontsize{8}{9}\selectfont\ttfamily DecisionQuotient/\allowbreak Quotient.lean} & \hypertarget{lh:WD1}{\textbf{\nolinkurl{WD1}}} & {\fontsize{8}{9}\selectfont\nolinkurl{DecisionQuotient.checking_witnessing_duality_budget}}\par\vspace{0.1ex}{\fontsize{8}{9}\selectfont\ttfamily DecisionQuotient/\allowbreak WitnessCheckingDuality.lean} \\
\hline
\hypertarget{lh:WD2}{\textbf{\nolinkurl{WD2}}} & {\fontsize{8}{9}\selectfont\nolinkurl{DecisionQuotient.no_sound_checker_below_witness_budget}}\par\vspace{0.1ex}{\fontsize{8}{9}\selectfont\ttfamily DecisionQuotient/\allowbreak WitnessCheckingDuality.lean} & \hypertarget{lh:WD3}{\textbf{\nolinkurl{WD3}}} & {\fontsize{8}{9}\selectfont\nolinkurl{DecisionQuotient.checking_time_ge_witness_budget}}\par\vspace{0.1ex}{\fontsize{8}{9}\selectfont\ttfamily DecisionQuotient/\allowbreak WitnessCheckingDuality.lean} \\
\hline
\end{longtable}
\fi
\makeatother
\endgroup

}{%
  \textbf{Error:} \texttt{content/lean\_handle\_ids\_auto.tex} not found.
}

\section{Claim-to-Lean Handle Mapping}

This section maps each paper claim to its corresponding Lean formalization.

\IfFileExists{content/claim_mapping_auto.tex}{%
\begingroup
\scriptsize
\setlength{\tabcolsep}{3pt}
\renewcommand{\arraystretch}{1.2}
\setlength{\LTpre}{0pt}
\setlength{\LTpost}{0pt}
\begin{longtable}{@{}>{\raggedright\arraybackslash}m{0.65\linewidth}>{\raggedleft\arraybackslash}m{0.30\linewidth}@{}}
\toprule
\textbf{Paper claim} & \textbf{Lean handle} \\
\midrule
\endhead
\midrule
\endfoot
\bottomrule
\endlastfoot
Corollary X.1: No General-Purpose Exact Configuration Minimizer & \LH{DQ36} \\
\midrule
Corollary VII.4: Cross-Model Contrast for Exact Certification & \LH{DQ3}, \LH{DQ26} \\
\midrule
Corollary IX.9: Exact Reliability Impossibility in the Hard Regime & \LH{DQ2}, \LH{IC32} \\
\midrule
Proposition X.3: Exact Certification Competence Depends on Regime & \LH{DQ36}, \LH{DQ40}, \LH{DQ44} \\
\midrule
Definition II.5: Decision Problem & \LH{DP6} \\
\midrule
Definition VII.3: Exponential Time Hypothesis (ETH) & \LH{DQ27}, \LH{DQ30} \\
\midrule
Reduction to a static preservation check & \LH{EX2} \\
\midrule
Proposition II.6: Insufficiency Equals Counterexample Witness & \LH{DP7} \\
\midrule
Definition II.9: Minimal Sufficient Set & \LH{DP1}, \LH{DP2} \\
\midrule
Definition II.14: Exact Relevance Identifiability & \LH{DQ6}, \LH{DQ7} \\
\midrule
One-step POMDP & \LH{EX1} \\
\midrule
Proposition V.9: Sequential Sufficiency Refines to Sequential Anchor Sufficiency & \LH{DC23} \\
\midrule
Proposition V.12: Explicit Finite Search for Sequential Queries & \LH{DC76}, \LH{DC68}, \LH{DC69}, \LH{DC66}, \LH{DC67}, \LH{DC75}, \LH{DC74} \\
\midrule
Definition V.7: Sequential Decision Problem & \LH{DC49}, \LH{DC50} \\
\midrule
Proposition V.13: Tractable Sequential Cases & \LH{DQ12}, \LH{DQ14}, \LH{DQ45} \\
\midrule
Proposition IV.28: Static Sufficiency Does Not Imply Stochastic Sufficiency & \LH{DQ58} \\
\midrule
Proposition XI.1: Static Sufficiency as Reduct Preservation for the Induced Decision Table & \LH{DP1}, \LH{DP2} \\
\midrule
Proposition IV.29: Static Singleton Sufficiency Transfers to Stochastic Sufficiency & \LH{DQ15}, \LH{DQ16} \\
\midrule
Proposition IV.22: Stochastic Decisiveness Yields Stochastic Anchor Sufficiency & \LH{DC19} \\
\midrule
Proposition IV.25: Explicit Finite Search for Stochastic Queries & \LH{DC76}, \LH{DC64}, \LH{DC65}, \LH{DC91}, \LH{DC92}, \LH{DC62}, \LH{DC63}, \LH{DC73}, \LH{DC93}, \LH{DC72} \\
\midrule
Proposition IV.9: Full-Support Equivalence & \LH{DC96}, \LH{DC95} \\
\midrule
Proposition V.14: Stochastic Sufficiency Does Not Imply Sequential Sufficiency & \LH{DQ62} \\
\midrule
Proposition IV.16: General-Distribution Obstructions and Support-Sensitive Bridge & \LH{DQ90}, \LH{DQ91}, \LH{DQ96}, \LH{DQ97} \\
\midrule
Proposition IV.14: Explicit Finite Search for Preservation Variants & \LH{DQ86}, \LH{DQ87}, \LH{DQ88}, \LH{DQ89} \\
\midrule
Proposition IV.10: Quotient Equivalence under Full Support & \LH{DC98}, \LH{DC97}, \LH{DC99} \\
\midrule
Proposition IV.8: Stochastic Sufficiency Implies Static Sufficiency & \LH{DC94} \\
\midrule
Proposition IV.27: Tractable Stochastic Cases & \LH{DQ13}, \LH{DQ15} \\
\midrule
Proposition II.18: Sufficiency Characterization & \LH{DQ42} \\
\midrule
Theorem III.6: ANCHOR-SUFFICIENCY is Sigma-2-P-complete & \LH{CC4}, \LH{DQ98} \\
\midrule
Definition VIII.3: Bounded Action Problem & \LH{DQ72}, \LH{DQ51} \\
\midrule
Theorem VII.1: Encoding-Sensitive Contrast & \LH{DQ26}, \LH{DQ30} \\
\midrule
Definition IX.6: Certifying Solver & \LH{IC34}, \LH{DQ31} \\
\midrule
Theorem III.5: Static Collapse Theorem & \LH{DQ36} \\
\midrule
Definition VIII.17: Dimensional State Space & \LH{DQ82} \\
\midrule
Theorem II.15: Universal Property of the Optimizer Quotient & \LH{QT7}, \LH{QT1} \\
\midrule
Theorem VI.1: Regime-Sensitive Complexity Matrix & \LH{DQ1}, \LH{DQ18}, \LH{DQ19}, \LH{DC100}, \LH{EH4}, \LH{EH6}, \LH{EH5}, \LH{EH8}, \LH{EH7}, \LH{EH10}, \LH{EH9}, \LH{OU1}, \LH{OU11}, \LH{OU2}, \LH{OU12} \\
\midrule
Definition VIII.5: Separable Utility & \LH{DQ57}, \LH{DQ74} \\
\midrule
Theorem V.10: Sequential Anchor Query is PSPACE-complete & \LH{DC28}, \LH{DC29}, \LH{DC44} \\
\midrule
Theorem V.11: Sequential Minimum Query is PSPACE-complete & \LH{DC48}, \LH{DC46} \\
\midrule
Theorem V.8: Sequential Sufficiency is PSPACE-complete & \LH{DC66}, \LH{DC74}, \LH{DC46} \\
\midrule
Theorem VI.2: Static-to-Stochastic Complexity Gap & \LH{DQ18} \\
\midrule
Theorem IV.23: Stochastic Anchor Sufficiency is PP-hard & \LH{DC40}, \LH{DC41}, \LH{DC42} \\
\midrule
Theorem IV.24: Stochastic Minimum-Sufficient-Set is PP-hard & \LH{DC47}, \LH{DC45} \\
\midrule
Theorem IV.26: Stochastic Anchor and Minimum Queries Lie in $\textsf{NP}^{\textsf{PP}}$ & \LH{PA3}, \LH{EH2}, \LH{EH4}, \LH{EH3}, \LH{EH5}, \LH{EH1}, \LH{OU6}, \LH{OU1}, \LH{OU7}, \LH{OU2} \\
\midrule
Definition IV.21: Stochastic Anchor Sufficiency & \LH{DC62}, \LH{OU8}, \LH{OU10}, \LH{OU9}, \LH{OU11}, \LH{DC72}, \LH{DC45} \\
\midrule
Definition IV.7: Stochastic Decision Problem & \LH{DC91}, \LH{DC92}, \LH{OU12}, \LH{DC93} \\
\midrule
Theorem IV.13: Full-Support Inheritance for Preservation Variants & \LH{DQ94}, \LH{DQ95} \\
\midrule
Theorem IV.15: Explicit-State Tractability for Preservation Variants & \LH{DQ92}, \LH{DQ93} \\
\midrule
Theorem VI.3: Stochastic-to-Sequential Complexity Gap & \LH{DQ19} \\
\midrule
Theorem III.4: SUFFICIENCY-CHECK is coNP-complete & \LH{DQ40} \\
\midrule
Theorem VIII.19: Symmetric Complexity Bound & \LH{DQ56}, \LH{DQ85} \\
\midrule
Theorem VIII.18: Symmetric Sufficiency Reduction & \LH{DQ84}, \LH{DQ83} \\
\midrule
Definition VIII.8: Tensor Rank Decomposition & \LH{DQ77}, \LH{DQ76}, \LH{DQ75} \\
\midrule
Theorem VIII.1: Tractable Subcases & \LH{DQ81}, \LH{DQ77}, \LH{DQ85}, \LH{DQ63} \\
\midrule
Definition VIII.10: Tree-Structured Dependencies & \LH{DQ78}, \LH{DQ63} \\
\midrule
Definition VIII.13: Interaction Graph & \LH{DQ81}, \LH{DQ79}, \LH{DQ80} \\
\midrule
Theorem III.7: Witness-Checking Duality & \LH{WD3}, \LH{WD1}, \LH{WD2} \\
\midrule
Definition X.5: Slot-Inspection Checker & \LH{WD3}, \LH{WD1}, \LH{WD2} \\
\midrule
\end{longtable}
\endgroup

\noindent\textit{Auto summary: mapped 58/58 (full=58, derived=0, unmapped=0).}

}{%
  \textbf{Error:} \texttt{content/claim\_mapping\_auto.tex} not found.
}